\tikzset{terminal/.style={circle, draw=black, fill=black!50, inner sep=0pt, minimum width=4pt}}
\tikzset{terminal2/.style={circle, draw=black, fill=purple!50, inner sep=0pt, minimum width=4pt}}
\tikzset{boundary/.style={circle, draw=red, fill=red!50, inner sep=0pt, minimum width=4pt}}
\tikzset{boundary2/.style={circle, draw=black!50, fill=black!20, inner sep=0pt, minimum width=4pt}}
\definecolor{darkmagenta}{rgb}{0.30, 0.0, 0.30}
\definecolor{MidnightBlack}{rgb}{0.1,0.1,.30}
\definecolor{MidnightBlue}{rgb}{0.1,0.1,0.78}
\definecolor{Black}{rgb}{0,0, 0}
\definecolor{Blue}{rgb}{0, 0 ,1}
\definecolor{Red}{rgb}{1, 0 ,0}
\definecolor{Darkgreen}{rgb}{0, .5 ,0}
\definecolor{White}{rgb}{1, 1, 1}
\definecolor{Grey}{rgb}{.6, .6, .6}
\definecolor{Mygreen}{rgb}{.0, .7, .0}
\definecolor{Yellow}{rgb}{.55,.55,0}
\definecolor{Mustard}{rgb}{1.0, 0.86, 0.35}
\definecolor{applegreen}{rgb}{0.55, 0.71, 0.0}
\definecolor{darkturquoise}{rgb}{0.0, 0.81, 0.82}
\definecolor{celestialblue}{rgb}{0.29, 0.59, 0.82}
\definecolor{green_yellow}{rgb}{0.68, 1.0, 0.18}
\definecolor{crimsonglory}{rgb}{0.75, 0.0, 0.2}
\definecolor{darkmagenta}{rgb}{0.30, 0.0, 0.30}
\definecolor{Magenta}{rgb}{1, 0.0, 1}
\definecolor{internationalorange}{rgb}{1.0, 0.31, 0.0}
\definecolor{darkorange}{rgb}{1.0, 0.55, 0.0}
\definecolor{ao}{rgb}{0.0, 0.5, 0.0}
\definecolor{awesome}{rgb}{1.0, 0.13, 0.32}
\definecolor{medium-blue}{rgb}{0,0,0.5}
\definecolor{dark-red}{rgb}{0.7,0.15,0.15}
\definecolor{dark-blue}{rgb}{0.15,0.15,0.4}
\newcounter{func}
\newcommand{\newfun}[1]{f_{\refstepcounter{func}\label{#1}\thefunc}}
\newcommand{\funref}[1]{\hyperref[#1]{f_{\ref*{#1}}}}
\newcommand{\anc}{{\sf Anc}}
\newcommand{\bd}{{\sf bd}}
\newcommand{\bid}{{\sf bid}}
\newcommand{\bigmid}{{\;\big|\;}}
\newcommand{\bN}{{\mathbb{N}}}
\newcommand{\bR}{{\mathbb{R}}}
\newcommand{\cc}{{\sf cc}}
\newcommand{\ch}{{\sf Ch}}
\newcommand{\Char}{{\sf char}}
\newcommand{\compass}{{\sf Compass}}
\newcommand{\cupall}{{\pmb{\pmb{\bigcup}}}}
\newcommand{\crop}{{\sf crop}}
\newcommand{\desc}{{\sf Desc}}
\newcommand{\ed}{{\sf ed}}
\newcommand{\edrec}{{\tt recEd}}
\newcommand{\exc}{{\sf exc}}
\newcommand{\flaps}{{\sf Flaps}}
\newcommand{\FPT}{{\sf FPT}\xspace}
\newcommand{\forget}{{\sf forget}}
\newcommand{\fullchar}{{\sf char^*}}
\newcommand{\height}{{\sf height}}
\newcommand{\im}{{\sf Im}}
\newcommand{\influence}{{\sf influence}}
\newcommand{\Int}{{\sf Int}}
\newcommand{\intr}{{\sf intr}}
\newcommand{\join}{{\sf join}}
\newcommand{\Ker}{{\sf Ker}}
\newcommand{\leaf}{{\sf Leaf}}
\newcommand{\no}{{\sf no}}
\newcommand{\NP}{{\sf NP}}
\newcommand{\obs}{{\sf obs}}
\newcommand{\odd}{{\sf odd}}
\newcommand{\Par}{{\sf Par}}
\newcommand{\poly}{{\sf poly}\xspace}
\newcommand{\pretp}{{\preceq_{\sf tm}}}
\newcommand{\rep}{{\sf rep}}
\newcommand{\td}{{\sf td}}
\newcommand{\tw}{{\sf tw}}
\newcommand{\update}{{\sf update}}
\newcommand{\un}{{\mathds{1}}}
\newcommand{\XP}{{\sf XP}}
\newcommand{\yes}{{\sf yes}}
\newcommand{\Acal}{\mathcal{A}}
\newcommand{\Bcal}{\mathcal{B}}
\newcommand{\Ccal}{\mathcal{C}}
\newcommand{\Dcal}{\mathcal{D}}
\newcommand{\Ecal}{\mathcal{E}}
\newcommand{\Fcal}{\mathcal{F}}
\newcommand{\Gcal}{\mathcal{G}}
\newcommand{\Hcal}{\mathcal{H}}
\newcommand{\Mcal}{\mathcal{M}}
\newcommand{\Ocal}{\mathcal{O}}
\newcommand{\Pcal}{\mathcal{P}}
\newcommand{\Qcal}{\mathcal{Q}}
\newcommand\Rcal{\mathcal{R}}
\newcommand{\Tcal}{\mathcal{T}}
\newcommand{\Wcal}{\mathcal{W}}
\newcommand{\sugar}[1]{{#1}}
\renewcommand{\bigoplus}{\ensuremath{\vcenter{\hbox{\scalebox{1.4}{$\oplus$}}}}}
\title{Faster parameterized algorithms for modification problems to
  minor-closed classes}
\begin{document}

\maketitle

\begin{abstract}
Let $\Gcal$ be a minor-closed graph class and let $G$ be an $n$-vertex graph.
 We say that $G$ is a \emph{$k$-apex} of $\Gcal$ if $G$ contains a set $S$ of at most $k$ vertices such that $G\setminus S$ belongs to~$\Gcal$. Our first result is an algorithm that decides whether $G$ is a $k$-apex of $\Gcal$ in time $2^{\poly(k)}\cdot n^2$, improving
a previous algorithm by Sau, Stamoulis, and Thilikos~[ICALP~2020, TALG 2022] whose running time was $2^{\poly(k)}\cdot n^3$.
The \emph{elimination distance} of $G$ to $\Gcal$, denoted by $\ed_\Gcal(G)$, is the minimum number of rounds required to reduce each connected component of $G$ to a graph in $\Gcal$ by removing one vertex from each connected component in each round.
Bulian and Dawar~[Algorithmica~2017] provided an \FPT-algorithm, with parameter $k$, to decide whether $\ed_\Gcal(G)\leq k$.
The class of graphs with $\ed_G\leq k$ is minor-closed, hence characterized by a finite set of excluded minors. The algorithm of Bulian and Dawar is based on the computability of this finite minor-obstruction set and its dependence on $k$ is not explicit.
We extend the techniques used in our first algorithm to decide whether $\ed_\Gcal(G)\leq k$ in time $2^{2^{2^{\poly(k)}}}\cdot n^2$.
This is the first algorithm for this problem with an explicit parametric dependence in $k$.
In the special case where $\Gcal$ excludes some apex-graph as a minor, we give two alternative algorithms, one running in time $2^{{2^{\Ocal(k^2\log k)}}}\cdot n^2$ and one running in time $2^{\poly(k)}\cdot n^3$.
As a stepping stone for these algorithms, we provide an algorithm that decides whether $\ed_\Gcal(G)\leq k$ in time $2^{\Ocal(\tw\cdot k + \tw\log\tw)}\cdot n$, where $\tw$ is the treewidth of $G$.
This algorithm combines the dynamic programming framework of
Reidl,  Rossmanith,  Villaamil, and  Sikdar~[ICALP 2014] for the particular case where $\Gcal$ contains only the empty graph (i.e., for treedepth) with the representative-based techniques introduced by Baste, Sau, and  Thilikos~[SODA 2020].
In the complexities above, {\sf poly} is a polynomial function whose degree depends on ${\cal G}$, and the hidden constants also depend on ${\cal G}$.
Finally, we provide explicit upper bounds on  the size of the graphs in the minor-obstruction set of the class of graphs ${\cal E}_k({\cal G})=\{G \mid \ed_\Gcal(G)\leq k\}$.
\end{abstract}

\section{Introduction}

The \emph{distance from triviality} is a concept formalized by Guo, Hüffner, and Niedermeier~\cite{GuoHN04astr} to express the closeness of a graph to a supposedly ``simple'' target graph class.
One such a measure of closeness is, for instance,  the number of vertices or edges that one must delete/add from/to a graph~$G$ to obtain a graph in the target graph class.
This concept of distance to a graph class has recently gained the interest of the parameterized complexity community.
The motivation is that, if a problem is tractable on a graph class $\Gcal$, it is natural to study other classes of graphs according to their ``distance to $\Gcal$''.
In this paper, we focus on two such measures of distance from triviality: Given a target graph class ${\cal G}$, we consider the {\sl vertex deletion distance} to ${\cal G}$ and the {\sl elimination distance} to ${\cal G}$, which we formalize next.

Given a target graph class ${\cal G}$ and a non-negative integer $k$,
we define ${\cal A}_{k}({\cal G})$ as the set of all graphs containing a set $S$ of at most $k$ vertices whose removal results in a graph in ${\cal G}$.
If $G\in {\cal A}_{k}({\cal G})$, then we say that $G$ is a \emph{$k$-apex} of $\Gcal$. We refer to $S$ as a \emph{$k$-apex set of $G$ for the class $\Gcal$}.
In other words, we consider the following meta-problem for a fixed class ${\cal G}$.
\begin{center}
	\fbox{
		\begin{minipage}{12cm}
			\noindent{\sc Vertex Deletion to $\Gcal$}\\
			\noindent\textbf{Input}:~~A graph $G$ and a non-negative integer $k$.\\
			\textbf{Objective}:~~Find, if it exists, a $k$-apex set of $G$ for the class $\Gcal$.
		\end{minipage}
	}
\end{center}
Throughout the paper, we denote by $n$ the number of vertices of the input graph of the problem under consideration. The importance of {\sc Vertex Deletion to $\Gcal$} can be illustrated by the variety of graph modification problems that it encompasses (hence the term of meta-problem).
For instance, if $\Gcal$ is the class of edgeless (resp. acyclic, planar, bipartite, (proper) interval, chordal) graphs, then we obtain the {\textsc{Vertex Cover}} (resp. {\textsc{Feedback Vertex Set}}, {\textsc{Vertex Planarization}}, {\textsc{Odd Cycle Transversal}}, {\sc (proper) Interval Vertex Deletion}, {\sc Chordal Vertex Deletion}) problem.

The second measure of distance from triviality that we study was recently introduced by Bulian and Dawar~\cite{BulianD16grap,BulianD17fixe}. Given a graph class ${\cal G}$, we define the \emph{elimination distance} of a graph $G$ to  $\Gcal$, denoted by $\ed_\Gcal(G)$, as follows:
\begin{equation*}
   \ed_\Gcal(G)=
   \begin{cases}
     0 & \text{if}\ G\in\cal G, \\
     1+\min\{\ed_\Gcal(G\setminus \{v\})\mid v\in V(G)\} & \text{if}\ G\text{ is connected}, \\
		 \max\{\ed_\Gcal(H)\mid H \text{ is a connected component of}\ G\} & \text{otherwise.}
   \end{cases}
\end{equation*}

Given that $\ed_{\cal G}(G)\leq k$, a set $S\subseteq V(G)$ of  vertices recursively deleted from $G$ to achieve $\ed_\Gcal(G)$ is called a \emph{$k$-elimination set of $G$ for ${\cal G}$}. We define the (parameterized)
class of graphs ${\cal E}_k({\cal G})=\{G\mid \ed_{\cal G}(G)\leq k\}$.
The above notion can be seen as a natural generalization of \emph{treedepth} (denoted by $\td$), which corresponds to the case where $\Gcal$ contains only the empty graph.
Treedepth, along with treewidth, are two of the most studied and widely used parameters to measure the structural complexity of a graph~\cite{CyganFKLMPPS15para,Kloks94,Sparsity12}.
The second meta-problem that we consider is the following, again for a fixed class $\Gcal$.
\begin{center}
	\fbox{
		\begin{minipage}{12cm}
			\noindent{\sc Elimination Distance to $\Gcal$}\\
			\noindent\textbf{Input}:~~A graph $G$ and a non-negative integer $k$.\\
			\textbf{Objective}:~~Find, if it exists, a $k$-elimination set of $G$ for the class $\Gcal$.
		\end{minipage}
	}
\end{center}

Unsurprisingly, {\sc Vertex Deletion to $\Gcal$} is \NP-hard for every non-trivial graph class $\Gcal$ \cite{LewisY80then}, while {\sc Elimination Distance to $\Gcal$} is \NP-hard even when $\Gcal$ contains only the empty graph \cite{Pothen88comp}.
To circumvent this intractability, we study both problems from the parameterized complexity point of view and consider their parameterizations by $k$.
In this setting, the most desirable behavior is the existence of an algorithm running in time $f(k) \cdot n^{\Ocal(1)}$, where $f$ is a computable function depending only on $k$. Such an algorithm is called \emph{fixed-parameter tractable}, or \emph{\FPT-algorithm} for short, and a parameterized problem admitting an \FPT-algorithm is said to belong to the parameterized complexity class \FPT. Also, the function $f$ is called {\em parametric dependence} of the corresponding \FPT-algorithm, and the challenge is to design \FPT-algorithms
with small parametric dependencies and with a polynomial factor of small degree~\cite{CyganFKLMPPS15para,DowneyF13fund,FlumG06para,Niedermeier06invi}.
We may also consider \emph{\XP-algorithms}, i.e., algorithms running in time $\Ocal(f(k)\cdot n^{g(k)})$ for some computable functions $f$ and $g$ depending only on $k$.

In general, for any of the two considered problems, we cannot expect \FPT-algorithms for every graph class $\Gcal$. For instance, the two problems are \NP-hard, even for $k=0$, for every graph class $\Gcal$ whose recognition problem is \NP-hard. This is the case of 3-colorable graphs, which is a class closed under taking (induced) subgraphs. In this paper, we focus on a family of graph classes that exhibits a nice behavior with respect to the considered problems (and many others): we consider $\Gcal$ to be a \emph{minor-closed} graph class, i.e.,
 such that every minor of a graph in ${\cal G}$ (that is, obtained from a subgraph of a graph in $\Gcal$ by contracting edges; see~\autoref{@authoritarianism} for the formal definition) is also in ${\cal G}$. Indeed, it turns out that, for every such a family $\Gcal$, the problems become fixed-parameter tractable, as we proceed to discuss.

 The {\em minor-obstruction set} (in short {\em obstruction set}) of ${\cal G}$ is the set of minor-minimal graphs that do not belong to ${\cal G}$, and is denoted by $\obs({\cal G})$. Notice that $\obs({\cal G})$ gives a complete characterization of ${\cal G}$
as, for every graph $G$, it holds that
$G\in{\cal G}$ if and only if, for every $H\in\obs({\cal G})$,
$H$ is not a minor of $G$.
Because of  Robertson and Seymour's theorem~\cite{RobertsonS04XX}, $\obs({\cal G})$ is {\sl finite} for every minor-closed graph class.
As checking whether an $h$-vertex graph $H$ is a minor of $G$ can be done in time $f(h)\cdot n^{2}$~\cite{RobertsonS95XIII,KawarabayashiKR12thed}, the finiteness of $\obs({\cal G})$ along with the above characterization imply that, for every minor-closed graph class ${\cal G}$, checking whether $G\in {\cal G}$ can be done in time $c \cdot n^{2}$, where $c$ is a constant depending on the graph class $\Gcal$.
This meta-theorem
implies the {\sl existence} of \FPT-algorithms for a wide
family of problems, including {\sc Vertex Deletion to $\Gcal$} and {\sc Elimination Distance to $\Gcal$}.
Indeed, this follows by observing that if ${\cal G}$ is minor-closed, then for every non-negative integer $k$, the classes ${\cal A}_{k}({\cal G})$ and ${\cal E}_{k}({\cal G})$ are also minor-closed.

As Robertson and Seymour's theorem~\cite{RobertsonS04XX} does {\sl not} give any way to construct the corresponding obstruction sets, the aforementioned argument is not constructive, i.e., it is not able to construct the obstruction sets  required for the corresponding
 \FPT-algorithms. Moreover, these algorithms are {\sl non-uniform} in $k$, meaning that we have a {\sl distinct} algorithm for every value of $k$. Important steps towards the constructibility of
such \FPT-algorithms were done by Adler, Grohe, and  Kreutzer~\cite{AdlerGK08comp} and Bulian and Dawar~\cite{BulianD17fixe}, who respectively proved that
$\obs({\cal A}_{k}({\cal G}))$ and $\obs({\cal E}_{k}({\cal G}))$ are effectively computable.
Hence,  for both problems, it is possible to construct uniform (in $k$) algorithms running in time $f(k)\cdot n^2$ for some computable function $f$. However, this does not imply any reasonable, or even explicit, parametric dependence of the obtained algorithms.

The main focus of this paper is on the parametric and polynomial dependence of \FPT-algorithms to solve {\sc Vertex Deletion to $\Gcal$} and {\sc Elimination Distance to $\Gcal$}, i.e., for recognizing the classes $\Acal_{k}(\Gcal)$ and $\Ecal_k(\Gcal)$, when $\Gcal$ is a minor-closed graph class.

Concerning {\sc Vertex Deletion to $\Gcal$}, after a number of articles for particular cases of minor-closed classes $\Gcal$, such as graphs of bounded treewidth~\cite{FominLMS12plan,KimLPRRSS16line}, planar graphs~\cite{MarxS07obta,JansenLS14anea}, or graphs of bounded genus~\cite{KociumakaP19dele}, an explicit \FPT-algorithm for {\sl any} minor-closed graph $\Gcal$ was recently proposed by Sau, Stamoulis, and Thilikos~\cite{SauST21kapiII}, running in time $2^{\Ocal(k^c)}\cdot n^3$, where $c$ is a constant that depends on the maximum size of a graph in the obstruction set of $\Gcal$.
Moreover, in the case where $\obs(\Gcal)$ contains some apex-graph (that is, a 1-apex for the class of planar graphs), Sau, Stamoulis, and Thilikos~\cite{SauST21kapiII} gave an improved running time of $2^{\Ocal(k^c)}\cdot n^2$.
Note also that the more general variant where $\Gcal$ is a topological-minor-closed graph class is in $\FPT$ as well \cite{FominLPSZ20hitt}.

As for {\sc Elimination Distance to $\Gcal$} when $\Gcal$ is minor-closed, no explicit parametric dependence was known, with the notable exception of treedepth,
for which Reidl, Rossmanith, Villaamil, and Sikdar~\cite{ReidlRSS14afas} gave an algorithm deciding whether $\td(G)\leq k$  in time $2^{\Ocal(k\cdot \tw)}\cdot n$, where $\tw:=\tw(G)$ (see also \cite{BodlaenderGKH91appr}).
Using our terminology, and given that $\tw(G)\leq \td(G)$ for every graph $G$,
this yields an \FPT-algorithm for {\sc Elimination Distance to $\Gcal_\emptyset$}, where $\Gcal_\emptyset$ is the class consisting of the empty graph, running in time $2^{\Ocal(k^2)}\cdot n$. Note that this algorithm~\cite{ReidlRSS14afas}, combined with the fact that $\td(G)\leq \log(n)\cdot \tw(G)$ (see \cite{BodlaenderGKH91appr}),
imply an \XP-algorithm
for the problem of computing $\td$ when parameterized by $\tw$, namely an algorithm that computes the value of $\td(G)$ in time $n^{\Ocal(\tw(G)^2)}$. To the best of our knowledge, it is open whether computing $\td$  parameterized by $\tw$ is in \FPT.

Before describing our results, let us mention some recent relevant results dealing with {\sc Elimination Distance to $\Gcal$} for classes $\Gcal$ that are not necessarily minor-closed. Agrawal and Ramanujan~\cite{Agrawal020} (resp. Agrawal, Kanesh,  Panolan,  Ramanujan, and  Saurabh~\cite{AgrawalKP0021}) provided \FPT-algorithms, with parameter $k$, when $\Gcal$ is the class of cliques (resp. graphs of bounded degree). Fomin, Golovach, and Thilikos~\cite{FominGT22} identified sufficient and necessary conditions for the existence of \FPT-algorithms when $\Gcal$ is definable in first-order logic (such as having bounded degree). Jansen, de Kroon, and  Włodarczyk~\cite{JansenKW21vert} proved, among a number of other results, that if $\Gcal$ is a hereditary union-closed graph class and {\sc Vertex Deletion to $\Gcal$} can be solved in time $2^{k^{\Ocal(1)}}\cdot n^{\Ocal(1)}$ (as it is the case for every minor-closed class $\Gcal$ by the results of~\cite{SauST21kapiII}), then there is an algorithm that, given an $n$-vertex graph $G$, computes an $\Ocal(\ed_\Gcal(G)^3)$-elimination set of $G$ for ${\cal G}$ in time $2^{\ed_\Gcal(G)^{\Ocal(1)}}\cdot n^{\Ocal(1)}$. Therefore, for union-closed minor-closed graph classes~$\Gcal$, the result of~\cite{JansenKW21vert}
yields an \FPT-approximation algorithm for {\sc Elimination Distance to $\Gcal$}.

Note that, for every graph class $\Gcal$ and every graph $G$, it holds that $\ed_\Gcal(G)$ is not larger than the smallest $k$ such that $G$ admits a $k$-apex for $\Gcal$. Thus, if {\sc Elimination Distance to $\Gcal$} is in \FPT, so is {\sc Vertex Deletion to $\Gcal$}. Agrawal, Kanesh,  Lokshtanov,  Panolan, Ramanujan,  Saurabh, and  Zehavi~\cite{AgrawalKLPRSZ22} showed, among other results, that in many cases the reverse implication also holds. Namely, they proved that if $\Gcal$  is hereditary, union-closed, and definable in monadic second-order logic, and {\sc Vertex Deletion to $\Gcal$} is in \FPT, then {\sc Elimination Distance to $\Gcal$} is also (non-uniformly) in \FPT. Incidentally, they also showed that if $\Gcal$ is defined by excluding a finite number of connected topological minors, then {\sc Elimination Distance to $\Gcal$} is  (uniformly) in \FPT.
We note that the results of~\cite{AgrawalKLPRSZ22} do not provide explicit parametric dependencies for these \FPT-algorithms. Also, let us mention that it was conjectured in~\cite{AgrawalKLPRSZ22}  that {\sc Elimination Distance to $\Gcal$} is in \FPT\ parameterized by a generalization of treewidth called \emph{$\Gcal$-treewidth} (see~\cite{JansenKW21vert,AgrawalKLPRSZ22,EibenGHK21}). Note that, if true, this conjecture would answer the open problem mentioned above of whether computing $\td$ parameterized by $\tw$ is in \FPT.

\bigskip
\noindent\textbf{Our results.}
In this paper, we provide explicit \FPT-algorithms for {\sc Vertex Deletion to $\Gcal$} and {\sc Elimination Distance to $\Gcal$} for {\sl every} fixed minor-closed graph class $\Gcal$. Our first result is the following.

\begin{theorem}\label{@sensations}
For every minor-closed graph class $\Gcal$, there exists an algorithm that solves  {\sc Vertex Deletion to $\Gcal$} in time $2^{k^{\Ocal(1)}}\cdot n^2$.
\end{theorem}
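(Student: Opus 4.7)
The plan is to follow the irrelevant-vertex framework of Sau, Stamoulis, and Thilikos~\cite{SauST21kapiII}, combining a dynamic-programming routine for graphs of bounded treewidth with an iterative application of the Flat Wall Theorem for graphs of large treewidth. The novelty compared to the $2^{\poly(k)}\cdot n^3$ bound of~\cite{SauST21kapiII} is to shave off one factor of $n$, by (i) using subroutines whose running time is quadratic rather than cubic in $n$, and (ii) carefully amortising the cost of repeated wall-finding.

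First, I would compute an approximate tree decomposition of $G$. If its width is bounded by some function $g(k)$ of $k$, I would run a dynamic-programming algorithm on the decomposition whose tables, at each bag, encode equivalence classes of partial $k$-apex solutions together with a ``folio'' of small boundaried minors sufficient to detect any obstruction in $\obs(\Gcal)$. Using representative-set techniques in the spirit of the algorithm of Baste, Sau, and Thilikos, the number of states at each bag can be kept at $2^{\poly(k)}$ and the total running time at $2^{\poly(k)}\cdot n$.

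When the treewidth exceeds $g(k)$, I would invoke the Flat Wall Theorem to find, inside $G$, either an irrelevant vertex whose removal preserves membership in $\Acal_k(\Gcal)$, or a vertex that belongs to every $k$-apex set (which can be removed after decrementing $k$); in either case an equivalent instance with fewer vertices is produced. A naive implementation would recompute the wall after each single vertex removal and would incur an $n^3$ running time. To avoid this, the plan is to locate, through a single computation of cost $\Ocal(n^2)$, a \emph{batch} of $\Omega(n/\poly(k))$ pairwise independent irrelevant vertices, drawn from vertex-disjoint sub-walls of a global flat-wall structure, whose simultaneous removal remains justified because each irrelevance certificate is confined to its own sub-wall. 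Iterating this $\Ocal(\log n)$ times produces a graph of treewidth at most $g(k)$, to which the bounded-treewidth routine applies.

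The main technical obstacle will be to establish this batched version of the irrelevant-vertex lemma: namely, that if one identifies a family of irrelevant vertices, each inside its own flat sub-wall of a pairwise vertex-disjoint collection, then the removal of any subset of them preserves the irrelevance of all of the others. The key ingredient that I expect to make this work is that the rerouting argument witnessing the irrelevance of a vertex $v$ only uses vertices in a bounded-radius neighbourhood within $v$'s own sub-wall, and is hence disjoint from the corresponding certificates for vertices in other sub-walls. Combined with the $\Ocal(n^2)$ minor-testing algorithm of \cite{KawarabayashiKR12thed} for each fixed obstruction in $\obs(\Gcal)$, this should yield the claimed $2^{\poly(k)}\cdot n^2$ running time.
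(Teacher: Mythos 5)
Your high-level framework (bounded-treewidth dynamic programming plus flat-wall-based vertex elimination) matches the paper, but your route to the quadratic bound rests on a misdiagnosis, and it omits the part of the algorithm where the quadratic cost actually arises.

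First, the misdiagnosis. The cubic running time of the earlier algorithm of Sau, Stamoulis, and Thilikos came from \emph{iterative compression}, not from recomputing the wall after each vertex removal. In the present paper, the wall-finding subroutine ({\tt Find-Wall}, Proposition~\ref{@transforma}) and the flat-wall/homogeneity/irrelevant-vertex subroutines (Propositions~\ref{@possession}, \ref{@unimportant}, \ref{@disreputable}, \ref{@civilizing}) all run in time $2^{\poly(k)}\cdot n$, i.e., \emph{linear} in $n$. Consequently, removing a single irrelevant vertex per round and re-running everything from scratch already yields $2^{\poly(k)}\cdot n^2$, with no need to batch. The batched-irrelevant-vertex lemma you propose is plausible in spirit (the certificates from {\tt Find-Irrelevant-Vertex} apply to all sets of small bidimensionality with respect to a common wall, so they chain), but it is not established, it is not what the paper does, and your iteration count of $\Ocal(\log n)$ is off: removing $\Omega(n/\poly(k))$ vertices per round takes $\Omega(\poly(k)\log n)$ rounds, and the geometric decay is what would save the extra factor, not a $\log n$ bound.

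Second, and more seriously, the proposal is incorrect about the case when no irrelevant vertex can be produced. You write that one can alternatively ``find a vertex that belongs to every $k$-apex set (which can be removed after decrementing $k$).'' This is true only when $\obs(\Gcal)$ contains an apex-graph, i.e., $a_\Fcal=1$. For a general minor-closed $\Gcal$ the paper's Proposition~\ref{@proclamation} only guarantees a \emph{set} $A'$ of $a_\Fcal$ vertices that intersects every solution, and the algorithm must \emph{branch} over the $a_\Fcal$ candidates in $A'$, decrementing $k$ along each branch. Locating such a set requires a flow computation for every vertex of $G$ (Step~3 of Section~\ref{@conditioned}) and costs $2^{\poly(k)}\cdot n^2$ per execution; the crucial amortisation is that the branching tree has depth at most $k$ and branching factor $a_\Fcal$, so this quadratic step runs only $a_\Fcal^k$ times. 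Your proposal neither distinguishes the flat-compass case from this ``grasped'' case, nor identifies the quadratic step, nor supplies the amortisation argument, so even if the batched irrelevant-vertex lemma were proved, the argument as written would not establish the theorem for general $\Gcal$.
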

The degree of $k$ in the running time of \autoref{@sensations}, as well as the constants hidden in the $\cal O$-notation in the running time of the algorithms of the results below, depend on the maximum size of a graph in $\obs({\cal G})$. Thus, the algorithm of \autoref{@sensations}, while being uniformly \FPT in $k$, is {\sl not} uniform in the target class $\Gcal$, as one needs to know an upper bound on the size of the minor-obstructions. This ``meta-non-uniformity'' applies to all the algorithms presented in this paper, and it is also the case, among many others, of the \FPT-algorithms in~\cite{SauST21kapiII}. The algorithm of \autoref{@sensations} improves the algorithm of~\cite{SauST21kapiII} from cubic to quadratic complexity in $n$ while keeping the same parametric dependence on $k$. This answers positively one of the open problems posed in~\cite{SauST21kapiII}.

\smallskip
Our next algorithmic results concern {\sc Elimination Distance to $\Gcal$} and provide, to the authors' knowledge, the first \FPT-algorithms for this problem, when $\Gcal$ is minor-closed, with an {\sl explicit parametric dependence}.
\begin{theorem}\label{@communicated}
For every minor-closed graph class $\Gcal$, there exists an algorithm that solves {\sc Elimination Distance to $\Gcal$}  in time $2^{2^{2^{k^{\Ocal(1)}}}}\cdot n^2$. In the particular case where $\obs({\cal G})$ contains an apex-graph, this algorithm runs in time $2^{2^{\Ocal(k^2\log k)}}\cdot n^2$.
\end{theorem}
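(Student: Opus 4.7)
The plan is to reduce {\sc Elimination Distance to $\Gcal$} to the bounded-treewidth case by iteratively removing irrelevant vertices, and then to invoke the dynamic programming algorithm (mentioned as the stepping stone in the abstract) that runs in time $2^{\Ocal(\tw\cdot k+\tw\log\tw)}\cdot n$ on graphs of treewidth $\tw$. Since $\Gcal$ is minor-closed, so is $\Ecal_k(\Gcal)$, hence $\obs(\Ecal_k(\Gcal))$ is finite. A preliminary step would be to derive explicit bounds on the maximum size $h(k)$ of a graph in $\obs(\Ecal_k(\Gcal))$, inductively using the recursive definition of $\ed_\Gcal$: a connected graph is in $\Ecal_k(\Gcal)$ iff some vertex deletion puts it in $\Ecal_{k-1}(\Gcal)$. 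I would expect $h(k)$ to be doubly exponential in $\poly(k)$ in general and of order $2^{\Ocal(k^2\log k)}$ in the apex case, matching the announced running times.

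With $h(k)$ in hand, the reduction proceeds as follows. Given $(G,k)$, I would apply the grid/Flat-Wall machinery already used in the proof of \autoref{@sensations}: if the treewidth of $G$ is much larger than a suitable function $w(k)$ of $h(k)$, then one can compute in time $f(k)\cdot n^{\Ocal(1)}$ a flat wall of sufficiently large height, and locate inside it a vertex $v$ that is \emph{irrelevant}, i.e.\ satisfies $G\in\Ecal_k(\Gcal)$ iff $G\setminus\{v\}\in\Ecal_k(\Gcal)$. One then removes $v$ and iterates. After $\Ocal(n)$ iterations the treewidth drops below $w(k)$, at which point the DP algorithm is invoked, yielding the final running times $2^{\Ocal(w(k)\cdot k+w(k)\log w(k))}\cdot n$; the outer $n^2$ factor is absorbed by the cost of each flat-wall extraction. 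For the general case $w(k)=2^{2^{\poly(k)}}$, giving $2^{2^{2^{\poly(k)}}}\cdot n^2$; for the apex case the flat-wall argument can be tightened (exactly as in the apex-case improvement in \autoref{@sensations}) to $w(k)=\Ocal(k^2\log k)$, giving $2^{2^{\Ocal(k^2\log k)}}\cdot n^2$.

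The heart of the argument, and the main obstacle, is proving irrelevance of the central vertex of the flat wall \emph{for elimination distance} rather than for vertex deletion. For {\sc Vertex Deletion to $\Gcal$}, the obstructions in $\obs(\Acal_k(\Gcal))$ admit a reasonably local description (essentially packings of minor-models from $\obs(\Gcal)$), and rerouting a single minor-model around one vertex inside a large flat wall is by now a standard technique. For {\sc Elimination Distance to $\Gcal$}, membership in $\Ecal_k(\Gcal)$ is certified by a hierarchical elimination forest of depth $k$, and any obstruction therefore has a layered structure. I would handle this by induction on $k$: assuming irrelevance is proved for $\Ecal_{k-1}(\Gcal)$ with an appropriate flat-wall size, a flat wall of larger height in $G$ must, in every hypothetical layer-$1$ elimination vertex $u$, induce a flat wall of the smaller height in one of the components of $G\setminus\{u\}$; applying the inductive hypothesis rerouting layer by layer produces the required rerouting of the full elimination forest. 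The technical content is to track, along this induction, how the required wall height blows up, and it is precisely this blow-up that explains the extra exponential layers in the final complexity.

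Once the reduction and DP are combined, the quadratic factor in $n$ comes from the flat-wall extraction routine (the same routine used to obtain the $n^2$ factor in \autoref{@sensations}), whereas the DP contributes only a linear factor; hence the product is $n^2$. This also explains why we obtain $n^2$ and not $n^3$, thereby matching the improvement already achieved for {\sc Vertex Deletion to $\Gcal$} in \autoref{@sensations}.
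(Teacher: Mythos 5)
Your high-level skeleton (iteratively delete irrelevant vertices until the treewidth is bounded, then run the DP of \autoref{@achievements}) is indeed the paper's scheme, but the two steps that carry all the weight are either unjustified or would not work as sketched. First, the ``preliminary step'' of bounding the size $h(k)$ of the graphs in $\obs(\Ecal_k(\Gcal))$ ``inductively from the recursive definition'' is a genuine gap: no such simple induction is known (even for treedepth, i.e.\ $\Gcal=\Gcal_\emptyset$, the bound $2^{2^{k-1}}$ required a dedicated argument), and in this paper the obstruction bound (\autoref{@reconstructions}) is a \emph{consequence} of the algorithmic machinery, proved afterwards via the irrelevant-vertex argument plus linked tree decompositions and representatives --- so using it as an ingredient of the algorithm is circular. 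It is also unnecessary: the algorithm never touches $\obs(\Ecal_k(\Gcal))$; the only structural fact it needs is \autoref{@inoperative}, that graphs in $\Ecal_k(\exc(\Fcal))$ exclude $K_{s_\Fcal+k}$, which is what lets one always force the flat-wall outcome of \autoref{@unimportant} (your sketch never says what to do when the Flat Wall Theorem returns a large clique minor instead of a flat wall). This $k$-dependent clique bound, propagated through the homogeneity function of \autoref{@disreputable} (the apex set now has size $\poly(k)$), is the actual source of the extra exponential levels --- not a blow-up along an induction on $k$.

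Second, and more seriously, your proof of the irrelevance of the central vertex for elimination distance does not go through as described. The layer-by-layer induction would give, for each hypothetical top-level elimination vertex $u$, some irrelevant vertex inside a flat wall of a component of $G\setminus\{u\}$; but this vertex depends on $u$ (and on the deeper layers), while you need a \emph{single} vertex $v$ that is irrelevant simultaneously with respect to every $k$-elimination forest, and the apex set and homogeneity data with respect to which the wall is ``good'' change after each deletion, so the inductive hypotheses do not compose. You also never address the nontrivial direction, namely turning an elimination forest of $G\setminus v$ into one of $G$. The paper avoids the induction entirely with one combinatorial lemma (\autoref{@lucinatory}): every $k$-elimination set $S$ extends to a set $X\supseteq S$ with $G\setminus X\in\exc(\Fcal)$ and $\bid_{G\setminus A,W}(X)\leq k(k+1)/2$ (the union of $S$ with all components of $G\setminus S$ not containing the big wall has small bidimensionality). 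This reduces irrelevance for {\sc Elimination Distance} directly to the existing vertex-deletion irrelevance machinery of \autoref{@civilizing}, which speaks about arbitrary bounded-bidimensionality deletion sets. Without this (or an equivalent) reduction, the core of your argument is missing; with it, no induction on $k$ is needed and the announced running times follow from the explicit parameter choices (your apex-case treewidth threshold $w(k)=\Ocal(k^2\log k)$ is also off --- the wall height, and hence the treewidth bound fed to the DP, is $2^{\Ocal(k^2\log k)}$, which is what yields $2^{2^{\Ocal(k^2\log k)}}\cdot n^2$).
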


As examples of classes $\Gcal$ where $\obs({\cal G})$ contains an apex-graph, we may consider $\Gcal$ whose graphs have bounded Euler genus, such as planar graphs. Our next result improves the parametric dependence of the algorithm of \autoref{@communicated} when $\obs({\cal G})$ contains an apex-graph, but with a worse polynomial factor.

\begin{theorem}\label{@swineherds}
For every minor-closed graph class $\Gcal$ such that $\obs({\cal G})$ contains an apex-graph, there exists an algorithm that solves  {\sc Elimination Distance to $\Gcal$} in time $2^{k^{\Ocal(1)}}\cdot n^3$.
\end{theorem}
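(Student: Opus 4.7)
The plan is to combine two ingredients outlined earlier in the paper: the treewidth-based dynamic programming algorithm deciding $\ed_\Gcal(G) \le k$ in time $2^{\Ocal(\tw\cdot k + \tw\log\tw)} \cdot n$, and an irrelevant-vertex reduction in the spirit of \autoref{@sensations}, iterated so as to bring the treewidth of the input below a polynomial in $k$. The apex-graph assumption on $\obs(\Gcal)$ is used exclusively in the second ingredient, to justify the existence of the irrelevant vertex with a single-exponential parametric cost.

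The algorithm I have in mind has two phases. In a preprocessing loop, starting from the input graph $G$, I would compute an approximate tree decomposition via a standard \FPT\ algorithm; if its width is at most $f(k)$ for some polynomial $f$ depending on $\Gcal$, stop. Otherwise, apply the Flat Wall Theorem to extract a large flat wall $W$ in $G$ in time $2^{\poly(k)}\cdot n^2$, and then, using that $\obs(\Gcal)$ contains an apex graph, identify a vertex $v$ in the central area of $W$ that is \emph{irrelevant} for elimination distance, meaning $\ed_\Gcal(G)\le k$ iff $\ed_\Gcal(G\setminus v)\le k$; set $G := G\setminus v$ and repeat. Once the loop terminates, the treewidth of the reduced graph is bounded by $\poly(k)$, and I invoke the DP algorithm above in time $2^{\poly(k)}\cdot n$. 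Each iteration performs one flat-wall extraction and one irrelevant-vertex computation at cost $2^{\poly(k)}\cdot n^2$, and each deletes one vertex; so there are at most $n$ iterations, yielding the claimed total of $2^{\poly(k)}\cdot n^3$.

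The main obstacle is the irrelevant-vertex lemma for elimination distance. For vertex deletion (\autoref{@sensations}), the argument exploits that $\obs(\Gcal)$ contains an apex graph, which forces the inner layers of a sufficiently large flat wall to carry no minor model of any obstruction. To adapt this to elimination distance I would work with the minor-closed class $\Ecal_k(\Gcal)$ and show that $\obs(\Ecal_k(\Gcal))$ still exhibits a controlled, ``apex-like'' structure inherited from $\obs(\Gcal)$; combined with the flat wall, this forces a central vertex of a sufficiently deep flat wall to be replaceable in any $k$-elimination tree. The argument would use the representative-set machinery of Baste, Sau, and Thilikos [SODA 2020] to build, from any elimination tree $T$ for $G$ containing $v$, another tree $T'$ of equal depth avoiding $v$, by routing through disjoint bypasses guaranteed by the flatness of $W$.

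As a sanity check, this approach complements \autoref{@communicated}: that theorem performs an analogous reduction in a single pass and pays a double-exponential parametric price, whereas here the reduction is iterated $n$ times in the style of \autoref{@sensations}, trading an extra factor of $n$ for a single-exponential dependence on $k$.
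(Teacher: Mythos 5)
There is a genuine gap in your plan: the preprocessing loop assumes that whenever the treewidth exceeds $\poly(k)$ you can extract a flat wall and find an irrelevant vertex at single-exponential cost, but this is exactly the point where the argument can fail. The flat-wall subroutine (\autoref{@unimportant}) returns a flat wall \emph{together with an apex set $A$}, and the homogenization step (\autoref{@disreputable}) underlying the irrelevant-vertex argument has parametric cost governed by $|A|$; to stay single-exponential in $k$ you must keep $|A|$ bounded independently of $k$, which means calling the subroutine with clique parameter $s_\Fcal$ rather than $s_\Fcal+k$. With that choice the subroutine may legitimately fail to produce the desired flat wall (it may only report a $K_{s_\Fcal}$-minor near the wall), and then no irrelevant vertex is obtained; if instead you let $|A|$ grow with $k$, you are in the situation of \autoref{@communicated}, whose parametric dependence is double-exponential even when $\obs(\Gcal)$ contains an apex graph. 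The paper handles the bad outcome with a mechanism your proposal lacks: using flow computations it finds the set $\tilde A$ of vertices highly linked to the wall, and because $a_\Fcal=1$, \autoref{@proclamation} combined with the bidimensionality bound for elimination sets (\autoref{@lucinatory}) shows every vertex of $\tilde A$ lies in \emph{every} $k$-elimination set. Since $k$ does not decrease, these forced vertices must be remembered, which is why the paper passes to an annotated version of the problem (\autoref{@abolishing}, \autoref{@tualization}) and recurses with $S_0\cup\tilde A$; each such step enlarges $S_0$ and each irrelevant-vertex step deletes a vertex, so both are applied at most $n$ times, and the quadratic flow-based step repeated $n$ times is what yields the cubic total. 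Without this forcing/annotation branch your loop cannot be completed within the claimed bounds.

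A secondary problem is that the new ``irrelevant-vertex lemma for elimination distance'' you propose to prove is both unnecessary and misattributes the role of the apex hypothesis. The irrelevant-vertex step works for \emph{every} minor-closed $\Gcal$: one applies the existing machinery (\autoref{@civilizing}) not to the elimination set $S$ itself but to a superset $X\supseteq S$ of bidimensionality at most $k(k+1)/2$ with $G\setminus X\in\exc(\Fcal)$, whose existence is precisely \autoref{@lucinatory}; no structural analysis of $\obs(\Ecal_k(\Gcal))$ and no re-routing of elimination trees via representatives is required (representatives are used only inside the bounded-treewidth dynamic programming). The apex hypothesis is used solely to make the forced-vertex step above branch-free and to keep the apex set of the flat wall of size independent of $k$. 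Finally, a minor accounting remark: in the paper the wall-finding and irrelevant-vertex computations are linear in $n$ per iteration; the cubic running time comes from the $\Ocal(n)$ repetitions of the quadratic flow-based forced-vertex step, not from repeating a quadratic wall extraction.
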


As discussed later, a crucial ingredient in the algorithms of \autoref{@communicated} and \autoref{@swineherds} is to solve {\sc Elimination Distance to $\Gcal$} parameterized by the treewidth of the input graph. The following result, which may be of independent interest, deals with this case.

\begin{theorem}\label{@achievements}
For every minor-closed graph class $\Gcal$, there exists an algorithm that solves {\sc Elimination Distance to $\Gcal$} in time $2^{\Ocal(k\cdot\tw+\tw\log\tw)}\cdot n$, where $\tw$ denotes the treewidth of the input graph.
\end{theorem}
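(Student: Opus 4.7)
The plan is to perform dynamic programming along a width-$\Ocal(\tw)$ tree decomposition of $G$, which can be computed in time $2^{\Ocal(\tw)}\cdot n$ using known constant-factor approximations. The state at each bag $X_t$ of a nice tree decomposition will carry two kinds of information. The first, following the treedepth algorithm of Reidl, Rossmanith, Villaamil, and Sikdar, is an \emph{elimination profile} on $X_t$: for each vertex of $X_t$, either an integer depth in $\{1,\dots,k\}$ indicating the round in which the vertex is eliminated, or a marker $\bot$ meaning that the vertex belongs to the eventual ``leaf'' graph which must lie in $\Gcal$, together with the ancestor--descendant relations among the non-$\bot$ vertices in the partial elimination forest. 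As in the treedepth DP, this contributes $2^{\Ocal(k\cdot\tw)}$ profiles per bag.

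The second component of the state handles membership in $\Gcal$. For each connected leaf-component whose attachment set lies in $X_t$, we store a representative of the folio of the associated boundaried graph with respect to the finite family $\obs(\Gcal)$, using the representative-based machinery of Baste, Sau, and Thilikos. The total number of folio representatives needed for a boundary of size $\tw$ is $2^{\Ocal(\tw\log\tw)}$, so the overall number of states per bag is $2^{\Ocal(k\cdot\tw+\tw\log\tw)}$. Transitions at introduce and forget nodes update the depth labels, merge or finalise components as their attachments enter or leave the bag, and evolve the stored representatives through the unit operations (vertex addition, edge addition, boundary relabelling) of the folio algebra. At join nodes, two boundaried graphs with the same boundary are combined by disjoint union glued along $X_t$, which on the representative side is realised by the corresponding product on representatives. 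Multiplying the number of states, the transition time per state, and the number of bags yields the claimed $2^{\Ocal(k\cdot\tw+\tw\log\tw)}\cdot n$ bound.

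The main obstacle will be establishing the soundness of this composition. The elimination-distance semantics is inherently recursive over connected components: after each round of deletions the alive graph splits, and the elimination proceeds independently on each piece. Consequently the component to which a given bag vertex currently belongs depends on all deletion choices made below it, and this component identity must be carried through join nodes (where components sharing a bag vertex merge) and through forget nodes (where a component is finalised once its last attachment leaves the bag, at which point the corresponding folio representative must be tested against $\Gcal$). Aligning this component bookkeeping with the folio updates so that every leaf-component is evaluated exactly once, and doing so without blowing up the state space beyond the bound above, is where most of the technical work will lie.
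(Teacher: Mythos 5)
Your plan is essentially the paper's own proof: it combines the Reidl--Rossmanith--Villaamil--Sikdar treedepth-style profiles (partial elimination forests with depths at most $k$ restricted to the bag) with the Baste--Sau--Thilikos representative machinery for the leaf components, giving $2^{\Ocal(k\cdot\tw+\tw\log\tw)}$ states per bag and the claimed linear dependence on $n$. The component bookkeeping you flag as the main obstacle is precisely what the paper's annotated trees with their crop/representation/filter operations and the connectivity conditions built into the introduce and join steps are designed to resolve, so your outline follows the same route rather than missing an idea.
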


The algorithm of \autoref{@achievements} can be seen as a generalization of the algorithm of Reidl,  Rossmanith,  Villaamil, and  Sikdar~\cite{ReidlRSS14afas} deciding whether $\td(G)\leq k$  in time $2^{\Ocal(k\cdot \tw)}\cdot n$.  Since, for any graph $G$ and any graph class $\Gcal$, $\ed_{\Gcal}(G)\leq \td(G)\leq \tw(G)\cdot\log n$, \autoref{@achievements} implies the existence of an \XP-algorithm
for {\sc Elimination Distance to $\Gcal$} parameterized by treewidth, when $\Gcal$ is minor-closed, running in time $n^{\Ocal(\tw^2)}$.
Given that the conjecture of  \cite{AgrawalKLPRSZ22} is still open, this is the best type of algorithm that one can expect for {\sc Elimination Distance to $\Gcal$} parameterized by treewidth.
Furthermore, since $\tw(G)\leq\td(G)$ for any graph $G$, \autoref{@achievements} implies an \FPT-algorithm for {\sc Elimination Distance to $\Gcal$} parameterized by treedepth, running in time $2^{\Ocal(\td^2)}\cdot n$.

\smallskip
Finally, for any minor-closed graph class $\Gcal$, we provide an upper bound on the size of the graphs in the obstruction set of $\Ecal_k(\Gcal)$.

\begin{theorem}\label{@reconstructions}
For every minor-closed graph class $\Gcal$ and for every positive integer $k$, each graph in $\obs(\Ecal_k (\Gcal))$ has at most $2^{2^{2^{2^{k^{\Ocal(1)}}}}}$ vertices.
Moreover, if $\obs(\Gcal)$ contains an apex-graph, this bound drops to $2^{2^{k^{\Ocal(1)}}}$.
\end{theorem}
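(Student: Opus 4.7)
Let $H\in\obs(\Ecal_k(\Gcal))$. By minor-minimality, $H\notin\Ecal_k(\Gcal)$ while every proper minor of $H$ lies in $\Ecal_k(\Gcal)$; in particular no vertex $v$ of $H$ can be \emph{irrelevant} in the sense that $H\in\Ecal_k(\Gcal)$ if and only if $H\setminus v\in\Ecal_k(\Gcal)$, for otherwise $H\setminus v\in\Ecal_k(\Gcal)$ would force $H\in\Ecal_k(\Gcal)$. The plan is to bound $|V(H)|$ in two stages: first I would control $\tw(H)$, and then the number of vertices given bounded treewidth.

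For the first stage, I would re-use the irrelevant-vertex machinery already developed for the algorithms of \autoref{@communicated} and \autoref{@swineherds}: if $H$ contains a flat wall of sufficiently large order $r(k)$, then the central inner vertex of its compass is irrelevant for $\Ecal_k(\Gcal)$. Since $H$ admits no irrelevant vertex, combining this with the Flat Wall Theorem and the Grid-Exclusion Theorem forces an explicit bound $\tw(H)\le g(k)$. The function $g$ mirrors the analysis of the corresponding algorithm: in the apex-obstructed case the apex-based flat-wall argument of~\cite{SauST21kapiII} gives $g(k)=2^{\Ocal(k^2\log k)}$, whereas in the general case the nested irrelevant-vertex recursion yields $g(k)=2^{2^{2^{\poly(k)}}}$.

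For the second stage, I would fix a nice rooted tree decomposition $(T,\beta)$ of $H$ of width $w=g(k)$ and attach to each node $t\in V(T)$ the \emph{state} that the dynamic-programming scheme of \autoref{@achievements} computes at $t$; this state belongs to a set of size $S(k,w)=2^{\Ocal(kw+w\log w)}$. If two nodes $t_1,t_2$ on a common root-to-leaf path, with $t_1$ an ancestor of $t_2$, share both their bag (up to the natural boundary renaming) and their DP state, then the subgraph of $H$ lying between $t_1$ and $t_2$ can be collapsed onto the subgraph rooted at $t_2$ via a sequence of edge contractions within $H$, producing a proper minor $H'$ of $H$ whose DP evaluation is unchanged; in particular $H'\notin\Ecal_k(\Gcal)$, contradicting minimality. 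A pigeonhole argument on a sufficiently long root-to-leaf path (obtained if necessary by routing through a balanced-separator hierarchy of $T$) forces such a collision whenever $|V(H)|$ exceeds $w\cdot S(k,w)$. Plugging in $w=g(k)$ yields the claimed bounds of $2^{2^{2^{2^{k^{\Ocal(1)}}}}}$ in the general case and $2^{2^{k^{\Ocal(1)}}}$ in the apex-obstructed case.

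The main obstacle will be the compatibility of the DP-equivalence with \emph{edge contractions} rather than mere subgraph substitution: coinciding DP states at $t_1$ and $t_2$ must entail that the intermediate portion of $H$ can actually be collapsed onto the part below $t_2$ by a sequence of minor operations inside $H$. I would address this by rephrasing the DP of \autoref{@achievements} in the language of boundaried graphs, so that state equivalence refines the minor-closed folio congruence associated with the target class $\Ecal_k(\Gcal)$; the collapse of the $t_1\to t_2$ segment then corresponds to contracting the connected subgraph swept out between the two bags, which is realisable as a sequence of edge contractions in $H$ and thus exhibits the required proper minor.
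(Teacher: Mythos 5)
Your overall plan coincides with the paper's: an irrelevant-vertex argument to bound the treewidth of an obstruction (the paper's \autoref{@libertines}), followed by a Lagergren-style pigeonhole on DP states along a long root-to-leaf path of a tree decomposition to bound the size (the paper's \autoref{@entrepreneurial}, using the exchangeability statement you gesture at, which the paper proves as \autoref{@manuscripts}). However, your second stage has a genuine gap at exactly the point you flag as "the main obstacle". Equality of bags "up to the natural boundary renaming" plus equality of DP states does not by itself let you collapse the portion of $H$ between $t_1$ and $t_2$: there is no canonical bijection between $\beta(t_1)$ and $\beta(t_2)$, and contracting "the connected subgraph swept out between the two bags" will in general merge distinct boundary vertices of $\beta(t_1)$ into one another, so the resulting minor is not of the form $\bar{\bf G}_{t_1}\oplus{\bf G}_{t_2}$ and the DP-exchangeability lemma cannot be applied to it. What is needed — and what the paper supplies — is a \emph{linked} tree decomposition (\autoref{@irrationality}): one first uses the word-combinatorics lemma (\autoref{@impervious}) to find many nodes with the \emph{same} bag size $t'$ such that no intermediate bag is smaller, whence linkedness yields $t'$ vertex-disjoint paths between the two bags; contracting these paths (and deleting the rest of the intermediate part) is what identifies $\beta(t_1)$ with $\beta(t_2)$ vertex-by-vertex and realizes the replacement as a proper minor of $H$ whose characteristic at the boundary is unchanged. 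Your proposal omits this ingredient, and without it the "collapse" step fails.

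This omission also breaks your quantitative accounting. Forcing a long root-to-leaf path in a (binary) tree decomposition costs an exponential in the path length, and the path must be long enough to pigeonhole over bag sizes, boundary graphs, and characteristics with the "no smaller intermediate bag" condition; this is why the paper's bound is \emph{doubly} exponential in the treewidth ($2^{2^{\Ocal_{\ell_\Fcal}(t^3+k t^2)}}$), not $w\cdot S(k,w)$ as you claim. Your stated final bounds only come out right because of a compensating error in stage one: mirroring the running-time analysis of the algorithms gives treewidth bounds ($2^{\Ocal(k^2\log k)}$ in the apex case, triple-exponential in general) that are weaker than what the obstruction setting allows. The paper's \autoref{@libertines} exploits minor-minimality to get a $(k+1)$-elimination set for free and works inside a $K_{s_\Fcal}$-minor-free component, yielding treewidth $k^{\Ocal(1)}$ when $\obs(\Gcal)$ contains an apex-graph and $2^{2^{k^{\Ocal(1)}}}$ in general; combined with the (correct) doubly-exponential-in-treewidth size bound, these are what produce the towers $2^{2^{k^{\Ocal(1)}}}$ and $2^{2^{2^{2^{k^{\Ocal(1)}}}}}$ of the statement. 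With your figures and a corrected stage two, the tower heights would exceed those claimed.
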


The only previously known bound for the graphs in $\obs(\Ecal_k(\Gcal))$ is the one for treedepth by Dvořák, Giannopoulou, and Thilikos~\cite{DvorakGT12forb}, who proved that
every graph in $\obs(\Ecal_k(\Gcal_{\emptyset}))$
has size at most $2^{2^{k-1}}$. \autoref{@reconstructions} can be seen as a generalization of the results of Sau, Stamoulis, and Thilikos~\cite{SauST21kapiI}, who provided similar upper bounds for the graphs in $\obs(\Acal_k (\Gcal))$.

These two results are, to the authors' knowledge, the first upper bounds on the size of the graphs in the obstruction set for the treedepth and the elimination distance parameters, and give, as an immediate consequence, the first known upper bound for the size of these obstruction sets.

\medskip
\noindent\textbf{Our techniques.}
We now proceed to provide a high-level overview of the main tools used to prove our results, without getting into technical details.
This paper builds heavily on the techniques recently introduced in~\cite{SauST21kapiII} in order to deal with {\sc Vertex Deletion to $\Gcal$}, which are based on exploiting the Flat Wall Theorem of Robertson and Seymour~\cite{RobertsonS95XIII}, namely the version proved by Kawarabayashi, Thomas, and Wollan~\cite{KawarabayashiTW18anew} and its recent restatement
by Sau, Stamoulis, and Thilikos~\cite{SauST21amor}. In a nutshell, the idea of \autoref{@sensations}, \autoref{@communicated}, and \autoref{@swineherds} is that, as far as the treewidth of the input graph is sufficiently large as an appropriate function of $k$, it is possible to either ``branch'' into a number of subproblems that depends only on $k$ and where the value of the parameter is strictly smaller, or to find an irrelevant vertex (i.e., a vertex that does not change the answer to the considered problem) and remove it from the graph. The irrelevant vertex technique originates from Robertson and Seymour \cite{RobertsonS95XIII} and is further developed in \cite{SauST21amor,SauST21kapiI,SauST21kapiII}. Once the treewidth is bounded, what remains is to apply the most efficient possible algorithm to solve the problem via dynamic programming on tree decompositions.

Let us focus more particularly on the techniques we use to prove \autoref{@sensations}.
Contrary to the algorithm of~\cite{SauST21kapiII} that solves {\sc Vertex Deletion to $\Gcal$} for any minor-closed class $\Gcal$, we {\sl avoid using iterative compression}. This explains the improvement from cubic to quadratic complexity in $n$.
The algorithm of \autoref{@sensations} can be seen as an extension of the algorithm of~\cite{SauST21kapiII} that solves {\sc Vertex Deletion to $\Gcal$} in the particular case where $\obs(\Gcal)$ contains some apex-graph, and uses ideas
 that date back to the work of Marx and Schlotter~\cite{MarxS07obta} for the {\sc Planarization} problem, that is, when ${\cal G}$ is the class of planar graphs. In \autoref{@recipients} we provide a sketch of the algorithms claimed in \autoref{@sensations}, \autoref{@communicated}, and
\autoref{@swineherds}, and in \autoref{@conditioned}, \autoref{@oeconomicus}, and \autoref{@stubbornly}, respectively, we present the algorithms in full detail, along with a proof of their correctness.

The proof of  \autoref{@achievements} consists of a dynamic programming algorithm
that combines the framework of~\cite{ReidlRSS14afas} for the particular case where $\Gcal$ contains only the empty graph (i.e., for treedepth) with the representative-based techniques introduced in~\cite{BasteST20acom}.
A bit more precisely, the idea is to encode the partial solutions (called \emph{characteristic}) via sets of annotated trees with some additional properties.
Here, the trees correspond to partial elimination trees and the annotations indicate the representatives, in the leaves of the elimination trees, with respect to the canonical equivalence relation defined for the target class $\Gcal$.
The size of the characteristic (cf.~\autoref{@belligerent}) dominates the running time of the whole algorithm. As usual when dealing with dynamic programming, the formal description of the algorithm, given in \autoref{@fanaticism}, is quite technical and lengthy.

Finally, to obtain the upper bound on the size of a graph $G\in\obs(\Ecal_k(\Gcal))$ claimed in \autoref{@reconstructions}, we proceed in two steps.
First, we bound the treewidth of $G$ by a function of $k$. To do so, we observe that if the treewidth of $G$ is big enough, then there is a big enough wall in $G$, and we find an irrelevant vertex $v$ for {\sc Elimination Distance to $\Gcal$} in $G$. However, $G\setminus\{v\}\in\Ecal_k(\Gcal)$ and $G\notin\Ecal_k(\Gcal)$, hence we reach a contradiction.
The second step is to bound the size of a minor-minimal obstruction of small treewidth. This uses the classic technique of Lagergren~\cite{Lagergren98uppe} (see also \cite{GiannopoulouPRT19cutw,GiannopoulouKRT19lean,KanteK14anup,KanteK18line,GiannopoulouPRT17line,Lagergren91anup,LagergrenA91mini,SauST21kapiI}) combined with the encoding of the tables of the
dynamic programming algorithm that we use to prove \autoref{@achievements}; see \autoref{@participant}.

\section{Basic definitions and restatement of the problems}\label{@immeasurable}
In \autoref{@authoritarianism} we give some basic definitions on graphs and minors
and in \autoref{@graphically} we redefine the problems in a more convenient way and we establish some conventions that we will use throughout the paper.

\subsection{Basic definitions}\label{@authoritarianism}

\subparagraph{Sets and integers.}
We denote by $\bN$ the set of non-negative integers.
Given two integers $p$ and $q$, the set $[p,q]$ contains every integer $r$ such that $p\leq r\leq q$.
For an integer $p\geq 1$, we set $[p]=[1,p]$ and $\bN_{\geq p}=\bN\setminus [0,p-1]$.
Given a non-negative integer $x$,
we denote by $\odd(x)$ the smallest odd number that is not smaller than $x$.
For a set $S$, we denote by $2^{S}$ the set of all subsets of $S$ and, given an integer $r\in[|S|]$,
we denote by $\binom{S}{r}$ the set of all subsets of $S$ of size $r$ and by $\binom{S}{\leq r}$ (resp. $\binom{S}{< r}$) the set of all subsets of $S$ of size at most $r$ (resp. $r-1$).
If ${\cal S}$ is a collection of objects where the operation $\cup$ is defined,
then we denote $\cupall {\cal S}=\bigcup_{X\in {\cal S}}X$.

\subparagraph{Basic concepts on graphs.}
All graphs considered in this paper are undirected, finite, and without loops or multiple edges.
We use standard graph-theoretic notation and we refer the reader to \cite{Diestel10grap} for any undefined terminology.
For convenience, we use $uv$ instead of $\{u,v\}$ to denote an edge of a graph.
Let $G$ be a graph. In the rest of this paper we always use $n=|G|$ for the size of $G$, i.e., the cardinality of $V(G)$,
and $m$ for the cardinality of $E(G)$, where $G$ is the input graph of the problem under consideration.
We say that a pair $(L,R)\in 2^{V(G)}\times 2^{V(G)}$ is a {\em separation} of $G$
if $L\cup R=V(G)$ and there is no edge in $G$ between $L\setminus R$ and $R\setminus L$.
The \emph{order} of $(L,G)$ is $|L\cap G|$.
Given a vertex $v\in V(G)$, we denote by $N_{G}(v)$ the set of vertices of $G$ that are adjacent to $v$ in $G$.
A vertex $v \in V(G)$ is \emph{isolated} if $N_G(v) = \emptyset$.
For $S \subseteq V(G)$, we set $G[S]=(S,E\cap \binom{S}{2} )$
and use the shortcut $G \setminus S$ to denote $G[V(G) \setminus S]$.
We may also use $G\setminus v$ instead of $G\setminus\{v\}$ for $v\in V(G)$.
For $A,B\subseteq V(G)$, $E(A,B)$ denotes the set of edges of $G$ with one extremity in $A$ and the other in $B$.
We use $\cc(G)$ to denote the set of connected components of $G$.

\subparagraph{Dissolutions and subdivisions.}
Given a vertex $v\in V(G)$ of degree two with neighbors $u$ and $w$, we define the {\em dissolution} of $v$
to be the operation of deleting $v$ and, if $u$ and $w$ are not adjacent, adding the edge $uw$.
Given two graphs $H$ and $G$, we say that $H$ is a {\em dissolution} of $G$
if $H$ can be obtained from $G$ after dissolving vertices of $G$.
Given an edge $e=uv\in E(G)$, we define the {\em subdivision} of $e$
to be the operation of deleting $e$, adding a new vertex $w$ and making it adjacent to $u$ and $v$.
Given two graphs $H$ and $G$, we say that $H$ is a {\em subdivision} of $G$
if $H$ can be obtained from $G$ after subdividing edges of $G$.
Observe that $G$ is a subdivision of $H$ iff $H$ is a dissolution of $G$.

\subparagraph{Contractions and minors.}
The \emph{contraction} of an edge $e = uv$ of a simple graph $G$ results in a simple graph $G'$
obtained from $G \setminus \{u,v\}$ by adding a new vertex $w$ adjacent to all the vertices
in the set $N_G(u) \cup N_G(v)\setminus \{u,v\}$.
A graph $G'$ is a \emph{minor} of a graph $G$, denoted by $G'\preceq G$,
if $G'$ can be obtained from $G$ by a sequence of vertex removals, edge removals, and edge contractions.
If only edge contractions are allowed, we say that $G'$ is a \emph{contraction} of $G$.
Let $H$ be a graph that is a minor of a graph $G$. We call any subgraph $M$ of $G$ that can be contracted to $H$ a {\em model} of $H$ in $G$.
Given two graphs $H$ and $G$, if $H$ is a minor of $G$ then for every vertex $v\in V(H)$ there is
a set of vertices in $G$ that are the endpoints of the edges of $G$ contracted towards creating $v$.
We call this set {\em model} of $v$ in $G$.
Given a finite collection of graphs $\Fcal$ and a graph $G$,
we use notation $\Fcal\preceq G$ to denote that some graph in $\Fcal$ is a minor of $G$.

\subparagraph{Minor obstructions.}
Let $\Gcal$ be a graph class that is closed under taking minors.
Recall that the {\em minor obstruction set} of $\Gcal$ is defined as the set of all minor-minimal graphs that are not in $\Gcal$, and is denoted by $\obs(\Gcal)$.
Given a finite non-empty collection of non-empty graphs $\Fcal$, we denote by $\exc(\Fcal)$ the set containing every graph $G$ that excludes all graphs in $\Fcal$ as minors.
We call each graph in $\exc(\Fcal)$ {\em $\Fcal$-minor-free}.
We use $\Gcal_{\emptyset}$ for the graph class containing only the empty graph $G_{\emptyset}$. Notice that $\obs(\Gcal_{\emptyset})=\{K_{1}\}$.

\subsection{Restating the problems}\label{@graphically}

Let $\Gcal$ be a minor-closed graph class and $\Fcal$ be its obstruction set.
Clearly, {\sc Vertex Deletion to $\Gcal$} is the same problem as
asking, given a graph $G$ and some $k\in \bN$, for a vertex set $S\subseteq V(G)$ of at most $k$ vertices
such that $G\setminus S\in \exc(\Fcal)$.
Following the terminology of~\cite{BasteST20hittI,BasteST20hittII,BasteST20hittIII,BasteST20acom,FominLPSZ20hitt,FominLMS12plan,KimLPRRSS16line,KimST18data,SauST21kapiII},
we call this problem {\sc $\Fcal$-M-Deletion}.
Likewise, {\sc Elimination Distance to $\Gcal$} is the same problem as
asking whether $\ed_{\exc(\Fcal)}(G)\leq k$.
We will thus follow a similar notation and call this problem {\sc $\Fcal$-M-Elimination Distance}. Using the notation, {\sc $\{K_{1}\}$-M-Elimination Distance} is the problem of asking whether $\td(G)\leq k$.
We say that ${\cal F}$ is {\em non-trivial} when all graphs in ${\cal F}$ contain at least two vertices.

\subparagraph{Some conventions.}
In the rest of the paper, we fix $\Gcal$ to be a minor-closed graph class and $\Fcal$ to be the set $\obs(\Gcal)$.
From Robertson and Seymour's theorem \cite{RobertsonS04XX}, we know that $\Fcal$ is a finite collection of graphs.
Given a graph $G$, we define its {\em apex number} to be the smallest integer $a$ for which there is a set $A\subseteq V(G)$ of size at most $a$ such that $G\setminus A$ is planar.
An \emph{apex-graph} is a graph with apex number one.
Also, we define the {\em detail} of $G$, denoted by ${\sf detail}(G)$, to be the maximum among $|E(G)|$ and $|V(G)|$.
We define three constants depending on $\Fcal$ that will be used throughout the paper whenever we consider such a collection $\Fcal$.
We define $a_\Fcal$ as the minimum apex number of a graph in $\Fcal$, we set $s_\Fcal:=\max\{|V(H)|\mid H\in \Fcal\}$, and we set $\ell_\Fcal:=\max\{ {\sf detail}(H)|\mid H\in\Fcal\}$.
Given a tuple $\textbf{t}=(x_{1},\ldots,x_{\ell})\in \bN^{\ell}$ and two functions $\chi,\psi:\bN\to\bN$, we write $\chi(n)=\Ocal_{\textbf{t}}(\psi(n))$ in order to denote that there exists a computable function $\phi:\bN^{\ell} \rightarrow \bN$ such that $\chi(n)=\Ocal(\phi(\textbf{t})\cdot \psi(n))$.
Notice that $s_\Fcal\leq\ell_\Fcal\leq s_\Fcal(s_\Fcal-1)/2$, and thus $\Ocal_{\ell_\Fcal}(\cdot)=\Ocal_{s_\Fcal}(\cdot)$.
Observe also that $\Acal_k(\Gcal)$ and $\Ecal_k(\Gcal)$ are $K_{s_\Fcal+k}$-minor-free graph classes, and thus, due to \cite{Thomason01thee}, we can always assume that $G$ has $\Ocal_{s_\Fcal}(k\sqrt{\log k}\cdot n)$ edges, otherwise we can directly conclude that $(G,k)$ is a \no-instance for both problems.

\section{Sketch of the algorithms}\label{@recipients}

Before going further through the definitions, let us provide a sketch of the algorithms claimed in \autoref{@sensations}, \autoref{@communicated} and \autoref{@swineherds}.
As mentioned in the introduction, \autoref{@sensations} can be seen as a generalization of the algorithm of~\cite{SauST21kapiII} that solves {\sc $\Fcal$-M-Vertex Deletion} in the particular case where $\Fcal$ contains some apex-graph. While many techniques taken from~\cite{SauST21kapiII} remain the same,
some new ingredients are needed so as to deal with the possible existence of many apices in all graphs in $\Fcal$. On the other hand, \autoref{@communicated} and \autoref{@swineherds} can be seen as an adaptation of \autoref{@sensations} to {\sc $\Fcal$-M-Elimination Distance}. Since these three algorithms follow a common streamline, we sketch all of them simultaneously while pointing out the steps where they differ.
Moreover, the full proofs of \autoref{@sensations}, \autoref{@communicated}, and \autoref{@swineherds} are given in \autoref{@conditioned}, \autoref{@oeconomicus}, and \autoref{@stubbornly}, respectively.

\subparagraph{Walls and flat walls.}
In this paper we extensively deal with walls and flat walls, following the framework of \cite{SauST21amor}.
Unfortunately, almost ten pages are required to provide all the technical notions to correctly present all this framework, that is necessary to use the tools developed in \cite{SauST21amor,SauST21kapiI,SauST21kapiII}. Thus, we only give some intuition on those definitions for the sketch of the algorithm, while the formal definitions are deferred to \autoref{@envisioning}.
More precisely, in \autoref{@postulated}, we introduce walls and several notions concerning them (just look at \autoref{@manuscript} to understand what a wall is).
In \autoref{@commiseratio} we provide the definitions of a rendition and a painting, which are not crucial for understanding this section.
Using the above notions, in \autoref{@tugendlehre} we define flat walls and flatness pairs. There are a number of technical terms (such as tilts, influence, regular flatness pairs, ...) that are not the main focus of the sketch.
Let us just mention that the perimeter of a flat wall of a graph $G$ separates $V(G)$ into two sets $X$ and $Y$ with $Y$ containing the wall. The \emph{compass} of a flat wall is $G[Y]$.
See \autoref{figflat}: $X$ is the set of vertices in the light green part, and $Y$ the set of vertices in the pink part.

\begin{figure}[ht]
	\centering
	\includegraphics[scale=2]{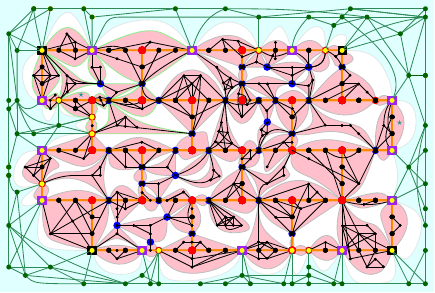} 
	\caption{\small Illustration of a flat wall $W$ inside a graph $G$. The edges of $W$ are depicted in orange and the compass of $W$ in $G$ is the union of all parts of $G$ that are drawn in pink cells. The graph in each such pink cell corresponds to a flap of the flat wall.}
	\label{figflat}
\end{figure}

In \autoref{@commodation} and~\autoref{@unequivocally} we define canonical partitions and the notion of bidimensionality. Informally speaking, a \emph{canonical partition} of a graph with respect to some wall $W$ refers to a partition of the vertex set of a graph in bags that follow the structure of a wall subgraph of the given graph; see \autoref{@aberration} for an illustration.
The \emph{bidimensionality} of a vertex set $X$ with respect to a wall $W$ of a graph $G$ intuitively expresses the ``spread'' of a set $X$ in a $W$-canonical partition of $G$. The crucial idea is that a set $X$ of small bidimensionality cannot ``destroy'' a large (flat) wall too much.

Finally, in \autoref{@indigenous} we present homogeneous walls.
Intuitively, \emph{homogeneous flat walls} are flat walls that allow the routing of the same set of (topological) minors in the augmented flaps (i.e., the flaps together with the apex set) ``cropped'' by each one of their bricks.
Such a homogeneous wall can be detected in a big enough flat wall (\autoref{@disreputable}) and this ``homogeneity'' property implies
that some central part of a big enough homogeneous wall can be declared irrelevant (\autoref{@civilizing}).
\medskip

The first common step is to run the following algorithm that states that a graph $G$ in $\Acal_k(\exc(\Fcal))$ or  $\Ecal_k(\exc(\Fcal))$ either has bounded treewidth (see \autoref{@complacency} for the formal definition) or contains a large wall.
This result was proved in \cite{SauST21kapiII} in the case of {\sc $\Fcal$-M-Deletion}.
The proof in the case of {\sc $\Fcal$-M-Elimination Distance}, necessary for \autoref{@communicated} and \autoref{@swineherds}, can be found in \autoref{@hierarchical}.

\begin{proposition}[\cite{SauST21kapiII}, \autoref{@hierarchical}]\label{@transforma}
Let $\Fcal$ be a finite collection of graphs.
There exist a function $\newfun{@veneration}: \bN\to\bN$ and an algorithm with the following specifications:\medskip

	\noindent{\tt Find-Wall}$(G,r,k)$\\
	\noindent{\textbf{Input}:} A graph $G$, an odd $r\in\bN_{\geq 3}$, and $k\in\bN$.\\
	\noindent{\textbf{Output}:} One of the following:
	\begin{itemize}
		\item {\bf Case 1:} Either a report that $(G,k)$ is a \no-instance of {\sc $\Fcal$-M-Deletion} (resp. {\sc $\Fcal$-M-Elimination Distance}), or
		\item {\bf Case 2:} a report that $G$ has treewidth at most $\funref{@veneration}(s_\Fcal)\cdot r+k$, or
		\item {\bf Case 3:} an $r$-wall $W$ of $G$.
	\end{itemize}
	Moreover, $\funref{@veneration}(s_\Fcal)=2^{\Ocal(s_\Fcal^{2} \cdot \log s_\Fcal)}$, and the algorithm runs in time $2^{\Ocal_{\ell_\Fcal}(r^2+(k+r) \cdot \log (k+r))}\cdot n$ (resp. $2^{\Ocal_{\ell_\Fcal}(r^2+k^2)}\cdot n$).
\end{proposition}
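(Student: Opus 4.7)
The plan is to follow the strategy of \cite{SauST21kapiII} for the deletion case and to extend it to the elimination-distance variant (which is what \autoref{@hierarchical} carries out). The method combines two linear-time ingredients: an \FPT{} approximation algorithm for treewidth (such as Korhonen's $2$-approximation, or Bodlaender's exact algorithm applied with a target width) together with a constructive Grid Theorem tailored to $\Fcal$-minor-free graphs, which asserts that if $G \in \exc(\Fcal)$ and $\tw(G) > \funref{@veneration}(s_\Fcal) \cdot r$, then an $r$-wall of $G$ can be extracted in time linear in $n$. The factor $\funref{@veneration}(s_\Fcal) = 2^{\Ocal(s_\Fcal^{2} \log s_\Fcal)}$ reflects the best known treewidth-to-wall ratio for graphs that exclude a fixed graph of size at most $s_\Fcal$ as a minor. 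Concretely, the algorithm first runs the treewidth approximation with target width $w := \funref{@veneration}(s_\Fcal) \cdot r + k$: if it returns a decomposition of width at most $w$, output Case~2; otherwise $\tw(G) > w$, and we invoke the wall-extraction routine, which either produces the $r$-wall (Case~3) or fails, in which case we output Case~1.

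The correctness of the Case~1/Case~3 dichotomy reduces to the following lemma, which is the new ingredient needed for the elimination-distance variant and which I would prove by induction on $k$: \emph{if $\ed_{\exc(\Fcal)}(G) \le k$ and $\tw(G) > \funref{@veneration}(s_\Fcal) \cdot r + k$, then $G$ contains an $r$-wall.} For $k = 0$ the hypothesis gives $G \in \exc(\Fcal)$ and the minor-sensitive Grid Theorem applies directly. For the inductive step, take a connected component $C$ of $G$ with $\tw(C) = \tw(G)$; since $\ed_{\exc(\Fcal)}(C) \le k$ and $C$ is connected, there is a vertex $v \in V(C)$ with $\ed_{\exc(\Fcal)}(C \setminus v) \le k - 1$. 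As $\tw(C \setminus v) \ge \tw(C) - 1 > \funref{@veneration}(s_\Fcal) \cdot r + (k - 1)$, at least one connected component $D$ of $C \setminus v$ satisfies both $\tw(D) > \funref{@veneration}(s_\Fcal) \cdot r + (k-1)$ and $\ed_{\exc(\Fcal)}(D) \le k - 1$, so the induction hypothesis delivers an $r$-wall in $D$, hence in $G$. The contrapositive is precisely what is needed: $\tw(G) > w$ together with the absence of any $r$-wall forces $(G,k)$ to be a \no-instance, justifying Case~1. The deletion variant is simpler and uses the estimate $\tw(G \setminus S) \ge \tw(G) - |S|$ for a $k$-apex set $S$, and is already handled in \cite{SauST21kapiII}.

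The main obstacle is to keep the whole procedure linear in $n$ while packing the dependence on $s_\Fcal$, $r$, and $k$ into the stated exponentials. The $r^{2}$ term in both running times comes from the wall-extraction step. The $2^{\Ocal((k+r)\log(k+r))}$ factor in the deletion running time comes from approximating treewidth up to $w = \funref{@veneration}(s_\Fcal) \cdot r + k$ via a $2^{\Ocal(w \log w)} \cdot n$ routine. The worse $2^{\Ocal(k^{2})}$ factor in the elimination-distance variant reflects the $k$-level unrolling implicit in the inductive lemma: certifying Case~1 must account for each of the $k$ elimination layers, each contributing roughly $2^{\Ocal(k)}$, which accumulates to $2^{\Ocal(k^{2})}$ rather than the gentler $2^{\Ocal((k+r)\log(k+r))}$ of the apex setting.
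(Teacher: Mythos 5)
Your combinatorial lemma is fine, and it is essentially the route the paper takes: it is equivalent to combining \autoref{@barbarians} (removing a $k$-elimination set lowers treewidth by at most $k$) with the linear grid theorem of \autoref{@overwhelmingly}. The gap is in the algorithmic content, which is where all the work in \autoref{@hierarchical} actually lies. You invoke a ``wall-extraction routine'' as a linear-time black box that, on a graph only known to satisfy $\tw(G)>\funref{@veneration}(s_\Fcal)\cdot r+k$, either outputs an $r$-wall or ``fails''; no such tool is available, and the paper does not have one. \autoref{@overwhelmingly} is purely existential, and the only constructive wall-finding tool in the paper, \autoref{@camouflage}, needs a tree decomposition of \emph{bounded} width as part of its input — exactly what you do not have once the approximation algorithm has reported $\tw(G)>w$. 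Moreover, your Case~1 argument conflates ``no $r$-wall exists'' with ``the routine failed to find one'': your contrapositive only licenses a \no-report when $G$ genuinely has no $r$-wall, whereas a routine whose guarantee holds only for $\Fcal$-minor-free inputs may fail on a \no-instance that does contain an $r$-wall, and nothing in your proposal turns that failure into a valid certificate.

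The paper resolves both issues simultaneously by never trying to extract a wall from the whole high-treewidth graph. When $n$ is small ($n<12c^3$ with $c=\funref{@relentlessly}(s_\Fcal)\cdot r+k$) it adds vertices one at a time, using \autoref{@naturalism} to locate the first prefix $G_j$ with $\tw(G_j)>c$; this prefix automatically has a width-$(2c+2)$ decomposition, on which the bounded-treewidth DP of \autoref{@unquestioned} either certifies a \no-instance (legitimately, since $\Ecal_k(\exc(\Fcal))$ is minor-closed and $G_j\subseteq G$) or produces a $k$-elimination set $S_j$; then \autoref{@barbarians} plus \autoref{@overwhelmingly} guarantee an $r$-wall in $G_j\setminus S_j$, which \autoref{@camouflage} can actually find because a bounded-width decomposition is at hand. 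For large $n$, the linear overall running time is obtained by the shrinking recursion of Perkovic--Reed (\autoref{@antisthenes}), which your proposal does not address: a naive repetition of the approximation-plus-extraction step would already cost $\Omega(n^2)$. Finally, your accounting of the $2^{\Ocal_{\ell_\Fcal}(k^2)}$ term via ``$k$-level unrolling'' of the inductive lemma does not correspond to anything your algorithm does; in the paper this term comes from running the DP of \autoref{@unquestioned} on decompositions of width $\Ocal(r+k)$, contributing $2^{\Ocal_{\ell_\Fcal}((r+k)(k+\log(r+k)))}$.
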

In  {\bf Case 1}, we can immediately conclude. In  {\bf Case 2}, since the treewidth of $G$ is bounded, we  use a dynamic programming algorithm to solve the corresponding problem.
Namely, we solve {\sc $\Fcal$-M-Deletion} on instances of bounded treewidth using the main result from \cite{BasteST20acom}.

\begin{proposition}[\cite{BasteST20acom}]\label{@calculated}
For every finite collection of graphs $\Fcal$, there exists an algorithm that, given a triple $(G,\tw,k)$ where $G$ is a graph of treewidth at most $\tw$
and $k$ is a non-negative integer, solves {\sc $\Fcal$-M-Deletion} in time $2^{\Ocal_{\ell_\Fcal}(\tw \cdot \log \tw)}\cdot n$.
\end{proposition}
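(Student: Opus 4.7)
The plan is to design a dynamic programming algorithm on a nice tree decomposition of width at most $\tw$, using the representative-based technique of Baste, Sau, and Thilikos. First, I would compute such a decomposition in time $2^{\Ocal(\tw)}\cdot n$ via Bodlaender's (or Korhonen's) algorithm, which is subsumed by the claimed running time. For each node $t$ of the decomposition with bag $B_t$, I would then maintain a table of \emph{characteristics}, each of the form $(S,j,\rho)$, where $S\subseteq B_t$ is the trace on $B_t$ of a partial deletion set, $j\leq k$ is its size, and $\rho$ is an equivalence class of boundaried graphs that records, modulo $\Fcal$-minor containment, how $G_t\setminus D$ (the processed subgraph minus the current deletion set $D$) can interact with the rest of $G$ through $B_t\setminus S$.

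The crucial ingredient is the choice of equivalence relation. Two $|B_t|$-boundaried graphs are declared equivalent if, for every boundaried graph glued along the boundary, the two resulting graphs either both contain or both avoid every $H\in\Fcal$ as a minor. Standard boundaried-graph arguments for minor-closed properties show that each such equivalence class admits a canonical representative of size $\Ocal_{\ell_\Fcal}(|B_t|)$, obtained through a protrusion-reduction routine, and that the total number of classes on a boundary of size $\tw+1$ is at most $2^{\Ocal_{\ell_\Fcal}(\tw\log\tw)}$. This is precisely where the key factor in the running time comes from, and using only canonical representatives keeps the table size within the stated bound.

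Processing the leaf, introduce-vertex, introduce-edge, forget, and join nodes of the nice tree decomposition is then fairly routine: each operation modifies the partial boundaried graph attached to each characteristic, and one immediately replaces the result by the canonical representative of its equivalence class so that the table never exceeds the size bound. At the root, the algorithm answers \yes\ if and only if there exists a characteristic $(S,j,\rho)$ with $j\leq k$ whose representative is $\Fcal$-minor-free, a property that can be checked in time depending only on $\ell_\Fcal$ since the representative has constant size.

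The main obstacle will be twofold: first, proving the existence of canonical representatives of size $\Ocal_{\ell_\Fcal}(\tw)$ and bounding the number of equivalence classes by $2^{\Ocal_{\ell_\Fcal}(\tw\log\tw)}$, which requires a careful analysis of folios of boundaried graphs under minor containment, and second, implementing the join operation in time $2^{\Ocal_{\ell_\Fcal}(\tw\log\tw)}$ per node rather than its naive quadratic blow-up. The latter is achieved by indexing tables by their $\rho$-coordinate and merging in time linear in the combined table size, exploiting the fact that the equivalence relation is a congruence with respect to boundary gluing.
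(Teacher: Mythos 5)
This statement is not proved in the paper at all: it is quoted as a black box from \cite{BasteST20acom}, and the ingredients you sketch (bounded-size canonical representatives of the gluing-congruence, the $2^{\Ocal_{\ell_\Fcal}(\tw\log \tw)}$ bound on the number of classes, and computing representatives) are exactly what the present paper imports as \autoref{@iinelstaai}, \autoref{@encounters}, and \autoref{@objectives}. Your proposal follows essentially the same representative-based dynamic programming route as that cited work (your coarser equivalence, quantified only over $H\in\Fcal$ rather than all graphs of detail at most $\ell_\Fcal$, only merges classes and so inherits the same bounds), so, modulo the two technical lemmas you yourself flag as the remaining work, it is the intended argument.
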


For {\sc $\Fcal$-M-Elimination Distance}, we use \autoref{@achievements} to conclude.
The proof of this (quite technically involved) dynamic programming algorithm is given in  \autoref{@fanaticism}.\smallskip

Therefore, it only remains to deal with {\bf Case 3}. Given an $r$-wall $W$ of $G$, we want to reduce the size of $G$.
To do so, we observe that we can either:
\begin{itemize}
\item {\bf Case 3a:} find a subwall $W_a$ of $W$ and an apex set $A_a$ such that $W_a$ is flat in $G\setminus A_a$ and has a compass of bounded treewidth, or
\item {\bf Case 3b:} find a subwall $W_b$ of $W$ that is very ``well connected'' to an apex set $A_b$ of small size.
\end{itemize}
The above distinction is done using two algorithmic versions of the Flat Wall Theorem consecutively.
The first one comes from \cite[Theorem 7.7]{KawarabayashiTW18anew} and is translated here in the new framework with tilts of \cite{SauST21amor}.
Informally, we say that a graph $H$ is \emph{grasped} by a wall $W$ in a graph $G$ if there is a model of $H$ in $G$ such that the model of every node of $H$ intersects $W$. See \autoref{@envisioning} for the formal definition.

\begin{proposition}[\cite{KawarabayashiTW18anew}]\label{@possession}
There are two functions $\newfun{@classifications}, \newfun{@collaboration}: \bN\to\bN$, such that the images of $\funref{@classifications}$ are odd integers, and an algorithm with the following specifications:\medskip

\noindent{\tt Grasped-or-Flat}$(G,r,t,W)$\\
\noindent{\textbf{Input}:} A graph $G$, an odd $r\in\bN_{\geq 3}$, $t\in\bN_{\geq 1}$, and an $\funref{@classifications}(t)\cdot r$-wall $W$ of $G$.\\
\noindent{\textbf{Output}:} One of the following:
\begin{itemize}
	\item Either a model of a $K_t$-minor in $G$ grasped by $W$, or
	\item a set $A\subseteq V(G)$ of size at most $\funref{@collaboration}(t)$ and a flatness pair $(W',\mathfrak{R}')$ of $G\setminus A$ of heigth $r$ such that $W'$ is a $\tilde{W'}$-tilt of some subwall $\tilde{W'}$ of $W$.
\end{itemize}
	Moreover, $\funref{@classifications}(t)=\Ocal (t^{26})$, $\funref{@collaboration}(t)=\Ocal (t^{24})$, and the algorithm runs in time $\Ocal(t^{24} m+n)$.
\end{proposition}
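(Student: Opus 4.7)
The plan is to invoke the algorithmic Flat Wall Theorem of Kawarabayashi, Thomas, and Wollan as a near black box, and then perform a cosmetic translation of its output into the tilts formalism of \cite{SauST21amor}. First I would fix the scaling $\funref{@classifications}(t)$ so that the input wall $W$ has height $\funref{@classifications}(t)\cdot r$, large enough that the structural Flat Wall Theorem guarantees either a $K_t$-minor model grasped by $W$ or an apex set of size at most $\funref{@collaboration}(t)$ whose removal leaves a flat subwall of height $r$. The explicit polynomial bounds $\funref{@classifications}(t)=\Ocal(t^{26})$ and $\funref{@collaboration}(t)=\Ocal(t^{24})$ are then inherited from the careful accounting in \cite{KawarabayashiTW18anew}; no new combinatorial bound is needed.

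For the algorithmic dichotomy, I would rely on the constructive proof of \cite{KawarabayashiTW18anew}, which proceeds by iteratively augmenting a system of vertex-disjoint paths anchored on the bricks of $W$. At every phase, a Menger-type argument either produces one more path (pushing the count toward a full $K_t$-model grasped by $W$) or exhibits a small separator that is accumulated into the apex set $A$. Summing the cost of each augmentation step over all phases yields the $\Ocal(t^{24}m)$ bound, with the additive $\Ocal(n)$ coming from the initial data-structure setup. The crucial point is that the ``grasped'' conclusion is a free consequence of the augmentation strategy, since every path built this way necessarily intersects $W$ at its endpoints, so the resulting $K_t$-model has each branch set touching $W$.

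The nontrivial part of the translation is that the raw output of \cite{KawarabayashiTW18anew} is a flat subwall together with a $\Sigma$-decomposition of its compass, while the proposition demands a flatness pair $(W',\mathfrak{R}')$ in which $W'$ is a $\tilde{W'}$-\emph{tilt} of some subwall $\tilde{W'}$ of $W$. To achieve this, I would post-process the $\Sigma$-decomposition into a rendition and painting as defined in \autoref{@commiseratio}, and then apply the canonical tilt operation of \cite{SauST21amor} to align $W'$ with a subwall of the original $W$. Since both operations are purely combinatorial on the rendition, they add only a linear overhead and do not affect the $\Ocal(t^{24}m+n)$ running time.

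The main obstacle I anticipate is precisely this last translation step: one must verify that the subwall identified by the KTW algorithm, once canonicalised via a tilt, coincides with a subwall of the \emph{input} wall $W$ rather than of some intermediate wall produced mid-algorithm, and that the flatness certificate respects the regularity conditions assumed later in the paper. Once this bookkeeping is settled, the time and parameter bounds follow by direct inspection, and the two cases of the output match the two branches in the statement.
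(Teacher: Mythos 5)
Your proposal matches the paper's treatment: the paper does not reprove this result but imports it directly from \cite{KawarabayashiTW18anew} (their Theorem~7.7) restated in the tilts framework of \cite{SauST21amor}, which is exactly the black-box-plus-translation route you describe, with the bounds $\funref{@classifications}(t)=\Ocal(t^{26})$, $\funref{@collaboration}(t)=\Ocal(t^{24})$ and the running time inherited unchanged. Your flagged concern about aligning the output wall with a tilt of a subwall of the \emph{input} wall $W$ is precisely the bookkeeping that \cite{SauST21amor} was written to handle, so nothing further is needed.
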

We would like to mention that the notion of being grasped by a wall is one of the new main arguments yileding the improvement of the complexity for {\sc $\Fcal$-M-Deletion} compared to \cite{SauST21kapiII}.\smallskip

The second one comes from \cite{SauST21kapiII} and adds the condition that $W'$ has a compass of bounded treewidth, at the price of dropping the condition that the model of $K_t$ is grasped by $W$.

\begin{proposition}[\cite{SauST21kapiII}]\label{@unimportant}
There exist a function $\newfun{@corollaries}:\bN\to\bN$ and an algorithm with the following specifications:\medskip

	\noindent{\tt Clique-Or-twFlat}$(G,r,t)$\\
	\noindent{\textbf{Input}:} A graph $G$, an odd $r\in\bN_{\geq 3}$, and $t\in\bN_{\geq 1}$.\\
	\noindent{\textbf{Output}:} One of the following:
	\begin{itemize}
		\item Either a report that $K_t$ is a minor of $G$, or
		\item a tree decomposition of $G$ of width at most $\funref{@corollaries}(t)\cdot r$, or
		\item a set $A\subseteq V(G)$ of size at most $\funref{@collaboration}(t)$ and a regular flatness pair $(W',\mathfrak{R}')$ of $G\setminus A$ of height $r$ whose $\mathfrak{R}'$-compass has treewidth at most $\funref{@corollaries}(t)\cdot r$.
	\end{itemize}
	Moreover, $\funref{@corollaries}(t)=2^{\Ocal(t^2\log t)}$ and this algorithm runs in time $2^{\Ocal_t(r^2)}\cdot n$. The algorithm can be modified to obtain an explicit dependence on $t$ in the running time, namely  $2^{2^{\Ocal(t^2\log t)}\cdot r^3\log r}\cdot n$.
\end{proposition}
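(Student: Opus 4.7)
The plan is to iteratively apply \autoref{@possession}, peeling off compasses of large treewidth until we reach one with bounded treewidth or certify a $K_t$-minor.

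First, I would compute an approximate tree decomposition of $G$ using a constant-factor linear-time treewidth approximation algorithm. If its width is at most $\funref{@corollaries}(t)\cdot r$, we output outcome 2 and stop. Otherwise, $\tw(G)$ exceeds the threshold that, by an algorithmic version of the excluded grid theorem, guarantees an $\funref{@classifications}(t)\cdot r$-wall $W$ of $G$, computable in time $\Ocal_t(n)$. We invoke {\tt Grasped-or-Flat}$(G,r,t,W)$ from \autoref{@possession}. If it returns a $K_t$-model grasped by $W$, we output outcome 1. Otherwise, we obtain an apex set $A$ with $|A|\leq\funref{@collaboration}(t)$ and a flatness pair $(W',\mathfrak{R}')$ of $G\setminus A$ of height $r$. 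We then test, via the same approximation algorithm, whether the $\mathfrak{R}'$-compass has treewidth at most $\funref{@corollaries}(t)\cdot r$. If yes, we output outcome 3.

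If the compass of $W'$ has large treewidth, we recurse inside it: we extract a large wall inside the compass and call {\tt Grasped-or-Flat} again, this time applied to the compass as a subgraph. Any $K_t$-minor found in the compass immediately lifts to a $K_t$-minor in $G$, since the compass is a subgraph of $G\setminus A$. The main obstacle is ensuring that the apex set of the \emph{final} output flatness pair remains bounded by $\funref{@collaboration}(t)$, rather than growing with the recursion depth. The key observation for this is that a nested flatness pair produced inside the compass is, by construction, disjoint from $A$ and separated from it by the outer perimeter; using the regularity of the flatness pairs and the tilts framework from \autoref{@possession}, any inner flat wall can be ``promoted'' to a flat wall of $G$ whose apex set is, modulo a bounded overhead, that of the inner call alone. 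A careful bookkeeping shows that the recursion terminates after a number of iterations depending only on~$t$, since each iteration either exhibits a $K_t$-minor or strictly simplifies the structure of the compass.

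Combining the $\Ocal(t^{24}m+n)$ cost per call to {\tt Grasped-or-Flat}, the cost of each wall-extraction step (from the excluded grid theorem), and the cost of the approximation algorithm for treewidth yields the announced running time $2^{\Ocal_t(r^2)}\cdot n$, while instantiating the excluded grid theorem with its explicit polynomial-gap form and tracking constants through the recursion gives the sharper $2^{2^{\Ocal(t^2\log t)}\cdot r^3\log r}\cdot n$ bound; the dominating factor $\funref{@corollaries}(t)=2^{\Ocal(t^2\log t)}$ arises from the width promised by the innermost successful flat-wall call. The hardest part of the argument is the ``promotion'' of inner flatness pairs to the whole graph with a controlled apex set, which is where the regularity assumption and the precise form of the tilts provided by \autoref{@possession} are essential.
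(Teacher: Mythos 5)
This proposition is imported verbatim from \cite{SauST21kapiII} and the paper you are reading never proves it; it is cited as a black box. So there is no ``paper's own proof'' to compare against, and I can only assess your sketch on its own terms.

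Your skeleton (approximate treewidth; if small output a decomposition; otherwise extract a wall, run {\tt Grasped-or-Flat}, and either report a $K_t$-minor or recurse inside the compass of the flat wall) is a reasonable starting point, and you correctly identify the central difficulty, namely keeping the apex set bounded across recursion levels. But your resolution of that difficulty is not a proof. The claim that an inner flat wall can be ``promoted to a flat wall of $G$ whose apex set is, modulo a bounded overhead, that of the inner call alone'' is precisely where all the work lies, and it is asserted rather than shown. You need to argue, from the definition of flatness pairs, that the outer apex set $A$ and the vertices outside the outer compass can simply be absorbed into the first coordinate of the inner rendition without destroying flatness --- i.e., that the inner perimeter together with the inner apex set $A'$ already separates the inner compass from all of $G$, not merely from the outer compass. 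If that separation fails and the ``overhead'' must be paid at each level, then the apex set grows linearly in the recursion depth and the bound $|A| \leq \funref{@collaboration}(t)$ is lost.

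Second, the termination and running-time claims are unsupported and, as stated, false. You assert the recursion terminates after a number of iterations depending only on $t$; nothing in your argument (or in the structure of the problem) bounds the nesting depth of high-treewidth compasses by a function of $t$ alone. Termination has to come from a size argument --- each recursive call operates on a strictly smaller vertex set --- but then the depth can be $\Omega(n)$, and since each call to {\tt Grasped-or-Flat} costs $\Omega(n)$ on its own, the naive recursion gives $\Omega(n^2)$, not the claimed $2^{\Ocal_t(r^2)}\cdot n$. Getting a linear dependence on $n$ requires either showing that the compass shrinks geometrically, or replacing the recursion with a single application of the Flat Wall Theorem on a wall of height polynomially larger than $r$ combined with a zooming/pigeonhole step that locates an $r$-subwall whose influence already has bounded treewidth. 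Neither of these is present in your sketch, so both the apex-set bound and the running-time bound remain open gaps.
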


{\tt Grasped-or-Flat} is used to find a big enough complete graph ``controlled'' by the input wall, while we need {\tt Clique-or-twFlat} to find a flat wall whose compass has bounded treewidth. Unfortunately, we cannot obtain both conditions simultaneously, and this is why we need  both results. If, after using both algorithms, we obtain a flatness pair $(\tilde{W'},\mathfrak{R}')$ of $G\setminus A_a$ of heigth $r_a$ whose compass has bounded treewidth, then we are in {\bf Case 3a}. In that case, the following result from~\cite{SauST21kapiII} provides an algorithm that, given a flatness pair of big enough height, outputs a homogeneous flatness pair.

\begin{proposition}[\cite{SauST21kapiII}]\label{@disreputable}
There is a function $\newfun{@philistines}:\bN^4\to \bN$, whose images are odd integers, and an algorithm with the following specifications:\medskip

	\noindent{\tt Homogeneous}$(r,\tilde{a},a,\ell,t,G,A,W,\cal R)$\\
	\noindent{\textbf{Input}:} Five integers $r\in\bN_{\geq 3}$, $\tilde{a},a,\ell,t\in \bN$, where $\tilde{a}\leq a$, a graph $G$, a set $A\subseteq V(G)$ of size at most $a$, and a flatness pair $(W,\mathfrak{R})$ of $G\setminus A$ of height $\funref{@philistines}(r,a,\tilde{a},\ell)$ whose $\mathfrak{R}$-compass has treewidth at most $t$.\\
	\noindent{\textbf{Output}:} A flatness pair $(\breve{W},\breve{\mathfrak{R}})$ of $G\setminus A$ of height $r$ that is $\ell$-homogeneous with respect to $\binom{A}{\leq \tilde{a}}$ and is a $W'$-tilt of $(W,\mathfrak{R})$ for some subwall $W'$ of $W$.\\
Moreover, $\funref{@philistines}(r,\tilde{a},a,\ell) = \Ocal(r^{\newfun{@withdrawing}(\tilde{a},a,\ell)})$ where $\funref{@withdrawing}(\tilde{a},a,\ell)=2^{a^{\tilde{a}}\cdot2^{\Ocal((\tilde{a}+\ell)\cdot\log(\tilde{a}+\ell))}}$ and the algorithm runs in time $2^{\Ocal(\funref{@withdrawing}(\tilde{a},a,\ell)\cdot r \log r+t\log t)}\cdot(n+m)$.
\end{proposition}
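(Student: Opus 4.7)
The plan is to proceed in three conceptual stages: typing, a Ramsey-type extraction on walls, and tilting. For every flap $F$ of the rendition $\mathfrak{R}$, define the \emph{type} of $F$ as the function $\tau_F$ sending each $A'\in\binom{A}{\leq\tilde{a}}$ to the collection of isomorphism classes of rooted topological minors of detail at most $\ell$ that can be realized inside the augmented flap $G[V(F)\cup A']$ with the vertices of $A'$ designated as roots. The number of possible $A'$ is at most $a^{\tilde{a}}$, and the number of rooted topological-minor patterns of detail at most $\ell$ with roots drawn from $A'$ is at most $2^{\Ocal((\tilde{a}+\ell)\log(\tilde{a}+\ell))}$. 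Hence the total number of possible types is at most $c:=2^{a^{\tilde{a}}\cdot 2^{\Ocal((\tilde{a}+\ell)\log(\tilde{a}+\ell))}}$, which matches $\funref{@withdrawing}(\tilde{a},a,\ell)$.

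Because the $\mathfrak{R}$-compass has treewidth at most $t$, all flap types are computed simultaneously by standard dynamic programming on a tree decomposition of the compass, running a rooted-topological-minor test for each (pattern, $A'$)-pair; the tables per bag have size $2^{\Ocal(t\log t)}$, and iterating over all patterns, $A'$-sets, and flaps yields total running time $2^{\Ocal(\funref{@withdrawing}(\tilde{a},a,\ell)\cdot r\log r+t\log t)}\cdot(n+m)$. Next, I would apply a Ramsey-type argument on walls: any wall of height $\Ocal(r^{c})$ whose bricks are $c$-coloured admits a subwall of height $r$ whose bricks all receive the same colour. Feeding this to our height-$\funref{@philistines}(r,\tilde{a},a,\ell)=\Ocal(r^{c})$ wall $W$ coloured by flap/brick-type yields a subwall $W'$ of $W$ of height $r$ with a uniform brick-type $\tau^{\star}$. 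Finally, compute the $W'$-tilt $(\breve{W},\breve{\mathfrak{R}})$ of $(W,\mathfrak{R})$ using the tilt construction of \cite{SauST21amor}; by construction every brick of $\breve{W}$ inherits the common type $\tau^{\star}$, which is precisely the $\ell$-homogeneity condition with respect to $\binom{A}{\leq\tilde{a}}$.

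The main obstacle is the dynamic-programming step: rooted topological-minor containment has to be queried simultaneously for all $A'\in\binom{A}{\leq\tilde{a}}$ and for all patterns of detail at most $\ell$, while the per-bag tables of the DP stay of size $2^{\Ocal(t\log t)}$ and no extra factor sneaks inside the innermost exponential of $\funref{@withdrawing}$; this forces a careful encoding of partial routings of roots through $A$ into the bag state. A secondary subtlety is verifying that the tilt operation on $(W,\mathfrak{R})$ preserves each brick's type: the tilt may modify flaps by merging adjacent trivial cells and by local rearrangements on the perimeter, so one has to check, using the machinery of \cite{SauST21amor}, that these modifications leave the rooted-topological-minor-containment information invariant and that the uniform type of the bricks of $W'$ therefore transfers to the bricks of $\breve{W}$.
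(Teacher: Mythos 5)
You should first note that this paper does not prove this proposition at all: it is imported verbatim from \cite{SauST21kapiII}, so the comparison is with the proof given there (building on \cite{BasteST20acom,SauST21kapiI}). Your overall pipeline — typing augmented flaps by their rooted (topological-minor) folios, computing these types by dynamic programming on the bounded-treewidth $\mathfrak{R}$-compass, extracting a suitable $r$-subwall, and finally passing to a $W'$-tilt via \cite{SauST21amor} — is indeed the route taken there, and your count of flap types matches the function $\funref{@withdrawing}(\tilde{a},a,\ell)$.

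The genuine gap is the ``Ramsey-type'' extraction step. The statement you invoke — every wall of height $\Ocal(r^{c})$ whose bricks are $c$-coloured contains an $r$-subwall whose bricks all receive the same colour — is false: colour the bricks of an arbitrarily large wall in checkerboard fashion with two colours; then no subwall of height at least $3$ built from consecutive rows and columns has monochromatic bricks, and for a non-contiguous subwall its bricks are not bricks of $W$, so they do not even carry a well-defined colour. The property actually required by the proposition is weaker and set-valued: $\ell$-homogeneity asks that every \emph{internal brick of the subwall} have the same $(\tilde{A},\ell)$-palette, i.e.\ the same \emph{set} of folios of the augmented flaps in its influence, for each $\tilde{A}\in\binom{A}{\leq \tilde{a}}$. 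This is exactly what makes the extraction possible, and it is proved in \cite{SauST21kapiI,BasteST20acom} by induction on the number of flap types: either the influence of some large subwall misses a type (recurse with one fewer type), or one can choose an $r$-subwall each of whose bricks contains a region exhibiting all occurring types, so all palettes coincide; this two-case argument is what produces the $\Ocal(r^{\funref{@withdrawing}(\tilde{a},a,\ell)})$ height bound. Your plan cannot be repaired by a better Ramsey bound, since the monochromatic statement fails at every height; it must be reformulated in terms of palettes. Two smaller points: your flap types should be rooted not only at $A'$ but also at the up-to-three base vertices $\partial F$ of each flap (the paper's folios are of boundaried graphs with boundary $\tilde{A}\cup\partial F$), otherwise the output need not be $\ell$-homogeneous in the required sense — this does not affect the counting; and your concern about the tilt is resolved by the definition of a $W'$-tilt, which keeps the $W'$-internal cells and their images under $\sigma$ unchanged, hence preserves the palettes of the internal bricks.
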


Then we use the next result, that essentially says that the central vertex $v$ of a big enough homogeneous wall is irrelevant, i.e., $(G,k)$ and $(G\setminus v,k)$ are equivalent instances of the corresponding problem. Here, $\bid_{G\setminus A,W}(X)$ denotes the bidimensionality of a set $X$ in the wall $W$ with apex set $A$.
The combinatorial version of this result is stated in \cite[Lemma 16]{SauST21kapiI} and can be algorithmized using \cite[Theorem 5]{SauST21amor} (\autoref{@expurgated}). Before presenting the next result we give some insight on the Unique Linkage Theorem.

\medskip

A \emph{linkage} $L$ of \emph{order} $k$ in a graph $G$ is the union of a collection of $k$ pairwise-disjoint paths of $G$. The set of pairs
of vertices corresponding to the endpoints of these paths is the \emph{pattern} of $L$.
The Unique Linkage Theorem, proven in \cite{RobertsonSGM22,RobertsonS09XXI} and also \cite{KawarabayashiW10asho}, asserts that there is a function $f_{\sf ul}$
such that if  $L$  is a linkage of pattern $\Pcal$ of order $k$ in a graph $G$ with $V(G) = V(L)$ and $L$ is unique with pattern $\Pcal$, then the treewidth of $G$ is at most $f_{\sf ul}(k)$. The linkage function appears in the general dependency of several results related to the application of the irrelevant vertex technique (see~\cite{SauST21amor,SauST21kapiI,BasteST23hitti,GolovachST23comb,AdlerKKLST17irre,FominGSST23comp,GolovachST23model}).

\begin{proposition}[\cite{SauST21amor,SauST21kapiI}]\label{@civilizing}
Let $\Fcal$ be a finite collection of graphs.
There exist two functions $\newfun{@differences}:\bN^4\to\bN$ and $\newfun{@deliberation}:\bN^2\to\bN$, and an algorithm with the following specifications:\medskip

	\noindent{\tt Find-Irrelevant-Vertex}$(k,a,G,A,W,\cal R)$\\
	\noindent{\textbf{Input}:} Two integers $k, a\in\bN$, a graph $G$, a set $A\subseteq V(G)$,
	and a regular flatness pair $(W,\cal R)$ of $G\setminus A$ of height at least $\funref{@differences}(a,\ell_{\cal F},3,k)$ that is $\funref{@deliberation}(a,\ell_{\cal F})$-homogeneous with respect to $\binom{A}{\leq a}$.
\\
	\noindent{\textbf{Output}:} A vertex $v$ of $G\setminus A$ such that for every set $X\subseteq V(G)$
	with $\bid_{G\setminus A,W}(X)\leq k$ and $|A\setminus X|\leq a$,
	it holds that $G\setminus X\in\exc(\Fcal)$ if and only if $G\setminus (X\setminus v)\in\exc(\Fcal)$.\\
Moreover, $\funref{@differences}(a,\ell_\Fcal,q,k)=\Ocal(k\cdot (f_{\sf ul}(16a+12\ell_{\cal F}))^3 + q )$, where $f_{\sf ul}$ is the function of the Unique Linkage Theorem ({\em \cite[Theorem 1]{KawarabayashiW10asho}}) and $\funref{@deliberation}(a,\ell_{\cal F})= a+\ell_{\cal F} +3$, and this algorithm runs in time ${\cal O}(n+m)$.
\end{proposition}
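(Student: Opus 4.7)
The plan is to reduce this proposition to a purely existential combinatorial statement about homogeneous flat walls, namely \cite[Lemma 16]{SauST21kapiI}, and then to turn that statement into a linear-time algorithm using the flat-wall machinery of \cite[Theorem 5]{SauST21amor} (i.e.\ \autoref{@expurgated}). I would take the candidate irrelevant vertex $v$ to be the ``centre'' of a suitably chosen central sub-wall of $W$; the function $\funref{@deliberation}(a,\ell_{\cal F})=a+\ell_{\cal F}+3$ is calibrated exactly so that the homogeneity hypothesis supplies every flap of $W$ with the same palette of routable topological minors on the apex side $A$, uniformly over all subsets of $A$ of size at most $a$. Crucially, $v$ must be chosen \emph{before} seeing $X$, so it is selected using only $W$ and its concentric structure.

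To prove combinatorial irrelevance, I would fix an arbitrary admissible $X$, so $\bid_{G\setminus A,W}(X)\le k$ and $|A\setminus X|\le a$. Since $X$ has small bidimensionality in $W$, standard wall-pruning arguments yield a concentric sub-wall $W^{*}$ of $W$ centred at $v$, still of height at least $c\cdot f_{\sf ul}(16a+12\ell_{\cal F})$ for an absolute constant $c$, whose bricks avoid $X$ entirely; this is where the factor $k\cdot (f_{\sf ul}(16a+12\ell_{\cal F}))^{3}$ appearing in $\funref{@differences}$ is budgeted. The homogeneity of $(W,\mathcal{R})$ with respect to $\binom{A}{\le a}$, together with $|A\setminus X|\le a$, then guarantees that every flap of $W^{*}$ is equivalent, for the purpose of routing $\Fcal$-minor models through $A\setminus X$, to every other flap of $W^{*}$. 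Now suppose $G\setminus X\notin\exc(\Fcal)$ and pick a model $M$ of some $H\in\Fcal$ in $G\setminus X$. The Unique Linkage Theorem of \cite{KawarabayashiW10asho} applied inside $W^{*}$ lets one reroute the portion of $M$ that meets $v$ through an equivalent, $v$-avoiding brick of $W^{*}$, producing a model of $H$ in $G\setminus (X\cup\{v\})$; the converse direction is trivial since $G\setminus(X\setminus v)\supseteq G\setminus X$. This establishes the required equivalence, with $v$ independent of $X$ because $W^{*}$ is always a concentric sub-wall around the fixed $v$.

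For the algorithmic realisation, locating the central vertex $v$ and extracting the required concentric sub-walls are constant-time operations on top of the flat-wall data structures built by \cite[Theorem 5]{SauST21amor} in linear time, giving the claimed $\Ocal(n+m)$ bound. The rerouting step described above is not performed by the algorithm; it is only invoked in the correctness proof.

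The main obstacle will be the second step: synchronising the concentric pruning induced by $X$ with the rerouting step coming from the Unique Linkage Theorem, while keeping the height requirement down to $\funref{@differences}(a,\ell_{\cal F},3,k)$. In particular, the cubic dependence on $f_{\sf ul}$ reflects a three-layered concentric decomposition of $W$: one layer to absorb the bidimensional effect of $X$, one layer to host the linkage after rerouting, and one layer to guarantee that $v$ is genuinely interior to $W^{*}$ so that homogeneous bricks are available on every side of $v$. Making the choice of $v$ uniform in $X$, rather than letting it depend on $X$, is the delicate point and is exactly what the concentric arrangement around a single centre is designed to achieve.
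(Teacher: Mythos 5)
Your proposal follows essentially the same route as the paper, which does not reprove this proposition but imports its combinatorial core from \cite[Lemma 16]{SauST21kapiI} and algorithmizes it via \cite[Theorem 5]{SauST21amor} (\autoref{@expurgated}), taking the irrelevant vertex to be the central vertex of the homogeneous flat wall --- exactly your plan, including the linear-time bound coming from the tilt computation. Your extra sketch of why the combinatorial lemma holds (bidimensionality pruning around the fixed centre, equal $(A\setminus X,\ell)$-palettes from homogeneity over $\binom{A}{\leq a}$, and rerouting via the Unique Linkage Theorem) is consistent with the cited argument and goes beyond the level of detail the paper itself provides.
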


We can prove that both $k$-apex sets and $k$-elimination sets have small bidimensionality (cf. \autoref{@psychoanalyse} and \autoref{@lucinatory}).
If, for every $k$-apex set $S$, $G\setminus S\in\exc(\Fcal)$ if and only if $G\setminus (S\setminus v)\in\exc(\Fcal)$, then it is straightforward to see that $v$ is irrelevant for {\sc $\Fcal$-M-Deletion}.
It is slightly less trivial to prove that, for each $k$-elimination set~$S$, we can find some superset $X\supseteq S$ of small bidimensionality such that a similar statement holds.
Additional details are available in~\autoref{@unequivocally} of \autoref{@envisioning}.

Therefore we can recursively solve the problems on the instance $(G\setminus v,k)$.\medskip

If no flatness pair whose compass has bounded treewidth was found, then we are in {\bf Case~3b}.
In this case, inspired by \cite{MarxS07obta} and \cite{SauST21kapiII}, we use the following result of~\cite{SauST21kapiI} that basically says that if there is a big enough flat wall $W$ and an apex set $A'$ of $a_\Fcal$ vertices that are all adjacent to many bags of a canonical partition of $W$, then each $k$-apex set or $k$-elimination set intersects~$A'$.

\begin{proposition}[\cite{SauST21kapiI}]\label{@proclamation}
There exist three functions $\newfun{@unaffected}, \newfun{@categories}, \newfun{@provincial}: \bN^{3}\to \bN$,
	such that if
	$G$ is a graph,
	$k\in\bN$,
	$A$ is a subset of $V(G)$,
	$(W,\mathfrak{R})$ is a flatness pair of $G\setminus A$ of height at least $\funref{@unaffected}(a_\Fcal,s_\Fcal,k)$,
	$\tilde\Qcal$ is a $W$-canonical partition of $G\setminus A$,
	$A'$ is a subset of vertices of $A$ that are adjacent, in $G$, to vertices of at least $\funref{@categories}(a_\Fcal,s_\Fcal,k)$ $\funref{@provincial}(a_\Fcal,s_\Fcal,k)$-internal bags of $\tilde\Qcal$, and $|A'|\geq a_\Fcal$,
	then for every set $X\subseteq V(G)$ such that $G\setminus X \in \exc(\Fcal)$ and $\bid_{G\setminus A,W}(X)\leq k$,
	it holds that $X\cap A'\neq\emptyset$.
	Moreover, $\funref{@unaffected}(a,s,k)=\Ocal(2^a \cdot s^{5/2} \cdot k^{5/2})$,
	$\funref{@categories}(a,s,k)=\Ocal(2^a \cdot s^3 \cdot k^3)$, and $\funref{@provincial}(a,s,k)=\Ocal((a^2 +k)\cdot s)$, where $a=a_\Fcal$ and $s= s_\Fcal$.
\end{proposition}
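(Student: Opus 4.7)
The plan is to argue by contradiction. Suppose $X \subseteq V(G)$ satisfies $G\setminus X \in \exc(\Fcal)$ and $\bid_{G\setminus A,W}(X)\leq k$, yet $X \cap A' = \emptyset$. I will construct a model of some graph $H\in\Fcal$ inside $G\setminus X$, contradicting $G\setminus X\in\exc(\Fcal)$. To this end, fix $H\in\Fcal$ witnessing the apex number $a_\Fcal$: that is, an apex set $A_H\subseteq V(H)$ with $|A_H|=a_\Fcal$ such that $H\setminus A_H$ is planar. Since $|A'|\geq a_\Fcal$ and $X\cap A'=\emptyset$, we may pick an arbitrary subset $A''\subseteq A'$ of size exactly $a_\Fcal$, entirely contained in $G\setminus X$; these will serve as the models of the vertices of $A_H$ in the target minor.

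The second step exploits the bidimensionality hypothesis to secure a large usable portion of the flat wall. By definition, $\bid_{G\setminus A,W}(X)\leq k$ implies that the intersection of $X$ with the bags of the $W$-canonical partition $\tilde\Qcal$ is confined to at most $\Ocal(k)$ rows and columns of $W$. Consequently, after deleting $X$, the flat wall $W$ still contains a subwall $W^*$ of height $\Omega(r-k)$, where $r$ is the height of $W$, whose bricks correspond to bags of $\tilde\Qcal$ that are disjoint from $X$. Choosing $\funref{@unaffected}(a_\Fcal,s_\Fcal,k)=\Ocal(2^{a_\Fcal}\cdot s_\Fcal^{5/2}\cdot k^{5/2})$ large enough ensures that $W^*$ still has height comparable to any planar graph on $s_\Fcal$ vertices one wishes to embed, using the classical fact that every planar graph on $s$ vertices is a topological minor of a wall of height $\Ocal(s)$.

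The third step combines the adjacency hypothesis with a pigeonhole argument to route the model. For each $v\in A''$, the set of $\funref{@provincial}(a_\Fcal,s_\Fcal,k)$-internal bags of $\tilde\Qcal$ adjacent to $v$ in $G$ has cardinality at least $\funref{@categories}(a_\Fcal,s_\Fcal,k)$. The bidimensionality bound forbids $X$ from intersecting more than $\Ocal(k)$ bags per row/column, so at most an $\Ocal(2^{a_\Fcal}\cdot s_\Fcal^2\cdot k^2)$ fraction of the internal bags adjacent to some $v\in A''$ are killed or ``shallow'' after removal of $X$. Thus, setting $\funref{@categories}(a_\Fcal,s_\Fcal,k)=\Ocal(2^{a_\Fcal}\cdot s_\Fcal^3\cdot k^3)$ and $\funref{@provincial}(a_\Fcal,s_\Fcal,k)=\Ocal((a_\Fcal^2+k)\cdot s_\Fcal)$ guarantees, via pigeonhole over the $a_\Fcal$ vertices of $A''$, the existence of many internal bags of $\tilde\Qcal$ that lie in the surviving subwall $W^*$ and are simultaneously adjacent to every vertex of $A''$. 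Using these bags, one embeds the planar part $H\setminus A_H$ as a topological minor inside $W^*$, routes, for each $v\in A''$, a private path from $v$ to the bag corresponding to the neighbour of the image of $v$ under $A''\leftrightarrow A_H$, and obtains a model of $H$ in $G\setminus X$, the sought contradiction.

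The principal obstacle lies in the third step: after $X$ has damaged the wall, one must simultaneously ensure that (i) the surviving subwall $W^*$ is large enough to host a topological embedding of the planar graph $H\setminus A_H$, and (ii) the routing paths from $A''$ into $W^*$ avoid both the damaged region and each other. The $\funref{@provincial}$-internality condition is precisely what enables disjoint routing from each $v\in A''$ into the deep interior of $\tilde\Qcal$, beyond the reach of $X$, by a Menger-type argument applied inside the flat wall; calibrating the three functions so that the averaging, routing, and embedding steps are mutually consistent is where the bulk of the combinatorial work resides.
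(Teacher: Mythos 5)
The paper does not actually prove this proposition: it is imported verbatim from \cite{SauST21kapiI} and used as a black box, so the only comparison available is with the argument in that cited paper. Your overall plan — argue by contradiction, fix $H\in\Fcal$ attaining the apex number $a_\Fcal$, use $a_\Fcal$ vertices $A''\subseteq A'$ disjoint from $X$ as (cores of) the apex branch sets, use $\bid_{G\setminus A,W}(X)\leq k$ to keep a large untouched portion of the wall, and realize the planar part $H\setminus A_H$ there — is the right shape and matches the strategy of the cited proof.

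However, there is a genuine gap at the crux. You claim that, by pigeonhole over the $a_\Fcal$ vertices of $A''$, there are ``many internal bags of $\tilde\Qcal$ that lie in the surviving subwall $W^*$ and are simultaneously adjacent to every vertex of $A''$.'' This does not follow from the hypotheses and is false in general: the hypothesis only gives, \emph{for each} $v\in A'$ separately, a family of at least $\funref{@categories}(a_\Fcal,s_\Fcal,k)$ adjacent $p$-internal bags, and these families may be pairwise disjoint and concentrated in far-apart parts of the wall (already for $a_\Fcal=2$, one apex may see only bags near one corner and the other only bags near the opposite corner); no increase of the threshold produces a bag seen by all of $A''$. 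Since your embedding and routing of the $H$-model hinges on such common bags, the contradiction is not established. The real work — and the reason the bounds carry factors like $2^{a_\Fcal}$ — is to build the branch sets of the planar part so that each one is extended, inside the untouched region of the flat wall, to reach, for every apex it must attach to, some surviving bag adjacent to \emph{that particular} apex, with all these extensions pairwise vertex-disjoint; this is a nontrivial routing/packing argument in the wall, not a pigeonhole consequence, and it is exactly what your third step leaves unproved. Two smaller inaccuracies: $\bid_{G\setminus A,W}(X)\leq k$ says that $X$ meets at most $k$ internal bags of $\tilde\Qcal$ (not that it is ``confined to $\Ocal(k)$ rows and columns''), and a general planar graph on $s_\Fcal$ vertices is a \emph{minor}, not a topological minor, of a wall of height $\Ocal(s_\Fcal)$ (walls have maximum degree three); these are repairable, but the common-bag step is not.
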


For the {\sc $\Fcal$-M-Deletion} problem, if we find such a set $A'$, then we can branch by guessing which vertex $v\in A'$ belongs to a $k$-apex set and recursively solving $(G\setminus v,k-1)$.
Given that $A'$ has size $a_\Fcal$ and that $k$ decreases after each guess, this step is applied at most $a_\Fcal^k$ times.\smallskip

However, for {\sc $\Fcal$-M-Elimination Distance}, $k$ does not decrease, given that the size of a $k$-elimination set may not depend on $k$. Thus, this step may be done $a_\Fcal^n$ times, which does not give an \FPT-algorithm.
To circumvent this problem, we propose two alternatives:\smallskip

{\bf Option 1:} The first alternative is to only use {\bf Case 3a}. This is possible given that $(K_{s_\Fcal+k},k)$ is a \no-instance of both problems.
Thus, when using the algorithms {\tt Grasped-or-Flat} and {\tt Clique-or-twFlat}, we force the outcome to be an apex set $A$ and a flatness pair of $G\setminus A$. However, the bound on the size of $A$ now depends on $k$, and thus, so does the variable $a$ in the input of the algorithm {\tt Homogeneous}. This explains the triple-exponential parametric dependence on $k$ in \autoref{@communicated}. Interestingly, a precise analysis of the time complexity, which can be found in \autoref{@oeconomicus}, shows that if $a_\Fcal = 1$, i.e.,~when $\Fcal$ contains an apex graph, the parametric dependence is only double-exponential on $k$ (cf.~\autoref{@communicated}).

\smallskip

{\bf Option 2:} The second alternative is to restrict ourselves to the case where $a_\Fcal=1$.
Thus, in {\bf Case 3b}, we find a vertex $v$ that belongs to every $k$-elimination set. There is no need to branch, and this step is done at most $n$ times. However, the fact that the time complexity of this step is quadratic in $n$ explains the cubic complexity of the algorithm in \autoref{@swineherds}.\smallskip

It remains to show that if no flatness pair whose compass has bounded treewidth was found, then we can find a flatness pair and a set $A'$ satisfying the conditions of \autoref{@proclamation}.
To do so, using flow techniques, we find the set $A$ of vertices with sufficiently many internally-disjoint paths to $W$, independently from one another.
If this set is too large, we can safely declare a \no-instance.
Otherwise,  we extend the canonical partition of $W$ and just check whether $a_\Fcal$ vertices of $A$ are adjacent to many vertices of this new canonical partition. If this happens, then we can safely use \autoref{@proclamation}.
The second main improvement with respect to the algorithm in \cite{SauST21kapiII} is the new argument that the extension of the canonical partition of $W$ can be done in a totally arbitrary manner.
The quadratic complexity of this step stems from the search for internally-disjoint paths for every vertex of the input graph.

\section{More definitions}

In this section, we give some more definitions.
Namely, in \autoref{@choanalytic}, we define elimination trees, that is an alternative way to define elimination distance of a graph to a graph class.
Next, in~\autoref{@complacency}, we define (nice) tree decompositions and we present some preliminary results.
Finally, in \autoref{@coordinated} we define boundaried graphs, an equivalence relation between them, and the notion of representatives.

\subsection{\texorpdfstring{$\Fcal$}{F}-elimination trees}
\label{@choanalytic}

We start this subsection by defining some notions on (rooted) trees.
\subparagraph{Trees and rooted trees.}
Let $T$ be a tree and $u,v$ be two nodes of $T$. We denote by $uTv$  the path in $T$ $u$ and $v$. A \emph{rooted tree} is a pair $(T,r)$ where $T$ is a tree and $r$ is a node of $T$ called \emph{root} of $(T,r)$.
Let $(T,r)$ be a rooted tree and let $u$ be a node of $T$.
We define the {\em descendants} of $u$ in $(T,r)$ by
$\desc_{T,r}(u)=\{x\in V(T)\mid u\in V(xTr)\}$
and the {\em ancestors} of $u$ in $(T,r)$ by
$\anc_{T,r}(u)=V(rTu)\setminus\{u\}$.
We define the \emph{leaves} in $(T,r)$ by $\leaf(T,r)=\{u\in V(T)\mid \desc_{T,r}(u)=\{u\}\}$
and the \emph{internal nodes} in $(T,r)$ by $\Int(T,r)=V(T)\setminus\leaf(T,r)$.
If $u\neq r$ then we denote by $\Par_{T,r}(u)$ the unique
node in $\anc_{T,r}(u)\cap N_{T}(u)$.
We also agree that $\Par_{T,r}(r)={\sf void}$.
We denote by $\ch_{T,r}(u)=\desc_{T,r}(u)\cap N_{G}(u)$ the set of the children of $u$ (certainly $\ch_{T,r}(u)=\emptyset$ if $u$ is a leaf of $T$).
Given $K\subseteq V(T)$, the \emph{least common ancestor} of $K$ in $(T,r)$ is the node $u$ such that $K\subseteq\desc_{T,r}(u)$ and there is no child $v$ of $u$ such that $K\subseteq\desc_{T,r}(v)$.

The \emph{height function} $\height_{T,r}:V(T)\to\bN$ maps $v\in \leaf(T,r)$ to 0 and $v\in \Int(T,r)$ to $1+\max\{\height_{T,r}(x)\mid x\in \ch_{T,r}(v)\}$.
The \emph{height} of $(T,r)$ is $\height_{T,r}(r)$.
Note that the height function is decreased by one here compared to the usual definition of the height.

We use $(T_u^r,u)$ to denote the rooted tree where $T_{u}^r=T[\desc_{T,r}(u)]$
and we call $(T_u^r,u)$ \emph{subtree of $(T,r)$ rooted at $u$}.
To simplify notation and when the root $r$ is clear from the context, we use $T_u$ instead of $T_u^r$.

A \emph{rooted forest} is a pair $(F,R)$ where $F$ is a forest and $R$ is a set of roots such that each tree in $F$ has exactly one root in $R$. All notations above naturally extend to forests.

\subparagraph{Elimination trees.}
We now define {\sl elimination trees}, that can be used to define alternatively graphs of bounded elimination distance.
Let $\Fcal$ be a non-empty finite collection of non-empty graphs.
An \emph{$\Fcal$-elimination tree} of a connected graph $G$ is a triple $(T,\chi,r)$ where $(T,r)$ is a rooted tree and $\chi:V(T)\to 2^{V(G)}$ such that:
\begin{itemize}
	\item for each $t\in\Int(T,r)$, $|\chi(t)|=1$,
	\item $(\chi(t))_{t\in V(T)}$ is a partition of $V(G)$,
	\item for each $uv\in E(G)$, if $u\in\chi(t_1)$ and $v\in\chi(t_2)$, then $t_1\in\anc_{T,r}(t_2)\cup\desc_{T,r}(t_2)$,
	\item for each $t\in\leaf(T,r)$, $G[\chi(t)]\in\exc(\Fcal)$, and
	\item for each $t\in V(T)$, $G[\chi(T_t)]$ is connected.
\end{itemize}

The \emph{height} of $(T,\chi,r)$ is the height of $(T,r)$.
It is straightforward to see that the minimum height of an $\Fcal$-elimination tree of a connected graph $G$ is $\ed_{\exc(\Fcal)}(G)$.
Note that $\chi(\Int(T,r))$ is a $k$-elimination set of $G$ for $\exc(\Fcal)$ and that,
if $\Fcal$ is trivial, then, for each $t\in\leaf(T,r)$, $\chi(t)=\emptyset$.
Observe also that for every $u\in\Int(T,r)$ with at least two children $x$ and $y$, any path between $\chi(T_x)$ and $\chi(T_y)$ intersects $\chi(uTr)$.

An \emph{$\Fcal$-elimination forest} of a graph $G$ is a triple $(F,\chi,R)$, such that, if $\cc(G)=\{G_1,...,G_l\}$,
then $F$ is the disjoint union of the trees $T_1,\ldots, T_l$
and $R=\{r_1,...,r_l\}$
where $(T_i,\chi|_{V(T_i)},r_i)$ is an $\Fcal$-elimination tree of $G_i$ for $i\in[l]$.\medskip

\sugar{
The following simple lemma is based on the fact that, given an $\Fcal$-elimination tree $(T,\chi,r)$ of a graph $G$,
for every non-leaf node $u$ of $T$,
$\chi(uTr)$ separates the vertex sets $\chi(T_x)$ and $\chi(T_y)$,
where $x$ and $y$ are distinct children of $v$ in $(T,r)$.
}

\begin{lemma}\label{@unreflectingly}
Let $\Fcal$ be a finite collection of graphs.
Let $G$ be a graph and let $H$ be a connected subgraph of $G$.
Let $(F,\chi,R)$ be an $\Fcal$-elimination forest of $G$.
Then the least common ancestor of $\chi^{-1}(V(H))$ exists and belongs to $\chi^{-1}(V(H))$.
\end{lemma}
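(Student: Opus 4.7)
The plan is to argue by contradiction, relying on the separation property of $\Fcal$-elimination trees that is stated right before the lemma. First, since $H$ is connected, $V(H)$ is contained in the vertex set of a single connected component $G_i$ of $G$, so $\chi^{-1}(V(H)) \subseteq V(T_i)$. Because $T_i$ is a rooted tree, the least common ancestor $u$ of $\chi^{-1}(V(H))$ is well-defined inside $T_i$; this disposes of the existence claim.

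To show $u \in \chi^{-1}(V(H))$, suppose toward a contradiction that $\chi(u) \cap V(H) = \emptyset$. By minimality in the definition of least common ancestor, $u$ cannot have a unique child whose subtree contains all of $\chi^{-1}(V(H))$, and since $u$ itself contributes nothing to $V(H)$ by assumption, there must exist two distinct children $x,y \in \ch_{T_i,r_i}(u)$ together with vertices $a \in \chi(T_x) \cap V(H)$ and $b \in \chi(T_y) \cap V(H)$. As $H$ is connected, there is a path $P$ in $H$ from $a$ to $b$, and in particular $V(P) \subseteq V(H)$.

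Now I would apply the observation preceding the lemma: any path in $G$ between $\chi(T_x)$ and $\chi(T_y)$ must intersect $\chi(u T_i r_i)$. Thus $V(P)$ contains a vertex $p$ with $p \in \chi(t')$ for some $t' \in V(uT_ir_i)$. Two cases arise. If $t' = u$, then $p \in \chi(u) \cap V(H)$, contradicting the assumption $\chi(u) \cap V(H) = \emptyset$. Otherwise $t'$ is a strict ancestor of $u$, and $p \in V(H)$ forces $t' \in \chi^{-1}(V(H))$; but then $t' \notin \desc_{T_i,r_i}(u)$, contradicting the fact that $u$ is a common ancestor of $\chi^{-1}(V(H))$. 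Either case yields a contradiction, so $u \in \chi^{-1}(V(H))$, completing the proof.

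The only slightly delicate point is invoking the separation property correctly: the path $P$ lives in $H \subseteq G$, so it is a path in $G$ between $\chi(T_x)$ and $\chi(T_y)$, and the observation applies. Once that is in place, the rest is just a clean case analysis on where along $\chi(uT_ir_i)$ the intersection point lies.
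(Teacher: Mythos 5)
Your proof is correct and follows essentially the same route as the paper's: both reduce to the separation property of elimination trees (any path between subtrees of two distinct children of $u$ must meet $\chi(uTr)$) and then rule out the intersection lying at a strict ancestor of $u$ because $\chi^{-1}(V(H))\subseteq\desc_{T,r}(u)$. The only difference is organizational — you argue by contradiction via the stated observation, while the paper argues directly from the third property of elimination trees — so no further changes are needed.
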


\begin{proof}
Let $K:=\chi^{-1}(V(H))$.
Since $H$ is connected, $K$ is a subset of a tree in $F$, and therefore the least common ancestor of $K$ is defined.
Let $u$ be the least common ancestor of $K$. Let $r\in R$ be the root of the tree containing $u$.
Let $x,y\in V(H)$ such that the least common ancestor of $\chi^{-1}(x)$ and $\chi^{-1}(y)$ is $u$.
Since $H$ is connected, there is a path $P$ in $H$ between $x$ and $y$.
By the third property of elimination trees, $\{u\}\cup\anc_{F,R}(u)$ intersects $\chi^{-1}(V(P))$, and so $\{u\}\cup\anc_{F,R}(u)$ intersects $K$.
Since $u$ is the least common ancestor of $K$, $u\in K$.
\end{proof}

\sugar{We now present a lemma to justify that the graphs with bounded elimination distance to $\exc(\Fcal)$ are minor-free.
Intuitively, the proof of this lemma is based on the fact that, due to~\autoref{@unreflectingly},
the size of the largest clique minor that can ``fit'' inside an elimination tree
is equal to the height of the elimination tree.
}

\begin{lemma}\label{@inoperative}
Let $\Fcal$ be a finite collection of graphs.
Let $G$ be a graph and $k\in\bN$ such that $\ed_{\exc{\Fcal}}(G)\leq k$.
Then $K_{s_\Fcal+k}$ is not a minor of $G$.
\end{lemma}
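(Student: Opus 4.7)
The plan is to argue by contradiction: assume that $G$ admits an $\Fcal$-elimination forest $(F,\chi,R)$ of height at most $k$ and that $K_{s_\Fcal+k}$ is a minor of $G$ with pairwise disjoint branch sets $B_1,\ldots,B_{s_\Fcal+k}$. Since every clique is connected, the entire $K_{s_\Fcal+k}$-model sits in a single connected component of $G$, so all branch sets lie inside one tree $(T,r)$ of $(F,R)$. For each $i$ I would set $u_i$ to be the least common ancestor in $(T,r)$ of $\chi^{-1}(V(B_i))$, which by \autoref{@unreflectingly} belongs to $\chi^{-1}(V(B_i))$.

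The first main step is to show that the nodes $u_1,\ldots,u_{s_\Fcal+k}$ are pairwise comparable in the ancestor order of $(T,r)$, so that they all lie on a single root-to-leaf path $P$ of $T$. Given $i\neq j$, I would pick a clique edge $xy\in E(G)$ with $x\in V(B_i)$ and $y\in V(B_j)$, and let $t_1,t_2$ be the nodes of $T$ with $x\in\chi(t_1)$ and $y\in\chi(t_2)$. Property~(3) of elimination trees forces $t_1$ and $t_2$ to be either equal or in ancestor-descendant relation, and since $t_1\in\desc_{T,r}(u_i)$ and $t_2\in\desc_{T,r}(u_j)$, both $u_i$ and $u_j$ appear as ancestors (or equal) of a common node of $T$. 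As the ancestors of a node form a chain, this makes $u_i$ and $u_j$ comparable, as required.

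The second main step is a counting argument along the path $P$. Because $(T,r)$ has height at most $k$, the path $P$ contains at most $k+1$ nodes, exactly one of which is a leaf of $T$; hence $P$ has at most $k$ internal nodes. For any internal node $u$ we have $|\chi(u)|=1$, and since the branch sets are pairwise disjoint, two distinct indices $i\neq j$ cannot satisfy $u_i=u_j=u$ when $u$ is internal. Therefore at most $k$ of the $u_i$'s are internal nodes, and the remaining at least $s_\Fcal$ of them all coincide with the unique leaf $\ell$ of $P$. For each such $i$, $\chi^{-1}(V(B_i))\subseteq\desc_{T,r}(\ell)=\{\ell\}$, so $V(B_i)\subseteq\chi(\ell)$; together with the corresponding clique edges of the $K_{s_\Fcal+k}$-model (which lie in $G[\chi(\ell)]$ since both endpoints belong to $\chi(\ell)$), these branch sets yield a $K_{s_\Fcal}$-minor in $G[\chi(\ell)]$. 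Every graph $H\in\Fcal$ has at most $s_\Fcal$ vertices, hence is a subgraph, and in particular a minor, of $K_{s_\Fcal}$, so some graph of $\Fcal$ is a minor of $G[\chi(\ell)]$. This contradicts $G[\chi(\ell)]\in\exc(\Fcal)$, which is the fourth condition in the definition of an $\Fcal$-elimination tree.

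The main obstacle is the first step: all the ``roots'' $u_i$ of the branch sets must be aligned on a single root-to-leaf path, and this is precisely the point where \autoref{@unreflectingly} and property~(3) of elimination trees are used in combination. Once this alignment is available, the final contradiction comes essentially for free from the height bound on $(T,r)$ together with the fact that the ``leaf bags'' of any $\Fcal$-elimination forest lie in $\exc(\Fcal)$.
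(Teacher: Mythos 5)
Your proof is correct and rests on the same core ingredients as the paper's: \autoref{@unreflectingly} to place each branch set's least common ancestor inside the branch set's preimage, the third property of elimination trees to align everything on a single root-to-leaf path, and the height bound together with singleton internal bags to force at least $s_\Fcal$ branch sets into one leaf bag, contradicting that leaf bags lie in $\exc(\Fcal)$. The only real difference is in execution: the paper first contracts the $K_{s_\Fcal+k}$-model to obtain a graph $G'$ and then performs surgery on the elimination forest to get an $\Fcal$-elimination forest of $G'$ in which each contracted vertex sits at the corresponding least common ancestor, whereas you stay in $G$ with the original forest and instead prove directly that the nodes $u_1,\ldots,u_{s_\Fcal+k}$ are pairwise comparable (via a clique edge between two branch sets and the ancestor-chain property). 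Your route buys a slightly cleaner argument, since it avoids having to check that the modified pair $(F',\chi')$ is still a valid elimination forest of the contracted graph of height at most $k$ — a verification the paper only asserts in passing — at the modest cost of the explicit comparability argument, which you carry out correctly.
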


\begin{proof}
Let $(F,\chi,R)$ be an $\Fcal$-elimination forest of $G$ of height at most $k$.
Suppose towards a contradiction that there is a model of $K_{s_\Fcal+k}$ in $G$.
Let $x_1,...,x_{s_\Fcal+k}$ be the vertices of $K_{s_\Fcal+k}$ and for every $i\in[s_\Fcal+k]$, let $V_i$ be the model of $x_i$ in $G$.
Let $G'$ be the graph obtained by contracting, for each $i\in[s_\Fcal+k]$, the edges in each $V_i$.
Let $v_i, i\in[s_\Fcal+k]$ be the resulting vertices after the contraction of each $V_i$.
Thus, the graph $G'[\{v_1,...,v_{s_\Fcal+k}\}]$ is isomorphic to $K_{s_\Fcal+k}$.

Let $(F',\chi',R)$ be obtained from $(F,\chi,R)$ as follows.
For every $i\in[s_\Fcal+k]$, let $u_i$ be the least common ancestor of $K_i:=\chi^{-1}(V_i)$ in $(F,R)$.
Due to \autoref{@unreflectingly}, $u_i\in K_i$.
The forest $F'$ is obtained after removing each node $v\in (K_i\setminus u_i) \cap (\Int(F,R)\setminus R)$ from $V(F)$ and adding an edge between $\Par_{F,R}(v)$ and each node in $\ch_{F,R}(v)$.
The function $\chi'$ is defined as $\chi'(v) := \chi(v)$ if $v\in \Int(F',R)$ and
$\chi'(v) := \chi(v)\setminus (V_i\setminus u_i)$ if $v\in \leaf(F',R)$.
In the latter case, if $G'[\chi'(v)]$ is not connected, then we update $F'$ by replacing $v$ by $|{\sf cc}(G'[\chi'(v)])|$ nodes, each one associated with a connected component of $G'[\chi'(v)]$.
Observe that $(F',\chi',R)$ is an $\Fcal$-elimination forest of $G'$ of height at most $k$ and that we can assume that
for every $i\in[s_\Fcal+k]$, $v_i\in \chi'(u_i)$.

Since the vertices in $\{v_1,...,v_{s_\Fcal+k}\}$ are pairwise connected by an edge in $G'$, the third property of elimination trees implies that there is $u\in\leaf(F',R)$ and $r\in R$ such that $\{v_1,...,v_{s_\Fcal+k}\}\subseteq\chi'(uF'r)$.
Let $u':=\Par_{F',R}(u)$.
For every $t\in u'F'r$, $t\in\Int(F',R)$, so $|\chi'(t)|=1$, and therefore, $|\chi'(u'Fr)|\leq k$.
Thus, $|\chi'(u)\cap V(H)|\geq s_\Fcal$.
Therefore, $K_{s_\Fcal}$ is a minor of $G'[\chi'(u)]$.
This contradicts the fact that $G'[\chi'(u)]\in\exc(\Fcal)$, so $K_{s_\Fcal+k}$ is not a minor of $G$.
\end{proof}

\subsection{Tree decompositions}
\label{@complacency}

The notions of tree decompositions and treewidth are used throughout the paper.
\subparagraph{Treewidth.}
A \emph{tree decomposition} of a graph~$G$
is a pair~$({\sf T},\beta)$ where ${\sf T}$ is a tree and $\beta: V({\sf T})\to 2^{V(G)}$
such that
\begin{itemize}
	\item $\bigcup_{t \in V({\sf T})} \beta(t) = V(G)$,
	\item for every $e\in E(G)$, there is a $t\in V({\sf T})$ such that $\beta(t)$ contains both endpoints of~$e$, and
	\item for every~$v \in V(G)$, the subgraph of~${\sf T}$ induced by $\{t \in V({\sf T})\mid {v \in \beta(t)}\}$ is connected.
\end{itemize}

The {\em width} of $({\sf T},\beta)$ is equal to $\max\big\{\left|\beta(t)\right|-1 \bigmid t\in V({\sf T})\big\}$
and the {\em treewidth} of $G$, denoted by $\tw(G)$, is the minimum width over all tree decompositions of $G$.
A \emph{rooted tree decomposition} is a triple $({\sf T},\beta,{\sf r})$ where $({\sf T},\beta)$ is a tree decomposition and $({\sf T},{\sf r})$ is a rooted tree.
For every $q\in V({\sf T})$, we set $G_q^{({\sf T},\beta,{\sf r})} = G[\beta({\sf T}_q)]$.
We may write $G_q$ instead of $G_q^{({\sf T},\beta,{\sf r})}$ when there is no ambiguity about ${({\sf T},\beta,{\sf r})}$.\medskip

To compute a tree decomposition of a graph of bounded treewidth, we use the recent single-exponential $2$-approximation algorithm for treewidth of Korhonen \cite{Korhonen21asin}.

\begin{proposition}[\cite{Korhonen21asin}]\label{@naturalism}
There is an algorithm that, given an graph $G$ and an integer $k$,
outputs either a report that $\tw(G)>k$, or a tree decomposition of $G$ of width at most $2k+1$ with $\Ocal(n)$ nodes.
Moreover, this algorithm runs in time $2^{\Ocal(k)} \cdot n$.
\end{proposition}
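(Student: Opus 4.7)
The plan is to build on the classical balanced-separator recursion for constructing tree decompositions, but push the analysis to obtain simultaneously a width of $2k+1$ and a linear dependence on $n$. The starting point is the well-known fact that any graph of treewidth at most $k$ admits, for any vertex set $W$, a balanced separator $S$ with $|S|\leq k+1$ splitting $W$ into parts of size at most $\tfrac{2}{3}|W|$. Via a network-flow / LP-rounding based \FPT\ routine for balanced separators, such an $S$ (or at worst one of size at most $2k+2$) can be found in time $2^{\Ocal(k)}\cdot n$, which is the arithmetic budget per separator computation.

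Given such a separator, I would place $S$ in the current bag of the decomposition under construction, together with the bounded-size ``interface'' shared with the parent bag, and then recurse on each connected component of $G\setminus S$. If at some recursive call no balanced separator of size $\leq 2k+2$ exists (which can be certified by the same routine), then the combinatorial fact above implies $\tw(G)>k$, and the algorithm reports so. By construction every bag has size at most $2k+2$, yielding width $2k+1$, and a standard cleaning pass bounds the number of nodes of ${\sf T}$ by $\Ocal(n)$.

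The main obstacle is the linear dependence on $n$: a plain top-down balanced-separator recursion would incur a logarithmic overhead coming from the $\Ocal(\log n)$ recursion depth. The key idea I would follow is to avoid recomputing separators from scratch at each level, and instead maintain the decomposition under \emph{local improvement}. Concretely, one starts from a rough decomposition and iteratively identifies a constant-radius neighborhood in ${\sf T}$ whose bags are too wide, extracts the induced subgraph together with its boundary, computes an improved decomposition of that bounded piece via an exact \FPT\ treewidth subroutine, and grafts it back into ${\sf T}$. Designing a potential function $\Phi$ bounded linearly in $n$ that strictly decreases with each local step is the technical heart of the argument: amortizing a $2^{\Ocal(k)}$ cost per local improvement against $\Phi$ then yields the overall $2^{\Ocal(k)}\cdot n$ bound. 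The delicate point is ensuring that a local improvement never creates wider bags elsewhere, so that the potential truly drops; this typically forces $\Phi$ to account simultaneously for bag sizes and for a multiplicative measure of how ``tangled'' the decomposition is near each vertex.
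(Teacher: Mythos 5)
In the paper this proposition is not proved at all: it is Korhonen's theorem, quoted verbatim from \cite{Korhonen21asin} and used as a black box, so your attempt amounts to reproving that result, and the sketch does not do so. The first concrete gap is the width bound. In the balanced-separator recursion you describe, a bag must contain the separator $S$ (of size up to $k+1$, or $2k+2$ with the relaxed routine) \emph{together with} the interface inherited from the parent, and that union becomes the interface of the next level; keeping the recursion balanced therefore produces bags of size roughly $3k$ to $4k$, not $2k+2$. This is exactly why the classical Robertson--Seymour-style recursions yield $3$-, $4$- or $5$-approximations, and why approximation ratio $2$ is the novelty of \cite{Korhonen21asin}: it is obtained not by a top-down separator recursion but by an improvement/exchange argument on an existing decomposition (in the spirit of the Bellenbaum--Diestel proof that lean tree decompositions exist, cf.\ \cite{BellenbaumD02twos}), namely a lemma stating that if some bag has size larger than $2k+2$ then one can either certify $\tw(G)>k$ or strictly improve the decomposition. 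Your construction contains no such lemma, so the claim ``every bag has size at most $2k+2$'' is not justified.

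The second gap is the running time, which you yourself label ``the technical heart'': the design of a potential function that is linear in $n$, strictly decreases under each local improvement, and is not spoiled because an improvement widens bags elsewhere, together with the amortization of a $2^{\Ocal(k)}$-time subroutine per improvement, is precisely the main content of Korhonen's paper. Leaving it as a plan means the proposal establishes neither the $2^{\Ocal(k)}\cdot n$ bound nor, by the previous paragraph, the width $2k+1$; as written, the argument would only recover the long-known single-exponential constant-factor approximations with an extra $\log n$ or $n$ factor. Since the paper itself simply cites \cite{Korhonen21asin}, the appropriate justification of this proposition is the citation; a self-contained proof would require reproducing the improvement lemma and the amortized analysis, neither of which appears in your sketch.
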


\sugar{The relation between the treedepth and the treewidth of a graph proved by Bodlaender,  Gilbert,  Kloks, and  Hafsteinsson in \cite{BodlaenderGKH91appr} will be used in~\autoref{@fanaticism} in order to obtain an \XP-algorithm for {\sc $\Fcal$-M-Elimination Distance} parameterized by treewidth.}

\begin{proposition}[\cite{BodlaenderGKH91appr}]\label{@theosophical}
Let $G$ be a graph with $n$ vertices. Then $\tw(G)\leq\td(G)\leq\tw(G)\cdot \log n$.
\end{proposition}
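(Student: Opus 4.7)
\noindent\textbf{Proof proposal for \autoref{@theosophical}.} The plan is to prove the two inequalities $\tw(G)\leq \td(G)$ and $\td(G)\leq \tw(G)\cdot \log n$ separately, the first by a direct conversion of an elimination forest into a tree decomposition, and the second by a top-down recursive construction based on balanced separators in a tree decomposition.

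\noindent\textbf{Step 1: $\tw(G)\leq \td(G)$.} I would start from a $\{K_1\}$-elimination forest $(F,\chi,R)$ of $G$ of minimum height $h=\td(G)$; recall that, by definition, every internal node $t$ satisfies $|\chi(t)|=1$ and every leaf satisfies $\chi(t)=\emptyset$. I would then define a tree decomposition $(\mathsf{T},\beta)$ by taking $\mathsf{T}$ to be $F$ (joining the roots in $R$ into a common root by an edge, or alternatively treating each tree in $F$ independently since the treewidth of a disconnected graph is the maximum over its components) and setting
\[
\beta(t)\ :=\ \chi(t)\,\cup\,\chi(\anc_{F,R}(t)).
\]
By definition $|\beta(t)|\leq h$, so the width is at most $h-1\leq \td(G)$. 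The first two axioms of a tree decomposition follow from the facts that $(\chi(t))_{t\in V(F)}$ is a partition of $V(G)$ and that, by the third bullet in the definition of an $\Fcal$-elimination tree, every edge $uv\in E(G)$ has its endpoints on a root-to-leaf path of $F$, so some $\beta(t)$ contains both. The third axiom (connectedness of $\{t\mid v\in\beta(t)\}$) holds because for a vertex $v\in\chi(t_0)$ the set of nodes whose bag contains $v$ is exactly $\{t_0\}\cup\desc_{F,R}(t_0)$, which induces a subtree of $F$.

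\noindent\textbf{Step 2: $\td(G)\leq \tw(G)\cdot\log n$.} For this bound I would use a standard balanced-separator argument. Fix a tree decomposition $(\mathsf{T},\beta)$ of $G$ of width $w=\tw(G)$. I would invoke the classical fact that every tree decomposition admits a bag $B=\beta(t^\star)$ of size at most $w+1$ whose removal leaves every connected component of $G\setminus B$ with at most $n/2$ vertices (this is obtained by choosing $t^\star\in V(\mathsf{T})$ minimizing the maximum number of vertices ``below'' any incident edge in the rooted decomposition). I would build an elimination forest of $G$ recursively by placing the $|B|\leq w+1$ vertices of $B$ as a chain of internal nodes at the top, then attaching to the bottom of this chain the elimination forests obtained by recursively applying the construction to each connected component of $G\setminus B$; the induced tree decompositions on these components (obtained by restricting $\beta$) still have width at most $w$. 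If $h(n)$ denotes the resulting height on graphs with at most $n$ vertices, the recurrence $h(n)\leq h(n/2)+(w+1)$ yields $h(n)\leq (w+1)\log n$, which gives the claimed bound up to the absorption of the $+1$ (one can for instance halve the logarithm base or replace $n$ by $n+1$).

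\noindent\textbf{Main obstacle.} The two inequalities themselves are essentially textbook; the only subtle point is reconciling the definition of $\td$ used in this paper (via elimination \emph{forests} whose leaves carry the empty graph and whose height is decreased by one compared to the usual convention, as explicitly noted in~\autoref{@choanalytic}) with the standard ``rooted forest depth'' definition used in~\cite{BodlaenderGKH91appr}. I would therefore spend the bulk of the write-up carefully checking the off-by-one behaviour of the height function so that both inequalities hold as stated, rather than with an additive $+1$ slack; beyond that, the argument is routine and no new idea beyond the balanced-separator lemma for tree decompositions is required.
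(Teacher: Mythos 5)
The paper never proves \autoref{@theosophical}: it is imported verbatim from Bodlaender, Gilbert, Kloks, and Hafsteinsson~\cite{BodlaenderGKH91appr} and used as a black box (e.g.\ in \autoref{@unwittingly} and \autoref{@mysterious}), so there is no internal proof to compare against. Your two-step argument is the standard proof of that cited result and is essentially sound: the conversion of a $\{K_1\}$-elimination forest into a tree decomposition with bags $\chi(t)\cup\chi(\anc_{F,R}(t))$ correctly gives $\tw(G)\leq\td(G)$ (the subtree for a vertex $v\in\chi(t_0)$ is indeed $\{t_0\}\cup\desc_{F,R}(t_0)$), and the balanced-bag recursion gives the logarithmic upper bound.

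Two points deserve care. First, what your recursion actually proves is $\td(G)\leq(\tw(G)+1)\cdot\log_2 n$ (for $n\geq 2$), and the final ``absorption of the $+1$'' cannot be made to work literally: for $\tw(G)=0$ (edgeless graphs) or $n=1$ the inequality $\td(G)\leq\tw(G)\cdot\log n$ is simply false, so no choice of logarithm base repairs it. The proposition, like its use in this paper, is to be read with that harmless slack (the extra additive $1$ disappears inside the $\Ocal$-notation in \autoref{@unwittingly}), and your write-up should state the $(\tw+1)\log_2 n$ form rather than chase the stated constant. Second, in Step~2 the phrase ``place the vertices of $B$ as a chain of internal nodes and attach the component forests below'' does not by itself produce a valid elimination tree under this paper's definition, because the condition that $G[\chi(T_t)]$ be connected for every node $t$ can fail for nodes in the middle of the chain. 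The clean fix is to bypass the tree formalism and prove $\td(G)\leq|B|+\td(G\setminus B)$ directly from the recursive definition of $\ed_{\Gcal_\emptyset}$ by induction (if $G$ is connected, delete a vertex of $B$ and recurse; if not, recurse into components), which is exactly the inequality your recurrence needs. With these two adjustments the argument goes through; Step~1 needs no change.
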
\medskip

\sugar{The following result has been proved by Adler, Dorn, Fomin, Sau, and Thilikos in~\cite{AdlerDFST11fast}.
We use it in \autoref{@oeconomicus} and, in particular, in the algorithm \noindent{\tt Find-Wall-Ed} of~\autoref{@transforma}, to detect a wall in a graph of bounded treewidth.}
\begin{proposition}[\cite{AdlerDFST11fast}]\label{@camouflage}
	There is an algorithm that, given a graph $G$ on $m$ edges,
	a graph $H$ on $h$ edges without isolated vertices,
	and a tree decomposition of $G$ of width at most $k$, outputs,
	if it exists, a minor of $G$ isomorphic to $H$.
	Moreover, this algorithm runs in time $2^{\Ocal(k\log k)}\cdot h^{\Ocal(k)}\cdot 2^{\Ocal(h)}\cdot m$.
\end{proposition}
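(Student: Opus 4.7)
The plan is to solve the problem by dynamic programming on the given tree decomposition of $G$. First, I would convert it, in linear time and without increasing the width, to a \emph{nice} rooted tree decomposition $({\sf T},\beta,{\sf r})$ with $\Ocal(n)$ nodes, consisting of leaf, introduce, forget, and join nodes. At every node $t$ I plan to enumerate all possible ``partial minor models'' of $H$ inside $G_t$, compressed into a state that only depends on the bag $\beta(t)$.

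The crucial design choice is the DP state. Intuitively, a partial minor model at $t$ is a map $\phi$ sending some vertices of $G_t$ to vertices of $H$ such that, for every $x\in V(H)$, the preimage $\phi^{-1}(x)$ induces a connected subgraph of $G_t$ (the current model of $x$), together with the list of edges of $H$ already realized by an edge of $G_t$ between distinct preimages. All of this information can be summarized at node $t$ by three ingredients: a partial function $\phi_t:\beta(t)\to V(H)\cup\{\star\}$, where $\star$ means ``not in any model-so-far'', giving $h^{\Ocal(k)}$ choices; a partition of the non-$\star$ part of $\beta(t)$ that refines the level sets of $\phi_t$ and records which vertices currently lie in the same connected \emph{strand} of a partial model, giving $2^{\Ocal(k\log k)}$ choices via the Bell number $B(k{+}1)$; and a subset $E^\star\subseteq E(H)$ of edges of $H$ already realized inside $G_t$, giving $2^{\Ocal(h)}$ choices. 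Multiplying, each bag carries $2^{\Ocal(k\log k)}\cdot h^{\Ocal(k)}\cdot 2^{\Ocal(h)}$ states.

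Then I would specify the standard transitions. At a leaf the state is trivial. At an introduce node adding $v$, I branch on $\phi(v)\in V(H)\cup\{\star\}$ and, if $\phi(v)=x$, either start a new strand of the model of $x$ or merge $v$ into strands already containing a neighbour of $v$ in the bag, updating $E^\star$ to include every edge of $H$ now realized. At a forget node removing $v$, I require that every strand of $\phi(v)$ still keeps a representative in the new bag; otherwise the corresponding strand is declared closed, and I reject if any incident edge of $H$ remains outside $E^\star$. At a join node, I take the union of the two $E^\star$ sets and merge the two strand partitions via a union-find step on the common bag, rejecting if some vertex of $H$ has been declared closed on one side but is still live on the other. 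At the root, I accept iff there is a reachable state with $E^\star = E(H)$ and every vertex of $H$ closed; the actual model is recovered by standard back-tracing.

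The main obstacle is getting the connectivity bookkeeping right across join nodes: strands of the same model that are disjoint in one subtree may have to merge through the other subtree, so one cannot simply ``close'' a model the first time it leaves a bag. Handling this by a partition-refinement and a compatibility check is exactly what forces the Bell-number factor $2^{\Ocal(k\log k)}$, while tracking the subset of realized edges contributes $2^{\Ocal(h)}$ and labeling the bag contributes $h^{\Ocal(k)}$. Since the nice tree decomposition has $\Ocal(n)$ nodes and each transition runs in $\poly(k,h)$ time per state, the total running time is $2^{\Ocal(k\log k)}\cdot h^{\Ocal(k)}\cdot 2^{\Ocal(h)}\cdot m$, using $m\ge n-1$ on each connected component that carries a minor model (and noting that isolated vertices of $G$ can be discarded, since $H$ has none), which matches the bound in the statement.
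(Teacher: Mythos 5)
The paper does not actually prove this statement: it is imported as a black box from \cite{AdlerDFST11fast}, so there is no internal proof to compare against. Your dynamic programming is essentially the approach of that reference (there phrased over branch decompositions via ``rooted packings'' of the middle sets), since the data you keep per bag --- which $H$-vertices the bag meets, how bag vertices group into connected strands of the partial branch sets, and which edges of $H$ are already realized --- is exactly the same information, and your counts $2^{\Ocal(k\log k)}\cdot h^{\Ocal(k)}\cdot 2^{\Ocal(h)}$ are justified because $|V(H)|\leq 2h$ (no isolated vertices in $H$) and $n=\Ocal(m)$ after discarding isolated vertices of $G$, both of which you correctly invoke. Two bookkeeping points should be made explicit to make the outline airtight, though neither affects the bound: (i) the set of $H$-vertices whose models are already ``closed'' must itself be part of the DP state (your join-node and root checks presuppose it), which only adds another $2^{\Ocal(h)}$ factor; and (ii) at a forget node, when the strand containing the forgotten vertex loses its last bag representative, you must discard the state not only if some incident $H$-edge is unrealized, but also if that $H$-vertex still has other live strands, and later introduce steps must be forbidden from mapping new vertices to a closed $H$-vertex --- a strand with no bag vertex can never be reconnected, so these configurations cannot extend to a valid model. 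With these conventions your invariants at join nodes and at the root are exactly what is needed, the join-node pairing only squares quantities already hidden in the $\Ocal(\cdot)$ exponents, and the total running time matches the stated bound.
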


We also show that given a graph $G$ and an integer $k$,
the removal of a $k$-elimination set from $G$ does not decrease the treewidth of $G$ more than $k$.
\begin{lemma}\label{@barbarians}
Let $\Fcal$ be a finite collection of graphs.
Let $c,k$ be two integers and let $G$ be a graph such that $\tw(G)\geq c$.
Let $S$ be a $k$-elimination set of $G$ for $\exc({\Fcal})$.
Then $\tw(G\setminus S)\geq c-k$.
\end{lemma}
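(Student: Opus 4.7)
The plan is to prove the stronger inequality $\tw(G) \leq \tw(G \setminus S) + k$ by induction on $k$; combined with the hypothesis $\tw(G) \geq c$, this immediately yields $\tw(G \setminus S) \geq c - k$. Pick an $\Fcal$-elimination forest $(F, \chi, R)$ of $G$ of height at most $k$ with $S = \chi(\Int(F, R))$; such a forest exists by the remark following the definition of an $\Fcal$-elimination tree. The base case $k = 0$ is trivial since then $\Int(F, R) = \emptyset$ and $S = \emptyset$.

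For the inductive step, I would argue component-wise on the connected components of $G$. Fix a component $C$ with root $r_C$ in $F$. If $r_C$ is a leaf then the subtree at $r_C$ is a single node and $S \cap V(C) = \emptyset$, so nothing needs to be shown. Otherwise $|\chi(r_C)| = 1$; write $\chi(r_C) = \{v\}$. The key structural observation is that as $c$ ranges over the children of $r_C$, the sets $\chi(T_c)$ are exactly the connected components of $C \setminus \{v\}$: each $G[\chi(T_c)]$ is connected by the fifth property of an elimination tree, and there are no edges of $G$ between $\chi(T_{c_1})$ and $\chi(T_{c_2})$ for distinct children $c_1, c_2$, since the edge condition would force one endpoint to be an ancestor or descendant of the other, which is impossible for nodes lying in disjoint subtrees. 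Hence the subtree rooted at each child $c$ is an $\Fcal$-elimination tree of $G[\chi(T_c)]$ of height at most $k-1$, and induction gives $\tw(G[\chi(T_c)]) \leq \tw(G[\chi(T_c)] \setminus S_c) + (k-1)$, where $S_c = \chi(\Int(T_c, c))$.

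To close the induction, I would use the elementary fact that adding a single vertex to every bag of a tree decomposition of $C \setminus \{v\}$ produces one of $C$ of width at most one larger, so $\tw(C) \leq \tw(C \setminus \{v\}) + 1 = \max_c \tw(G[\chi(T_c)]) + 1$. Since $C \setminus S_C$ is the disjoint union of the graphs $G[\chi(T_c)] \setminus S_c$, its treewidth equals $\max_c \tw(G[\chi(T_c)] \setminus S_c)$, and chaining with the inductive bound yields $\tw(C) \leq \tw(C \setminus S_C) + k$. Taking the maximum over the components of $G$ delivers $\tw(G) \leq \tw(G \setminus S) + k$, which is the claim. I do not anticipate a serious obstacle: the only step that demands a moment's care is the structural identification of the child subtrees with the components of $C \setminus \{v\}$; everything else is a routine induction.
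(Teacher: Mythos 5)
Your proof is correct, and it takes a somewhat different route from the paper's. The paper argues by contradiction with a single global construction: it fixes optimal tree decompositions of the leaf components $G[\chi(v_i)]$, strings them along a path whose bags are the (at most $k$) elimination vertices on each root-to-leaf path (ordering the leaves by DFS so that the bags containing a fixed vertex of $S$ are consecutive), and adds those path vertices to every bag of the attached decomposition, obtaining a tree decomposition of $G$ of width at most $\tw(G\setminus S)+k$. You instead prove the same inequality $\tw(G)\leq\tw(G\setminus S)+k$ by induction on $k$, peeling off the root vertex of each component and using only the elementary fact $\tw(C)\leq\tw(C\setminus\{v\})+1$ together with the structural observation that the child subtrees of the root correspond exactly to the connected components of $C\setminus\{v\}$ (which you justify correctly from properties 3 and 5 of elimination trees, plus the partition property). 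What your version buys is that you never have to exhibit or verify a global tree decomposition of $G$ — in particular you avoid the DFS-ordering/connectivity bookkeeping that the paper's construction implicitly relies on — at the cost of the small amount of bookkeeping needed to check that each child subtree, with $\chi$ restricted, is again an $\Fcal$-elimination tree of height at most $k-1$ so that the inductive hypothesis applies to $S_c$. Both arguments share the same degenerate convention issue when a leaf component (or $C\setminus S_C$) is empty, and both rely on the paper's standing correspondence between $k$-elimination sets and elimination forests of height at most $k$, so neither point counts against you.
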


\begin{proof}
Suppose first that $G$ is connected.
Let $(T,\chi,r)$ be an $\Fcal$-elimination tree of $G$ with $\chi(\Int(T,r))=S$.
Let $v_1,...,v_l$ be the leaves of $(T,r)$, whose label is given by a depth-first search order starting from $r$.
Let $C_i:=G[\chi(v_i)]$ for $i\in[l]$, and note that  $\tw(G\setminus S)=\max_{i\in[l]}\tw(C_i)$. Suppose for contradiction that  $\tw(G\setminus S) < c-k$, and we proceed to show that $\tw(G) < c$, contradicting our hypothesis.  Let $({\sf T}_i,\beta_i)$ be an optimal tree decomposition of $C_i$ of width $w_i$ and let $P_i$ be the path from the parent of $v_i$ to $r$ in $T$, for $i\in[l]$.
Let $w:=\max_{i\in[l]}w_i$, so we have that $w \leq c-k-1$.
We construct a tree decomposition $(\Tcal,\beta)$ of $G$, starting from the tree decompositions $({\sf T}_i,\beta_i)$, as follows.
Create a path with nodes $x_1,...,x_l$ such that for $i\in[l]$, $\beta(x_i)=V(P_i)$.
Then for $i\in[l]$, add an edge between $x_i$ and a node of ${\sf T}_i$. For each $x\in V({\sf T}_i)$, we set $\beta(x):=\beta_i(x)\cup V(P_i)$.
Since the height of $(T,r)$ is at most $k$, $P_i$ has size at most $k$ for $i\in[l]$, so $(\Tcal,\beta)$ has width at most $w+k \leq c-1$, a contradiction.

If $G$ is not connected, we can apply the above proof to each of its connected components.
\end{proof}

To describe our dynamic programming algorithm presented in \autoref{@fanaticism}, we need a particular type of tree decompositions, namely nice tree decompositions.

\subparagraph{Nice tree decompositions.} A \emph{nice tree decomposition} of a graph $G$ is a rooted tree decomposition $({\sf T},\beta,{\sf r})$ such that:
\begin{itemize}
	\item every node has either zero, one or two children,
	\item if $x$ is a leaf of ${\sf T}$, then $\beta(x)$ is a singleton ($x$ is a \emph{leaf node}),
	\item if $x$ is a node of ${\sf T}$ with a single child $y$, then $|\beta(x)\setminus\beta(y)|=1$ ($x$ is an \emph{introduce node}) or $|\beta(y)\setminus\beta(x)|=1$ ($x$ is a \emph{forget node}), and
	\item if $x$ is a node with two children $x_1$ and $x_2$, then $\beta(x)=\beta(x_1)=\beta(x_2)$ ($x$ is a \emph{join node}).
\end{itemize}

To find a nice tree decomposition from a given a tree decomposition, we use the following well-known result proved, for instance, in~\cite{AlthausZ21opti}.

\begin{proposition}[\cite{AlthausZ21opti}]\label{@estclusire}
Given a graph $G$ with $n$ vertices and a tree decomposition $({\sf T},\beta)$ of $G$ of width $w$, there is an algorithm that computes a nice tree decomposition of $G$ of width $w$ with at most $\Ocal(w\cdot n)$ nodes in time $\Ocal(w^2\cdot (n+|V({\sf T})|))$.
\end{proposition}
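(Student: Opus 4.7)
The plan is to implement the classical transformation from an arbitrary tree decomposition to a nice one in four passes, with an initial cleaning step to keep the number of nodes under control.

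First I would \emph{compress} $({\sf T},\beta)$ so that no bag is contained in an adjacent bag. While there exist adjacent $u,v \in V({\sf T})$ with $\beta(u)\subseteq\beta(v)$, contract the edge $uv$ and assign the larger bag to the resulting node; this preserves all three tree-decomposition axioms and the width. Using a single bottom-up sweep after an arbitrary rooting and, at each node, comparing bags of size at most $w+1$ in time $\Ocal(w^2)$, this step runs in time $\Ocal(w^2 \cdot |V({\sf T})|)$. A standard counting argument (each non-contracted edge must ``introduce'' at least one new vertex along the rooted tree) guarantees that the resulting tree decomposition $({\sf T}',\beta')$ has at most $n$ nodes.

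Next I would root $({\sf T}',\beta')$ at an arbitrary node ${\sf r}$ and perform three local transformations, processing each node and each parent–child edge independently:
\begin{itemize}
    \item \textbf{Binarization.} Replace every node $x$ with $d\geq 3$ children $y_1,\ldots,y_d$ by a balanced binary tree of $d-1$ auxiliary nodes, each carrying the bag $\beta'(x)$; the leaves of this auxiliary tree attach to $y_1,\ldots,y_d$. After the next step these auxiliary nodes will be the join nodes.
    \item \textbf{Edge refinement.} For every parent–child edge $(p,c)$ with $\beta'(p)\neq\beta'(c)$, insert between $p$ and $c$ a chain that first forgets, one at a time, the vertices of $\beta'(p)\setminus\beta'(c)$ (creating forget nodes) and then introduces, one at a time, the vertices of $\beta'(c)\setminus\beta'(p)$ (creating introduce nodes). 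The chain has length at most $2(w+1)$, and after this insertion every non-join node differs from its child in exactly one vertex.
    \item \textbf{Leaf normalization.} For every leaf $x$ with $|\beta'(x)|\geq 2$, append below $x$ a chain of introduce nodes whose bags shrink one vertex at a time until reaching a singleton.
\end{itemize}

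After the cleaning step we have $|V({\sf T}')|\leq n$, so binarization contributes $\Ocal(n)$ nodes, edge refinement contributes at most $2(w+1)$ nodes per edge of the binarized tree (and the binarized tree has $\Ocal(n)$ edges), and leaf normalization contributes at most $w$ nodes per leaf; in total the final tree has $\Ocal(w\cdot n)$ nodes as required. Each transformation is local: binarizing a node of degree $d$ and bag size $w+1$ costs $\Ocal(w\cdot d)$, and the other two transformations cost $\Ocal(w)$ per edge or leaf processed. Summing over the whole tree and adding the $\Ocal(w^2\cdot|V({\sf T})|)$ cost of the compression step yields the overall running time $\Ocal(w^2\cdot(n+|V({\sf T})|))$.

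The only subtle point, and the main place where care is needed, is the verification that every intermediate object remains a tree decomposition: the three axioms are preserved because the vertices appearing in any inserted chain between $p$ and $c$ already belonged to $\beta'(p)\cup\beta'(c)$, so the connectivity of each $\{t : v\in\beta(t)\}$ is maintained, and no edge of $G$ is covered by strictly fewer bags than before. Once this is checked, the nodes of the final decomposition split cleanly into leaf, introduce, forget, and join nodes by construction, giving the claimed nice tree decomposition.
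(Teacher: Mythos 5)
The paper does not prove this proposition itself --- it is imported from the literature --- and your construction (compress so that no bag is contained in its neighbour's, then binarize, then interpolate along each parent--child edge one vertex at a time) is exactly the standard argument behind the cited result, with the right node count and running time accounting. The compression analysis is fine: a single bottom-up sweep suffices because, by the connectivity axiom, containment of a grandchild's bag in the merged parent bag would already force containment in the intermediate bag, and the ``topmost node of a private vertex'' argument gives at most $n+1$ nodes.

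Two details, however, do not match the paper's definition of a nice tree decomposition as written. First, your introduce/forget labels are swapped: in the rooted convention of the paper, a node whose bag has one \emph{more} vertex than its child's is an introduce node, so the part of your chain that drops the vertices of $\beta'(p)\setminus\beta'(c)$ (going down from $p$) consists of introduce nodes, and the part that adds $\beta'(c)\setminus\beta'(p)$ consists of forget nodes; this is purely terminological. Second, and more substantively, your refinement step breaks the join nodes: if a node $a$ with two children (an auxiliary copy of $\beta'(x)$, or an original node that already had exactly two children) has a child $y$ with $\beta'(y)\neq\beta(a)$, then after inserting the chain the child of $a$ on that side carries a bag differing from $\beta(a)$ by one vertex, so $a$ has two children whose bags are not both equal to its own and is neither a join, an introduce, nor a forget node. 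The standard repair is to guarantee that both children of every binary node are explicit copies of its bag --- equivalently, to start each refinement chain hanging below a binary node with a duplicate of the parent's bag. This adds only $\Ocal(n)$ extra nodes and changes neither the width, nor the $\Ocal(w\cdot n)$ bound on the number of nodes, nor the $\Ocal(w^2\cdot(n+|V({\sf T})|))$ running time, so with this adjustment your proof goes through.
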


\subsection{Boundaried graphs and representatives}\label{@coordinated}

Boundaried graphs are extensively used in \autoref{@fanaticism} and \autoref{@participant}.
We present here some useful definitions and results.

\subparagraph{Boundaried graphs.} Let $t\in \bN$. A \emph{$t$-boundaried graph} is a triple ${\bf G}=(G,B,\rho)$ where $G$ is a graph, $B\subseteq V(G)$, $|B|=t$, and $\rho:B\to[t]$ is a bijection.
We say that two $t$-boundaried graphs ${\bf G}_1=(G_1,B_1,\rho_1)$ and ${\bf G}_2=(G_2,B_2,\rho_2)$ are \emph{isomorphic} if there is an isomorphism from $G_1$ to $G_2$ that extends the bijection $\rho_2^{-1}\circ\rho_1$.
The triple $(G,B,\rho)$ is a \emph{boundaried graph} if it is a $t$-boundaried graph for some $t\in\bN$.
We denote by $\Bcal^t$ the set of all (pairwise non-isomorphic) $t$-boundaried graphs.
We also set $\Bcal=\bigcup_{t\in\bN}\Bcal^t$.

\subparagraph{Minors of boundaried graphs.} We say that a $t$-boundaried graph $G_1=(G_1,B_1,\rho_1)$ is a minor
of a $t$-boundaried graph $G_2=(G_2,B_2,\rho_2)$, denoted by $G_1\preceq G_2$, if there is a sequence of removals
of non-boundary vertices, edge removals, and edge contractions in $G_2$, not allowing contractions
of edges with both endpoints in $B_2$, that transforms $G_2$ to a boundaried graph that is isomorphic
to $G_1$ (during edge contractions, boundary vertices prevail). Note that this extends the usual
definition of minors in graphs without boundary.

\subparagraph{Equivalent boundaried graphs and representatives.} We say that two boundaried graphs ${\bf G}_1=(G_1,B_1,\rho_1)$ and ${\bf G}_2=(G_2,B_2,\rho_2)$ are \emph{compatible} if $\rho_2^{-1}\circ\rho_1$ is an isomorphism from $G_1[B_1]$ to $G_2[B_2]$.
Given two compatible boundaried graphs ${\bf G}_1=(G_1,B_1,\rho_1)$ and ${\bf G}_2=(G_2,B_2,\rho_2)$, we define ${\bf G}_1\oplus {\bf G}_2$ as the graph obtained if we take the disjoint union of $G_1$ and $G_2$ and, for every $i\in[|B_1|]$, we identify vertices $\rho_1^{-1}(i)$ and $\rho_2^{-1}(i)$.
We also define ${\bf G}_1\bigoplus {\bf G}_2$
as the {\sl boundaried} graph $({\bf G}_1\oplus {\bf G}_2,B_1,\rho_1)$.
Given $h\in\bN$, we say that two boundaried graphs ${\bf G}_1$ and ${\bf G}_2$ are $h$-\emph{equivalent}, denoted by ${\bf G}_1\equiv_h {\bf G}_2$, if they are compatible and, for every graph $H$ with detail at most $h$ and every boundaried graph ${\bf F}$
compatible with ${\bf G}_1$ (hence, with ${\bf G}_2$ as well), it holds that
\[ H\preceq {\bf F}\oplus {\bf G}_1 \Longleftrightarrow H\preceq {\bf F}\oplus {\bf G}_2.\]
Note that $\equiv_h$ is an equivalence relation on $\cal B$. A minimum-sized (in terms of number of vertices) element of an equivalent class of $\equiv_h$ is called \emph{representative} of $\equiv_h$. For $t\in\bN$, a \emph{set of $t$-representatives} for $\equiv_h$, denoted by $\Rcal_h^t$, is a collection containing a minimum-sized representative for each equivalence class of $\equiv_h$ restricted to ${\cal B}^t$.\medskip

The following results were proved by Baste, Sau,  and Thilikos~\cite{BasteST20acom} and give a bound on the size of a representative and on  the
number of representatives for this equivalence relation, respectively.

\begin{proposition}[\cite{BasteST20acom}]\label{@iinelstaai}
For every $t\in\bN$, $q,h\in\bN_{\geq 1}$, and $\textbf{G}=(G,B,\rho)\in\Rcal_h^t$, if $G$ is does not contain $K_q$ as a minor, then $|V(G)|=\Ocal_{q,h}(t)$.
\end{proposition}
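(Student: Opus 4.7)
The plan is to prove this by a protrusion-replacement argument exploiting the finiteness of the equivalence classes of $\equiv_h$ on $c$-boundaried graphs for any fixed $c$. As a first step, I would recall (or quickly derive) that for every $c,h \in \bN_{\geq 1}$, the set $\Rcal_h^c$ is finite, and let $N(c,h)$ denote a uniform upper bound on $|V(H)|$ over all $(H,B',\rho')\in\Rcal_h^c$. Given $\textbf{G}=(G,B,\rho)\in\Rcal_h^t$ with $G$ being $K_q$-minor-free, suppose towards a contradiction that $|V(G)|$ exceeds a suitable threshold $C(q,h)\cdot t$. I would then look for a \emph{protrusion} in $\textbf{G}$: a vertex set $U\subseteq V(G)\setminus B$ with $|N_G(U)|\leq c$ and $|U|>N(c,h)$, where $c=c(q,h)$ is a constant to be chosen. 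Once such a $U$ is found, the boundaried graph $\textbf{H}:=(G[U\cup N_G(U)],N_G(U),\rho_U)$ (for an arbitrary bijection $\rho_U$) can be replaced by the $\equiv_h$-representative $\textbf{H}'$ of its class. Since $|V(\textbf{H}')|\leq N(c,h)<|U|$, splicing $\textbf{H}'$ back in place of $\textbf{H}$ yields a strictly smaller $t$-boundaried graph in the same $\equiv_h$-class, contradicting the minimality of $\textbf{G}$.

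The heart of the argument is thus to guarantee existence of such a protrusion whenever $|V(G)|$ is sufficiently larger than $t$. Here I would exploit that $K_q$-minor-free graphs are sparse: by Mader's theorem, $|E(G)|=\Ocal_q(|V(G)|)$. Although $K_q$-minor-free graphs may have unbounded treewidth, their sparsity combined with structural arguments, either via the Robertson--Seymour structure theorem for $K_q$-minor-free graphs, or more elementarily by iteratively peeling off vertices of bounded degree, yields a tree-like decomposition of $G$ whose adhesions have size bounded by a function $c(q)$ of $q$ alone. Since $|B|=t$, at most $\Ocal(t)$ bags of this decomposition can touch $B$; when $|V(G)|$ is much larger than $t$, a pigeonhole-style argument locates a subtree whose union of bags is disjoint from $B$ and contains more than $N(c,h)$ vertices. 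Its vertex set, together with the adhesion separating it from the rest, constitutes the desired protrusion.

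The main obstacle will be calibrating the decomposition so that the adhesion size $c(q)$ and the representative-size bound $N(c(q),h)$ together produce a clean linear dependence $|V(G)|=\Ocal_{q,h}(t)$. In particular, one needs $N(c,h)$ to be independent of $t$, which rests on the finiteness of $\equiv_h$-classes for each fixed boundary size, itself traceable to the existence of finite minor-obstruction sets provided by the Graph Minor Theorem. A secondary subtlety is ensuring that the protrusion can be chosen disjoint from $B$ even if $B$ is ``spread out'' in the decomposition: this is handled by increasing the threshold $C(q,h)\cdot t$ to absorb the at most $\Ocal(t)$ ``bad'' subtrees adjacent to $B$. Once these quantitative ingredients are in place, the replacement step and the derivation of a contradiction with minimality are routine.
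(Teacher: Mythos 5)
Your statement is not proved in this paper at all: it is imported verbatim from \cite{BasteST20acom}, so there is no in-paper proof to compare with, and I assess your argument on its own terms (the closest argument inside this paper is the splicing proof of \autoref{@entrepreneurial}, discussed below).

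The gap is the structural step that is supposed to produce the protrusion. You claim that a $K_q$-minor-free graph admits a tree-like decomposition with adhesions bounded by a constant $c(q)$, from which, once $|V(G)|\gg t$, a pigeonhole argument extracts a set $U\subseteq V(G)\setminus B$ with $|N_G(U)|\le c(q)$ and $|U|>N(c,h)$. Neither of your proposed sources delivers this: the Robertson--Seymour structure theorem does give adhesions bounded in $q$, but its torsos are almost-embeddable graphs of \emph{unbounded} size, so the whole graph can be a single huge piece and the decomposition yields no nontrivial subtree; and peeling low-degree vertices only gives bounded degeneracy, which says nothing about small separators. Worse, the intermediate statement you need is false. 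Let $G$ be an $n\times n$ grid (planar, hence $K_5$-minor-free) whose $t$ boundary vertices occupy more than $c$ distinct rows. For any set $S$ of at most $c<n$ vertices, the union of the rows and columns avoiding $S$ lies in a single ``giant'' component of $G\setminus S$, and every vertex outside it has both its row and its column hit by $S$, so there are at most $c^{2}$ such vertices; since $B$ meets more than $c$ rows, the giant component intersects $B$. Taking $S=N_G(U)$, any $U\subseteq V(G)\setminus B$ with $|N_G(U)|\le c$ is a union of non-giant components and hence has $|U|\le c^{2}$, no matter how large $n$ is compared to $t$. Your argument needs $|U|>N(c,h)$, and you have no handle showing $N(c,h)\le c^{2}$: $N(c,h)$ is the worst-case size of a minimum-size member over \emph{all} $\equiv_h$-classes with boundary $c$, and bounding representative sizes is precisely what the proposition asserts, so invoking such a bound for the protrusion pieces would be circular. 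Thus no replaceable protrusion need exist and the contradiction with minimality never gets started. (The replacement step itself, i.e.\ that splicing an $\equiv_h$-equivalent boundaried graph into a context preserves the $\equiv_h$-class of the $t$-boundaried whole, is indeed routine; that part is fine.)

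More conceptually, constant-boundary protrusion replacement is too weak for this statement: shrinking a grid-like representative requires exploiting the equivalence relation along separators whose size is comparable to $t$ (or an irrelevant-vertex-type argument far from the boundary), not of size $c(q)$. This is the shape of the proof in the cited source, and also of the nearest in-paper analogue, \autoref{@entrepreneurial}, which takes a linked tree decomposition, finds two nested separators of the same size joined by vertex-disjoint paths whose corresponding boundaried subgraphs have equal characteristics, and splices one into the other. Note that even that technique, applied generically, yields bounds exponential in the decomposition width; obtaining the linear dependence $\Ocal_{q,h}(t)$ claimed here genuinely relies on the specific argument of \cite{BasteST20acom} and cannot be recovered from the scheme you outline.
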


\begin{proposition}[\cite{BasteST20acom}]\label{@encounters}
For every $t\in\bN_{\geq 1}$, $|\Rcal_h^t|=2^{\Ocal_h(t\log t)}$.
\end{proposition}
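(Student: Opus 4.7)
The plan is to combine \autoref{@iinelstaai} with classical density bounds for proper minor-closed classes and direct counting. In order for \autoref{@iinelstaai} to produce a linear (in $t$) bound on the vertex count of a representative, I must first establish that every representative in $\Rcal_h^t$ excludes some $K_q$ as a minor with $q=q(h)$ depending only on $h$ (and not on $t$).

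To prove this $K_{q}$-minor-freeness claim I would argue by minimality. The $\equiv_h$-class of a boundaried graph $\textbf{G}=(G,B,\rho)$ is captured by its ``$h$-type'', that is, the finite collection recording, for each graph $H$ of detail at most $h$ and each way of distributing the vertices of $H$ across the boundary-attachment pattern, whether $H$ can be realized as a minor of $\textbf{F}\oplus \textbf{G}$ for some compatible $\textbf{F}$. Since the graphs $H$ involved have at most $h$ vertices, only a bounded portion of any $K_q$-minor in $G$ is ever ``used'' by a minor model of $H$. Hence, if $q$ is chosen sufficiently large as a function of $h$ alone, a surgical argument lets one contract or delete branch sets of the $K_q$-minor without disturbing the $h$-type, yielding a strictly smaller boundaried graph in the same $\equiv_h$-class and contradicting the minimality of $\textbf{G}$. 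Making this surgery precise, and in particular arranging $q$ to be a function of $h$ alone (rather than of both $h$ and $t$), is the main obstacle; without the independence of $q$ from $t$, \autoref{@iinelstaai} would only yield a bound of the form $f(t,h)\cdot t$, which is too weak.

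Once $q=\Ocal_h(1)$ is secured, \autoref{@iinelstaai} delivers $|V(G)|=\Ocal_{q,h}(t)=\Ocal_h(t)$. Thomason's sharp bound on the edge density of $K_q$-minor-free graphs then gives
\[
|E(G)|\;\leq\; \Ocal\bigl(q\sqrt{\log q}\bigr)\cdot |V(G)|\;=\;\Ocal_h(t).
\]
Set $N:=|V(G)|$ and $E:=|E(G)|$, both bounded by $c(h)\cdot t$ for some function $c$. Counting labeled objects (which is an overcount for isomorphism classes of $t$-boundaried graphs) gives
\[
|\Rcal_h^t|\;\leq\;\sum_{N\leq c(h)t}\binom{\binom{N}{2}}{\leq E}\cdot \binom{N}{t}\cdot t!\;\leq\;\sum_{N\leq c(h)t} N^{2E}\cdot N^{t}\;=\;2^{\Ocal_h(t\log t)},
\]
where $\binom{N}{t}\cdot t!$ accounts for the choice of boundary $B$ together with its labeling $\rho$. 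This is the desired estimate. The only nontrivial step in the whole plan is the minimality-based argument establishing that representatives exclude a fixed-in-$h$ clique minor; everything else is extremal counting.
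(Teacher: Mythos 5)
This proposition is imported from \cite{BasteST20acom} without proof in the present paper, so your attempt has to be judged on its own terms and against the argument in that reference. The decisive problem is your step (a), the claim that every minimum-sized representative excludes $K_q$ for some $q=q(h)$. With the definitions used here this is simply false: $\equiv_h$ is only defined between \emph{compatible} boundaried graphs, so every member of an equivalence class carries a prescribed labeled graph on its boundary. If that boundary graph is $K_t$, then every graph in the class, in particular its minimum-sized representative, contains $K_t$ as a subgraph, so no clique-exclusion bound depending on $h$ alone can hold. Worse, since distinct labeled graphs on $[t]$ yield incompatible, hence inequivalent, $t$-boundaried graphs, there are at least $2^{\binom{t}{2}}$ equivalence classes, which already exceeds $2^{\Ocal_h(t\log t)}$ for large $t$; so no amount of surgery on a minimal representative can salvage the unrestricted statement, and the ``main obstacle'' you flag is genuinely insurmountable. (A further difficulty with the surgery idea: for $q$ with $\binom{q}{2}>h$, possessing a $K_q$-minor is not invariant under $\equiv_h$, so controlling clique minors of a \emph{minimum} representative by exchanging it within its class is delicate even away from the boundary.)

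What makes the proposition true, both in \cite{BasteST20acom} and in every place this paper uses it (\autoref{@objectives}, \autoref{@belligerent}), is the implicit restriction to representatives whose underlying graphs live in the relevant minor-closed universe: there one only counts representatives that are $\Fcal$-minor-free, and such graphs exclude $K_{s_\Fcal}$ with $s_\Fcal\leq h$, so the clique-exclusion with $q=\Ocal_h(1)$ is a \emph{hypothesis} inherited from the setting, not a lemma to be proved. Once you adopt that reading, the remainder of your plan is correct and is essentially the argument of the cited work: \autoref{@iinelstaai} gives $\Ocal_h(t)$ vertices, the Kostochka--Thomason density bound gives $\Ocal_h(t)$ edges (and also makes the boundary graph sparse, so it contributes only $2^{\Ocal_h(t\log t)}$ choices), and your final count of graphs with $\Ocal_h(t)$ vertices and edges together with a boundary labeling yields $2^{\Ocal_h(t\log t)}$. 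So the fix is not a new structural lemma but stating and proving the bound for the $K_q$-minor-free subfamily of $\Rcal_h^t$, which is all the paper ever needs.
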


Moreover, given a boundaried graph of bounded size, the following lemma gives an algorithm to find its representative.
\sugar{While this is might be considered folklore, we include here its proof for the sake of completeness.}

\begin{lemma}\label{@objectives}
Given a finite collection of graphs $\Fcal$, $h,t,k\in\bN$, the set $\Rcal$ of representatives in $\Rcal_h^t$ whose underlying graphs are $\Fcal$-minor-free, and a $t$-boundaried graph ${\bf G}$ with $k$ vertices whose underlying graph is $\Fcal$-minor-free, there is an algorithm that outputs the representative of ${\bf G}$ in $\Rcal$ in time $2^{\Ocal_{{\ell_\Fcal},h}(t\log t+\log(k+t))}$.
\end{lemma}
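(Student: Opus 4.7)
The plan is a ``compute-and-match'' algorithm based on a suitable finite invariant---the $h$-folio---that characterizes the $\equiv_h$ equivalence class of a $t$-boundaried graph. I would first set up the two size bounds I need on the search space. Invoking~\autoref{@iinelstaai} with $q=s_\Fcal$ (since the underlying graphs in $\Rcal$ are $\Fcal$-minor-free, hence $K_{s_\Fcal}$-minor-free), every $\textbf{R}=(R,B_R,\rho_R)\in\Rcal$ satisfies $|V(R)|=\Ocal_{\ell_\Fcal,h}(t)$; call this bound $M$. Combining with~\autoref{@encounters}, $|\Rcal|\le|\Rcal_h^t|=2^{\Ocal_h(t\log t)}$. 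Both bounds will directly enter the running-time analysis.

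For a $t$-boundaried graph $\textbf{B}=(B,\beta,\sigma)$, I would define its $h$-folio $\Phi_h(\textbf{B})$ as the set of all pairs $(H,\mu)$ with $\mathrm{detail}(H)\le h$ (taken up to isomorphism, hence at most $2^{\Ocal(h^2)}$ types) and $\mu:[t]\to V(H)\cup\{\bot\}$ a ``rooting'' (at most $(h+1)^t$ choices) such that $\textbf{B}$ admits a minor model of $H$ placing the boundary vertex $\sigma^{-1}(i)$ in the branch set of $\mu(i)$ (and omitting it when $\mu(i)=\bot$). A standard ``cut along the boundary'' argument---implicit in the definition of $\equiv_h$---shows that two compatible $t$-boundaried graphs are $h$-equivalent if and only if they share the same $h$-folio: any detail-$\le h$ minor of $\textbf{F}\oplus\textbf{B}_i$ decomposes into rooted minors of $\textbf{F}$ and of $\textbf{B}_i$ whose rootings agree on $[t]$, so equal folios force equal minor-containment behaviour against every compatible test graph $\textbf{F}$. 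Thus the desired representative of $\textbf{G}$ is the unique $\textbf{R}\in\Rcal$ with $\Phi_h(\textbf{R})=\Phi_h(\textbf{G})$, and the folio has at most $2^{\Ocal_h(t)}$ entries.

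The algorithm is then: (i) compute $\Phi_h(\textbf{G})$; (ii) compute $\Phi_h(\textbf{R})$ for every $\textbf{R}\in\Rcal$; (iii) report the matching representative. For each of the $2^{\Ocal_h(t)}$ entries $(H,\mu)$, I would run a rooted minor test based on the $H$-minor algorithm of Kawarabayashi--Kobayashi--Reed~\cite{KawarabayashiKR12thed}, which takes $g(h)\cdot(k+t)^{\Ocal(1)}$ time per test, where $g$ depends only on $h$. Multiplied through: computing $\Phi_h(\textbf{G})$ costs $2^{\Ocal_h(t)}\cdot(k+t)^{\Ocal_h(1)}$; computing all $\Phi_h(\textbf{R})$'s costs $|\Rcal|\cdot 2^{\Ocal_h(t)}\cdot t^{\Ocal_{\ell_\Fcal,h}(1)}=2^{\Ocal_{\ell_\Fcal,h}(t\log t)}$; and the final comparison is absorbed by these terms. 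Summing yields $2^{\Ocal_{\ell_\Fcal,h}(t\log t+\log(k+t))}$, matching the target complexity.

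The hard part will be the rooted minor testing subroutine. A naive reduction that attaches one ``identifier gadget'' per boundary vertex would inflate the pattern size to $\Omega(ht)$, forcing a tower-in-$t$ factor through~\cite{KawarabayashiKR12thed} and ruining the bound. The key observation to avoid this is that, with $\mu$ fixed, the $t$ boundary vertices partition into at most $h+1$ labelled groups (one per vertex of $H$ plus the ``deleted'' group), so the rooting can be enforced by a pattern-side gadget of size $\Ocal(h)$ combined with labelled boundary data handled outside the pattern. Once this observation is turned into a careful reduction that keeps the minor-testing cost at $g(h)\cdot(k+t)^{\Ocal(1)}$ independent of $t$, the remaining complexity arithmetic is routine bookkeeping.
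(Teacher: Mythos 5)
Your plan hinges on the claim that two compatible $t$-boundaried graphs are $\equiv_h$-equivalent if and only if they have the same rooted $h$-folio $\Phi_h$ (full minor models of detail-$\le h$ patterns inside the boundaried graph, with prescribed placement of boundary vertices). This claim is false in both directions, and that breaks the compute-and-match step. For the ``only if'' direction, equivalent graphs need not have equal rooted $h$-folios: take $t=1$, $h=2$, ${\bf G}_1$ a boundary vertex $b$ with one pendant internal vertex, and ${\bf G}_2$ a boundary vertex $b$ with a pendant internal path on two vertices. For every compatible ${\bf F}$, both ${\bf F}\oplus{\bf G}_1$ and ${\bf F}\oplus{\bf G}_2$ have at least two vertices and at least one edge, so they contain exactly the same patterns of detail at most $2$ (namely $K_1$, $\bar{K}_2$, $K_2$); hence ${\bf G}_1\equiv_2{\bf G}_2$. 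Yet $(K_2,\mu(1)=\bot)$ lies in $\Phi_2({\bf G}_2)$ but not in $\Phi_2({\bf G}_1)$. Since representatives are minimum-sized members of their class, this is exactly the generic situation your algorithm faces: the input ${\bf G}$ and its representative ${\bf R}$ will typically have different rooted folios, so step (iii) finds no match (or a wrong one). For the ``if'' direction, your cut-along-the-boundary argument has a quantitative gap: in a model of $H$ in ${\bf F}\oplus{\bf G}$, a single branch set may intersect $G$ in up to $t$ connected pieces, each attached to the boundary and linked up only through ${\bf F}$, and edges of $H$ may be realized on either side. The trace of the model on ${\bf G}$ is therefore a rooted minor of a pattern $H^{*}$ with up to $|V(H)|+t$ vertices, i.e.\ detail up to $h+t$, not $h$; so even for sufficiency you would have to work with the rooted $(h+t)$-folio (and then re-prove the necessity direction for that richer invariant, which your folio does not give you either).

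The paper avoids this entire issue by never trying to build an explicit folio-type invariant: for each ${\bf R}\in\Rcal$ it records the $0/1$ matrix $M_{\bf G}$ indexed by pairs (representative ${\bf R}'\in\Rcal$, pattern $H$ of detail at most $h$) with entry $1$ iff ${\bf R}'$ is compatible with ${\bf G}$ and $H\preceq{\bf G}\oplus{\bf R}'$, and observes that ${\bf R}$ is the representative of ${\bf G}$ iff $M_{\bf R}=M_{\bf G}$. All minor tests are then plain (unrooted) tests on glued graphs of size $\Ocal_{\ell_\Fcal,h}(k+t)$ via \cite{KawarabayashiKR12thed}, and the bounds of \autoref{@iinelstaai} and \autoref{@encounters} give the stated running time $2^{\Ocal_{\ell_\Fcal,h}(t\log t+\log(k+t))}$. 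If you want to salvage your write-up, replace $\Phi_h$ by this ``fingerprint against the representatives themselves'': by the very definition of $\equiv_h$ it is a correct test, it needs no rooted minor subroutine and hence none of the gadget engineering you flag as the hard part, and the complexity bookkeeping you already did goes through essentially unchanged.
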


\begin{proof}
Let $\Hcal$ be the set of graphs with detail at most $h$.
For a $t$-boundaried graph ${\bf G}$ of size $k$ whose underlying graph is $\Fcal$-minor-free, we define the matrix $M_{\bf G}$, whose rows are the representatives in $\Rcal$ and whose columns are the graphs of $\Hcal$, such that for ${\bf R}\in \Rcal$ and $H\in\Hcal$, we have $M_{\bf G}({\bf R},H)=1$ if ${\bf G}$ and ${\bf R}$ are compatible and $H\preceq{\bf G}\oplus {\bf R}$, and $M_{\bf G}({\bf R},H)=0$ otherwise.
Observe that ${\bf R}\in\Rcal$ is the representative of ${\bf G}$ if and only if $M_{\bf R}=M_{\bf G}$.
According to \autoref{@encounters}, $M_{\bf G}$ has size $2^{\Ocal_h(t\log t)}\cdot \Ocal_h(1)=2^{\Ocal_h(t\log t)}$.

For all ${\bf R}\in\Rcal$, we compute $M_{\bf R}$.
Every representative in $\Rcal$ has size at most $\Ocal_{{s_\Fcal},h}(t)$ by \autoref{@iinelstaai}, so when two representatives ${\bf R}$ and ${\bf R}'$ are compatible, ${\bf R}\oplus {\bf R}'$ has size $\Ocal_{{\ell_\Fcal},h}(t)$ as well.
From \cite{KawarabayashiKR12thed}, we know that checking if a graph $H\in\Hcal$ is a minor of ${\bf R}\oplus {\bf R}'$ can be done in time $\Ocal_{{\ell_\Fcal},h}(t^2)$.
Therefore, we can compute $M_{\bf R}$ in time $2^{\Ocal_{h,{s_\Fcal}}(t\log t)}$.

Let ${\bf G}$ be a $t$-boundaried graph of size $k$ whose underlying graph is $\Fcal$-minor-free.
For ${\bf R}\in\Rcal$ compatible with ${\bf G}$, ${\bf G}\oplus {\bf R}$ has size $\Ocal_{{\ell_\Fcal},h}(k+t)$, so checking if $H\in\Hcal$ is a minor of ${\bf G}\oplus {\bf R}$ can be done in time $\Ocal_{{\ell_\Fcal},h}((k+t)^2)$.
Thus we can compute $M_{\bf G}$ in time $2^{\Ocal_h(t\log t)}\cdot\Ocal_{{\ell_\Fcal},h}((k+t)^2)=2^{\Ocal_{{\ell_\Fcal},h}(t\log t+\log(k+t))}$.

Finally, we just need to find ${\bf R}\in\Rcal$ such that $M_{\bf R}=M_{\bf G}$, which can be done in time $2^{\Ocal_h(t\log t)}$.
Thus, we can find the representative of ${\bf G}$ in time $2^{\Ocal_{{\ell_\Fcal},h}(t\log t+\log(k+t))}$.
\end{proof}

\section{Even more definitions: Flat walls}\label{@envisioning}

In this section we deal with flat walls using the framework of \cite{SauST21amor}.
More precisely, in \autoref{@postulated}, we introduce walls and several notions concerning them.
In \autoref{@commiseratio} we provide the definitions of a rendition and a painting.
Using the above notions, in \autoref{@tugendlehre} we define flat walls and provide some results about them, including the Flat Wall Theorem (namely, the version proved in \cite{KawarabayashiTW18anew}) and its algorithmic version restated in the ``more accurate'' framework of \cite{SauST21amor}.
In \autoref{@commodation} and~\autoref{@unequivocally}, we define canonical partitions and the notion of bidimensionality and give some combinatorial results that allow us to use branching in the algorithm in \autoref{@conditioned}.
Finally in \autoref{@indigenous} we present homogeneous walls and give some results to find an irrelevant vertex.
We note that most of the definitions of this section can also be found in \cite{SauST21amor,SauST21kapiI,SauST21kapiII} with more details and illustrations.

\subsection{Walls and subwalls}\label{@postulated}
We start with some basic definitions about walls.

\subparagraph{Walls.}
Let $k,r\in\bN$. The
\emph{$(k\times r)$-grid} is the
graph whose vertex set is $[k]\times[r]$ and two vertices $(i,j)$ and $(i',j')$ are adjacent if and only if $|i-i'|+|j-j'|=1$.
An \emph{elementary $r$-wall}, for some odd integer $r\geq 3$, is the graph obtained from a
$(2 r\times r)$-grid
with vertices $(x,y)
	\in[2r]\times[r]$,
after the removal of the
``vertical'' edges $\{(x,y),(x,y+1)\}$ for odd $x+y$, and then the removal of
all vertices of degree one.
Notice that, as $r\geq 3$, an elementary $r$-wall is a planar graph
that has a unique (up to topological isomorphism) embedding in the plane $\bR^{2}$
such that all its finite faces are incident to exactly six
edges.
The {\em perimeter} of an elementary $r$-wall is the cycle bounding its infinite face,
while the cycles bounding its finite faces are called {\em bricks}.
Also, the vertices
in the perimeter of an elementary $r$-wall that have degree two are called {\em pegs},
while the vertices $(1,1), (2,r), (2r-1,1), (2r,r)$ are called {\em corners} (notice that the corners are also pegs).

\begin{figure}[ht]
	\begin{center}
		\includegraphics[width=14cm]{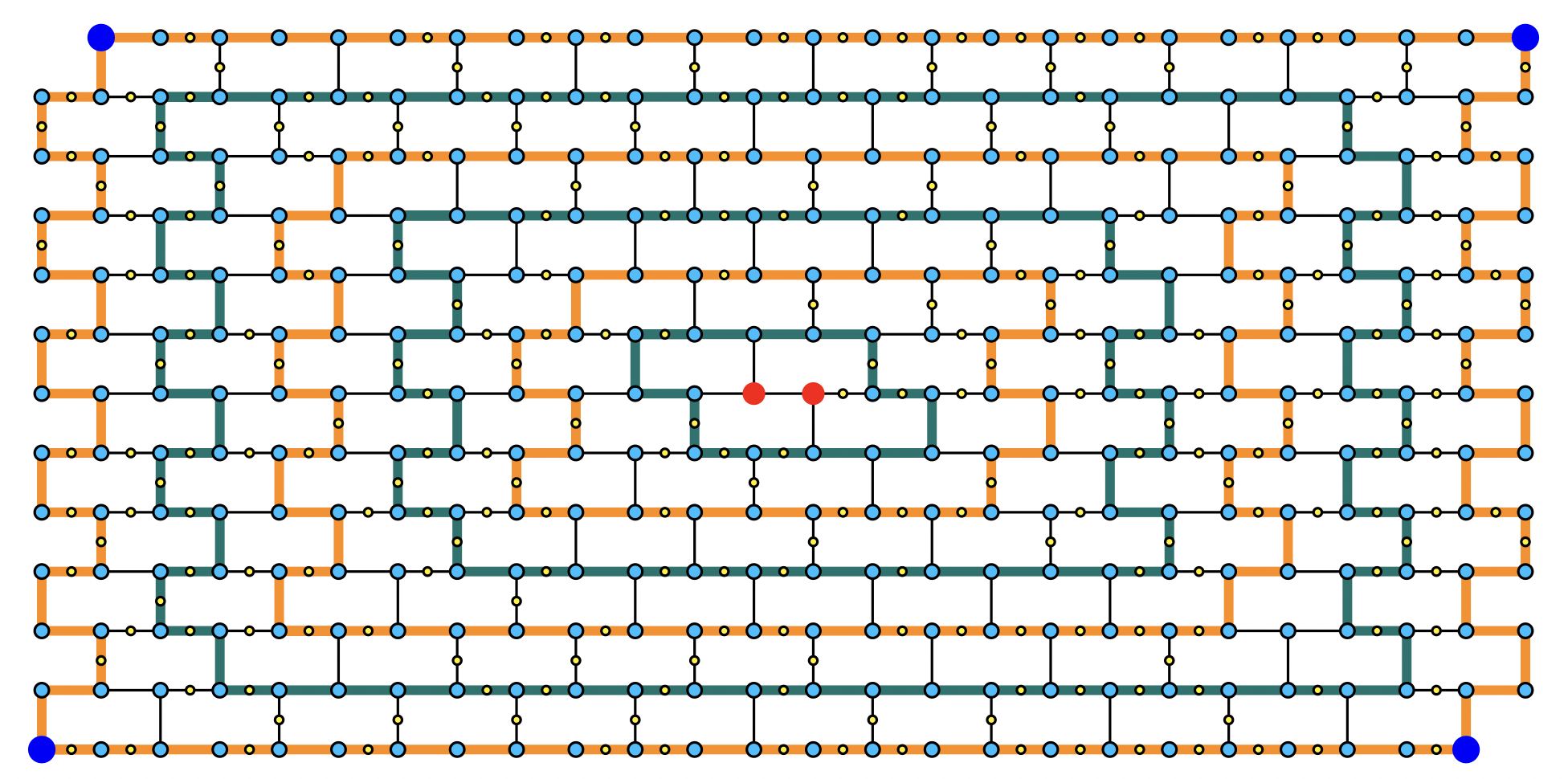} 
	\end{center}
	\caption{A $13$-wall and its six layers, depicted in alternating orange and green. The central vertices of the wall are depicted in red and the corners are depicted in blue.}
	\label{@manuscript}
\end{figure}

An {\em $r$-wall} is any graph $W$ obtained from an elementary $r$-wall $\bar{W}$
after subdividing edges (see \autoref{@manuscript}). A graph $W$ is a {\em wall} if it is an $r$-wall for some odd $r\geq 3$
and we refer to $r$ as the {\em height} of $W$. Given a graph $G$,
a {\em wall of} $G$ is a subgraph of $G$ that is a wall.
We insist that, for every $r$-wall, the number $r$ is always odd.

We call the vertices of degree three of a wall $W$ {\em 3-branch vertices}.
A cycle of $W$ is a {\em brick} (resp. the {\em perimeter}) of $W$
if its 3-branch vertices are the vertices of a brick (resp. the perimeter) of $\bar{W}$.
We denote by ${\cal C}(W)$ the set of all cycles of $W$.
We use $D(W)$ in order to denote the perimeter of the wall $W$.
A brick of $W$ is {\em internal} if it is disjoint from $D(W)$.

\subparagraph{Subwalls.}
Given an elementary $r$-wall $\bar{W}$, some odd $i\in \{1,3,\ldots,2r-1\}$, and $i'=(i+1)/2$,
the {\em $i'$-th vertical path} of $\bar{W}$ is the one whose
vertices, in order of appearance, are $(i,1),(i,2),(i+1,2),(i+1,3),
	(i,3),(i,4),(i+1,4),(i+1,5),
	(i,5),\ldots,(i,r-2),(i,r-1),(i+1,r-1),(i+1,r)$.
Also, given some $j\in[2,r-1]$ the {\em $j$-th horizontal path} of $\bar{W}$
is the one whose
vertices, in order of appearance, are $(1,j),(2,j),\ldots,(2r,j)$.

A \emph{vertical} (resp. \emph{horizontal}) path of an $r$-wall $W$ is one
that is a subdivision of a vertical (resp. horizontal) path of $\bar{W}$.
Notice that the perimeter of an $r$-wall $W$
is uniquely defined regardless of the choice of the elementary $r$-wall $\bar{W}$.
A {\em subwall} of $W$ is any subgraph $W'$ of $W$
that is an $r'$-wall, with $r' \leq r$, and such the vertical (resp. horizontal) paths of $W'$ are subpaths of the
	{vertical} (resp. {horizontal}) paths of $W$.
	
\subparagraph{Layers.}
The {\em layers} of an $r$-wall $W$ are recursively defined as follows.
The first layer of $W$ is its perimeter.
For $i=2,\ldots,(r-1)/2$, the $i$-th layer of $W$ is the $(i-1)$-th layer of the subwall $W'$
obtained from $W$ after removing from $W$ its perimeter and
removing recursively all occurring vertices of degree one.
We refer to the $(r-1)/2$-th layer as the {\em inner layer} of $W$.
The {\em central vertices} of an $r$-wall are its two 3-branch vertices that do not belong to any of its layers and that are connected by a path of $W$ that does not intersect any layer. See \autoref{@manuscript} for an illustration of the notions defined above.

\subparagraph{Central walls.}
Given an $r$-wall $W$ and an odd $q\in\bN_{\geq 3}$ where $q\leq r$,
we define the {\em central $q$-subwall} of $W$, denoted by $W^{(q)}$,
to be the $q$-wall obtained from $W$ after removing
its first $(r-q)/2$ layers and all occurring vertices of degree one.

\subparagraph{Tilts.}
The {\em interior} of a wall $W$ is the graph obtained
from $W$ if we remove from it all edges of $D(W)$ and all vertices
of $D(W)$ that have degree two in $W$.
Given two walls $W$ and $\tilde{W}$ of a graph $G$,
we say that $\tilde{W}$ is a {\em tilt} of $W$ if $\tilde{W}$ and $W$ have identical interiors.

\subparagraph{Minor models grasped by walls.}
Let $G$ be a graph and $W$ be an $r$-wall in $G$. Let $P_1,...,P_r$ be the horizontal paths and $Q_1,...,Q_r$ be the vertical paths of $W$. Let $t\geq 1$ be an integer and $v_1,...,v_t$ be the vertices in $K_t$. A model of a $K_t$-minor in $G$ is \emph{grasped} by $W$ if, for every model $X_i$ of $v_i$, there exist distinct indices $i_1,...,i_t\in [r]$ and distinct indices $j_1,...,j_t\in [r]$ such that $V(P_{i_l})\cap V(Q_{j_l})\subseteq X_l$ for all $l\in [t]$.\medskip

We present the following result
of Kawarabayashi and Kobayashi \cite{KawarabayashiK20line},
which provides a linear relation between the treewidth and the height of a largest wall in a minor-free graph.

\begin{proposition}[\cite{KawarabayashiK20line}]\label{@overwhelmingly}
	There is a function $\newfun{@relentlessly}:\bN\to \bN$ such that, for every $t,r\in \bN$
	and every graph $G$ that does not contain $K_{t}$ as a minor, if $\tw(G)\geq \funref{@relentlessly}(t)\cdot r$, then $G$ contains an $r$-wall as a subgraph.
	In particular, one may choose $\funref{@relentlessly}(t)=2^{\Ocal(t^{2} \cdot \log t)}$.
\end{proposition}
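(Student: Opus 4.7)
The plan is to obtain the claimed bound by combining a strong excluded-grid-type theorem tailored to $K_t$-minor-free graphs with a standard minor-to-subgraph extraction argument for walls. Since walls have maximum degree three, any wall minor is in fact a topological minor, and topological minors of a wall are (subdivisions of) walls, i.e.\ wall subgraphs. Thus the real content is to produce an $r$-wall as a \emph{minor} of $G$ from the large-treewidth hypothesis with a linear-in-$r$ bound whose coefficient is $2^{\Ocal(t^2\log t)}$.

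First I would invoke the main structural ingredient of Kawarabayashi and Kobayashi~\cite{KawarabayashiK20line}: for $K_t$-minor-free graphs, large treewidth forces a large grid minor with a \emph{linear} dependence on the grid side, where the multiplicative coefficient is $2^{\Ocal(t^2\log t)}$. This is the hard technical step and is precisely the reason why the statement does not need the much worse polynomial-in-$r$ bound coming from the general excluded grid theorem of Chekuri--Chuzhoy or Robertson--Seymour. The improvement relies on the fact that in a $K_t$-minor-free graph one can control how tangles branch out and recurse on smaller pieces without paying extra factors of $r$; the coefficient $2^{\Ocal(t^2\log t)}$ arises from carefully tracking the dependence on $t$ through their recursive decomposition.

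Second, given a grid minor of sufficiently large side in $G$, I would extract an $r$-wall subgraph. Because an elementary $r$-wall is obtained from a $(2r\times r)$-grid by deleting a matching of ``vertical'' edges and all degree-one vertices, the wall is naturally a subgraph of a grid of comparable size, so a grid minor of side, say, $2r+1$ yields an $r$-wall minor. Then, using that every $r$-wall has maximum degree three, a standard argument converts the $r$-wall minor of $G$ into a subdivision of an $r$-wall contained in $G$: each branch set in the minor model can be shrunk to a single vertex together with internal tree-paths joining the images of its neighbors, and since every branch vertex has degree exactly three, the resulting object is exactly a wall subgraph, with the extra linear blow-up absorbed into $\funref{@relentlessly}$.

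The main obstacle is purely the first step, i.e.\ establishing the linear-in-$r$ grid-minor theorem for $K_t$-minor-free graphs with the claimed dependence on $t$; the grid-to-wall conversion and the bookkeeping on constants are routine. Since the full argument of~\cite{KawarabayashiK20line} is lengthy and self-contained, the proposal here is to cite this result as a black box, checking only that the quoted form (wall subgraph, not merely wall minor, with multiplicative coefficient $2^{\Ocal(t^2\log t)}$) follows after the constant-factor adjustment described in the previous paragraph.
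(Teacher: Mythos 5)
Your proposal is correct and matches the paper's treatment: the paper does not prove this statement but simply quotes it as a black-box result of Kawarabayashi and Kobayashi~\cite{KawarabayashiK20line}, which is exactly what you do, with the hard content being their linear-in-$r$ excluded-grid theorem for $K_t$-minor-free graphs with coefficient $2^{\Ocal(t^{2}\log t)}$. Your additional remark that a grid minor of side $\Ocal(r)$ yields an $r$-wall subgraph (via maximum degree three, so wall minor implies wall topological minor, and a subdivision of a wall is a wall by definition) is the standard, correct bookkeeping needed to pass from the grid-minor phrasing to the wall-subgraph phrasing used here.
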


\subsection{Paintings and renditions}\label{@commiseratio}

In this subsection we present the notions of renditions and paintings, originating in the work of Robertson and Seymour \cite{RobertsonS95XIII}.
The definitions presented here were introduced in \cite{KawarabayashiTW18anew} (see also \cite{BasteST20acom,SauST21amor}).
\subparagraph{Paintings.}
A {\em closed} (resp. {\em open}) {\em disk} is a set homeomorphic to the set
$\{(x,y)\in \bR^{2}\mid x^{2}+y^{2}\leq 1\}$ (resp. $\{(x,y)\in \bR^{2}\mid x^{2}+y^{2}< 1\}$).
Let $\Delta$ be a closed disk.
Given a subset $X$ of $\Delta$, we
denote its closure by $\bar{X}$ and its boundary by $\bd(X)$.
A {\em {$\Delta$}-painting} is a pair $\Gamma=(U,N)$
where
\begin{itemize}
	\item $N$ is a finite set of points of $\Delta$,
	\item $N \subseteq U \subseteq \Delta$, and
	\item $U \setminus N$ has finitely many arcwise-connected components, called {\em cells}, where for every cell~$c$,
	      \begin{itemize}
		      \item the closure $\bar{c}$ of $c$
		            is a closed disk
		            and
		      \item  $|\tilde{c}|\leq 3$, where $\tilde{c}:=\bd(c)\cap N$.
	      \end{itemize}
\end{itemize}

We use the notation $U(\Gamma) := U$,
$N(\Gamma) := N$ and denote the set of cells of $\Gamma$
by $C(\Gamma)$.
For convenience, we may assume that each cell of $\Gamma$ is an open disk of $\Delta$.

Notice that, given a $\Delta$-painting $\Gamma$,
the pair $(N(\Gamma),\{\tilde{c}\mid c\in C(\Gamma)\})$ is a hypergraph whose hyperedges have cardinality at most three and $\Gamma$ can be seen as a plane embedding of this hypergraph in $\Delta$.

\subparagraph{Renditions.}
Let $G$ be a graph and let $\Omega$ be a cyclic permutation of a subset of $V(G)$ that we denote by $V(\Omega)$. By an {\em $\Omega$-rendition} of $G$ we mean a triple $(\Gamma, \sigma, \pi)$, where
\begin{itemize}
	\item[(a)] $\Gamma$ is a $\Delta$-painting for some closed disk $\Delta$,
	\item[(b)] $\pi: N(\Gamma)\to V(G)$ is an injection, and
	\item[(c)] $\sigma$ assigns to each cell $c \in C(\Gamma)$ a subgraph $\sigma(c)$ of $G$, such that
	      \begin{enumerate}
		      \item[(1)] $G=\bigcup_{c\in C(\Gamma)}\sigma(c)$,
		      \item[(2)]  for distinct $c, c' \in  C(\Gamma)$,  $\sigma(c)$ and $\sigma(c')$  are edge-disjoint,
		      \item[(3)] for every cell $c \in  C(\Gamma)$, $\pi(\tilde{c}) \subseteq V (\sigma(c))$,
		      \item[(4)]  for every cell $c \in  C(\Gamma)$,
		            $V(\sigma(c)) \cap \bigcup_{c' \in  C(\Gamma) \setminus  \{c\}}V(\sigma(c')) \subseteq \pi(\tilde{c})$, and
		      \item[(5)]  $\pi(N(\Gamma)\cap \bd(\Delta))=V(\Omega)$, such that the points
		            in $N(\Gamma)\cap \bd(\Delta)$ appear in $\bd(\Delta)$ in the same ordering
		            as their images, via $\pi$, in $\Omega$.
	      \end{enumerate}
\end{itemize}

\subsection{Flatness pairs}\label{@tugendlehre}

In this subsection we define the notion of a flat wall.
The definitions given here are originating from \cite{SauST21amor}.
We refer the reader to that paper for a more detailed exposition of these definitions and the reasons for which they were introduced.
We use the more accurate framework of \cite{SauST21amor} concerning flat walls, instead of that of \cite{KawarabayashiTW18anew},
in order to be able to use tools that are developed in \cite{SauST21amor,SauST21kapiI,SauST21kapiII}.

\subparagraph{Flat walls.}
Let $G$ be a graph and let $W$ be an $r$-wall of $G$, for some odd integer $r\geq 3$.
We say that a pair $(P,C)\subseteq V(D(W))\times V(D(W))$ is a {\em choice of pegs and corners for $W$} if $W$ is a subdivision of an elementary $r$-wall $\bar{W}$ where $P$ and $C$ are the pegs and the corners of $\bar{W}$, respectively (clearly, $C\subseteq P$).
To get more intuition, notice that a wall $W$ can occur in several ways from the elementary wall $\bar{W}$,
depending on the way the vertices in the perimeter of $\bar{W}$ are subdivided.
Each of them gives a different selection $(P,C)$ of pegs and corners of $W$.

We say that $W$ is a {\em flat $r$-wall}
of $G$ if there is a separation $(X,Y)$ of $G$ and a choice $(P,C)$
of pegs and corners for $W$ such that:
\begin{itemize}
	\item $V(W)\subseteq Y$,
	\item $P\subseteq X\cap Y\subseteq V(D(W))$, and
	\item if $\Omega$ is the cyclic ordering of the vertices $X\cap Y$ as they appear in $D(W)$,
	      then there exists an $\Omega$-rendition $(\Gamma,\sigma,\pi)$ of  $G[Y]$.
\end{itemize}

We say that $W$ is a {\em flat wall}
of $G$ if it is a flat $r$-wall for some odd integer $r \geq 3$.

\subparagraph{Flatness pairs.}
Given the above, we say that the choice of the 7-tuple $\mathfrak{R}=(X,Y,P,C,\Gamma,\sigma,\pi)$
{\em certifies that $W$ is a flat wall of $G$}.
We call the pair $(W,\mathfrak{R})$ a {\em flatness pair} of $G$ and define
the {\em height} of the pair $(W,\mathfrak{R})$ to be the height of $W$.
We use the term {\em cell of} $\mathfrak{R}$ in order to refer to the cells of $\Gamma$.

We call the graph $G[Y]$ the {\em $\mathfrak{R}$-compass} of $W$ in $G$,
denoted by $\compass_\mathfrak{R}(W)$.
We can assume that $\compass_\mathfrak{R} (W)$ is connected, updating $\mathfrak{R}$ by removing from $Y$ the vertices of all the connected components of $\compass_\mathfrak{R} (W)$
except for the one that contains $W$ and including them in $X$ ($\Gamma, \sigma, \pi$ can also be easily modified according to the removal of the aforementioned vertices from $Y$).
We define the {\em flaps} of the wall $W$ in $\mathfrak{R}$ as
$\flaps_\mathfrak{R}(W):=\{\sigma(c)\mid c\in C(\Gamma)\}$.
Given a flap $F\in \flaps_\mathfrak{R}(W)$, we define its {\em base}
as $\partial F:=V(F)\cap \pi(N(\Gamma))$.
A cell $c$ of $\mathfrak{R}$ is {\em untidy} if $\pi(\tilde{c})$ contains a vertex
$x$ of ${W}$ such that two of the edges of ${W}$ that are incident to $x$ are edges of $\sigma(c)$. Notice that if $c$ is untidy then $|\tilde{c}|=3$.
A cell $c$ of $\mathfrak{R}$ is {\em tidy} if it is not untidy.

\subparagraph{Cell classification.}
Given a cycle $C$ of $\compass_\mathfrak{R}(W)$, we say that
$C$ is {\em $\mathfrak{R}$-normal} if it is {\sl not} a subgraph of a flap $F\in \flaps_\mathfrak{R}(W)$.
Given an $\mathfrak{R}$-normal cycle $C$ of $\compass_\mathfrak{R}(W)$,
we call a cell $c$ of $\mathfrak{R}$ {\em $C$-perimetric} if
$\sigma(c)$ contains some edge of $C$.
Notice that if $c$ is $C$-perimetric, then $\pi(\tilde{c})$ contains two points $p,q\in N(\Gamma)$
such that $\pi(p)$ and $\pi(q)$ are vertices of $C$ where one,
say $P_{c}^{\rm in}$, of the two $(\pi(p),\pi(q))$-subpaths of $C$ is a subgraph of $\sigma(c)$ and the other,
denoted by $P_{c}^{\rm out}$, $(\pi(p),\pi(q))$-subpath contains at most one internal vertex of $\sigma(c)$,
which should be the (unique) vertex $z$ in $\partial\sigma(c)\setminus\{\pi(p),\pi(q)\}$.
We pick a $(p,q)$-arc $A_{c}$ in $\hat{c}:={c}\cup\tilde{c}$ such that $\pi^{-1}(z)\in A_{c}$ if and only if $P_{c}^{\rm in}$ contains
the vertex $z$ as an internal vertex.

We consider the circle $K_{C}=\cupall\{A_{c}\mid \mbox{$c$ is a $C$-perimetric cell of $\mathfrak{R}$}\}$
and we denote by $\Delta_{C}$ the closed disk bounded by $K_{C}$ that is contained in $\Delta$.
A cell $c$ of $\mathfrak{R}$ is called {\em $C$-internal} if $c\subseteq \Delta_{C}$
and is called {\em $C$-external} if $\Delta_{C}\cap c=\emptyset$.
Notice that the cells of $\mathfrak{R}$ are partitioned into $C$-internal, $C$-perimetric, and $C$-external cells.

Let $c$ be a tidy $C$-perimetric cell of $\mathfrak{R}$ where $|\tilde{c}|=3$. Notice that $c\setminus A_{c}$ has two arcwise-connected components and one of them is an open disk $D_{c}$ that is a subset of $\Delta_{C}$.
If the closure $\overline{D}_{c}$ of $D_{c}$ contains only two points of $\tilde{c}$ then we call the cell $c$ {\em $C$-marginal}.

\subparagraph{Influence.}
For every $\mathfrak{R}$-normal cycle $C$ of $\compass_\mathfrak{R}(W)$ we define the set
$\influence_\mathfrak{R}(C)=\{\sigma(c)\mid \mbox{$c$ is a cell of $\mathfrak{R}$ that is not $C$-external}\}$.

A wall $W'$ of $\compass_\mathfrak{R}(W)$ is \emph{$\mathfrak{R}$-normal} if $D(W')$ is $\mathfrak{R}$-normal.
Notice that every wall of $W$ (and hence every subwall of $W$) is an $\mathfrak{R}$-normal wall of $\compass_\mathfrak{R}(W)$. We denote by ${\cal S}_\mathfrak{R}(W)$ the set of all $\mathfrak{R}$-normal walls of $\compass_\mathfrak{R}(W)$. Given a wall $W'\in {\cal S}_\mathfrak{R}(W)$ and a cell $c$ of $\mathfrak{R}$,
we say that $c$ is {\em $W'$-perimetric/internal/external/marginal} if $c$ is $D(W')$-perimetric/internal/external/mar\-ginal, respectively.
We also use $K_{W'}$, $\Delta_{W'}$, $\influence_\mathfrak{R}(W')$ as shortcuts
for $K_{D(W')}$, $\Delta_{D(W')}$, and $\influence_\mathfrak{R}(D(W'))$, respectively.

\subparagraph{Regular flatness pairs.}
We call a flatness pair $(W,\mathfrak{R})$ of a graph $G$ {\em regular}
if none of its cells is $W$-external, $W$-marginal, or untidy.

\subparagraph{Tilts of flatness pairs.}
Let $(W,\mathfrak{R})$ and $(\tilde{W}',\tilde{\mathfrak{R}}')$ be two flatness pairs of a graph $G$ and let $W'\in {\cal S}_\mathfrak{R}(W)$.
We assume that $\mathfrak{R}=(X,Y,P,C,\Gamma,\sigma,\pi)$ and $\tilde{\mathfrak{R}}'=(X',Y',P',C',\Gamma',\sigma',\pi')$.
We say that $(\tilde{W}',\tilde{\mathfrak{R}}')$ is a {\em $W'$-tilt} of $(W,\mathfrak{R})$ if
\begin{itemize}
	\item $\tilde{\mathfrak{R}}'$ does not have $\tilde{W}'$-external cells,
	\item $\tilde{W}'$ is a tilt of $W'$,
	\item the set of $\tilde{W}'$-internal cells of $\tilde{\mathfrak{R}}'$ is the same as the set of $W'$-internal
	      cells of $\mathfrak{R}$ and their images via $\sigma'$ and ${\sigma}$ are also the same,
	\item $\compass_{\tilde{\mathfrak{R}}'}(\tilde{W}')$ is a subgraph of $\cupall\influence_\mathfrak{R}(W')$, and
	\item if $c$ is a cell in $C(\Gamma') \setminus C(\Gamma)$, then $|\tilde{c}| \leq 2$.
\end{itemize}

The next observation follows from the third item above and the fact that the cells corresponding to flaps containing a central vertex of $W'$ are all internal (recall that the height of a wall is always at least three).

\begin{observation}\label{@unimagined}
	Let $(W,\mathfrak{R})$ be a flatness pair of a graph $G$ and $W'\in{\cal S}_\mathfrak{R}(W)$.
	For every $W'$-tilt $(\tilde{W}',\tilde{\mathfrak{R}}')$ of $(W,\mathfrak{R})$, the central vertices of $W'$ belong to the vertex set of $\compass_{\tilde{\mathfrak{R}}'}(\tilde{W}')$.
\end{observation}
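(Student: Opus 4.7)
The plan is to follow the hint that appears in the sentence immediately preceding the statement: use the fact that the central vertices of $W'$ sit strictly in the interior of the wall, so any cell whose flap touches them is $W'$-internal, and then invoke the third item in the definition of $W'$-tilt to transfer those cells to $\tilde{\mathfrak{R}}'$.

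First, I would show that every flap containing a central vertex $v$ of $W'$ corresponds to a $W'$-internal cell of $\mathfrak{R}$. By definition, the central vertices of $W'$ are $3$-branch vertices that do not belong to any layer of $W'$ and that are connected by a path of $W'$ avoiding all layers. Since the height of a wall is always at least $3$, the central vertices are separated from the perimeter $D(W')$ by at least one full layer of $W'$. Hence, writing $v$ as a vertex of some flap $F=\sigma(c)$ with $c\in C(\Gamma)$, the cell $c$ cannot be $W'$-perimetric (it would have to contribute an edge of $D(W')$) and cannot be $W'$-external (the bounding arc $K_{W'}$ lies on the perimeter, and the closed disk $\Delta_{W'}$ contains the interior of $W'$ and, in particular, the points $\pi^{-1}(x)$ for vertices $x$ of the inner layers). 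Therefore $c$ is $W'$-internal.

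Next I would apply the third bullet in the definition of a $W'$-tilt, which asserts that the set of $\tilde{W}'$-internal cells of $\tilde{\mathfrak{R}}'$ coincides with the set of $W'$-internal cells of $\mathfrak{R}$, and that their images under $\sigma'$ and $\sigma$ are identical. In particular, the cell $c$ above is also a $\tilde{W}'$-internal cell of $\tilde{\mathfrak{R}}'$, with $\sigma'(c)=\sigma(c)=F$. Since $\tilde{W}'$-internal cells of $\tilde{\mathfrak{R}}'$ are, by the first bullet of the definition of tilt, cells of $\tilde{\mathfrak{R}}'$, their flaps are subgraphs of $\compass_{\tilde{\mathfrak{R}}'}(\tilde{W}')$. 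Thus $v\in V(F)=V(\sigma'(c))\subseteq V(\compass_{\tilde{\mathfrak{R}}'}(\tilde{W}'))$, which is the claimed conclusion.

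The only delicate point is the very first step, namely verifying that a flap containing a central vertex indeed corresponds to an internal cell. This is essentially a planarity/topology argument on the disk $\Delta$: the boundary circle $K_{W'}$ of $\Delta_{W'}$ traces the perimeter $D(W')$, whereas a central vertex lies strictly inside the innermost layer, so all cells whose images touch it are enclosed by $K_{W'}$ and are disjoint from it. Given a rigorous reading of the ``cell classification'' paragraph, this follows without any additional combinatorial work, and no further ingredients beyond the definition of a tilt are needed.
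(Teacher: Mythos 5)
Your proposal is correct and follows essentially the same route as the paper, which justifies the observation in one sentence by noting that cells whose flaps contain a central vertex of $W'$ are all $W'$-internal (since the height of a wall is at least three) and then invoking the third item of the tilt definition to transfer these cells, with identical images, to $\tilde{\mathfrak{R}}'$. Your additional remarks on why such cells are neither perimetric nor external simply flesh out the same argument the paper leaves implicit.
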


Also, given a regular flatness pair $(W,\mathfrak{R})$ of a graph $G$ and a $W'\in {\cal S}_\mathfrak{R}(W)$,
for every $W'$-tilt $(\tilde{W}', \tilde{\mathfrak{R}}')$ of $(W,\mathfrak{R})$, by definition none of its cells is $\tilde{W}'$-external, $\tilde{W}'$-marginal, or untidy -- thus, $(\tilde{W}', \tilde{\mathfrak{R}}')$ is regular.
Therefore, regularity of a flatness pair is a property that its tilts ``inherit''.

\begin{observation}\label{@successively}
	If $(W,\mathfrak{R})$ is a regular flatness pair, then for every $W'\in {\cal S}_\mathfrak{R}(W)$, every $W'$-tilt of $(W,\mathfrak{R})$ is also regular.
\end{observation}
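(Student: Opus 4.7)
\textbf{Proof plan for \autoref{@successively}.} Fix a regular flatness pair $(W,\mathfrak{R})$, a wall $W'\in\Scal_{\mathfrak{R}}(W)$, and a $W'$-tilt $(\tilde W',\tilde{\mathfrak R}')$ of $(W,\mathfrak R)$. Regularity demands three things of $(\tilde W',\tilde{\mathfrak R}')$: no $\tilde W'$-external cell, no $\tilde W'$-marginal cell, and no untidy cell. The first of these is literally the first bullet in the definition of a $W'$-tilt, so nothing is left to do there. For the remaining two properties the plan is to classify every cell $c\in C(\Gamma')$ according to whether it lies in $C(\Gamma')\setminus C(\Gamma)$ (a ``fresh'' cell created by the tilting procedure) or in $C(\Gamma')\cap C(\Gamma)$ (an ``inherited'' cell of $\mathfrak R$), and handle the two families separately.

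For fresh cells the last bullet of the tilt definition states $|\tilde c|\le 2$. Since a cell can be untidy only if $|\tilde c|=3$, no fresh cell is untidy. Likewise, $\tilde W'$-marginality presupposes $|\tilde c|=3$ (the definition explicitly requires three boundary points, two of which lie in $\overline{D_c}$), and therefore no fresh cell can be $\tilde W'$-marginal either. So the entire argument concentrates on the inherited cells.

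For an inherited cell $c\in C(\Gamma')\cap C(\Gamma)$ the two paintings agree on $c$: $\tilde c$, $\pi'(\tilde c)=\pi(\tilde c)$, and $\sigma'(c)=\sigma(c)$ coincide. Inherited cells fall in turn into two subclasses. If $c$ is $\tilde W'$-internal, then by the third bullet of the tilt definition $c$ is also $W'$-internal in $\mathfrak R$, and the edges of $\tilde W'$ that lie in $\sigma(c)$ are all interior edges of $\tilde W'$; since $\tilde W'$ and $W'$ share the same interior by the definition of a tilt, these are the same interior edges of $W'$, which are in particular edges of the wall $W$ (here one uses that $W'\in\Scal_{\mathfrak R}(W)$ and $W$'s own interior edges are precisely those lying in $W$-internal cells of $\mathfrak R$). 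Hence an untidiness of $c$ with respect to $\tilde W'$ would pull back to an untidiness of $c$ with respect to $W$, contradicting the regularity of $(W,\mathfrak R)$; and a $\tilde W'$-marginality of $c$ would similarly pull back to a $W$-marginality in $\mathfrak R$. If instead $c$ is $\tilde W'$-perimetric, then $c$ meets $D(\tilde W')$, and since $D(\tilde W')$ and $D(W')$ are composed of the same non-interior vertices/edges of the perimeter up to a redistribution carried out through the fresh cells, the same pull-back argument applies with $W'$ in place of $\tilde W'$; again the untidy/marginal property would manifest itself already in $(W,\mathfrak R)$.

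The main obstacle I foresee is the bookkeeping in the last paragraph, namely checking that untidiness or marginality in the tilt really does entail the corresponding property in the original flatness pair. The delicacy is that untidiness and marginality are defined relative to the wall in the pair, and $W$, $W'$, $\tilde W'$ are three different walls. The key lemma I would need to make explicit is that an edge of $\tilde W'$ lying in a cell $c$ of $\tilde{\mathfrak R}'$ is always an edge of $W'$ (by tilts preserving interiors plus the tilt definition's requirement that fresh cells contribute no wall edges), and then that this edge is traced back to an edge of $W$ using that $W'\in\Scal_{\mathfrak R}(W)$. Once this correspondence of wall edges is pinned down, each of the three regularity conditions transfers unchanged.
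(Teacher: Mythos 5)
Your treatment of the external cells and of the fresh cells is fine and matches the definition-level reasoning the paper relies on: the absence of $\tilde{W}'$-external cells is literally the first condition in the definition of a $W'$-tilt, and since untidy cells and $C$-marginal cells must have $|\tilde{c}|=3$, the last condition of the tilt definition ($|\tilde{c}|\leq 2$ for cells in $C(\Gamma')\setminus C(\Gamma)$) disposes of all newly created cells. Note, however, that the paper itself does not give a detailed argument at all: it records the observation as holding ``by definition'' of a $W'$-tilt (the framework and its verification being delegated to \cite{SauST21amor}), and in particular it never transfers untidiness or marginality of cells from the tilt back to the original pair $(W,\mathfrak{R})$ — which is exactly the step on which your plan leans.

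That pull-back step is the genuine gap. Your ``key lemma'' asserts that an edge of $\tilde{W}'$ lying in a cell of $\tilde{\mathfrak{R}}'$ is an edge of $W'$ and can then be traced back to an edge of $W$ ``using that $W'\in{\cal S}_\mathfrak{R}(W)$''. But membership in ${\cal S}_\mathfrak{R}(W)$ only says that $W'$ is an $\mathfrak{R}$-normal wall of $\compass_\mathfrak{R}(W)$, i.e., that $D(W')$ is not contained in a single flap; it does not make $W'$ a subgraph of $W$, so vertices and edges of $W'$ need not belong to $W$ at all (the parenthetical fact you invoke concerns the interior edges of $W$ itself, not of an arbitrary $\mathfrak{R}$-normal wall). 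Moreover, a tilt is only required to share its interior with $W'$, so the edges of $D(\tilde{W}')$ need not even be edges of $W'$. Hence untidiness of an inherited cell with respect to $\tilde{W}'$ cannot be converted into untidiness with respect to $W$, and regularity of $(W,\mathfrak{R})$ yields no contradiction along that route. The marginality transfer has a further defect: $C$-marginality is only defined for $C$-perimetric cells, and a cell that is $\tilde{W}'$-marginal in the tilt will typically be $W$-internal in $\mathfrak{R}$ (nowhere near $D(W)$), so ``$c$ would be $W$-marginal in $\mathfrak{R}$'' is not a meaningful conclusion, let alone one excluded by the regularity of $(W,\mathfrak{R})$. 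To repair the argument you would have to argue directly about the cell structure imposed by the tilt definition (inherited $\tilde{W}'$-internal cells coincide, together with their $\sigma$-images, with the $W'$-internal cells of the regular pair, and all remaining cells are new with at most two boundary points), rather than translating the untidy/marginal conditions from $\tilde{W}'$ to $W$.
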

\medskip

Furthermore, we need the following propositions, that are the main results of~\cite{SauST21amor}.

\begin{proposition}[\cite{SauST21amor}]\label{@expurgated}
There exists an algorithm that, given a graph $G$, a flatness pair $({W},\mathfrak{R})$ of $G$, and a wall $W'\in {\cal S}_\mathfrak{R}(W)$, outputs a $W'$-tilt of $({W},\mathfrak{R})$ in time $\Ocal(n+m)$.
\end{proposition}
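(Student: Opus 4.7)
The plan is to construct $(\tilde{W}', \tilde{\mathfrak{R}}')$ explicitly by trimming the given rendition $\mathfrak{R}=(X,Y,P,C,\Gamma,\sigma,\pi)$ to the closed disk $\Delta_{W'}$ and then performing a local surgery on every $W'$-perimetric cell. By definition of $\Delta_{W'}$, whatever lands inside will automatically be contained in $\cupall\influence_\mathfrak{R}(W')$, which already takes care of the compass-containment condition. What remains is to re-assemble a valid painting, choose the cyclic ordering on its new boundary, and define $\tilde{W}'$ to be a tilt of $W'$ that matches this new boundary.

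First, I would scan the edges and vertices of $W'$ to label every cell $c$ of $\mathfrak{R}$ as $W'$-internal, $W'$-perimetric, or $W'$-external. Since each edge of $D(W')$ lies in a unique flap and each flap has $\Ocal(1)$ attachment points, this labelling is done in $\Ocal(n+m)$ using the stored incidences between $\pi$ and $\sigma$. For every $W'$-perimetric cell $c$ I would then select the arc $A_c \subseteq \hat{c}$ between the two points $p,q\in\tilde{c}$ specified in the cell-classification paragraph, so that the union $K_{W'}=\cupall\{A_c\mid c\ \text{is}\ W'\text{-perimetric}\}$ is a simple closed curve bounding~$\Delta_{W'}$. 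Each arc is produced in constant time from the local data of~$c$.

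Second, I would build the new painting $\Gamma'$ inside $\Delta_{W'}$. Its nodes are the $W'$-internal nodes of $\Gamma$ together with the endpoints of all arcs $A_c$. Its cells are of two types: the $W'$-internal cells of $\Gamma$, retained with unchanged image $\sigma(c)$, and, for each $W'$-perimetric cell $c$, a new cell $c'$ obtained as the arcwise-connected component of $c\setminus A_c$ lying in $\Delta_{W'}$, with image $\sigma'(c')$ equal to the subgraph of $\sigma(c)$ bounded by $P_c^{\rm in}$ and the retained flap piece. By construction each new cell $c'$ satisfies $|\tilde{c'}|\leq 2$, matching the fifth bullet of the tilt definition. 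I would then take $\tilde{W}'$ to be the union of the interior of $W'$ with a perimeter re-routed along the new perimetric cells; this is a tilt of $W'$ by construction. Setting $Y'$ to the vertex set of $\cupall\influence_\mathfrak{R}(W')$, $X'$ to $V(G)\setminus (Y'\setminus \pi'(N(\Gamma')\cap \bd(\Delta_{W'})))$, choosing $(P',C')$ as the pegs and corners of $\tilde{W}'$ read along $K_{W'}$, and letting $\Omega'$ be the induced cyclic ordering, yields the 7-tuple $\tilde{\mathfrak{R}}'=(X',Y',P',C',\Gamma',\sigma',\pi')$. Each of the five properties in the definition of a $W'$-tilt is then immediate: the absence of $\tilde{W}'$-external cells is by restriction to $\Delta_{W'}$; $\tilde{W}'$ and $W'$ have the same interior; the internal cells and their images are copied over; the compass lies in $\cupall\influence_\mathfrak{R}(W')$ by the choice of $Y'$; and the size condition on new cells follows from the surgery.

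The main obstacle is justifying Step~2 as a genuine $\Delta_{W'}$-painting, rather than a combinatorial artefact. One needs to check that $K_{W'}$ is a simple closed curve, that each retained piece $c\cap \Delta_{W'}$ is an open disk, that $\pi'$ is injective on $N(\Gamma')$, and that no cell accidentally becomes untidy or marginal after the cut. These are precisely the reasons behind the delicate definitions of tidy/untidy and marginal cells given in the cell-classification paragraph: choosing $A_c$ so that $\pi^{-1}(z)\in A_c$ iff $P_c^{\rm in}$ uses the third boundary vertex $z$ ensures that the surgery is compatible with the combinatorial image $\sigma(c)$, and the assumption that $(W,\mathfrak{R})$'s geometry is consistent with $W'$ being $\mathfrak{R}$-normal guarantees that the pieces glue correctly across adjacent perimetric cells. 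Once these invariants are in place, the surgery is purely local, each cell contributes $\Ocal(1)$ work, each flap $\sigma(c)$ is inspected once, and the total cost is dominated by the initial classification, giving the claimed $\Ocal(n+m)$ bound.
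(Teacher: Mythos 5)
First, note that this proposition is not proved in the paper at all: it is imported verbatim from \cite{SauST21amor} (it is one of the two main results of that companion paper), so there is no internal proof to compare against; your proposal has to stand on its own, and it is essentially a sketch of the strategy that the cited paper carries out in full detail (restrict the rendition to $\Delta_{W'}$, keep the $W'$-internal cells untouched, cut the $W'$-perimetric cells along the arcs $A_c$, and reroute the perimeter of the new wall through the paths $P_c^{\rm in}$).

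The gap is that the step you yourself flag as ``the main obstacle'' is precisely the content of the theorem, and at least one of your ``by construction'' claims is not true as stated. If $c$ is a perimetric cell with $|\tilde{c}|=3$ and $P_c^{\rm in}$ contains the third boundary vertex $z$ as an internal vertex, then by the definition of $A_c$ the node $\pi^{-1}(z)$ lies on $A_c$, so the single component $c'$ of $c\setminus A_c$ inside $\Delta_{W'}$ has $p$, $q$ and $\pi^{-1}(z)$ on its boundary, i.e.\ $|\tilde{c'}|=3$; this violates the fifth condition in the definition of a $W'$-tilt ($|\tilde{c}|\leq 2$ for cells of $C(\Gamma')\setminus C(\Gamma)$), so the new cells cannot simply be ``the piece of $c$ inside $\Delta_{W'}$'' — the flap $\sigma(c)$ has to be re-partitioned into several new cells along $P_c^{\rm in}$, with new nodes, and one must then verify conditions (1)--(5) of a rendition for this finer painting. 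Two further points are silently assumed: the statement that $c\setminus A_c$ has exactly two arcwise-connected components is only asserted in the paper for \emph{tidy} perimetric cells, whereas the flatness pair in the proposition is not assumed regular, so untidy (and marginal, and $W$-external) cells must be handled explicitly; and the choice of pegs/corners and of the separation $(X',Y')$ must be checked to satisfy $P'\subseteq X'\cap Y'\subseteq V(D(\tilde{W}'))$ together with the third tilt condition that the $\tilde{W}'$-internal cells of $\tilde{\mathfrak{R}}'$ coincide, with identical images, with the $W'$-internal cells of $\mathfrak{R}$. The $\Ocal(n+m)$ accounting is fine once such a construction exists, but as written the proof defers exactly the verifications that make the result nontrivial, so it does not yet constitute a proof.
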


\begin{proposition}[\cite{SauST21amor}]\label{@philosophic}
Let $G$ be a graph and $(W,\mathfrak{R})$ be a flatness pair of $G$.
There is a regular flatness pair $(W^*,\mathfrak{R}^*)$ of $G$, with the same height as $(W,\mathfrak{R})$, such that $\compass_{\mathfrak{R}^*}(W^*)\subseteq \compass_\mathfrak{R}(W)$.
\end{proposition}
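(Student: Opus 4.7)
The plan is to proceed by an extremal principle. Let $\mathcal{F}$ be the collection of all flatness pairs $(W',\mathfrak{R}')$ of $G$ such that $W'$ has the same height as $W$ and $\compass_{\mathfrak{R}'}(W') \subseteq \compass_{\mathfrak{R}}(W)$. This set is nonempty (it contains $(W,\mathfrak{R})$) and finite (as $G$ is finite), so I may choose an element $(W^*,\mathfrak{R}^*) \in \mathcal{F}$ that minimizes $|V(\compass_{\mathfrak{R}^*}(W^*))|$, breaking ties lexicographically by $|E(\compass_{\mathfrak{R}^*}(W^*))|$ and then by the total number of cells of $\mathfrak{R}^*$. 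The claim is that any such minimizer is automatically regular.

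Suppose not; then $\mathfrak{R}^*$ has some cell $c$ that is either $W^*$-external, $W^*$-marginal, or untidy, and in each case I would produce an element of $\mathcal{F}$ strictly smaller in the lexicographic order, contradicting minimality. If $c$ is $W^*$-external, then $\sigma(c)$ lies outside $\Delta_{W^*}$ and meets the rest of the compass only through $\pi(\tilde{c}) \subseteq V(D(W^*))$, so I absorb $V(\sigma(c))\setminus\pi(\tilde{c})$ into $X$, drop $c$ from the painting, and keep $W^*$ unchanged; the result is a flatness pair of the same height with strictly smaller compass. If $c$ is $W^*$-marginal, the hypothesis that $\bar{D}_c$ meets $\tilde{c}$ in only two points means $\sigma(c)$ is a flap attached tangentially to the perimeter, and I take a tilt of $W^*$ whose perimeter arc bypasses $c$, after which $c$ becomes external to the new wall and is absorbed as above. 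If $c$ is untidy with offending vertex $x \in \pi(\tilde{c}) \cap V(W^*)$, then two of the $W^*$-edges at $x$ lie inside $\sigma(c)$; since $\sigma(c)$ is connected to the rest of the compass only through $\pi(\tilde{c})$, I re-route these two edges locally within $\sigma(c)$, producing a tilt of $W^*$ with the same interior, and hence the same height, in which the untidy configuration has been eliminated, after which the displaced interior vertices of $\sigma(c)$ are absorbed into $X$.

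The delicate step will be the untidy case, where one must verify that the local subdivision replacement inside a single flap always exists and defines an honest tilt of $W^*$; the other two cases are essentially bookkeeping. This is precisely the type of local manipulation that is formalized in the tilt framework of~\cite{SauST21amor} and underlies the algorithm of \autoref{@expurgated}: one restricts the construction of a $W^*$-tilt to the single problematic flap and adjusts the painting accordingly (possibly introducing new boundary nodes with $|\tilde{c}| \le 2$, as permitted by the definition of a tilt). Once all three obstructions are ruled out by the minimality argument, the minimizer $(W^*,\mathfrak{R}^*)$ is the required regular flatness pair, and its compass is contained in $\compass_{\mathfrak{R}}(W)$ by membership in $\mathcal{F}$.
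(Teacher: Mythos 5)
There is a genuine gap. First, note that the paper does not prove this statement at all: it is imported verbatim as one of the main results of~\cite{SauST21amor}, so your argument has to stand on its own — and at the decisive points it instead defers to ``the tilt framework of~\cite{SauST21amor}'', which is circular, since constructing tilts/regularizations with the stated properties is exactly the content of that result. Concretely, all three case analyses rest on unproven claims. In the $W^*$-external case, absorbing $V(\sigma(c))\setminus\pi(\tilde{c})$ into $X$ does not obviously yield a flatness pair: the new $X\cap Y$ now contains $\pi(\tilde{c})$, but the definition of a flat wall requires $X\cap Y\subseteq V(D(W^*))$, and the boundary nodes of an external cell need not lie on the perimeter of $W^*$; you also have to re-verify the separation property, condition (5) on $\Omega$, and condition (1) that the cells still cover $G[Y]$ (problematic when $V(\sigma(c))=\pi(\tilde{c})$ and the cell carries an edge of $G[Y]$). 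In the marginal and untidy cases you simply assert the existence of a tilt ``whose perimeter arc bypasses $c$'' or obtained by ``re-routing the two wall edges within $\sigma(c)$''; nothing guarantees that the flap contains the paths needed for such a re-routing, and producing a tilt together with a compatible rendition is precisely the nontrivial construction being cited.

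Second, even granting those local fixes, the extremal argument does not close. Your measure is $(|V(\compass)|,|E(\compass)|,\#\text{cells})$, but in the untidy and marginal cases the proposed modification need not strictly decrease it: re-routing wall edges inside a flap and re-drawing cell boundaries can leave the compass (and even the cell count) unchanged, and a tilt only guarantees $\compass_{\tilde{\mathfrak{R}}'}(\tilde{W}')\subseteq \cupall\influence_{\mathfrak{R}}(W^*)$, which gives weak, not strict, containment. Without a strictly decreasing potential for every non-regular configuration, the ``minimizer is regular'' conclusion does not follow. A correct proof has to either carry out the rendition surgery explicitly (re-defining $X$, $Y$, the painting, and the pegs/corners, and checking all rendition axioms), as is done in~\cite{SauST21amor}, or find a genuinely different invariant that provably decreases; as written, the delicate steps you flag as ``bookkeeping'' and ``the delicate step'' are the entire proof.
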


\subsection{Canonical partitions}\label{@commodation}

\sugar{In this subsection, we define the notion of canonical partition of
a graph with respect to some wall. This refers to a partition of the vertex set of a graph in bags that follow the structure of a wall subgraph of the given graph.
For this reason, we start by defining the canonical partition of a wall, as a ``canonical'' way to partition the vertices of the wall in connected subsets that preserve the grid-like structure of the wall.}

\subparagraph{Canonical partition of a wall.}
Let $r\geq 3$ be an odd integer.
Let $W$ be an $r$-wall and let $P_{1}, \ldots, P_{r}$ (resp. $L_{1},\ldots, L_{r}$) be its vertical (resp. horizontal) paths.
For every even (resp. odd) $i\in[2,r-1]$ and every $j\in[2,r-1]$, we define ${A}^{(i,j)}$ to be the subpath of $P_{i}$ that starts from a vertex of $P_{i}\cap L_{j}$ and finishes at a neighbor of a vertex in $L_{j+1}$ (resp. $L_{j-1}$), such that $P_{i}\cap L_{j}\subseteq A^{(i,j)}$ and $A^{(i,j)}$ does not intersect $L_{j+1}$ (resp. $L_{j-1}$).
Similarly, for every $i,j\in[2,r-1]$, we define $B^{(i,j)}$ to be the subpath of $L_{j}$ that starts from a vertex of $P_{i}\cap L_{j}$ and finishes at a neighbor of a vertex in $P_{i-1}$, such that $P_{i}\cap L_{j}\subseteq B^{(i,j)}$ and $B^{(i,j)}$ does not intersect $P_{i-1}$.

\begin{figure}[ht]
	\centering
	\includegraphics[width=8cm]{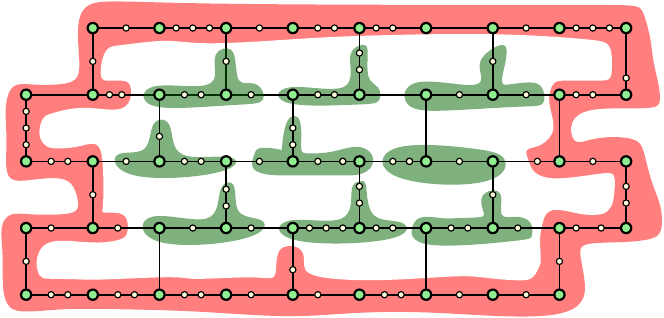} 
	\caption{\small A $5$-wall and its canonical partition ${\cal Q}$. The red bag is the external bag $Q_{\rm ext}$.}
	\label{@aberration}
\end{figure}

For every $i,j\in[2,r-1]$, we denote by $Q^{(i,j)}$ the graph $A^{(i,j)}\cup B^{(i,j)}$ and by ${Q_{\rm ext}}$ the graph $W\setminus \bigcup_{i,j\in[2,r-1]} V(Q_{i,j})$.
Now consider the collection ${\cal Q}=\{Q_{\rm ext}\}\cup\{Q_{i,j}\mid i,j\in[2,r-1]\}$
and observe that the graphs in ${\cal Q}$ are connected subgraphs of $W$ and their vertex sets form a partition of $V(W)$.
We call ${\cal Q}$ the {\em canonical partition} of $W$. Also, we call every $Q_{i,j}, i,j\in[2,r-1]$ an {\em internal bag} of ${\cal Q}$, while we refer to $Q_{\rm ext}$ as the {\em external bag} of ${\cal Q}$. See \autoref{@aberration} for an illustration of the notions defined above.
For every $i\in[(r-1)/2]$, we say that a set $Q\in {\cal Q}$ is an {\em $i$-internal bag of ${\cal Q}$} if $V(Q)$ does not contain any vertex of the first $i$ layers of $W$.
Notice that the $1$-internal bags of ${\cal Q}$ are the internal bags of ${\cal Q}$.

\subparagraph{Canonical partition of a graph with respect to a wall.}
Let $W$ be a wall of a graph $G$.
Consider the canonical partition ${\cal Q}$ of $W$.
The \emph{enhancement} of the canonical partition $\Qcal$ on $G$ is the following operation.
We set $\tilde{\cal Q}:={\cal Q}$
and, as long as there is a vertex $x\in G\setminus V(\cupall \tilde{\cal Q})$ that is adjacent to a vertex of a graph $Q\in \tilde{\cal Q}$, we update $\tilde{\cal Q}:=\tilde{\cal Q}\setminus \{Q\}\cup \{\tilde{Q}\}$, where $\tilde{Q}=G[\{x\}\cup V(Q)]$.
We call the $\tilde{Q}\in\tilde{\cal Q}$ that contains $Q_{\rm ext}$ as a subgraph the {\em external bag} of $\tilde{\cal Q}$, and we denote it by $\tilde{Q}_{\rm ext}$, while we call {\em internal bags} of $\tilde{\cal Q}$ all graphs in $\tilde{\cal Q}\setminus \{\tilde{Q}_{\rm ext}\}$.
Moreover, we enhance $\tilde{\cal Q}$ by adding all vertices of $G\setminus\bigcup_{\tilde{Q}\in\tilde{\Qcal}}V(\tilde{Q})$ in its external bag, i.e., by updating $\tilde{Q}_{\rm ext}: = G[V(\tilde{Q}_{\rm ext})\cup V(G)\setminus\bigcup_{\tilde{Q}\in\tilde{\Qcal}}V(\tilde{Q})]$.

We call such a partition $\tilde{\cal Q}$ a {\em $W$-canonical partition of $G$.}
Notice that a $W$-canonical partition of $G$ is not unique, since the sets in ${\cal Q}$ can be ``expanded'' arbitrarily when introducing vertex $x$.

Let $W$ be an $r$-wall of a graph $G$, for some odd integer $r\geq 3$ and let
$\tilde{\cal Q}$ be a $W$-canonical partition of $G$.
For every $i\in[(r-1)/2]$, we say that a set
$Q\in \tilde{\cal Q}$ is an {\em $i$-internal bag of $\tilde{\cal Q}$}
if it contains an $i$-internal bag of ${\cal Q}$ as a subgraph.
\medskip

\sugar{We stress that given a graph $G$ and an $r$-wall $W$ of $G$ for some odd integer $r\geq 3$,
a $W$-canonical partition $\tilde\Qcal$ of $G$ can have internal bags
that are adjacent to vertices of arbitrarily many other bags.
However, if $W$ is a flat wall of $G$ certified by some $7$-tuple $\mathfrak{R}$, the ``flat structure'' of the $\mathfrak{R}$-compass of $W$ implies that every bag can be adjacent to only bags that contain vertices of the same brick of $W$.
\medskip

The next result is also proved in \cite{SauST21kapiI} and
intuitively states that, given a flatness pair $(W,\mathfrak{R})$ of ``big enough'' height and a $W$-canonical partition $\tilde\Qcal$ of $G$, we can find a ``packing'' of subwalls of $W$ that are inside some central part of $W$ and such that the vertex set of every internal bag of $\tilde{\cal Q}$ intersects the vertices of the flaps in the influence of at most one of these walls.
We will use this result in the case where the set $A'$ of \autoref{@proclamation} is ``small'', i.e., there are only ``few'' vertices in $A$ that have ``big enough'' degree with respect to the central part of the canonical partition, and therefore \autoref{@proclamation} cannot justify branching.
Following the latter condition and \autoref{@prohibitions}, we will be able to find a flatness pair with ``few'' apices so as to build irrelevant vertex arguments inside its compass.}
{In~\cite{SauST21kapiI}, {\em $(W,\mathfrak{R})$-canonical partition} is used to denote a $W$-canonical partition of $G$, where $(W,\mathfrak{R})$ is a flatness pair of $G$.
However, in this subsection we provide a more general definition that does not take into account the flatness of $W$.}

\begin{proposition}[\cite{SauST21kapiI}]\label{@prohibitions}
There exists a function $\newfun{@idealistic}: \bN^3 \to \bN$ such that if $p,l\in\bN_{\geq 1}$, $r\in\bN_{\geq 3}$ is an odd integer, $G$ is a graph, $(W,\mathfrak{R})$ is a flatness pair of $G$ of height at least $\funref{@idealistic}(l,r,p)$, and $\tilde\Qcal$ is a $W$-canonical partition of $G$, then
	there is a collection $\Wcal=\{W^1, \ldots, W^l\}$ of $r$-subwalls of $W$ such that
	\begin{itemize}
		\item for every $i \in [l]$, $\bigcup \influence_\mathfrak{R}(W^i)$ is a subgraph of $\bigcup \{Q\mid Q \text{ is a $p$-internal bag of }\tilde\Qcal\}$ and
		\item for every $i,j\in[l]$, with $i\neq j$, there is no internal bag of $\tilde\Qcal$ that contains vertices of both $V(\bigcup \influence_\mathfrak{R}(W^i))$ and $V(\bigcup \influence_\mathfrak{R}(W^j))$.
	\end{itemize}
	Moreover, $\funref{@idealistic}(l,r,p)= \Ocal(\sqrt{l}\cdot r + p)$ and $\Wcal$ can be constructed in time $\Ocal(n+m)$.
\end{proposition}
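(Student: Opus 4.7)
The plan is to construct $\Wcal$ by explicitly packing $l$ pairwise well-separated $r$-subwalls deep inside $W$ and then to verify the two required properties locally, using the flap-structure of $\mathfrak{R}$. First, to guarantee condition~(1), I would discard the outermost $p$ layers of $W$ and work inside the central subwall $W^\circ$ of height $h-2p$, where $h$ is the height of $W$. For any $r$-subwall $W^i \subseteq W^\circ$ of $W$, the closed disk $\Delta_{W^i}$ is nested inside $W^\circ$, so every non-$W^i$-external cell $c$ of $\mathfrak{R}$ satisfies $\pi(\tilde c) \subseteq V(W^\circ)$. Hence each flap in $\influence_\mathfrak{R}(W^i)$ is absorbed, during the enhancement of $\Qcal$ to $\tilde\Qcal$, into a $p$-internal bag of $\tilde\Qcal$.

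Inside $W^\circ$ I would arrange $W^1,\ldots,W^l$ in a $\lceil\sqrt{l}\rceil\times\lceil\sqrt{l}\rceil$ array, choosing each $W^i$ to be an $r$-subwall of $W$ supported on a disjoint set of horizontal and vertical paths of $W$, with a buffer of at least two wall-layers separating any two blocks in both directions. The total height that $W$ must have to accommodate this packing is $\lceil\sqrt{l}\rceil(r+c)+2p$ for a small absolute constant $c$, which yields the claimed bound $\funref{@idealistic}(l,r,p)=\Ocal(\sqrt{l}\cdot r + p)$. The construction is explicit: the indices of the horizontal and vertical paths of each $W^i$ inside $W$ are computed in $\Ocal(l)$ time, and the vertex sets of each $W^i$ and each $\influence_\mathfrak{R}(W^i)$ are enumerated by one pass over $W$ and $\flaps_\mathfrak{R}(W)$, giving a total of $\Ocal(n+m)$ time.

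The main obstacle, and the core of the argument, is condition~(2): no internal bag of $\tilde\Qcal$ simultaneously meets $V(\bigcup\influence_\mathfrak{R}(W^i))$ and $V(\bigcup\influence_\mathfrak{R}(W^j))$ for $i\neq j$. The key observation, extracted from the definition of rendition, is that every interior vertex of a flap $\sigma(c)$, namely every vertex of $V(\sigma(c))\setminus\pi(\tilde c)$, has all of its $G$-neighbors inside $\sigma(c)$ itself. Consequently, during the enhancement of $\Qcal$ into $\tilde\Qcal$, an interior flap-vertex can only propagate to other bags through the boundary vertices $\pi(\tilde c)$, which lie in $W$ and hence in a specific brick-bag $Q^{(a,b)}$ of $\Qcal$. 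With the two-layer buffer between $W^i$ and $W^j$, the set of bricks of $W$ meeting any flap of $\influence_\mathfrak{R}(W^i)$ is disjoint from the set of bricks meeting any flap of $\influence_\mathfrak{R}(W^j)$, so no single brick-bag $Q^{(a,b)}$ can absorb vertices coming from both influences. The remaining subtlety, which I would handle by a careful local analysis, concerns shared boundary vertices of flaps that straddle a buffer brick; such vertices lie in $W$ itself, are assigned to buffer-brick bags by $\Qcal$ already, and hence belong to neither influence.
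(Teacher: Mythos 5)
Your packing scheme — a $\lceil\sqrt{l}\rceil\times\lceil\sqrt{l}\rceil$ grid of $r$-subwalls pushed $p$ layers in from $D(W)$, with constant-width buffers — is the right construction, and the observation that every vertex of $V(\sigma(c))\setminus\pi(\tilde c)$ has all its neighbours inside $\sigma(c)$ is a correct consequence of the rendition axioms. However, the step on which both conditions (1) and (2) rest is false as stated: you assert that the flap-boundary vertices $\pi(\tilde c)$ lie in $W$ (and again, at the end, that the shared boundary vertices of straddling flaps ``lie in $W$ itself''). Nothing in the definition of a rendition forces this: $\pi$ is an arbitrary injection from $N(\Gamma)$ into $V(\compass_\mathfrak{R}(W))$, so flap boundaries are, in general, non-wall compass vertices. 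Consequently the absorption chain that brings an interior flap vertex $v$ into a bag $\tilde{Q}$ need not stop after one flap — it can leave $F$ through a non-wall boundary node, enter a neighbouring flap, and so on — and it is not immediate that the brick of $\tilde{Q}$ lies near $F$ or is $p$-internal.

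To close this gap you need the planar separation provided by the rendition, not just the local adjacency of flaps. The crucial point is that for any $\mathfrak{R}$-normal cycle $C$ of $W$ the circle $K_C$ passes only through painting nodes whose images are vertices of $C$ (the endpoints $\pi(p),\pi(q)$ of each $C$-perimetric arc $A_c$, and the third vertex $z$ only when $z$ lies on $P_c^{\rm in}\subseteq C$), hence only through vertices of $W$. Since $\Qcal$ already covers all of $V(W)$, every absorption chain in the enhancement is a path in $\compass_\mathfrak{R}(W)$ that avoids $V(W)$ except at its final vertex, and therefore cannot cross $K_C$: it is trapped on one side of $\Delta_C$. Taking $C$ to be the perimeter of the buffer ring around $W^i$ confines any chain that starts in $\bigcup\influence_\mathfrak{R}(W^i)$ to the bricks near $W^i$, which is what yields both that the absorbing bag is $p$-internal (condition 1) and that it is disjoint from the bags absorbing $\bigcup\influence_\mathfrak{R}(W^j)$ (condition 2). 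This separation argument, not the unfounded claim that flap boundaries sit on the wall, is what makes a constant-width buffer work.
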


\subsection{Bidimensionality of sets}
\label{@unequivocally}

\sugar{
In this subsection, we present the notion of bidimensionality of a set with respect to a wall of a graph.
This notion intuitively expresses the ``spread'' of a set $X$ in a $W$-canonical partition of $G$. The crucial idea is that a set $X$ of small bidimensionality cannot ``destroy'' a (flat) wall too much.}

\subparagraph{Bidimensionality.}
Let $W$ be a wall of a graph $G$, $\tilde{\Qcal}$ be a $W$-canonical partition of $G$, and $X\subseteq V(G)$.
The \emph{bidimensionality} of $X$ in $G$ with respect to $\tilde{\Qcal}$, denoted by $\bid_{\tilde{\Qcal}}(X)$, is the number of internal bags of $\tilde{\Qcal}$ intersected by $X$.
The \emph{bidimensionality} of $X$ in $G$ with respect to $W$, denoted by $\bid_{G,W}(X)$, is the maximum bidimensionality of $X$ with respect to a $W$-canonical partition of $G$.
\medskip

\autoref{@psychoanalyse} and \autoref{@lucinatory} provide sets $X\subseteq V(G)$ that can be used to apply \autoref{@proclamation} to {\sc $\Fcal$-M-Deletion} and {\sc $\Fcal$-M-Elimination Distance}, respectively.\smallskip

Every set $S\subseteq V(G)$ of size at most $k$ clearly
has bidimensionality at most $k$.

\begin{observation}\label{@psychoanalyse}
Let $\Fcal$ be a finite collection of graphs, let $G$ be a graph, let $k\in\bN$, and let $W$ be a wall of $G$.
Then for every $k$-apex set $S$ of $G$ for $\exc(\Fcal)$ it holds that $\bid_{G,W}(S)\leq k$.
\end{observation}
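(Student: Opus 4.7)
The plan is to observe that the statement follows almost immediately from the definition of bidimensionality together with the size bound on a $k$-apex set. First I would recall that, by definition of a $k$-apex set $S$ of $G$ for $\exc(\Fcal)$, we have $|S|\leq k$.

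Next, fix any $W$-canonical partition $\tilde{\Qcal}$ of $G$. Since $\tilde{\Qcal}$ is by construction a partition of $V(G)$ (the bags $\{V(\tilde{Q})\mid \tilde{Q}\in\tilde{\Qcal}\}$ are pairwise disjoint and cover $V(G)$), each vertex of $G$ belongs to exactly one bag of $\tilde{\Qcal}$. Consequently, the number of bags of $\tilde{\Qcal}$ intersected by $S$ is at most $|S|$, and in particular the number of \emph{internal} bags of $\tilde{\Qcal}$ intersected by $S$ is at most $|S|\leq k$. Therefore $\bid_{\tilde{\Qcal}}(S)\leq k$.

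Finally, since this bound holds for every $W$-canonical partition $\tilde{\Qcal}$ of $G$, taking the maximum over all such partitions yields $\bid_{G,W}(S)\leq k$, as claimed. No serious obstacle is expected: the statement is essentially a one-line consequence of the fact that $\tilde{\Qcal}$ is a partition of $V(G)$ and that $|S|\leq k$; no structural property of the wall $W$ or of the class $\exc(\Fcal)$ is needed beyond what is already built into the definition of a $k$-apex set.
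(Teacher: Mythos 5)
Your proof is correct and matches the paper's reasoning exactly: the paper justifies this observation with the one-line remark that any set of at most $k$ vertices has bidimensionality at most $k$, and your argument simply spells out that remark — since each $W$-canonical partition $\tilde{\Qcal}$ partitions $V(G)$, a set $S$ with $|S|\leq k$ can intersect at most $k$ (internal) bags, and the bound is preserved after taking the maximum over all such partitions.
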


\sugar{Moreover, given a $k$-elimination set $S$, we can find a set $X\supseteq S$ of bidimensionality at most $k(k+1)/2$ such that $G\setminus X\in\exc(\Fcal)$. To prove this, we first prove the following result, which
 intuitively states that a $k$-elimination set can intersect at most $k$ horizontal and vertical paths of a wall.}

\begin{lemma}\label{@unchangeable}
Let $\Fcal$ be a finite collection of graphs.
Let $G$ be a graph, let $k\in\bN$, let $r,h$ be odd integers with $r\geq h+k$, let $W$ be an $r$-wall of $G$, and let $S$ be a $k$-elimination set of $G$ for $\exc(\Fcal)$.
Then there is an $h$-subwall $W'$ of $W$ with $V(W') \cap S = \emptyset$.
\end{lemma}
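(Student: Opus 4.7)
The plan is to prove this by induction on $k$, via the intermediate claim that $S$ intersects at most $k$ horizontal paths of $W$ and at most $k$ vertical paths of $W$. The base case $k=0$ is immediate since then $S = \emptyset$. For the inductive step, I may restrict attention to the connected component of $G$ containing $W$, as $W$ is connected and the restriction of $S$ to this component is still a $k$-elimination set of it. By the recursive definition of elimination distance, there is a vertex $v$ in this component such that $S = \{v\} \cup \tilde{S}$, where $\tilde{S}$ is a $(k-1)$-elimination set of (each component of) $G \setminus v$. Since the horizontal paths of $W$ are pairwise vertex-disjoint, the vertex $v$ belongs to at most one horizontal path of $W$, and similarly to at most one vertical path. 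Applying the induction hypothesis to $\tilde{S}$ then yields that $\tilde{S}$ intersects at most $k-1$ horizontal paths and at most $k-1$ vertical paths of $W$. Hence $S$ intersects at most $k$ horizontal paths and at most $k$ vertical paths of $W$.

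To conclude the lemma from this claim: since $r \geq h+k$ and at most $k$ horizontal (respectively vertical) paths of $W$ are intersected by $S$, at least $h$ horizontal paths and at least $h$ vertical paths of $W$ are entirely disjoint from $S$. I would then choose any $h$ such unhit horizontal paths $L_{j_1}, \ldots, L_{j_h}$ and any $h$ unhit vertical paths $P_{i_1}, \ldots, P_{i_h}$, and define $W'$ to be the subwall of $W$ spanned by the subpaths of these selected paths between their outermost pairwise intersections. Every vertex of $V(W) \cap S$ lies on the (unique) horizontal path and the (unique) vertical path of $W$ containing it, both of which must be among the hit paths; consequently, no such vertex lies on any selected subpath, and hence $V(W') \cap S = \emptyset$.

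The main obstacle lies in two technicalities. First, in the inductive step I must handle the case $v \in V(W)$ carefully, since then $W$ is not formally a wall of $G \setminus v$. This is overcome by observing that each horizontal path of $W$ not passing through $v$ lies entirely in a single connected component of $G \setminus v$, so the induction hypothesis can be applied componentwise to $\tilde{S}$ and count the contribution to horizontal and vertical paths of $W$. Second, I must verify that the constructed subgraph $W'$ is indeed an $h$-subwall of $W$ in the sense of the paper's definition: this amounts to checking that, after selecting $h$ horizontal and $h$ vertical paths of $W$ (not necessarily consecutive in the index ordering) and taking the appropriate subpaths between their pairwise intersections, the resulting subgraph is isomorphic to a subdivision of the elementary $h$-wall, with the vertices of $W$ that sit on non-selected paths playing the role of subdivision vertices along the selected subpaths.
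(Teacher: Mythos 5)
Your overall route is genuinely different from the paper's. The paper never counts hit paths of the original wall: it works inside an $\Fcal$-elimination forest, repeatedly taking the least common ancestor of the current wall's intersection with $S$ (which, by \autoref{@unreflectingly}, is itself a wall vertex of $S$), discarding the horizontal and vertical path through it, and using connectivity of the shrunken wall to descend into a single child subtree of strictly smaller height; after at most $k$ rounds the wall lies inside a leaf bag and hence avoids $S$. Your plan instead proves the intermediate claim that $S$ meets at most $k$ horizontal and at most $k$ vertical paths of $W$, and then spans an $h$-subwall by unhit rows and columns. The final spanning step is fine (modulo small bookkeeping: under the paper's definitions not every wall vertex lies on a horizontal path, so the right phrasing is that every vertex of the spanned subwall lies on a selected, hence unhit, row or column; and the paper indexes only $r-2$ horizontal paths, so the arithmetic ``$r\geq h+k$ leaves $h$ unhit rows'' needs a word of care). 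The intermediate counting claim is in fact true, but your proposed proof of it has a genuine gap exactly at its only nontrivial point, the inductive step with $v\in V(W)$.

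The gap is this: your induction hypothesis is a statement about a wall of the graph under consideration, and after deleting $v$ there is no wall of $G\setminus v$ to which it applies whose rows and columns you can relate to those of $W$. Saying that ``the induction hypothesis can be applied componentwise'' does not bridge this, for two reasons. First, the components of $G\setminus v$ carry no walls; and if you repair this by passing to the subwall of $W$ obtained by discarding the row and column through $v$, vertices of $\tilde{S}$ may lie on a kept row of $W$ without lying on the corresponding row of that subwall (when the discarded row or column is an extremal one, the kept rows get truncated to the span of the kept columns), so a bound on hit rows of the subwall does not transfer to hit rows of $W$. Second, and decisively, a componentwise application could not possibly give the bound you need: if the surviving parts of $W$ were spread over $c\geq 2$ components of $G\setminus v$, a $(k-1)$-bound applied in each would only yield on the order of $c(k-1)+1$ hit rows, not $k$. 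What rescues the count is that all rows and columns of $W$ avoiding $v$ lie in a \emph{single} component of $G\setminus v$ (every surviving row meets every surviving column; alternatively, $W$ minus one vertex is connected since walls are $2$-connected), so the hypothesis is invoked once, in that component, with budget $k-1$ — a fact your sketch neither states nor uses. The clean fix is to run your induction not on walls but on two families of pairwise disjoint connected subgraphs (the horizontal and the vertical paths of $W$) with the property that every row meets every column; with that reformulation the root-peeling step closes, and this single-branch descent is exactly what the paper's least-common-ancestor argument implements directly.
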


\begin{proof}
Since $S$ is a $k$-elimination set of $G$ for $\exc(\Fcal)$, there is an
 $\Fcal$-elimination forest $(F,\chi,R)$ of $G$ of height $k$ such that
$\chi(\Int(F,R))= S$.

We set $W_0:=W$.
For $i\in[k]$, we proceed to construct an $r_i$-subwall $W_i$ of the $r_{i-1}$-wall $W_{i-1}$ with $r_i\geq r-i$
such that
for every $i\in[k]$ there is a node $z_i$ of $F$ such that $(F_{z_i},z_i)$ has height at most $k-i$
and $V(W_i)\subseteq \chi(V(F_{z_i}))$.
This will imply the existence of a wall of size at least $r-k$ whose vertex set will be a subset of $\chi(V(F_{z})$, where $z\in \leaf(F,R)$.

Let $S_i:=V(W_{i-1})\cap S$.
If $S_i=\emptyset$, we set $W_j:=W_{i-1}$ for $j\in[i,k]$.
Otherwise, let $u_i$ be the least common ancestor of $\chi^{-1}(S_i)$ in $(F,R)$.
According to \autoref{@unreflectingly}, since $W_{i-1}$ is connected, $u_i$ exists and belongs to $\chi^{-1}(S_i)$.

We obtain an $(r_{i-1} - 1)$-subwall $W_i$ of $W_{i-1}$ that does not contain $\chi(u_i)$ by taking the wall containing all horizontal and vertical paths of $W_{i-1}$ aside from the ones intersecting $u_i$ and we set $r_i := r_{i-1} - 1$ (to simplify the argument, here we call the resulting graph a wall even when the height is even).
Note that since $W_i$ is a subgraph of $G[\chi(V(F_{u_i}))]$ that is connected, there is a $z_i\in \ch_{F_{u_i},u_i}(u_i)$ such that
$V(W_i)\subseteq V(F_{z_i})$.
Notice that $(F_{u_i},\chi|_{V(F_{u_i}},u_i)$ is an $\Fcal$-elimination forest of $G[\chi(F_{u_i})]$ of height at most $k-i$.

Observe that $z_k$ should be a leaf of $(F,R)$ and therefore, since we have that
$V(W_k)\subseteq V(F_{z_k})\subseteq V(G)\setminus S$,
$W_k$ is a wall of $G\setminus S$ of height at least $r-k\geq h$.
\end{proof}

Now we prove our result regarding the bidimensionality of $k$-elimination sets.

\begin{lemma}\label{@lucinatory}
Let $\Fcal$ be a finite collection of graphs.
Let $G$ be a graph, let $A\subseteq V(G)$, let $k\in\bN$, let $r\geq 2k+3$ be an odd integer, let $(W,\mathfrak{R})$ be a flatness pair of $G\setminus A$, and let $S$ be a $k$-elimination set of $G$ for $\exc(\Fcal)$.
There is a set $X\supseteq S$ such that $G\setminus X\in\exc(\Fcal)$ and $\bid_{G\setminus A, W}(X)\leq k(k+1)/2$.
\end{lemma}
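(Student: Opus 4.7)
The plan is an induction on $k$. The base case $k=0$ gives $S=\emptyset$ and $X=\emptyset$, with bidimensionality $0$. For the inductive step, fix an $\Fcal$-elimination forest $(F,\chi,R)$ of $G$ of height at most $k$ with $\chi(\Int(F,R))=S$. Since $W$ is connected, it lies in one component of $G$, so we may assume $G$ is connected, $F$ has a unique root $r$, and set $v_0:=\chi(r)\in S$. The children of $r$ are in bijection with the connected components of $G\setminus\{v_0\}$, each admitting an $\Fcal$-elimination forest of height at most $k-1$. Because walls are $2$-connected, $W\setminus\{v_0\}$ is connected and lies in a unique component $G_{j_0}$ of $G\setminus\{v_0\}$. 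We extract an $(r-2)$-subwall $W'$ of $W$ contained in $G_{j_0}\setminus A$ (by deleting, if necessary, the horizontal and vertical paths of $W$ through $v_0$ and trimming one outer layer); the hypothesis $r\geq 2k+3$ ensures $r-2\geq 2(k-1)+3$. Applying the induction hypothesis to $(G_{j_0}, A\cap V(G_{j_0}), W', S\cap V(G_{j_0}))$ produces a set $X^{(1)}\supseteq S\cap V(G_{j_0})$ with $G_{j_0}\setminus X^{(1)}\in\exc(\Fcal)$ and $\bid_{G_{j_0}\setminus A, W'}(X^{(1)})\leq (k-1)k/2$.

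Next I would set $X:=X^{(1)}\cup (S\setminus V(G_{j_0}))$. Then $X\supseteq S$ and $G\setminus X\in\exc(\Fcal)$ (using that graphs in $\obs(\Gcal)$ are connected, so that $G\setminus S$, a disjoint union of graphs in $\exc(\Fcal)$, is itself in $\exc(\Fcal)$). To transfer the bidimensionality bound from $W'$ in $G_{j_0}\setminus A$ to $W$ in $G\setminus A$, the key claim is that passing to the coarser setting adds at most $k$ extra internal bags hit by $X$. Two ingredients would justify this: (i) $v_0$ is in at most one internal bag of any $W$-canonical partition $\tilde\Qcal$ of $G\setminus A$; and (ii) any $S$-vertex lying in a component of $G\setminus\{v_0\}$ other than $G_{j_0}$ can reach an internal bag of $\tilde\Qcal$ only through $v_0$ serving as a bridge inside that bag's induced connected subgraph in $G\setminus A$, and thus lies in the unique internal bag containing $v_0$. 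The remaining at most $k-1$ new bags correspond to internal bags of $\tilde\Qcal$ that meet the pruned rows or columns around $v_0$ but are not visible to the $W'$-partition of $G_{j_0}\setminus A$. Adding gives $\bid_{G\setminus A, W}(X)\leq (k-1)k/2+k = k(k+1)/2$.

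The hard part will be the last step: precisely reconciling the $W'$-canonical partition of $G_{j_0}\setminus A$ with the $W$-canonical partition of $G\setminus A$, and verifying that at most $k$ additional internal bags are introduced per recursive level. This requires a careful combinatorial analysis combining the $2$-connectivity of walls, the greedy enhancement rule defining canonical partitions of an arbitrary graph with respect to a wall, and the structural control afforded by \autoref{@unchangeable}.
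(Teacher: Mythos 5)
Your induction does not close, for two concrete reasons. First, the bidimensionality transfer goes in the wrong direction. Your inductive hypothesis bounds $\bid_{G_{j_0}\setminus A,\,W'}(X^{(1)})$, i.e.\ bidimensionality with respect to the \emph{smaller} wall $W'$, and you need a bound with respect to the \emph{larger} wall $W$. Comparisons between canonical partitions only give the opposite inequality (each internal bag of a $W'$-canonical partition sits inside an internal bag of a suitable $W$-canonical partition, so $\bid_{W'}\leq\bid_{W}$); going the other way, the internal bags of a $W$-canonical partition that lie on the deleted horizontal/vertical paths through $v_0$ and on the trimmed layer number $\Theta(r)$, not $\Theta(k)$, and none of them are visible to any $W'$-canonical partition of $G_{j_0}\setminus A$, so the inductive bound says nothing about how many of them $X^{(1)}$ hits. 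Your ``at most $k$ extra internal bags per level'' accounting is therefore unsubstantiated, and since $r$ is not bounded in terms of $k$, it cannot be repaired by a more careful count of the pruned rows and columns alone. Second, your $X:=X^{(1)}\cup(S\setminus V(G_{j_0}))$ (which, unfolding the recursion, is essentially $S$ itself) need not satisfy $G\setminus X\in\exc(\Fcal)$: $G\setminus X$ is a disjoint union of graphs in $\exc(\Fcal)$, and you justify membership in $\exc(\Fcal)$ by asserting that minor obstructions are connected, which the lemma does not assume and which is false for general minor-closed classes (the paper explicitly accommodates disconnected obstructions, cf.\ the construction in \autoref{@tualization}).

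The paper's proof avoids both problems by being non-inductive and by making $X$ much larger than $S$: using \autoref{@unchangeable} it finds an $(r-k)$-subwall $W'$ of $W$ disjoint from $S$, lets $C$ be the connected component of $G\setminus S$ containing $W'$, and sets $X:=V(G)\setminus V(C)$. Then $G\setminus X=C$ is a single connected component of a leaf bag, hence in $\exc(\Fcal)$ with no connectivity assumption on $\Fcal$; and the bidimensionality bound comes from a separation argument rather than recursion: a set $P\subseteq S$ with $|P|\leq k$ separates $X$ from $C$, so in the contracted (flat) grid structure induced by a $W$-canonical partition, a separator meeting at most $k$ internal bags leaves one side meeting at most $k(k+1)/2$ internal bags; since the $C$-side meets at least $(r-k-2)^2>k(k+1)/2$ of them (it contains $W'$), it is the $X$-side that meets at most $k(k+1)/2$. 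If you want to salvage your approach, you would in effect have to reprove this separation argument at every level, at which point the induction is superfluous.
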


\begin{proof}
Let $p={\sf odd}(r-k)$.
Let $W'$ be a $p$-subwall of $W$ that is a wall of $G\setminus S$, which exists due to \autoref{@unchangeable}.
Let $C$ be the connected component of $G\setminus S$ that contains $W'$.
Since $C\in\exc(\Fcal)$, $K_{s_\Fcal}$ is not a minor of $C$.
Moreover, since $S$ is a $k$-elimination set of $G$ for $\exc(\Fcal)$, there is a set $P\subseteq S$ of size at most $k$ such that $(L,R):=(V(G)\setminus V(C),V(C)\cup P)$ is a separation of $G$ with $L\cap R=P$.

Let us show that $\bid_{G\setminus A, W}(V(G)\setminus V(C))\leq k(k+1)/2$.
Let $\tilde{\Qcal}$ be a $W$-canonical partition of $G\setminus A$.
Let $l$ be the number of internal bags of $\tilde{\Qcal}$ intersected by $P$ and note that $l\leq k$.

Let $G'$ be the graph obtained from $G$ after contracting each bag of $\tilde{\Qcal}$ to a vertex. It is easy to observe that $G'$ is isomorphic to a planar supergraph of an $h$-grid $H$, where $h=r-2$, together with an additional vertex that is adjacent to every vertex of the perimeter of $H$.

We let $[h]^2$ be the vertex set of $H$, where $(i,j)$ and $(i',j')$ are adjacent if and only if $|i-i'|+|j-j'|=1$.
We will show that there is a separation $(L',R')$ of $H$ of order at most $l$ that maximizes $\min\{|L'|,|R'|\}$.
Let $A=L'\cap R'$.
We suppose without loss of generality that $|L'|\leq|R'|$.
Notice that $l<h$. We take $A:=\{(i,j)\in[h]^2\mid i+j=l\}$, i.e., $L'$ is the set of pairs of indices in the triangle bounded by $(0,0)$, $(0,l)$, and $(l,0)$. Thus, $|L'|=l(l+1)/2$. It is easy to verify that this maximizes $|L'|$.

Therefore, since the vertices of $H$ are the internal bags of $\tilde{\Qcal}$ and $P$ intersects $l$ internal bags, it implies that one of $L$ and $R$ intersects at most $l(l+1)/2 \leq k(k+1)/2$ internal bags of $\tilde{\Qcal}$.
Recall that $W'$ is a wall of $C$ of height $p$.
It is easy to verify that an elementary $x$-wall $W^*$ has $2x^2-2$ vertices with $8x-10$ vertices in the perimeter.
Hence, it has $2(x-2)^2$ vertices not in the perimeter, and therefore the canonical partition of $W^*$ has $(x-2)^2$ internal bags.
Thus, the canonical partition of $W'$ has $(p-2)^2$ internal bags.
Observe that each such a bag is contained in an internal bag of $\tilde{\Qcal}$ and therefore $V(C)$ intersects at least $(p-2)^2$ internal bags of $\tilde{\Qcal}$.
Since $(p-2)^2\geq (r-k-2)^2\geq(k+1)^2>k(k+1)/2$, it holds that $\bid_{G\setminus A, W}(V(C)\cup P)>k(k+1)/2$.
Therefore, $\bid_{G\setminus A, W}(V(G)\setminus V(C))\leq k(k+1)/2$.
\end{proof}

\subsection{Homogeneous walls}\label{@indigenous}

\sugar{
In this subsection, we define homogeneous flat walls.
 Intuitively, homogeneous flat walls are flat walls that allow the routing of the same set of (topological) minors in the augmented flaps (i.e., the flaps together with the apex set) ``cropped'' by each one of their bricks.
Such a flat wall can be detected in a big enough flat wall (\autoref{@disreputable}) and this ``homogeneity'' property implies
that some central part of a big enough homogeneous wall can be declared irrelevant (\autoref{@civilizing}).
The results presented in this subsection are from~\cite{SauST21kapiI,SauST21kapiII}.
}

\subparagraph{Folios.}
We say that $(M,T)$ is a {\em {\sf tm}-pair} if $M$ is a graph, $T\subseteq V(M)$, and all vertices in
$V(M)\setminus T$ have degree two. We denote by ${\sf diss}(M,T)$ the graph obtained
from $M$ by {dissolving} all vertices in $V(M)\setminus T$.
A {\em {\sf tm}-pair} of a graph $G$ is a {\sf tm}-pair $(M,T)$ where $M$ is a subgraph of $G$.
We call the vertices in $T$ {\em branch} vertices of $(M,T)$.
We need to deal with topological minors for the notion of homogeneity defined below, on which the statement of~\cite[Theorem 5.2]{BasteST20acom} relies.
If $\textbf{M}=(M,B,\rho)\in{\cal B}$ and $T\subseteq V(M)$ with $B\subseteq T$, we call $(\textbf{M},T)$ a {\em {\sf btm}-pair}
and we define ${\sf diss}(\textbf{M},T)=({\sf diss}(M, T),B,\rho)$. Note that we do not permit dissolution of boundary vertices, as we consider all of them to be branch vertices. If $\textbf{G}=(G,B,\rho)$ is a boundaried graph and $(M,T)$ is a {\sf tm}-pair of $G$
where $B\subseteq T$, then we say that
$(\textbf{M},T)$, where $\textbf{M}=(M,B,\rho)$, is a {\em {\sf btm}-pair} of $\textbf{G}=(G,B,\rho)$.
Let $\textbf{G}_{1},{\bf G}_{2}$ be two boundaried graphs.
We say that $\textbf{G}_{1}$ is a {\em topological minor}
of $\textbf{G}_{2}$, denoted by $\textbf{G}_{1}\pretp\textbf{G}_{2}$, if
$\textbf{G}_{2}$ has a {\sf btm}-pair $(\textbf{M},T)$
such that ${\sf diss}(\textbf{M},T)$ is isomorphic to $\textbf{G}_{1}$.
Given a $\textbf{G}\in {\cal B} $ and a positive integer $\ell$, we define the {\em $\ell$-folio} of ${\bf G}$
as
$${\ell}\mbox{\sf-folio}(\textbf{G})=\{\textbf{G}'\in {\cal B} \mid \textbf{G}'\pretp \textbf{G} \mbox{~and $\textbf{G}'$ has detail at most $\ell$}\}.$$

\subparagraph{Augmented flaps.}
Let $G$ be a graph, $A$ be a subset of $V(G)$ of size $a$, and $(W,\mathfrak{R})$ be a flatness pair of $G\setminus A$.
For each flap $F\in \flaps_\mathfrak{R}(W)$ we consider a labeling $\ell_{F}: \partial F\rightarrow\{1,2,3\}$ such that
the set of labels assigned by $\ell_{F}$ to $\partial F$ is one of $\{1\}$, $\{1,2\}$, $\{1,2,3\}$.
Also, let $\tilde{a}\in[a]$.
For every set $\tilde{A}\in\binom{A}{\tilde{a}}$, we consider a bijection $\rho_{\tilde{A}}: \tilde{A}\to [\tilde{a}]$.
The labelings in ${\cal L}=\{\ell_{F} \mid F\in \flaps_\mathfrak{R}(W)\}$ and the labelings in $\{\rho_{\tilde{A}} \mid \tilde{A}\in\binom{A}{\tilde{a}}\}$ will be useful for defining a set of boundaried graphs that we will call augmented flaps.
We first need some more definitions.

Given a flap $F\in\flaps_\mathfrak{R}(W)$, we define an ordering
$\Omega(F)=(x_{1},\ldots,x_{q})$, with $q\leq 3$, of the vertices of $\partial{F}$
so that
\begin{itemize}
	\item $(x_{1},\ldots,x_{q})$ is a counter-clockwise cyclic ordering of the vertices of $\partial F$ as they appear in the corresponding cell of $C(\Gamma)$. Notice that this cyclic ordering is significant only when $|\partial F|=3$,
in the sense that $(x_{1},x_{2},x_{3})$ remains invariant under shifting, i.e., $(x_{1},x_{2},x_{3})$ is the same as $ (x_{2},x_{3},x_{1})$ but not under inversion, i.e., $(x_{1},x_{2},x_{3})$ is not the same as $(x_{3},x_{2},x_{1})$, and
	\item for $i\in[q]$, $\ell_{F}(x_{i})=i$.
\end{itemize}
Notice that the second condition is necessary for completing the definition of the ordering $\Omega(F)$,
and this is the reason why we set up the labelings in ${\cal L}$.\medskip

For each set $\tilde{A}\in\binom{A}{\tilde{a}}$ and each $F\in \flaps_\mathfrak{R}(W)$ with $t_{F}:=|\partial F|$,
we fix $\rho_{F}: \partial F\to [\tilde{a}+1,\tilde{a}+t_F]$ such that
$(\rho^{-1}_{F}(\tilde{a}+1),\ldots,\rho^{-1}_{F}(\tilde{a}+t_F))= \Omega(F)$.
Also, we define the boundaried graph $$\textbf{F}^{\tilde{A}}:=(G[\tilde{A}\cup F],\tilde{A}\cup \partial F,\rho_{\tilde{A}}\cup \rho_F)$$
and we denote by $F^{\tilde{A}}$ the underlying graph of $\textbf{F}^{\tilde{A}}$. We call $\textbf{F}^{\tilde{A}}$ an {\em $\tilde{A}$-augmented flap} of the flatness pair $(W,\mathfrak{R})$ of $G\setminus A$
in $G$.

\subparagraph{Palettes and homogeneity.}
For each $\mathfrak{R}$-normal cycle $C$ of $\compass_\mathfrak{R} (W)$ and each set $\tilde{A}\subseteq A$, we define $(\tilde{A},\ell)\mbox{\sf -palette}(C)=\{\ell\mbox{\sf -folio}({\bf F}^{\tilde{A}})\mid F\in {\sf influence}_\mathfrak{R}(C)\}$.
Given a set $\tilde{A}\subseteq A$, we say that the flatness pair $(W,\mathfrak{R})$ of $G\setminus A$ is {\em $\ell$-homogeneous with respect to $\tilde{A}$} if every {\sl internal} brick of ${W}$ has the {\sl same} $(\tilde{A},\ell)$\mbox{\sf -palette} (seen as a cycle of $\compass_\mathfrak{R} (W)$).
Also, given a collection ${\cal S}\subseteq 2^A$, we say that the flatness pair $(W,\mathfrak{R})$ of $G\setminus A$ is {\em $\ell$-homogeneous
		with respect to ${\cal S}$}
if it is $\ell$-homogeneous with respect to every $\tilde{A}\in {\cal S}$.

\section{Vertex deletion to a minor-closed graph class}\label{@conditioned}

In this section, we prove our main result for the {\sc $\Fcal$-M-Deletion} problem. The following theorem is a restatement of \autoref{@communicated} using the reformulation introduced in \autoref{@graphically}.

\begin{theorem}\label{@uncritical}
For every finite collection of graphs $\Fcal$, there exists an algorithm that, given a graph $G$ and a non-negative integer $k$, runs in time $2^{k^{\Ocal_{\ell_\Fcal}(1)}}\cdot n^2$ and either outputs
a $k$-apex set of $G$ for $\exc{(\Fcal)}$
or reports that such a set does not exist.
\end{theorem}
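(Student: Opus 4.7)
The plan is to follow the high-level streamline already described in \autoref{@recipients}, but without resorting to iterative compression (which is the source of the cubic factor in~\cite{SauST21kapiII}), and thereby obtain the quadratic $n^{2}$ factor. Given an instance $(G,k)$, I first fix an odd threshold $r=r(k,\Fcal)$ of the form $k^{\Ocal_{\ell_\Fcal}(1)}$, large enough to feed the subsequent subroutines, and call ${\tt Find\mbox{-}Wall}(G,r,k)$ of \autoref{@transforma}. If it reports a \no-instance we are done; if it returns a tree decomposition of width $\Ocal_{\ell_\Fcal}(r+k)$, we invoke the dynamic programming algorithm of \autoref{@calculated}, which solves {\sc $\Fcal$-M-Deletion} in time $2^{\Ocal_{\ell_\Fcal}(\tw\log\tw)}\cdot n$, i.e.\ within the claimed budget. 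Hence the only remaining case is that ${\tt Find\mbox{-}Wall}$ outputs an $r$-wall $W$ of $G$.

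On the $r$-wall $W$ I run successively ${\tt Grasped\mbox{-}or\mbox{-}Flat}$ (\autoref{@possession}) with $t=s_\Fcal+k$ and ${\tt Clique\mbox{-}Or\mbox{-}twFlat}$ (\autoref{@unimportant}): because $\Acal_k(\exc(\Fcal))$ is $K_{s_\Fcal+k}$-minor-free (\autoref{@inoperative}), a $K_t$-minor grasped by $W$ certifies a \no-instance, and the treewidth branch of \autoref{@unimportant} can again be dispatched by \autoref{@calculated}. What survives is an apex set $A\subseteq V(G)$ with $|A|\leq \funref{@collaboration}(t)=k^{\Ocal(1)}$ together with a regular flatness pair $(W',\mathfrak R')$ of $G\setminus A$ of height $r'$, where $r'$ is still huge as a function of $k$. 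Now I split according to which of the two outcomes sketched as Case~3a/Case~3b applies, detected via a flow computation from every vertex of $A$ to the wall $W'$ (this is where the $\Ocal(n^{2})$ factor is unavoidable, yielding one single global $n^{2}$ cost per recursive call).

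If $a_\Fcal$ vertices of $A$ each send many internally disjoint paths into distinct internal bags of a $W'$-canonical partition of $G\setminus A$ (Case~3b), then \autoref{@proclamation} combined with \autoref{@psychoanalyse} guarantees that every $k$-apex set $S$ of $G$ for $\exc(\Fcal)$ intersects the corresponding set $A'\subseteq A$ of size $a_\Fcal$. I then branch by guessing which vertex $v\in A'$ lies in $S$ and recurse on $(G\setminus v,k-1)$; the branching factor is $a_\Fcal$ and the depth is $k$, producing the $a_\Fcal^{k}=2^{\Ocal_{\ell_\Fcal}(k)}$ multiplicative overhead. Otherwise (Case~3a), I can apply \autoref{@prohibitions} to locate, inside $W'$, a packing of subwalls each contained in a single bag, and among them one subwall $W''$ whose compass has bounded treewidth after removal of a new bounded apex set $A''\subseteq A$, still of size $k^{\Ocal(1)}$. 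Calling ${\tt Homogeneous}$ (\autoref{@disreputable}) yields an $\ell_\Fcal$-homogeneous flatness pair of the right height, and finally ${\tt Find\mbox{-}Irrelevant\mbox{-}Vertex}$ (\autoref{@civilizing}) produces a vertex $v$ which is irrelevant in the sense that $(G,k)$ and $(G\setminus v,k)$ are equivalent (here \autoref{@psychoanalyse} is what allows us to apply \autoref{@civilizing} to every $k$-apex set). We then recurse on $(G\setminus v,k)$.

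The correctness of the recursion follows from the invariants of the three Propositions cited above; termination follows from the fact that each recursive call either strictly decreases $k$ (Case~3b) or strictly decreases $|V(G)|$ (Case~3a). For the running time, the search tree has at most $a_\Fcal^{k}\cdot n$ leaves, the work done at every node is dominated by one call to ${\tt Find\mbox{-}Wall}$, ${\tt Grasped\mbox{-}or\mbox{-}Flat}$, ${\tt Clique\mbox{-}Or\mbox{-}twFlat}$, ${\tt Homogeneous}$ and an $\Ocal(n^{2})$ flow routine, each contributing a multiplicative factor of the form $2^{k^{\Ocal_{\ell_\Fcal}(1)}}\cdot n$ or $\Ocal(n^{2})$; amortizing the flow cost across the whole tree yields the total $2^{k^{\Ocal_{\ell_\Fcal}(1)}}\cdot n^{2}$. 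The main technical obstacle I expect is keeping the $n^{2}$ factor in check while avoiding iterative compression: one has to verify that the bidimensionality argument (\autoref{@psychoanalyse}) remains sufficient when $a_\Fcal>1$, so that the global branching depth stays bounded by $k$ rather than by $n$, and that the enhancement of the canonical partition needed to apply \autoref{@proclamation} can be performed in arbitrary fashion without breaking the homogeneity setup used in Case~3a.
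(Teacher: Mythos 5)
There is a genuine gap, and it is exactly the pitfall that the paper's new argument is designed to avoid. You run {\tt Grasped-or-Flat} with $t=s_\Fcal+k$, so the apex set $A$ you obtain has size $\funref{@collaboration}(s_\Fcal+k)=k^{\Theta(1)}$, and you then feed a set of this size into {\tt Homogeneous} and {\tt Find-Irrelevant-Vertex}. But the cost of these subroutines is not polynomial in the apex-set size: {\tt Homogeneous} needs a wall of height $\funref{@philistines}(r,\tilde a,a,\ell)=\Ocal(r^{\funref{@withdrawing}(\tilde a,a,\ell)})$ with $\funref{@withdrawing}(\tilde a,a,\ell)=2^{a^{\tilde a}\cdot 2^{\Ocal((\tilde a+\ell)\log(\tilde a+\ell))}}$, and {\tt Find-Irrelevant-Vertex} needs height $\Ocal(k\cdot (f_{\sf ul}(16a+12\ell_\Fcal))^3)$, where $f_{\sf ul}$ is the Unique Linkage function. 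With $a=k^{\Theta(1)}$ these are (at least) double-exponential in $\poly(k)$, so the initial wall you must request from {\tt Find-Wall}, and hence the total running time, is nowhere near $2^{k^{\Ocal_{\ell_\Fcal}(1)}}\cdot n^2$ — this is precisely why the paper's general {\sc Elimination Distance} algorithm (which is forced into this regime) is triple-exponential. The paper's {\sc $\Fcal$-M-Deletion} algorithm avoids it by keeping all apex sets of size depending only on $\Fcal$: it applies {\tt Grasped-or-Flat} with $t=s_\Fcal+a_\Fcal-1$ to the graph $D_{W_2}$ obtained from a carefully selected $r_2$-subwall $W_2$ (the correctness argument shows that, when fewer than $a_\Fcal$ vertices are highly attached to the canonical partition, any path from $W_2^*$ to the solution $S$ is blocked, so no $K_{s_\Fcal+a_\Fcal-1}$-model grasped by $W_2^*$ can exist), and then {\tt Clique-or-twFlat} with $t=s_\Fcal$ to $D_{W^i}$ (which avoids $S$ altogether). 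This ``grasped by the wall'' step is the new ingredient your proposal lacks, and without it the claimed parametric dependence does not follow.

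A secondary problem is your Case~3a/3b split. Restricting the flow computations to the vertices of your apex set $A$ is both the wrong source of the $n^2$ factor (the paper tests \emph{every} vertex $y$ of $G$ against $v_{\rm all}$ in the contracted canonical-partition graph) and, more importantly, incomplete for correctness: \autoref{@proclamation} is invoked in the paper with respect to the unknown set $S\cup B$, and the vertices that must be hit by every $k$-apex set need not lie in the particular apex set your flat-wall computation happens to return. Moreover, in Case~3a the bounded-treewidth compass does not come from \autoref{@prohibitions} (which only packs subwalls with disjoint influences); it comes from a second call to {\tt Clique-or-twFlat}, and the reason that call cannot report a $K_{s_\Fcal}$-minor is again the choice of a subwall whose influence avoids the neighborhood of $(S\cup B)\setminus \tilde A'$ — the same analysis you have skipped. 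So the recursion's correctness in the ``no branching'' case is not established by your argument as written.
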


This algorithm is a generalization of the algorithm in the apex-minor free case of \cite{SauST21kapiII}.
In order to give an algorithm without the apex-minor restriction,
we enhance the techniques of \cite{SauST21kapiII} with some new tricks.
We present this algorithm in this paper as a stepping stone to present the algorithms for elimination distance in \autoref{@oeconomicus} and \autoref{@stubbornly}, since the techniques used are similar (however not the same).

\subsection{Description of the algorithm for {\sc \texorpdfstring{$\Fcal$}{F}-M-Deletion}}

\sugar{
Our algorithm for {\sc $\Fcal$-M-Deletion} has three steps.
In Step~1, either we can easily conclude with a positive or a negative answer or we find a big wall.
If we can find a large flat wall of bounded treewidth inside this wall, then we go to Step~2 and find an irrelevant vertex.
Otherwise, we proceed to Step~3 where, by using flow techniques, we find a set of vertices that intersects every solution, and we branch on this set or we report a negative answer.
The correctness of the algorithm is not trivial and will be justified in \autoref{@neoliberal}.
While the general scheme of the algorithm is similar to the algorithm in~\cite{SauST21kapiII} for {\sc $\Fcal$-M-Deletion} in the case where $\Fcal$ contains an apex-graph, here, in order to obtain a quadratic algorithm for general $\Fcal$, we employ additional novel tricks so as to deal with the possible existence of many apices in all graphs in $\Fcal$.}

\smallskip

We define the following constants.
\begin{align*}
	a =				&	\ \funref{@collaboration}(s_\Fcal+a_\Fcal-1), &
	b =				& \ \funref{@collaboration}(s_\Fcal), \\
	q = 			& \ \funref{@categories}(a_\Fcal,s_\Fcal,k), &
	p =       & \ \funref{@provincial}(a_\Fcal,s_\Fcal,k), \\
	l =       & \ (q-1)\cdot (k+b), &
	r_6 =			& \ \funref{@differences}(a+b,\ell_\Fcal,3,k) \\
	d = 			& \ \funref{@deliberation}(a+b,\ell_\Fcal) &
	r_5 =     & \ \funref{@philistines}(r_6,a+b,a+b,d), \\
	t =				& \ \funref{@corollaries}(s_\Fcal)\cdot r_5, &
	r_4 =     & \ \odd(t+3), \\
	r_3 = 		& \ \funref{@idealistic}(a_\Fcal,r_4,1), &
	r_2 = 		& \ \odd(2+\funref{@classifications}(s_\Fcal+a_\Fcal-1) \cdot r_3), \\
	r_2' =    & \ \odd(\max\{\funref{@unaffected}(a_\Fcal,s_\Fcal,k), \funref{@idealistic}(l+1,r_2,p)\}), &
	r_1 =     & \ \odd( \funref{@classifications}(s_\Fcal)\cdot r_2'+k).
\end{align*}

Note that $r_6=\Ocal_{\ell_\Fcal}(k)$, $r_5,r_4,r_3,r_2,t=\Ocal_{\ell_\Fcal}(k^{c})$ and $r_2',r_1=\Ocal_{\ell_\Fcal}(k^{c+2})$ where $c=\funref{@withdrawing}(a+b,a+b,d)$.
Recall from \autoref{@graphically} that we assume that $G$ has $\Ocal_{s_\Fcal}(k\sqrt{\log k}\cdot n)$ edges.

\subparagraph{Step 1.} Run the algorithm {\tt Find-Wall} from \autoref{@transforma} with input $(G,r_1,k)$
and, in time $2^{\Ocal_{\ell_\Fcal}(r_1^2+(k+r_1)\log(k+r_1))}\cdot n=2^{\Ocal_{\ell_\Fcal}(k^{2(c+2)})}\cdot n$,
\begin{itemize}
	\item either report a \no-instance, or
	\item conclude that $\tw(G)\leq \funref{@veneration}(s_\Fcal)\cdot r_1+k$ and solve {\sc $\Fcal$-M-Deletion} in time $2^{\Ocal_{\ell_\Fcal}((r_1+k)\log(r_1+k))}\cdot n=2^{\Ocal_{\ell_\Fcal}(k^{c+2} \cdot \log k)} \cdot n$ using the algorithm of \autoref{@calculated}, or
	\item obtain an $r_1$-wall $W_1$ of $G$.
\end{itemize}

If the output of \autoref{@transforma} is an $r_1$-wall $W_1$, consider all the $\binom{r_1}{r_2}^2=2^{\Ocal_{\ell_\Fcal}(k^c\log k)}$ $r_2$-subwalls of $W_1$.
For each one of them, say $W_2$, let $W_2^*$ be the central $(r_2-2)$-subwall of $W_2$ and let $D_{W_2}$ be the graph obtained from $G$ after removing the perimeter of $W_2$ and taking the connected component containing $W_2^*$.
Run the algorithm {\tt Grasped-or-Flat} of \autoref{@possession} with input $(D_{W_2},r_3,s_\Fcal+a_\Fcal-1,W_2^*)$.
This can be done in time $\Ocal_{s_\Fcal}(k\sqrt{\log k}\cdot n)$.

If for some of these subwalls the result is
a set $A\subseteq V(D_{W_2})$ with $|A|\leq a$ and a flatness pair $(W_3,\mathfrak{R}_3)$ of $D_{W_2}\setminus A$ of height $r_3$ then,
as in \autoref{@prohibitions}, compute a $W_3$-canonical partition $\tilde{\Qcal}$ of $D_{W_2}\setminus A$ and a collection $\Wcal=\{W^1,...,W^{a_\Fcal}\}$ of $r_4$-subwalls of $W_3$ such that
for every $i\in[a_\Fcal]$,
$\bigcup \influence_{\mathfrak{R}_3}(W^i)$ is a subgraph of $\bigcup \{Q\mid Q \text{ is a $p$-internal bag of }\tilde\Qcal\}$ and
 for every $i,j\in[a_\Fcal]$, with $i\neq j$, there is no internal bag of $\tilde\Qcal$ that contains vertices of both $V(\bigcup \influence_{\mathfrak{R}_3}(W^i))$ and $V(\bigcup \influence_{\mathfrak{R}_3}(W^j))$.
This can be done in time $\Ocal_{s_\Fcal}(k\sqrt{\log k}\cdot n)$.

For $i\in[a_\Fcal]$,
let $W^{i*}$ be the central $(r_4-2)$-subwall of $W^i$ and let $D_{W^i}$ be the graph obtained from $D_{W_2}$ after removing $A$ and the perimeter of $W^i$ and taking the connected component containing $W^{i*}$.
Run the algorithm {\tt Clique-or-twFlat} of \autoref{@unimportant} with input $(D_{W^i},r_5,s_\Fcal)$.
This takes time $2^{\Ocal_{\ell_\Fcal}(r_5^2)}\cdot n=2^{\Ocal_{\ell_\Fcal}(k^{2c})}\cdot n$.
If for one of these subwalls the result is a set $A'$ of size at most $b$ and a regular flatness pair $(W_5,\mathfrak{R}_5)$ of $D_{W^i}\setminus A'$ of height $r_5$ whose $\mathfrak{R}_5$-compass has treewidth at most $t$, then we proceed to Step~2.

If, for every flatness pair $(W_3,\mathfrak{R}_3)$ and for every $i\in[a_\Fcal]$, the result is a report that $K_{s_\Fcal}$ is a minor of $D_{W^i}$, then we proceed to Step~3.

\subparagraph{Step 2 (irrelevant vertex case).}
We now obtain a 7-tuple $\mathfrak{R}_5'$ by adding all vertices of $G\setminus V({\sf Compass}_{\mathfrak{R}_5}(W_5))$ to the set in the first coordinate of $\mathfrak{R}_5$, such that $(W_5,\mathfrak{R}_5')$ is a regular flatness pair of $G\setminus (A\cup A')$ whose $\mathfrak{R}_5'$-compass has treewidth at most $t$.
We apply the algorithm {\tt Homogeneous} of \autoref{@disreputable} with input $(r_6,a+b,a+b,d,t,G,A\cup A',W_5,\mathfrak{R}_5')$, which outputs, in time $2^{\Ocal_{\ell_\Fcal}(t\log t + k\log k)}\cdot n=2^{\Ocal_{\ell_\Fcal}(k^c\log k)}\cdot n$, a flatness pair $(W_6,\mathfrak{R}_6)$ of $G\setminus (A\cup A')$ of height $r_6$ that is $d$-homogeneous with respect to $2^{A\cup A'}$ and is a $W^*$-tilt of $(W_5,\mathfrak{R}_5')$ for some subwall $W^*$ of $W_5$.
We apply the algorithm {\tt Find-Irrelevant-Vertex} of \autoref{@civilizing} with input $(k,a+b,G,A\cup A',W_6,\mathfrak{R}_6)$, which outputs, in time $\Ocal(n+m)=\Ocal_{\ell_\Fcal}(k\sqrt{\log k}\cdot n)$, a vertex $v$ such that $(G,k)$ and $(G\setminus v,k)$
are equivalent instances of {\sc $\Fcal$-M-Deletion}.
Then the algorithm runs recursively on the equivalent instance $(G\setminus v,k)$.

\subparagraph{Step 3 (branching case).}
Consider all the $r_2'$-subwalls of $W_1$, which are at most $\binom{r_1}{r_2'}^2=2^{\Ocal_{\ell_\Fcal}(k^{c+2}\log k)}$ many,
and for each of them, say $W_2'$, compute its canonical partition $\Qcal$.
Then, contract each bag $Q$ of $\Qcal$ to a single vertex $v_Q$, and add a new vertex $v_{\rm all}$ and make it adjacent to all $v_{Q}$'s. In the resulting graph $G'$, for every vertex $y$ of $G\setminus V(W_2')$, check, using a flow augmentation algorithm~\cite{Diestel10grap}, whether there are $q$ internally vertex-disjoint paths from $v_{\rm all}$ to $y$ in time $\Ocal(q\cdot m)=\Ocal_{\ell_\Fcal}(k^4\sqrt{\log k}\cdot n)$.
Let $\tilde{A}$ be the set of all such $y$'s.\smallskip

<{If $|\tilde{A}|<a_\Fcal$, then report a \no-instance.}

If $a_\Fcal\leq|\tilde{A}|\leq k+b$, then consider all the $\binom{|\tilde{A}|}{a_\Fcal}=2^{\Ocal_{\ell_\Fcal}(\log k)}$ subsets of $\tilde{A}$ of size $a_\Fcal$.
For each one of them, say $A^*$, construct $\tilde\Qcal$ by
enhancing $\Qcal$ on $G\setminus A^*$.
Then, we distinguish two cases depending on whether for every $A^*$ all its vertices are adjacent to vertices of $q$ $p$-internal bags of $\tilde\Qcal$.

If each vertex of $A^*$ is adjacent to vertices of $q$ $p$-internal bags of
$\tilde\Qcal$, then (due to \autoref{@proclamation}) $A^*$ should intersect every solution of {\sc $\Fcal$-M-Deletion} for the instance $(G,k)$.
Therefore, the algorithm runs recursively on each instance $(G\setminus y,k-1)$ for $y\in A^*$. If one of them is a \yes-instance with $(k-1)$-apex set $S$ of $G\setminus y$, then $(G,k)$ is a \yes-instance with $k$-apex set $S\cup \{y\}$ of $G$.
If all of them are \no-instances, then report a \no-instance.
This concludes the case where each vertex of $A^*$ is adjacent to vertices of $q$ $p$-internal bags of $\tilde\Qcal$.

If for every subset $A^*$ of $\tilde{A}$ of size $a_\Fcal$, there is a vertex of $A^*$ that is not adjacent to vertices of $q$ $p$-internal bags of the given $\tilde\Qcal$, then report a \no-instance.
This concludes the case that $a_\Fcal\leq|\tilde{A}|\leq k+b$.

\smallskip

If for every wall, $|\tilde{A}|> k+b$, then report that $(G,k)$ is a \no-instance of {\sc $\Fcal$-M-Deletion}.
\bigskip

Notice that Step~3, when applied, takes time $2^{\Ocal_{\ell_\Fcal}(k^{c+2}\log k)} \cdot n^2$, because we apply the flow algorithms to each of the $2^{\Ocal_{\ell_\Fcal}(k^{c+2}\log k)}$ $r_2'$-subwalls and for each vertex of $G$.
However, the search tree created by the branching technique has at most $a_\Fcal$ branches and depth at most $k$. So Step~3 cannot be applied more than ${a_\Fcal}^k$ times during the course of the algorithm.
Since Step~1 runs in time $2^{\Ocal_{\ell_\Fcal}(k^{2(c+2)})} \cdot n$, Step~2 runs in time $2^{\Ocal_{\ell_\Fcal}(k^{2c})}\cdot n$, and both may be applied at most $n$ times, the claimed time complexity follows: the algorithm runs in time $2^{\Ocal_{\ell_\Fcal}(k^{2(c+2)})} \cdot n^2$.

\subsection{Correctness of the algorithm}\label{@neoliberal}

Suppose first that $(G,k)$ is a \yes-instance and let $S$ be a $k$-apex set of $G$.
The application of the algorithm  {\tt Find-Wall} of \autoref{@transforma} with input $(G,r_1,k)$ either returns a report that $\tw(G)\leq \funref{@veneration}(s_\Fcal)\cdot r_1+k$ or returns an $r_1$-wall. In the first case, i.e., if $\tw(G)\leq \funref{@veneration}(s_\Fcal)\cdot r_1+k$, the application of the algorithm of \autoref{@calculated} correctly outputs a $k$-apex set of $G$.
We will focus on the latter case, i.e., where the algorithm  {\tt Find-Wall} returns an $r_1$-wall of $G$, say $W_1$.
Since $r_1\geq \funref{@classifications}(s_\Fcal)\cdot r_2'+k$, there is an $( \funref{@classifications}(s_\Fcal)\cdot r_2')$-subwall
of $W_1$, say $W_1^*$, that does not contain vertices of $S$.
Since $G\setminus S$ does not contain $K_{s_\Fcal}$ as a minor, there is no
model of $K_{s_\Fcal}$ grasped by $W_1^*$ and therefore,
due to \autoref{@possession} with input $(G\setminus S, r_2',s_\Fcal, W_1^*)$,
we know that there is a set $B\subseteq V(G\setminus S)$, with $|B|\leq b$,
and a flatness pair $(W_2',\mathfrak{R}_2')$ of $G\setminus (S\cup B)$ of height $r_2'$ such that $W_2'$ is a $W''$-tilt of some subwall $W''$ of $W_1^*$.\smallskip

Let $\Qcal$ be the canonical partition of $W_2'$.
Let $G'$ be the graph obtained by contracting each bag $Q$ of $\Qcal$
to a single vertex $v_Q$, and adding a new vertex $v_{\rm all}$ and making it adjacent to all $v_{Q}$'s.
Let $\tilde{A}$ be the set of vertices $y$ of $G\setminus V(W_2')$ such that there are $q$ internally vertex-disjoint paths from $v_{\rm all}$ to $y$ in $G'$.
We claim that $\tilde{A}\subseteq S\cup B$.
To show this, we first prove that, for every $y\notin S\cup B$, the maximum number of internally vertex-disjoint paths from $v_{\sf all}$ to $y$ in $G'$ is
$k+b+4$.
Indeed,
if $y$ is a vertex in the $\mathfrak{R}_2'$-compass of $W_2'$,
there are at most $k+b$ such paths that intersect the set $S\cup B$ and
at most four paths that do not intersect $S\cup B$ (in the graph $G'\setminus (S\cup B)$)
due to the fact that $(W_2',\mathfrak{R}_2')$ is a flatness pair of $G\setminus (S\cup B)$.
If $y$ is not a vertex in the $\mathfrak{R}_2'$-compass of $W_2'$, then, since by the definition of flatness pairs the perimeter of $W_2'$ together with the set $S\cup B$ separate $y$ from the $\mathfrak{R}_2'$-compass of $W_2'$,
every collection of internally vertex-disjoint paths from $v_{\rm all}$ to $y$ in $G'$ should intersect the set $\{v_{Q_{\rm ext}}\}\cup S\cup B$, where $Q_{\rm ext}$ is the external bag of $\Qcal$. Therefore, in both cases, if $y\notin S\cup B$, the maximum number of internally vertex-disjoint paths from $v_{\sf all}$ to $y$ in $G'$ is
$k+b+4$.
Since $k+b+4<q$, we have that $y\notin\tilde{A}$.
Hence, $\tilde{A}\subseteq S\cup B$ and therefore $|\tilde{A}|\leq k+b$.
Hence, if $(G,k)$ is a \yes-instance we cannot have that $|\tilde{A}|>k+b$, so the algorithm correctly reports a {\sf no}-instance at the end of Step 3.\smallskip

Let $\tilde\Qcal$ be a $W_2'$-canonical partition of $G\setminus (S\cup B)$ obtained by enhancing $\Qcal$
on $G\setminus (S\cup B)$.
Let $\tilde{A}'$ be the set of vertices in $S\cup B$ that are adjacent to vertices of at least $q$ $p$-internal bags of $\tilde\Qcal$ (recall that $\tilde{A}$ is the set of vertices in $S\cup B$ that are adjacent to vertices of at least $q$ internal bags of $\tilde\Qcal$).
Note that $\tilde{A}'\subseteq\tilde{A}$ and therefore $|\tilde{A}'|\leq|\tilde{A}|$.
\medskip

If $|\tilde{A}'|<a_\Fcal$,
then at most $a_\Fcal-1$ vertices of $S\cup B$ are adjacent to vertices of at least $q$ $p$-internal bags of $\tilde\Qcal$.
This means that the $p$-internal bags
of $\tilde\Qcal$ that contain vertices adjacent to some vertex of $(S\cup B)\setminus \tilde{A}'$ are at most $(q-1)\cdot (k+b)=l$.\smallskip

Consider a family $\Wcal=\{W^{1}, \ldots, W^{l+1}\}$ of $l+1$ $r_2$-subwalls of $W_2'$ such that for every $i \in [l+1]$, $\bigcup \influence_{\mathfrak{R}_2'}(W^i)$ is a subgraph of $\bigcup \{Q\mid Q \text{ is a $p$-internal bag of }\tilde\Qcal\}$ and for every $i,j\in[l+1]$, with $i\neq j$, there is no internal bag of $\tilde\Qcal$ that contains vertices of both $V(\bigcup \influence_{\mathfrak{R}_2'} (W^i))$ and $V(\bigcup \influence_{\mathfrak{R}_2'} (W^j))$. The existence of $\Wcal$ follows from \autoref{@prohibitions} and the fact that $r_2'\geq \funref{@idealistic}(l+1,r_2,p)$.\smallskip

The fact that the $p$-internal bags
of $\tilde\Qcal$ that contain vertices adjacent to some vertex of $(S\cup B)\setminus \tilde{A}'$ are at most $l$ implies that
there exists an $i\in[l+1]$ such that
no vertex of $V(\bigcup \influence_{\mathfrak{R}_2'}({W^i}))$ is adjacent, in $G$, to a vertex in $(S\cup B)\setminus \tilde{A}'$.
Let $W_2:=W^i$, let $W_2^*$ be the central $(r_2-2)$-subwall of $W_2$, and let $D_{W_2}$ be the graph obtained from $G$ by removing the perimeter of $W_2$ and taking the connected component that contains $W_2^*$.
Since no vertex of $V(\bigcup \influence_{\mathfrak{R}_2'}({W^i}))$ is adjacent, in $G$, to a vertex in $(S\cup B)\setminus \tilde{A}'$,
any path in $D_{W_2}$ going from a vertex of $W_2^*$ to a vertex in $S$ must intersect a vertex of $\tilde{A}'$.
Thus, there is no model of $K_{s_\Fcal+a_\Fcal-1}$ grasped by $W_2^*$ in $D_{W_2}$, because otherwise, $K_{s_\Fcal}$ would be a minor of $G\setminus S$.
So, by applying the algorithm {\tt Grasped-or-Flat} of \autoref{@possession} with input $(D_{W_2},r_3,s_\Fcal+a_\Fcal-1,W_2^*)$, since $r_2-2\geq\funref{@classifications}(s_\Fcal+a_\Fcal-1) \cdot r_3$, we should find a set $A\subseteq V(D_{W_2})$ with $|A|\leq a$ and a flatness pair $(W_3,\mathfrak{R}_3)$ of $D_{W_2}\setminus A$ of height $r_3$, such that $W_3$ is a tilt of some subwall $\tilde{W}_3$ of $W_2$.\smallskip

Let $\tilde{\Qcal}'$ be a $W_3$-canonical partition of $D_{W_2}\setminus A$.
Let $\Wcal'=\{W^1,...,W^{a_\Fcal}\}$ be a collection of $r_4$-subwalls of $W_3$ such that
for every $i\in[a_\Fcal]$,
$\bigcup \influence_{\mathfrak{R}_3}(W^i)$ is a subgraph of $\bigcup \{Q\mid Q \text{ is an internal bag of }\tilde\Qcal'\}$ and
 for every $i,j\in[a_\Fcal]$, with $i\neq j$, there is no internal bag of $\tilde\Qcal'$ that contains vertices of both $V(\bigcup \influence_{\mathfrak{R}_3}(W^i))$ and $V(\bigcup \influence_{\mathfrak{R}_3}(W^j))$.
Since $|\tilde{A}'|<a_\Fcal$, there is an $i\in[a_\Fcal]$ such that $V(\bigcup \influence_{\mathfrak{R}_3}(W^i))$ does not intersect $\tilde{A}'$.
The existence of $\Wcal'$ follows from \autoref{@prohibitions} and the fact that $r_3\geq \funref{@idealistic}(a_\Fcal,r_4,1)$.\smallskip

Let $W_4:=W^i$.
Let $W_4^*$ be the central $(r_4-2)$-subwall of $W_4$ and let $D_{W_4}$ be the graph obtained from $D_{W_2}$ after removing $A$ and the perimeter of $W_4$ and taking the connected component containing $W_4^*$.
Observe that any path between a vertex of $S$ and a vertex of $V(\bigcup \influence_{\mathfrak{R}_3}(W_4))$ in $D_{W_2}$ intersects $\tilde{A}'$.
Since $\tilde{A}'$ does not intersect $V(\bigcup \influence_{\mathfrak{R}_3}(W_4))$, it implies that $\tilde{A}'$ does not intersect $D_{W_4}$, and thus $S\cap D_{W_4}=\emptyset$.
Therefore, $D_{W_4}$ is a minor of $G\setminus S$ and $K_{s_\Fcal}$ is not a minor of $D_{W_4}$.
Moreover, $W_4^*$ is a wall of $D_{W_4}$ of height $r_4-2\geq t+1$, so $\tw(D_{W_4})>t=\funref{@corollaries}(s_\Fcal)\cdot r_5$.
Therefore, by applying the algorithm {\tt Clique-or-twFlat} of \autoref{@unimportant} with input $(D_{W_4},r_5,s_\Fcal)$, we should obtain a set $A'$ of size at most $b$ and a regular flatness pair $(W_5,\mathfrak{R}_5)$ of $D_{W_4}\setminus A'$ of height $r_5$ whose $\mathfrak{R}_5$-compass has treewidth at most $t$.
All this is checked in Step~1, and thus, the algorithm should run Step~2.\medskip

If $|\tilde{A}'|\geq a_\Fcal$, then, due to \autoref{@proclamation} and the fact that $r_2'\geq \funref{@unaffected}(a_\Fcal,s_\Fcal,k)$, for any set $X\subseteq V(G)$ such that $\bid_{G\setminus (S\cup B),W_2'}(X)\leq k$ and such that $G\setminus X\in\exc(\Fcal)$, it holds that $X\cap\tilde{A}'\neq\emptyset$. In particular, for any $k$-apex set $S'$, $S'\cap\tilde{A}'\neq\emptyset$ due to \autoref{@psychoanalyse}.
Thus, there is a vertex $y\in\tilde{A}'$ such that $(G\setminus y,k-1)$ is a \yes-instance.
Hence, if the algorithm runs Step~3, it finds a vertex $y\in\tilde{A}'$ such that $(G\setminus y,k-1)$ is a \yes-instance.\medskip

Note that the enhancement $\tilde{\Qcal}$ of the canonical partition $\Qcal$ is not unique. In particular, $\tilde{A}'$ depends on $\tilde{\Qcal}$.
However, as long as there is such a $\tilde{\Qcal}$ such that $|\tilde{A}'|<a_\Fcal$, the algorithm finds the wanted flatness pair $(W_4,\mathfrak{R}_4)$ in Step~1 and then runs Step~2.
Hence, if $(G,k)$ is a \yes-instance, the algorithm runs Step~3 only if for all such $\tilde{A}'$, $|\tilde{A}'|\geq a_\Fcal$. {Note that, since $|\tilde{A}|\geq |\tilde{A}'|$, in this case we have that, for all such $\tilde{A}'$, $|\tilde{A}|\geq a_\Fcal$.}
This justifies the arbitrary canonical partition enhancement in Step~3 {and the fact
that, if $|\tilde{A}|<a_\Fcal$ in Step~3, then the algorithm reports a \no-instance.}
\medskip

{Let us now show the correctness of Step~2, and for this we do not suppose anymore that $(G,k)$ is a \yes-instance since the argument is the same for both types of instances.}
Suppose that the algorithm finds in Step~1 a set $A'$ of size at most $b$ and a regular flatness pair $(W_5,\mathfrak{R}_5)$ of $D_{W_4}\setminus A'$ of height $r_5$ whose $\mathfrak{R}_5$-compass has treewidth at most $t$.
We obtain a 7-tuple $\mathfrak{R}_5'$ by adding all vertices of $G\setminus V({\sf Compass}_{\mathfrak{R}_5}(W_5))$ to the set in the first coordinate of $\mathfrak{R}_5$. Since $(W_5,\mathfrak{R}_5)$ is a regular flatness pair of $D_{W_4}\setminus A'$ and since the vertices added in $\mathfrak{R}_5'$ are either in $A$, or adjacent at most to the perimeter of $W_4$, then $(W_5,\mathfrak{R}_5')$ is a regular flatness pair of $G\setminus (A\cup A')$. Since ${\sf Compass}_{\mathfrak{R}_5}(W_5) = {\sf Compass}_{\mathfrak{R}_5'}(W_5)$, ${\sf Compass}_{\mathfrak{R}_5'}(W_5)$ has treewidth at most $t$.
Thus, if we apply the algorithm {\tt Homogeneous} of \autoref{@disreputable} with input $(r_6,a+b,a+b,d,t,G,A\cup A',W_5,\mathfrak{R}_5')$ we obtain a flatness pair $(W_6,\mathfrak{R}_6)$ of $G\setminus (A\cup A')$ of height $r_6$ that is $d$-homogeneous with respect to $2^{A\cup A'}$ and is a $W^*$-tilt of $(W_5,\mathfrak{R}_5')$ for some subwall $W^*$ of $W_5$.
Due to \autoref{@successively}, we know that $(W_6,\mathfrak{R}_6)$ is regular.
Since $|A\cup A'|\leq a+b$, for any set $X\subseteq V(G)$, $|A\setminus X|\leq a+b$.
Since $G\setminus S\in\exc(\Fcal)$ and \autoref{@psychoanalyse} implies that $\bid_{G\setminus (A\cup A'),W_6}(S)\leq k$,
by applying the algorithm {\tt Find-Irrelevant Vertex} of \autoref{@civilizing} with input $(k,a+b,G,A\cup A',W_6,\mathfrak{R}_6)$, we obtain a vertex $v$ such that $G\setminus S\in\exc(\Fcal)$ if and only if $G\setminus (S\setminus v)\in\exc(\Fcal)$.
It follows that $(G,k)$ and $(G\setminus v,k)$ are indeed equivalent instances of {\sc $\Fcal$-M-Deletion}.
\medskip

We now suppose that $(G,k)$ is a \no-instance.
In the beginning of Step~1, the algorithm either reports a \no-instance or finds a wall.
In the latter case, the algorithm either goes to Step~2 or Step~3.
If it runs Step~2, the previous paragraph justifies that the algorithm finds a vertex $v$ such that $(G\setminus v,k)$ is a \no-instance.
If the algorithm runs Step~3, then it either reports a \no-instance or recursively runs on instances $(G\setminus y,k-1)$.
If $(G\setminus y,k-1)$ is \yes-instance, then so is $(G,k)$. Thus, $(G\setminus y,k-1)$ is a \no-instance for every considered vertex $y$ and the algorithm always reports a \no-instance.
Hence, \autoref{@uncritical} follows.

\section{Solving {\sc \texorpdfstring{$\Fcal$}{F}-M-Elimination Distance} on tree decompositions}\label{@fanaticism}

In the rest of the paper (except for \autoref{@entwickltmg}) we focus on {\sc $\Fcal$-M-Elimination Distance}.
In order to design an algorithm for this problem and prove \autoref{@communicated} and \autoref{@swineherds}
, we follow the same scheme as for  {\sc $\Fcal$-M-Deletion}.
The first step of this strategy is to present a dynamic programming algorithm that will allow us to solve {\sc $\Fcal$-M-Elimination Distance} for instances of bounded treewidth in \FPT-time, namely \autoref{@achievements}.
This is the analogue of for {\sc $\Fcal$-M-Elimination Distance} of the corresponding result for  {\sc $\Fcal$-M-Deletion}, namely \autoref{@calculated}. The following theorem is a reformulation of \autoref{@achievements}.

\begin{theorem}\label{@unquestioned}
For every finite collection of graphs $\Fcal$, there exists an algorithm that, given a graph $G$ of treewidth at most $\tw$ and a non-negative integer $k$, decides whether $\ed_{\exc(\Fcal)}(G)\leq k$ in time $2^{\Ocal_{\ell_\Fcal}(\tw\cdot k+\tw\log \tw)}\cdot n$.
\end{theorem}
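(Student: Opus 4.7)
My plan is to design a dynamic-programming algorithm on a nice tree decomposition of $G$, extending the treedepth algorithm of Reidl, Rossmanith, Villaamil and Sikdar~\cite{ReidlRSS14afas} (which corresponds to the case $\Fcal=\{K_1\}$) by using the boundaried-graph representatives of~\cite{BasteST20acom} to compress every leaf subgraph of a partial elimination forest once its vertices have been ``forgotten''. As preprocessing, I will first apply~\autoref{@naturalism} to obtain, in time $2^{\Ocal(\tw)}\cdot n$, a tree decomposition of $G$ of width at most $2\tw+1$, and then apply~\autoref{@estclusire} to turn it into a nice tree decomposition $({\sf T},\beta,{\sf r})$ of the same width with $\Ocal(\tw\cdot n)$ nodes; by padding the root we may assume $\beta({\sf r})=\emptyset$.

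For each node $x\in V({\sf T})$ with $w:=|\beta(x)|\leq 2\tw+2$, the algorithm maintains a set $\Ccal(x)$ of \emph{characteristics}. A characteristic is a tuple $(F,R,\mu,\delta,\lambda)$ encoding the shape of a partial $\Fcal$-elimination forest of $G_x$ that is consistent with total height at most $k$. Here $(F,R)$ is a rooted forest whose internal nodes are all labeled by $\mu$ with pairwise-distinct vertices of $\beta(x)$; the map $\delta:V(F)\to\{0,\ldots,k\}$ records the true depth of each node in the elimination forest being built, so that compressed runs of already-forgotten ``anonymous'' separator vertices are accounted for without being stored explicitly; and $\lambda$ assigns to each leaf $\ell\in\leaf(F,R)$ a boundaried representative from $\Rcal_{\ell_\Fcal}^{t_\ell}$ (cf.~\autoref{@iinelstaai} and~\autoref{@encounters}), where $t_\ell$ is the number of bag-vertex ancestors of $\ell$ and the boundary is ordered along the root-to-$\ell$ path. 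Semantically, $(F,R,\mu,\delta,\lambda)\in\Ccal(x)$ will hold iff there exists an $\Fcal$-elimination forest of $G_x$ of height at most $k$ which, after compressing every maximal subtree rooted below the last bag-vertex on each root-to-leaf path down to its $\equiv_{\ell_\Fcal}$-class, induces exactly $(F,R,\mu,\delta,\lambda)$.

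The transitions will be (in order): (\textbf{leaf}) introduce the unique bag vertex at the root of a one-node tree, with depth $0$ and empty-content representative; (\textbf{introduce $v$}) for each child-characteristic, guess a node of $F$ on which to seat $v$, consistently with elimination-tree adjacency (every bag neighbor of $v$ must end up on the same root-to-leaf path) and without violating the height-$k$ bound; (\textbf{forget $v$}) finalize the placement of $v$, either by keeping $v$'s current internal node while erasing its $\mu$-label, or by dropping $v$'s tree node and merging it into a leaf $\ell$ whose root-to-$\ell$ path contains all current bag neighbors of $v$, in which case $\lambda(\ell)$ is updated to the $\equiv_{\ell_\Fcal}$-class of $\lambda(\ell)\bigoplus\mathbf{v}$ via~\autoref{@objectives}; (\textbf{join}) retain pairs of child-characteristics sharing identical $(F,R,\mu,\delta)$-skeletons, and combine the representatives of each pair of matched leaves by a single $\bigoplus$-operation followed by a normalization via~\autoref{@objectives}. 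A careful count yields $|\Ccal(x)|=2^{\Ocal_{\ell_\Fcal}(w\log w+wk)}$: the $wk$ summand accounts for depth annotations and skeleton choices over height-$k$ structures with at most $w$ branching vertices, and the $w\log w$ summand accounts for the representative labelings at the leaves via~\autoref{@encounters}.

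Each transition examines at most $|\Ccal(x)|^2$ pairs (the square only at join nodes), and each normalization call to~\autoref{@objectives} costs $2^{\Ocal_{\ell_\Fcal}(w\log w)}$; multiplied by the $\Ocal(\tw\cdot n)$ nodes of the nice tree decomposition and by a low-degree polynomial in $w$ for the introduce/forget guesses, the total running time will be $2^{\Ocal_{\ell_\Fcal}(\tw\cdot k+\tw\log\tw)}\cdot n$, as required. Since $\beta({\sf r})=\emptyset$, a characteristic at ${\sf r}$ represents a complete $\Fcal$-elimination forest of $G$ up to $\equiv_{\ell_\Fcal}$ at its leaves, so the algorithm accepts iff $\Ccal({\sf r})\neq\emptyset$. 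The hard part will be the correctness of the forget- and join-transitions: one must argue that replacing a leaf subgraph by an $\equiv_{\ell_\Fcal}$-representative preserves the property ``the final leaf subgraph lies in $\exc(\Fcal)$'' under \emph{every} future extension, and that at a join node the separation property of the tree decomposition guarantees that the two partial sides interact only through their shared boundary $\beta(x)$, which is exactly the regime in which $\equiv_{\ell_\Fcal}$ is well-behaved under $\bigoplus$.
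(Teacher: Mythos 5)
You follow the same route as the paper: approximate the treewidth via \autoref{@naturalism}, make the decomposition nice via \autoref{@estclusire}, and run a dynamic program whose states are height-annotated partial elimination forests of height at most $k$ in which the parts destined to become leaf parts are compressed into $\equiv_{\ell_\Fcal}$-representatives, with $2^{\Ocal_{\ell_\Fcal}(w\cdot k+w\log w)}$ states per bag and the same overall time bound; this is exactly the architecture of \autoref{@philosophers}--\autoref{@philanthropy}, and your preprocessing, counting, and acceptance condition match the paper's.

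There is, however, a concrete gap in the specification of your characteristic that makes the forget transition fail as described. You give each leaf $\ell$ of the skeleton a representative $\lambda(\ell)$ whose boundary is the set of bag-vertex \emph{ancestors} of $\ell$, and you defer the decision ``separator or member of a leaf part'' for a bag vertex $v$ to its forget step. For the representative machinery to apply, the boundary of a stored leaf part must consist of vertices of that part itself, namely the bag vertices lying \emph{inside} it: by the separation property of the tree decomposition these are the only vertices of the part that can ever receive new edges, and then every later operation is a gluing along the boundary or the removal of a boundary label, both of which $\equiv_{\ell_\Fcal}$ supports (this is how the paper defines $\Char_k$: the boundaried graph $\mathbf{R}$ is the disjoint union of the parts containing bag vertices, with the bag as boundary, and the forget procedure only deletes a boundary label before re-representing, cf.\ \autoref{@socialization}). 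Your deferral forces the ancestor-port convention, because the edges between $v$ and the already-compressed interior of $\ell$'s part must be remembered if $v$ may later be merged into it (otherwise the gluing $\lambda(\ell)\bigoplus\mathbf{v}$ has nothing to attach $v$ to); but then the other branch, in which $v$ is finalized as a separator, requires deleting that port from $\lambda(\ell)$ as a graph vertex, and $\equiv_{\ell_\Fcal}$ is defined only with respect to gluing along the boundary: neither the equivalence nor \autoref{@objectives} lets you recover a representative of ``the part with the dead port removed'' from the compressed data, while keeping the port inside the underlying graph corrupts the very test that the part lies in $\exc(\Fcal)$. Moreover, a bag vertex that ends up \emph{inside} a leaf part has no boundary position in your encoding, so edges from later-introduced vertices into that part cannot be processed. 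The repair is the paper's setup: fix the role of each bag vertex at introduce time (either a new skeleton node or insertion into a part with all its current neighbours on that part's boundary), store each part with boundary equal to its own bag vertices, never store edges between distinct parts (only the ancestor/descendant constraint), so that forget is a pure boundary-label removal; the connectivity bookkeeping of elimination trees (items 6 and 5 of the paper's introduce and join operations) must then also be added to your transitions.
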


According to \autoref{@theosophical}, $\td(G) \leq \tw(G)\cdot\log n$ for any graph $G$.
Since $\ed_{\exc(\Fcal)}(G)\leq \td(G)\leq \tw(G)\cdot\log n$, \autoref{@unquestioned} implies the existence of an \XP-algorithm
for {\sc $\Fcal$-M-Elimination Distance} parameterized by treewidth.

\begin{corollary}\label{@unwittingly}
For every finite collection of graphs $\Fcal$, there exists an algorithm
that,
given a graph $G$ of treewidth at most $\tw$, computes $\ed_{\exc(\Fcal)}(G)$ in time $n^{\Ocal_{\ell_\Fcal}(\tw^2)}$.
\end{corollary}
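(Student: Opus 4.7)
The plan is to obtain the corollary as a direct arithmetic consequence of \autoref{@unquestioned} combined with the treedepth--treewidth bound of \autoref{@theosophical}. First I would observe that, for any graph $G$, $\ed_{\exc(\Fcal)}(G) \leq \td(G) \leq \tw(G) \cdot \log n$, so the sought value of $k$ lies in the interval $[0, \tw \cdot \log n]$. To compute $\ed_{\exc(\Fcal)}(G)$ exactly, it then suffices to invoke the algorithm of \autoref{@unquestioned} repeatedly (either by incrementing $k$ from $0$, or by binary search on $k$ in $[0, \tw \log n]$) and return the smallest $k$ for which the answer is positive.

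The correctness is immediate from \autoref{@unquestioned}, so the only step is to bound the running time. For a fixed $k \leq \tw \cdot \log n$, the time of one call is
\[
2^{\Ocal_{\ell_\Fcal}(\tw \cdot k + \tw \log \tw)} \cdot n \ \leq\ 2^{\Ocal_{\ell_\Fcal}(\tw^2 \log n)} \cdot 2^{\Ocal_{\ell_\Fcal}(\tw \log \tw)} \cdot n \ =\ n^{\Ocal_{\ell_\Fcal}(\tw^2)} \cdot 2^{\Ocal_{\ell_\Fcal}(\tw \log \tw)} \cdot n.
\]
Since $\tw \leq n$, the factor $2^{\Ocal_{\ell_\Fcal}(\tw \log \tw)} = \tw^{\Ocal_{\ell_\Fcal}(\tw)}$ is bounded by $n^{\Ocal_{\ell_\Fcal}(\tw)}$ and is therefore absorbed in $n^{\Ocal_{\ell_\Fcal}(\tw^2)}$. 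Summing over the $\Ocal(\tw \log n)$ candidate values of $k$ multiplies by a polynomial factor in $n$, which is likewise absorbed, yielding the claimed total running time $n^{\Ocal_{\ell_\Fcal}(\tw^2)}$.

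There is essentially no combinatorial obstacle here, since all the difficulty was concentrated in the proof of \autoref{@unquestioned}; the mild point to check is simply that the upper bound $\tw \log n$ on $k$ translates the single-exponential dependence on $k$ of \autoref{@unquestioned} into a slice-wise polynomial dependence on $n$ with exponent $\Ocal_{\ell_\Fcal}(\tw^2)$. If one wished to avoid assuming that an optimal tree decomposition of width $\tw$ is available as input, one could first invoke the $2$-approximation of \autoref{@naturalism} in time $2^{\Ocal(\tw)} \cdot n$, which is again dominated by the main loop.
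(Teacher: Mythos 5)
Your proposal is correct and follows essentially the same route as the paper, which derives the corollary directly from \autoref{@unquestioned} via the bound $\ed_{\exc(\Fcal)}(G)\leq \td(G)\leq \tw(G)\cdot\log n$ of \autoref{@theosophical}, so that substituting $k=\Ocal(\tw\log n)$ into the running time $2^{\Ocal_{\ell_\Fcal}(\tw\cdot k+\tw\log\tw)}\cdot n$ yields $n^{\Ocal_{\ell_\Fcal}(\tw^2)}$. Your additional remarks on iterating or binary-searching over $k$ and on obtaining a tree decomposition via \autoref{@naturalism} are routine and do not change the analysis.
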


According to \autoref{@theosophical} again, $\tw(G) \leq \td(G)$ for any graph $G$.
Since we moreover have $\ed_{\exc(\Fcal)}(G)\leq \td(G)$, \autoref{@unquestioned} implies the existence of an \FPT-algorithm
for {\sc $\Fcal$-M-Elimination Distance} parameterized by treedepth.

\begin{corollary}\label{@mysterious}
For every finite collection of graphs $\Fcal$, there exists an algorithm
that,
given a graph $G$ of treedepth at most $\td$, computes $\ed_{\exc(\Fcal)}(G)$ in time $2^{\Ocal_{\ell_\Fcal}(\td^2)}\cdot n$.
\end{corollary}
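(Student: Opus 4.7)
The plan is to derive this corollary by directly combining \autoref{@unquestioned} with the two basic inequalities that \autoref{@theosophical} supplies, namely $\tw(G) \leq \td(G)$ and (together with the fact that any elimination forest witnessing $\td(G)\leq \td$ is in particular an $\Fcal$-elimination forest when $\Fcal$ is non-trivial) $\ed_{\exc(\Fcal)}(G) \leq \td(G) \leq \td$. The first inequality lets us feed the instance to the treewidth-based dynamic programming routine with parameter $\td$, while the second bounds the range of values of $k$ that we need to try in order to recover the exact value of $\ed_{\exc(\Fcal)}(G)$.

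First, I would compute a tree decomposition of $G$ of width $\Ocal(\td)$. Since $\tw(G)\leq \td$, the algorithm of \autoref{@naturalism}, invoked with parameter $\td$, returns in time $2^{\Ocal(\td)}\cdot n$ a tree decomposition $({\sf T},\beta)$ of $G$ of width at most $2\td+1$ and with $\Ocal(n)$ nodes. Next, I would run the algorithm of \autoref{@unquestioned} on $(G, 2\td+1, k)$ for each $k \in \{0, 1, \ldots, \td\}$, and return the smallest $k$ for which the answer is positive (equivalently, one can perform a binary search over this range). The fact that such a $k$ exists is precisely what $\ed_{\exc(\Fcal)}(G)\leq \td$ guarantees, so this procedure outputs the exact value of $\ed_{\exc(\Fcal)}(G)$.

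For the running time, each call to the algorithm of \autoref{@unquestioned} with $\tw=\Ocal(\td)$ and $k\leq \td$ takes time $2^{\Ocal_{\ell_\Fcal}(\tw\cdot k + \tw\log \tw)}\cdot n = 2^{\Ocal_{\ell_\Fcal}(\td^2 + \td\log \td)}\cdot n = 2^{\Ocal_{\ell_\Fcal}(\td^2)}\cdot n$. Summing over the at most $\td+1$ values of $k$ only multiplies this by a linear factor in $\td$, which is absorbed in the exponent. Adding the cost $2^{\Ocal(\td)}\cdot n$ of computing the tree decomposition yields the claimed overall complexity $2^{\Ocal_{\ell_\Fcal}(\td^2)}\cdot n$.

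There is no real obstacle here: the corollary is essentially a black-box consequence of \autoref{@unquestioned} plus \autoref{@theosophical}, using the latter once in each direction (to bound $\tw$ in terms of $\td$ for the dynamic programming, and to bound the parameter $k$ for the enumeration).
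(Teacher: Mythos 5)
Your proposal is correct and matches the paper's own derivation: the corollary is obtained exactly by plugging $\tw(G)\leq\td(G)$ and $\ed_{\exc(\Fcal)}(G)\leq\td(G)$ (both from \autoref{@theosophical}) into \autoref{@unquestioned} and trying the at most $\td+1$ values of $k$. Your extra invocation of \autoref{@naturalism} is harmless but not needed, since the algorithm of \autoref{@unquestioned} already computes its own (nice) tree decomposition internally.
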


Our algorithm takes inspiration from the dynamic programming algorithm of Reidl,  Rossmanith,  Villaamil, and  Sikdar~\cite{ReidlRSS14afas} for treedepth.

\begin{proposition}[\cite{ReidlRSS14afas}]\label{@polytheism}
Given a graph $G$, a tree decomposition of $G$ of width $w$, and an integer $k$, there is an algorithm that decides whether $\td(G)\leq k$ in time $2^{\Ocal(k\cdot w)}\cdot n$.
\end{proposition}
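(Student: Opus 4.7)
The plan is to do dynamic programming on a nice tree decomposition of $G$, in the spirit of standard treewidth algorithms but with a more elaborate state that encodes partial elimination trees. First I would invoke \autoref{@estclusire} to convert the given tree decomposition into a nice one of the same width $w$ and $\Ocal(w\cdot n)$ nodes. I may also assume $w \leq k$, since $\tw(G) \leq \td(G)$ implies that if $w > k$ the algorithm may immediately reject.

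For each node $q$ of the nice tree decomposition, with bag $B := \beta(q)$, the DP table records all ``shapes'' of partial elimination forests of $G_q$ that could be extended to a full elimination forest of $G$ of height at most $k$. A shape is a rooted tree $T$ whose nodes are the bag vertices of $B$ together with auxiliary virtual nodes standing for forgotten ancestors that still have descendants in $B$, subject to the height constraint that every bag-vertex node lies at depth at most $k-1$. Every virtual node lying on a non-branching chain can be collapsed into a weighted edge with weight in $[0,k]$, so the compressed skeleton has $\Ocal(w)$ nodes. Counting: there are $2^{\Ocal(w\log w)}$ rooted labelled trees on $\Ocal(w)$ nodes and $2^{\Ocal(w\log k)}$ edge-weight assignments, giving $2^{\Ocal(w\log(wk))}$ shapes in total; under the assumption $w\leq k$, this is $2^{\Ocal(wk)}$.

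The transitions mirror the four nice tree decomposition node types. At a leaf node, the unique shape is the single-vertex rooted tree. At an introduce node, I would enumerate each child shape together with each possible location for the new bag vertex inside the shape (a new root, a child of an existing node, or subdividing some weighted edge). At a forget node, I would turn the forgotten bag vertex into a virtual node, compressing the surrounding non-branching chain, or drop the branch entirely if the vertex had no active descendants. At a join node, I would iterate over pairs of child shapes and keep the pairs that agree on the bag-vertex skeleton, gluing their virtual ancestor structures compatibly. Each transition processes $2^{\Ocal(wk)}$ shapes per child (pairs for joins), yielding a total running time of $2^{\Ocal(wk)}\cdot n$. At the root bag (a single vertex in a nice tree decomposition) we accept iff a valid shape is recorded.

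The hard part will be pinning down the shape formally so that it is both a sufficient statistic for future decisions -- two partial elimination forests with the same shape admit exactly the same extensions -- and does not over-count. In particular, the virtual nodes must faithfully distinguish a forgotten ancestor shared by two bag vertices from two independent forgotten ancestors at the same depth, and the join transition must correctly decide which pairs of virtual chains to merge versus keep disjoint. Once the characteristic is locked in, correctness of the transitions and the final bound follow by routine bookkeeping.
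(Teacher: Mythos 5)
Your overall plan---dynamic programming over a nice tree decomposition whose states are traces of partial elimination forests restricted to the bag---is the right family of ideas (it is what the cited algorithm of Reidl et al.\ does, and what \autoref{@fanaticism} of this paper generalizes), but the compression step that produces your state count is exactly where the proposal breaks. Collapsing a non-branching chain of forgotten ancestors into a single weighted edge discards the heights of the already-forgotten subtrees hanging at the individual positions of that chain, and this information is part of the sufficient statistic. Concretely, take a bag consisting of a single vertex $x$ with two forgotten ancestors $a$ (the root) and $b$: in one partial decomposition $b$ also carries an already-forgotten pendant branch that makes the subtree rooted at $a$ have height exactly $k$, in another the pendant branch is a single vertex. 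Both have the same compressed shape ($x$ below a chain of weight $2$), yet in the first state no further vertex may ever be placed above $b$ (any later introduce, or any vertex interleaved from the other side of a join, would push the hidden leaf below depth $k$), while in the second almost $k$ further ancestors can still be inserted. So two states that you identify admit different sets of extensions, the join transition and the final height test cannot be evaluated from your shapes, and the algorithm as described is unsound. What the cited algorithm and the paper's generalization keep is precisely the height value of every node that survives on each root-to-bag-vertex path after cropping---equivalently, a subset of $[0,k]$ of occupied levels per path (this is the dominant factor in the counting of \autoref{@belligerent})---and that is what makes the table size $2^{\Theta(k\cdot w)}$ rather than your $2^{\Ocal(w\log (wk))}$. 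As a sanity check, if a characteristic of size $2^{\Ocal(w\log(wk))}$ were sufficient, one would obtain a $k^{\Ocal(\tw)}\cdot n$ algorithm for treedepth, which together with $\td(G)\leq \tw(G)\cdot\log n$ would essentially settle the question---stated as open in this very paper---of whether computing treedepth is \FPT\ parameterized by treewidth.

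Two further remarks. First, your justification for assuming $w\leq k$ is not valid as stated: the supplied decomposition may have width larger than $k$ even when $\td(G)\leq k$, so you cannot reject; you would instead have to recompute a decomposition of width $\Ocal(k)$ (e.g., via \autoref{@naturalism}), and note that without $w\leq k$ your bound $2^{\Ocal(w\log(wk))}$ does not reduce to $2^{\Ocal(wk)}$. Second, be aware that this paper does not prove the proposition at all---it imports it from~\cite{ReidlRSS14afas}---and the place where the full bookkeeping you defer is actually carried out is the paper's proof of the generalization (\autoref{@unquestioned}), whose characteristic, procedures, and exchangeability argument (\autoref{@manuscripts}) show what "pinning down the shape" really requires.
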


If $\Fcal=\{K_1\}$, elimination distance reduces to treedepth. Recall that {\sc $\{K_{1}\}$-M-Elimination Distance} is the problem asking whether $\td(G)\leq k$, which admits an algorithm in time $2^{\Ocal(k\cdot w)}\cdot n$ because of \autoref{@polytheism}.
Therefore, we may assume throughout this section that $\Fcal$ is non-trivial in order to have the useful property that a graph with a single vertex belongs to $\Gcal=\exc(\Fcal)$. This will simplify the algorithm.
Moreover, the elimination distance to $\exc(\Fcal)$ of a disconnected graph is the maximum of the elimination distance of its connected components, and therefore, we may assume that the considered graphs and boundaried graphs are connected.\smallskip

\sugar{Compared to the approach of~\cite{ReidlRSS14afas}, in order to deal with a general minor-closed graph class $\Gcal$, we use the framework introduced in~\cite{BasteST20acom} based on the notion of representatives of an appropriately defined equivalence relation on boundaried graphs. Intuitively, since the ``leaves'' of the desired elimination tree are graphs in the minor-closed family $\Gcal$, it will be possible to encode those graphs via their corresponding representatives. The fact that the boundary size that we need to consider is bounded follows from the description of the algorithm and its analysis.}

\sugar{In order to describe our dynamic programming algorithm, we have to describe its corresponding tables,
encode ``partial elimination sets'', and show how to calculate this information using a nice tree decomposition of the input graph.
For this reason, in  \autoref{@significant} we start by giving some additional notations on functions on sets and in
\autoref{@transposition} we define annotated trees. Annotated trees are labeled rooted trees that come together
with a boundaried graph such that the annotated nodes of the tree are mapped to the
vertices of the boundaried graph with the same label.
This notion is used in \autoref{@philosophers} in order to define the {\sl characteristic}
of a boundaried graph, which intuitively encodes how partial elimination trees can be present inside the boundaried graph.
Forget, introduce, and join procedures that shall be used in the dynamic program on nice tree decompositions
are presented in \autoref{@psychologi}.
In \autoref{@philanthropy}, we present the dynamic program and prove its correctness.
We conclude this section with \autoref{@metropoltheater}, where we show that boundaried graphs
with the same characteristic can be exchanged, i.e., give graphs of the same elimination distance to $\Fcal$ when ``glued'' to the same boundaried graph. This latter result will also be used in \autoref{@participant}.
}

\subsection{Some additional notation}\label{@significant}

We denote by $\im(f)$ the image of a function $f$ and by $\Ker(f)$ its kernel, i.e., the elements whose image by $f$ is 0.
Given two sets $A$ and $B$, two subsets $A'\subseteq A$ and $B'\subseteq B$, a function $f:A'\to B'$, $a\in A$, and $b\in B$, $f\oplus[a\mapsto b]$ is the function that maps $a$ to $b$ and every $a'\in A'\setminus\{a\}$ to $f(a')$.
If $f:A\to B$, we denote by $f|_{A'}$ the restriction of $f$ to $A'$.
When $f$ is a bijection, $f|_{A'}:A'\to\im(A')$ is seen as a bijection.
$(i\leftrightarrow j)$ denotes the transposition of $i$ and $j$, for $i,j$ in some set $I\subseteq\bN$.

\subsection{Annotated trees}\label{@transposition}

We proceed to define annotated trees, which we will use to codify the tables of our dynamic program.

\subparagraph{Annotated trees.} An \emph{annotated tree} is a tuple $\hat{T}=(T,r,h,{\bf R},f)$, where $(T,r)$ is a rooted tree, $h:V(T)\to\bN$, ${\bf R}=(R,B,\phi)$ is a boundaried graph, and $f:[|B|]\to V(T)$.
See \autoref{@comprehends} for an illustration of an annotated tree.
We stress that different integers in $[|B|]$ can be mapped, via $f$, to the same node of $T$.
The \emph{trivial annotated tree}, denoted by $\hat{\un}$, is $(T,r,h,\un,f)$ where $T$ is the rooted tree with a single node $r$,
$h$ is the constant function 0, $\un$ is the boundaried graph with one single vertex that is also part of the boundary, and $f$ maps 1 to $r$.
The \emph{height} of an annotated tree is $h(r)$.
Given an annotated tree $\hat{T}=(T,r,h,(R,B,\phi),f)$, we refer to $(T,r)$ as its {\em rooted tree}.
Given an annotated tree $\hat{T}=(T,r,h,(R,B,\phi),f)$ and a permutation $\sigma$ of $[|B|]$, we use $\sigma(\hat{T})$ to denote $(T,r,h,(R,B,\sigma\circ\phi),f\circ\sigma^{-1})$.

\begin{figure}[ht]
	\centering
\includegraphics[width=0.7\textwidth]{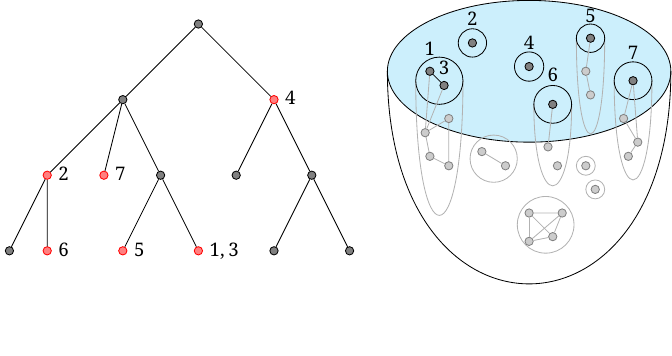}
\vspace{-1cm}

\caption{An annotated tree made of a rooted tree (left) and a boundaried graph (right).
The numbers in the left figure correspond to the pre-images of $f$ for the nodes of $V(T)$ and the numbers on the right figure correspond to the images of $\phi$.
The function $h$ that gives a value to each node of the tree is not represented.
}
\label{@comprehends}
\end{figure}

We also define the following operations on annotated trees, \sugar{which will be used to combine the tables of the dynamic programming algorithm. The first one is inspired by a similar operation introduced in~\cite{ReidlRSS14afas}.}

\subparagraph{Crop operation.}
Given an annotated tree $\hat{T}=(T,r,h,(R,B,\phi),f)$, the \emph{crop operation}, denoted by $\crop(\hat{T})$, outputs the annotated tree obtained from $\hat{T}$ by iteratively removing the leaves of $T$ that are not in $\im(f)$.
Given a set $\Acal$ of annotated trees, $\crop(\Acal):=\bigcup_{\hat{T}\in \Acal}\crop(\hat{T})$.

\subparagraph{Representation operation.}
Given an annotated tree $\hat{T}=(T,r,h,(R,B,\phi),f)$, the \emph{representation operation}, denoted by $\rep(\hat{T})$, outputs the annotated tree $\hat{T}'=(T,r,h,(R',B,\phi),f)$ constructed as follows.
For each $v\in \im(f)$, let $B_v:=\phi^{-1}\circ f^{-1}(v)$, let $\sigma_v:f^{-1}(v)\to[|B_v|]$ be a bijective function, and let $R_v$ be the union of the connected components of $R$ containing $B_v$.
If there is a node $v\in \im(f)$ such that $R_v\notin\exc(\Fcal)$, then $R':=K_{s_\Fcal}$.
Otherwise, for each $v\in \im(f)$, let $(R_v',B_v,\sigma_v\circ\phi|_{B_v})\in\Rcal_{\ell_\Fcal}^{|B_v|}$ be the representative of $(R_v,B_v,\sigma_v\circ\phi|_{B_v})$ for the equivalence relation $\equiv_{\ell_\Fcal}$.
Then $R'=\bigcup_{v\in \im(f)}R_v'$.
Intuitively, if there is a node $v\in \im(f)$ such that $R_v\notin\exc(\Fcal)$, to store this information it suffices to set $R':=K_{s_\Fcal}$, while otherwise, we keep for each $R_v$ (in fact, for the boundaried version of $R_v$) its representative.

An example of the crop and representation operation is give in \autoref{@unconscious}.
Observe that $\rep(\hat{\un})=\hat{\un}$ since this is a minimum-sized representative and since we make the assumption that $\Fcal$ is non-trivial.
Given a set $\Acal$ of annotated trees, $\rep(\Acal):=\bigcup_{\hat{T}\in \Acal}\rep(\hat{T})$.

\begin{figure}[ht]
	\centering
\includegraphics[width=0.6\textwidth]{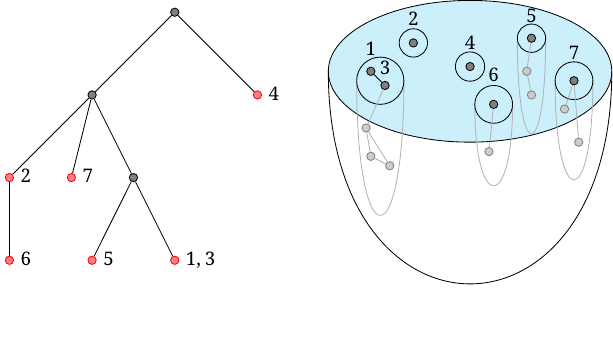}
\vspace{-1cm}

	\caption{The crop and representation operation applied to the annotated tree of \autoref{@comprehends}. The unlabeled leaves of the tree are iteratively removed. A representative of each component attached to the boundary of the boundaried graph is kept.
	}
	\label{@unconscious}
\end{figure}

\subparagraph{Filter operation.}
Given a set $\Acal$ of annotated trees and a positive integer $k$, the \emph{filter operation}, denoted by ${\sf filter}_k$,
outputs the set of annotated trees in $\Acal$ with height at most $k$.\smallskip

Note that the crop, representation, and filter operations are commutative since they do not modify the same objects.
For more simplicity, we define $\Mcal_k={\sf filter}_k\circ\rep\circ\crop$.
We stress that $\Mcal_k$ is an operation acting on {\sl sets} of annotated trees.

\subsection{Characteristic of a boundaried graph}\label{@philosophers}

In this subsection we define the {\sl characteristic} of a boundaried graph that shall be computed by the dynamic program in \autoref{@philanthropy}. This characteristic will consist of a set of annotated trees with some additional properties.
In order to present this definition, we first define the complete characteristic of a boundaried graph, that is a slightly more complicated way to see $\Fcal$-elimination trees with some distinguished nodes.

\subparagraph{Complete characteristic of a boundaried graph.}
Given a connected boundaried graph ${\bf G}=(G,X,\rho)$, the \emph{complete characteristic} of ${\bf G}$, denoted by $\fullchar({\bf G})$, is the set of annotated trees $\hat{T}=(T,r,h,{\bf R},f)$ such that
\begin{itemize}
	\item $|\im(f)|=|X|$,
	\item there exists a function $\chi:V(T)\to 2^{V(G)}$ such that $(T,\chi,r)$ is an $\Fcal$-elimination tree of $G$ and for $x\in X$, $x\in\chi\circ f\circ\rho(x)$,
	\item there exists an isomorphism $\sigma$ between ${\bf R}$ and $(\bigcup_{v\in\im(f)} G[\chi(v)],X,\rho)$, and
	\item $h$ is the height function $\height_{T,r}$.
\end{itemize}

$(\chi,\sigma)$ is called the \emph{witness pair} of $\hat{T}$ with respect to ${\bf G}$.
Since $(T,\chi,r)$ is an $\Fcal$-elimination tree, it is straightforward to see that for any boundaried graph ${\bf G}$ with underlying graph $G$, the minimum height of an annotated tree in $\fullchar({\bf G})$ is $\ed_{\exc(\Fcal)}(G)$.

\subparagraph{Characteristic of a boundaried graph.}
Let ${\bf G}=(G,X,\rho)$ be a boundaried graph and $k$ be an integer. The \emph{characteristic} of ${\bf G}$, denoted by $\Char_k({\bf G})$, is the set $\Mcal_k(\fullchar({\bf G}))$.

\begin{lemma}\label{@ineluctable}
Given a boundaried graph ${\bf G}=(G,X,\rho)$ with $X\neq\emptyset$ and an integer $k$, the elimination distance of $G$ to $\exc(\Fcal)$ is the minimum height of an annotated tree in $\Char_k({\bf G})$ if $\ed_{\exc(\Fcal)}(G)\leq k$, and $\Char_k({\bf G})=\emptyset$ otherwise.
\end{lemma}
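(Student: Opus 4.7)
The plan is to prove both parts of the lemma by tracking the stored height value $h(r)$ of annotated trees through the three operations composing $\Char_k({\bf G}) = \Mcal_k(\fullchar({\bf G}))$. The key observation is that the height of an annotated tree $(T, r, h, {\bf R}, f)$ is defined to be the value $h(r)$ \emph{stored} in the annotation rather than recomputed from the current tree $T$, and both $\crop$ and $\rep$ preserve this value. Indeed, $\rep$ rewrites only the boundaried component ${\bf R}$, leaving $(T, r, h, f)$ intact; and $\crop$ iteratively deletes leaves of $T$ outside $\im(f)$, but the hypothesis $X \neq \emptyset$ forces $\im(f) \neq \emptyset$, so the root $r$ is always an ancestor of some node of $\im(f)$ and is therefore never pruned. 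Hence $h(r)$ survives both operations unchanged, and ${\sf filter}_k$ simply retains the annotated trees with $h(r) \leq k$.

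For the upper bound in the case $\ed_{\exc(\Fcal)}(G) \leq k$, I would start from a minimum-height $\Fcal$-elimination tree $(T, \chi, r)$ of $G$ and build $\hat T_0 := (T, r, \height_{T,r}, {\bf R}, f)$, where $f$ sends each $i \in [|X|]$ to the unique node whose $\chi$-bag contains $\rho^{-1}(i)$, and ${\bf R}$ is the boundaried graph naturally associated to $(\bigcup_{v \in \im(f)} G[\chi(v)], X, \rho)$. By construction $\hat T_0 \in \fullchar({\bf G})$ with $h(r) = \ed_{\exc(\Fcal)}(G) \leq k$. Because every internal node holds a single vertex (which lies in $\exc(\Fcal)$ since $\Fcal$ is non-trivial) and every leaf holds a subgraph already in $\exc(\Fcal)$, the $\rep$ step never triggers its $K_{s_\Fcal}$-fallback, so $\Mcal_k(\hat T_0) \in \Char_k({\bf G})$ and has height $h(r) = \ed_{\exc(\Fcal)}(G)$. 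Conversely, if $\ed_{\exc(\Fcal)}(G) > k$, every $\hat T_0 \in \fullchar({\bf G})$ already has $h(r) \geq \ed_{\exc(\Fcal)}(G) > k$; since $\crop$ and $\rep$ preserve $h(r)$, ${\sf filter}_k$ discards all such trees and $\Char_k({\bf G}) = \emptyset$.

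For the matching lower bound, any $\hat T \in \Char_k({\bf G})$ is of the form $\Mcal_k(\hat T_0)$ for some $\hat T_0 \in \fullchar({\bf G})$, and by the very definition of $\fullchar$ there is a witness pair making the underlying rooted tree of $\hat T_0$ into a genuine $\Fcal$-elimination tree of $G$ of height exactly $h(r)$; thus $h(r) \geq \ed_{\exc(\Fcal)}(G)$. The main care point I anticipate is merely verifying that $\crop$ and $\rep$ always produce a well-formed annotated tree, i.e., that $|\im(f)| = |X|$ is preserved and that $f$ still maps into the vertex set of the pruned tree; both hold automatically because $f$ is never modified and only leaves outside $\im(f)$ are ever removed from $V(T)$. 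Combining the two directions with the emptiness argument yields the lemma.
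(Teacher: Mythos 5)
Your proof is correct and follows essentially the same route as the paper's: the key point in both is that, since $X\neq\emptyset$ implies $\im(f)\neq\emptyset$, the crop operation never deletes the root, and neither $\crop$ nor $\rep$ alters the stored heights, so ${\sf filter}_k$ alone decides membership and the minimum height over $\Char_k({\bf G})$ coincides with that over $\fullchar({\bf G})$, which is $\ed_{\exc(\Fcal)}(G)$. You merely spell out explicitly (the construction of an element of $\fullchar({\bf G})$ from a minimum-height elimination tree, and the harmless non-triggering of the $K_{s_\Fcal}$ fallback) what the paper delegates to the remark preceding the lemma.
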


\begin{proof}
Let ${\bf G}=(G,X,\rho)$ be a boundaried graph with $X\neq\emptyset$ and let $\hat{T}\in \fullchar({\bf G})$.
Since $X\neq\emptyset$, the crop operation will not remove the root of the underlying tree of $\hat{T}$.
Moreover, neither the crop operation nor the representation operation change the height of the nodes that stay in the tree.
So the height of $\rep\circ \crop(\hat{T})$ is equal to the height of $\hat{T}$.
\end{proof}

We now prove that the size of the characteristic of a boundaried graph is upper-bounded by a function of its boundary size and $k$.

\begin{lemma}\label{@belligerent}
There exists a function $\newfun{@caricatures}:\bN^2\to\bN$ such that,
given two integers $k$ and $w$, if ${\bf G}=(G,X,\rho)$ is a boundaried graph with $|X|\leq w$, then $|\Char_k({\bf G})|\leq\funref{@caricatures}(w,k)$.
Moreover, $\funref{@caricatures}(w,k)=2^{\Ocal_{\ell_\Fcal}(w\cdot k+w\log w)}$.
\end{lemma}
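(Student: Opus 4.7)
\medskip

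The plan is to bound $|\Char_k(\mathbf{G})|$ by counting distinct tuples $(T,r,h,\mathbf{R},f) \in \Char_k(\mathbf{G})$, using a decomposition of the underlying tree $T$ into a ``skeleton'' plus short ``expansions'' in order to avoid a stray $\log k$ factor.

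First, I would bound the size of the cropped tree $T$. Since every leaf of $T$ lies in $\im(f)$ and $|\im(f)| \leq |B| \leq w$, the tree $T$ has at most $w$ leaves. Moreover, $h$ inherits the strict inequality $h(v) > h(u)$ for every child $u$ of $v$ from the height function of the uncropped $\Fcal$-elimination tree, and the filter operation enforces $h(r) \leq k$; hence every root-to-leaf path of $T$ has at most $k+1$ nodes, so $|V(T)| \leq w(k+1)$.

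Second, I would introduce the skeleton. Set $V^{**} := \im(f) \cup \{r\} \cup \{v \in V(T) : v \text{ has at least two children in } T\}$; since the number of branching nodes in any rooted tree is at most the number of leaves minus one, $|V^{**}| \leq 2w$. Let $S$ be the rooted tree on $V^{**}$ inheriting the ancestor relation from $T$. Then the tuple $(T,r,h,\mathbf{R},f)$ is uniquely determined by: (a) the rooted tree $S$ together with, for each node, the information of whether it lies in $\im(f)$ and, if so, which element of $[|B|]$ it is the image of; (b) the values of $h$ on $V^{**}$; (c) for each skeleton edge $(u,v)$ with $u$ parent of $v$, the set of intermediate $h$-values in $\{h(v)+1,\ldots,h(u)-1\}$ used to expand $(u,v)$ into a unary path in $T$; and (d) the boundaried graph $\mathbf{R}$.

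Third, I would bound each ingredient:
\begin{itemize}
    \item For (a), rooted unordered trees on at most $2w$ nodes with a $[|B|]$-partial labeling of nodes contribute $2^{O(w \log w)}$ options.
    \item For (b), $h|_{V^{**}} \in \{0,\ldots,k\}^{V^{**}}$ contributes $(k+1)^{2w} = 2^{O(w \log k)} \leq 2^{O(wk)}$ options.
    \item For (c), each skeleton edge $(u,v)$ admits at most $2^{h(u) - h(v) - 1} \leq 2^k$ choices of intermediate $h$-sequence, and there are at most $2w-1$ skeleton edges, yielding $2^{O(wk)}$ total.
    \item For (d), by the representation operation, either $\mathbf{R}$ encodes $K_{s_\Fcal}$, or it is a disjoint union of representatives in $\Rcal_{\ell_\Fcal}^{|B_v|}$ over $v \in \im(f)$ with $\sum_v |B_v| = |B| \leq w$; by \autoref{@encounters} this gives $\prod_{v \in \im(f)} 2^{O_{\ell_\Fcal}(|B_v| \log |B_v|)} \leq 2^{O_{\ell_\Fcal}(w \log w)}$ options.
\end{itemize}
Multiplying these four bounds yields $|\Char_k(\mathbf{G})| \leq 2^{O_{\ell_\Fcal}(wk + w \log w)}$, from which the function $\funref{@caricatures}$ with the claimed dependency is extracted.

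The main obstacle is getting the $wk$ term right rather than $wk \log k$: a naive per-node encoding of $h \in \{0,\ldots,k\}$ across the $\Theta(wk)$ nodes of $T$ would cost $\log k$ bits per node and give only $2^{O(wk \log k)}$. The skeleton/expansion decomposition sidesteps this by observing that only $O(w)$ nodes are ``structurally relevant'' (labeled nodes plus branching points plus the root), while between two consecutive relevant nodes the $h$-trajectory is a strictly decreasing sequence inside an interval of size at most $k$, which is encoded by a mere subset of that interval, costing $O(k)$ bits per skeleton edge and absorbing the would-be $\log k$ factor.
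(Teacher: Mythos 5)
Your proof is correct and takes essentially the same approach as the paper: the crucial idea in both is that all leaves of the cropped tree lie in $\im(f)$, so only $\Ocal(w)$ nodes are structurally relevant, and the strictly decreasing heights along the unary paths between them are encoded as subsets of $[0,k]$, which is exactly how the paper avoids the $\log k$ factor (it just organizes this count incrementally, attaching one labeled path per element of $\im(f)$, instead of via an explicit skeleton). The bound on the boundaried-graph part ${\bf R}$ via \autoref{@encounters} is identical in both arguments.
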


\begin{proof}
Let $\hat{T}=(T,r,h,{\bf R},f)$ be an annotated tree in $\Char_k({\bf G})$. Let $l:=|X|$.
Let $x_1,...,x_l$ be an ordering of the nodes in $\im(f)$ such that if $x_i\in \anc_{T,r}(x_j)$, then $i<j$.
Without loss of generality, we suppose that $f(i)=x_i$ for $i\in[l]$ (this is true up to a permutation of $[l]$).
Let $f_i$ be the restriction of $f$ to $[i]$, let $T_i$ be the tree obtained from $T$ by iteratively removing the leaves not in $\im(f_i)$, and let $h_i$ be the restriction of $h$ to $V(T_i)$.
Note that $\im(h_i)\subseteq[0,k]$ and $T_i$ is a tree with at most $i$ leaves because the leaves of $T$ are in $\im(f_i)$ and $|\im(f_i)|=i$.
So $T_i$ has at most $i\cdot (k+1)$ nodes.

We set $(T_0,h_0,f_0)$ to be the empty triple.
Let us bound the number of triples $(T_i,h_i,f_i)$ that can be constructed from $(T_{i-1},h_{i-1},f_{i-1})$ for $i\in[l]$.
For $i\in [l]$, we can construct $(T_i,h_i,f_i)$ from $(T_{i-1},h_{i-1},f_{i-1})$ by choosing a node of $T_{i-1}$ (if it exists) and adding a path of length at most $k$ with leaf $x_{i}$ (all nodes in this path are new, except from $x_i$).
We consider the function $h_i$ that has the same values as $h_{i-1}$ on $V(T_{i-1})$ and values in $[0,k]$ on the new path such that the value of $h_{i}$ strictly increases from a leaf to the root $r$.
Observe that the value of $h_{i}$ on the new path is a subset of $2^{[k+1]}$.
Therefore,
the number of different triples $(T,h,f)$ is at most
\[ \prod_{i=1}^w i(k+1)2^{k+1}=w!(k+1)^w2^{w(k+1)}\leq 2^{w\log w+w\log (k+1)+w(k+1)}.\]
Since  $\hat{T}=(T,r,h,{\bf R},f)$ is an annotated tree in $\Char_k({\bf G})$, it holds that ${\bf R}$ is the union of $l$ representatives ${\bf R}_i\in\Rcal_{\ell_\Fcal}^{w_i}$ for $i\in[l]$ where $l=|\im(f)|$, such that $\sum_{i=1}^l w_i=l\leq w$.
By \autoref{@encounters}, $|\Rcal_{\ell_\Fcal}^{w_i}|=2^{\Ocal_{\ell_\Fcal}(w_i \log w_i)}$.
So the number of ways to construct ${\bf R}$ is bounded by
\[\prod_{i=1}^l 2^{\Ocal_{\ell_\Fcal}(w_i \log w_i)}=2^{\Ocal_{\ell_\Fcal}(\sum_{i=1}^l w_i \log w)}=2^{\Ocal_{\ell_\Fcal}(w \log w)}.\]
Hence, we obtain the desired result.
\end{proof}

\subsection{The procedures}\label{@psychologi}

We define here the procedures that will be used in the dynamic programming algorithm.
Given a nice tree decomposition $\Tcal$ of a graph $G$, we want to define forget, introduce, and join procedures to obtain the characteristic of $G_v$ for each internal node $v$ in $\Tcal$, given the characteristics of $G_{v'}$ for each child $v'$ of $v$.
Before defining the procedures for the characteristics, we define the procedures for the complete characteristics.

\subsubsection{Forget procedure}

With the forget procedure, given the characteristic of a boundaried graph, we want to compute the characteristic of the boundaried graph obtained by removing a vertex from the boundary.

\subparagraph{Complete forget procedure.}
The complete forget procedure applied on the annotated tree $(T,r,h,{\bf R},f)$ corresponds to removing the vertex with the largest label from the boundary of ${\bf R}$.
More formally, given an annotated tree $\hat{T}=(T,r,h,(R,B,\phi),f)$, the \emph{complete forget procedure}, denoted by $\forget^*(\hat{T})$, outputs the annotated tree $\hat{T}'=(T,r,h,(R,B',\phi|_{B'}),f|_{[|B'|]}))$, where $B'=B\setminus\phi^{-1}(|B|)$.
Given a set $\Acal$ of annotated trees, $\forget^*(\Acal):=\bigcup_{\hat{T}\in \Acal}\forget^*(\hat{T})$.

\begin{lemma}\label{@nationalistic}
Let $G$ be a graph, $({\sf T},\beta,{\sf r})$
be a nice tree decomposition of $G$, $v$ be a forget node of ${\sf T}$ with child $v'$ and forgotten vertex $x$, $\rho:\beta(v)\to[|\beta(v)|]$ be a bijection, and $\rho':=\rho\oplus[x\mapsto|\beta(v')|]$.
Let $\Acal:=\fullchar(G_v,\beta(v),\rho)$ and $\Acal':=\fullchar(G_{v'},\beta(v'),\rho')$.
Then $\Acal=\forget^*(\Acal')$.
\end{lemma}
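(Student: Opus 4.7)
The plan is to establish both inclusions $\forget^*(\Acal')\subseteq \Acal$ and $\Acal\subseteq \forget^*(\Acal')$ separately. Throughout, the key structural fact is that since $v$ is a forget node with child $v'$, we have $\beta({\sf T}_v)=\beta({\sf T}_{v'})$ and hence $G_v=G_{v'}$; moreover, $\rho'$ coincides with $\rho$ on $\beta(v)$ and assigns the unique extra label $|\beta(v')|$ to the forgotten vertex $x$. Observe also that, for any annotated tree in $\fullchar$ of a boundaried graph with boundary $X$, the function $f\colon[|B|]\to V(T)$ is injective (since $|B|=|X|$ and $|\im(f)|=|X|$); this will be used in both directions.

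For the first inclusion, I would fix $\hat{T}'=(T,r,h,(R,B',\phi'),f')\in \Acal'$ with a witness pair $(\chi',\sigma')$, set $\hat{T}:=\forget^*(\hat{T}')=(T,r,h,(R,B,\phi),f)$ with $B:=B'\setminus\phi'^{-1}(|B'|)$, $\phi:=\phi'|_B$, and $f:=f'|_{[|B|]}$, and propose the pair $(\chi',\sigma'|_B)$ as a witness that $\hat{T}\in\Acal$. The elimination tree condition transfers verbatim because $G_v=G_{v'}$. For $y\in\beta(v)$, we have $\rho(y)=\rho'(y)\in[|B|]$, so $y\in\chi'\circ f'\circ\rho'(y)=\chi'\circ f\circ\rho(y)$. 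The cardinality condition $|\im(f)|=|\beta(v)|$ follows from the injectivity of $f'$ together with $|B|=|B'|-1$. The isomorphism condition is obtained by restricting $\sigma'$ to the reduced boundary.

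For the reverse inclusion, I would fix $\hat{T}=(T,r,h,(R,B,\phi),f)\in\Acal$ with witness pair $(\chi,\sigma)$ and construct $\hat{T}'\in\Acal'$ with $\forget^*(\hat{T}')=\hat{T}$. Keeping $T$, $r$, $h$, and $R$ unchanged, let $b_x:=\sigma^{-1}(x)$ (a vertex of $R$ which lies outside $B$ since $\sigma$ restricts to a bijection $B\to\beta(v)$ and $x\notin\beta(v)$), and let $u_x$ be the unique node of $T$ with $x\in\chi(u_x)$. I would set $B':=B\cup\{b_x\}$, extend $\phi$ by $\phi'(b_x):=|B'|$, and extend $f$ by $f'(|B'|):=u_x$. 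The pair $(\chi,\sigma)$ (with $\sigma$ naturally viewed as a boundaried isomorphism with the enlarged boundary, noting that $\sigma(b_x)=x$ is placed in the image because $u_x\in\im(f')$ and thus $G_v[\chi(u_x)]$ appears in $\bigcup_{u\in\im(f')}G_{v'}[\chi(u)]$) then witnesses $\hat{T}'\in\Acal'$. That $\forget^*(\hat{T}')=\hat{T}$ is immediate from the construction, since removing the vertex with the largest label $|B'|$ from the boundary recovers exactly $(R,B,\phi)$ and $f'|_{[|B|]}=f$.

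The main obstacle is the careful verification in the second inclusion that $\hat{T}'$ truly belongs to $\Acal'$, in particular that $|\im(f')|=|\beta(v')|$. This reduces to the claim that $u_x\notin\im(f)$: otherwise some $y\in\beta(v)$ would satisfy $f(\rho(y))=u_x$, placing both $y$ and $x$ in $\chi(u_x)$, which together with $|\chi(t)|=1$ for internal nodes forces $u_x$ to be a leaf; then the isomorphism condition on $\hat{T}$ must already account for $x$ within the corresponding bag of $R$, and one must check that this does not contradict the boundary constraints. Handling this case requires a short argument using the structure of $\sigma$ and may necessitate choosing, among the possible witness pairs for $\hat{T}$, one that places $x$ in a bag disjoint from $\im(f)$; the existence of such a witness pair follows from the freedom in choosing the elimination tree $(\T,\chi,r)$ of $G_v$.
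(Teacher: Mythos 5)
Your argument follows the paper's approach in both directions: forward, transfer the witness pair of a $\forget^*$-preimage; reverse, set $u_x:=\chi^{-1}(x)$, $b_x:=\sigma^{-1}(x)$, and extend $\phi$, $f$, and the boundary of $R$ accordingly. The paper's proof is exactly this construction, stated without further elaboration.

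The gap lies in how you plan to finish the reverse direction. Defining $b_x := \sigma^{-1}(x)$ presupposes $x \in \im(\sigma) = \bigcup_{t\in\im(f)}\chi(t)$, which is equivalent to $u_x\in\im(f)$; so the case you flag as problematic at the end (namely $u_x \in \im(f)$) is the \emph{only} case in which your construction is even well-posed, and the repair you propose---switching to a witness pair with $u_x\notin\im(f)$---would leave $b_x=\sigma^{-1}(x)$ undefined, so there would be nothing to append to the boundary of $R$. Moreover, for a fixed annotated tree $\hat{T}$ the freedom in $\chi$ that you invoke may simply not exist: if $T$ is a single node $r$, every admissible $\chi$ must set $\chi(r)=V(G_v)$ (the bags partition $V(G_v)$ over the fixed one-node tree), so $u_x=r\in\im(f)$ for every witness pair. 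A symmetric issue is glossed over in your forward direction: if $f'$ is injective, the node $f'(|B'|)$ drops out of $\im(f)$ after forgetting, so $\bigcup_{t\in\im(f)}\chi'(t)$ is a proper subset of $\im(\sigma')$ and ``restricting $\sigma'$'' cannot produce an isomorphism onto the smaller graph. The paper's proof uses the same $u:=\chi^{-1}(x)$ and $w:=\sigma^{-1}(x)$---hence also works only when $u\in\im(f)$---and never raises the $|\im(f')|$ count you worry about; that is consistent only if $f$ is allowed to be non-injective on $[|B|]$ (as the paper's remark after the definition of annotated trees explicitly permits), in which case $\im(f')=\im(f)$ and both directions close immediately. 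The injectivity of $f$ you derive from a literal reading of $|\im(f)|=|X|$ is precisely what creates your worry, and the fix you sketch does not resolve it.
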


\begin{proof}
Let $\hat{T}\in \forget^*(\Acal')$. There exists $\hat{T}'\in \Acal'$ with witness pair $(\chi',\sigma')$
such that $\forget^*(\hat{T}')=\hat{T}$.
Let $(T,r)$ (resp. $(T',r)$) be the rooted tree of $\hat{T}$ (resp. $\hat{T}'$).
Note that $\height_{T,r}=\height_{T',r}$.
Thus, it is easy to see that $(\chi',\sigma')$ also witnesses that $\hat{T}\in \Acal$.
Conversely, let $\hat{T}=(T,r,h,(R,X,\phi),f)\in \Acal$ with witness pair $(\chi,\sigma)$. Let $u:=\chi^{-1}(x)$, $w:=\sigma^{-1}(x)$, and $t=|X|+1$.
Let $\hat{T}':=(T,r,h,(R,X\cup\{w\},\phi\oplus[w\mapsto t]),f\oplus[t\mapsto u])$.
Then $\forget^*(\hat{T}')=\hat{T}$ and the pair $(\chi,\sigma)$ also witnesses that $\hat{T}'\in \Acal'$, so $\hat{T}\in \forget^*(\Acal')$.
\end{proof}

\subparagraph{Forget procedure.}
Given an annotated tree $\hat{T}$, the \emph{forget procedure}, denoted by $\forget(\hat{T})$, outputs $\rep\circ\crop\circ\forget^*(\hat{T})$.
See \autoref{@systematically} for an illustration.
Note that we do not apply the filter operation since the height does not change under the forget procedure.
Given a set $\Acal$ of annotated trees, $\forget(\Acal)$ outputs $\bigcup_{\hat{T}\in \Acal}\forget(\hat{T})$.

\begin{figure}[ht]
	\centering
\includegraphics[width=0.6\textwidth]{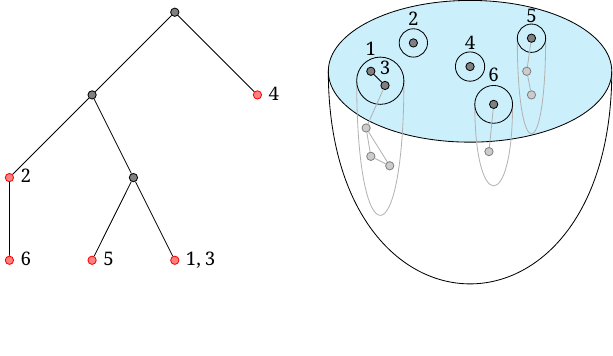}
\vspace{-1cm}

	\caption{The forget procedure applied to the annotated tree of \autoref{@comprehends}.
}
	\label{@systematically}
\end{figure}

\begin{lemma}\label{@socialization}
Let $G$ be a graph, $k$ be an integer, $({\sf T},\beta, {\sf r})$ be a nice tree decomposition of $G$, $v$ be a forget node of ${\sf T}$ with child $v'$ and forgotten vertex $x$, $\rho:\beta(v)\to[|\beta(v)|]$ be a bijection, and $\rho':=\rho\oplus[x\mapsto|\beta(v')|]$.
Let $\Acal:=\Char_k(G_v,\beta(v),\rho)$ and $\Acal':=\Char_k(G_{v'},\beta(v'),\rho')$.
Then $\Acal=\forget(\Acal')$.
\end{lemma}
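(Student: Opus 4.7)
The plan is to unfold the definitions and reduce to an operator identity on $\fullchar(G_{v'}, \beta(v'), \rho')$. By construction, $\Acal = {\sf filter}_k \circ \rep \circ \crop (\fullchar(G_v, \beta(v), \rho))$; applying \autoref{@nationalistic} rewrites this as $\Acal = {\sf filter}_k \circ \rep \circ \crop \circ \forget^* (\fullchar(G_{v'}, \beta(v'), \rho'))$. Unfolding the other side gives $\forget(\Acal') = \rep \circ \crop \circ \forget^* \circ {\sf filter}_k \circ \rep \circ \crop (\fullchar(G_{v'}, \beta(v'), \rho'))$. The lemma therefore reduces to showing that these two compositions of operators produce the same set when applied to $\fullchar(G_{v'}, \beta(v'), \rho')$.

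I would next record several routine commutations. Since none of $\rep$, $\crop$, and $\forget^*$ alters the root $r$ or the function $h$, the height $h(r)$ is invariant under all three, so ${\sf filter}_k$ commutes with each of them and can be pulled to the outermost position on both sides. The operations $\crop$ and $\rep$ commute because $\rep$ only rewrites the graph component using the unchanged image $\im(f)$, while $\crop$ only prunes $T$ using the same image. Finally, $\crop \circ \forget^* \circ \crop = \crop \circ \forget^*$: iteratively removing leaves outside $\im(f)$ and then leaves outside the smaller set $\im(f|_{[|B'|]})$ yields the same pruned tree as a single pass against the smaller set. After these manipulations, the lemma reduces to proving the operator identity $\rep \circ \crop \circ \forget^* \circ \rep = \rep \circ \crop \circ \forget^*$ on every $\hat T \in \fullchar(G_{v'}, \beta(v'), \rho')$.

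Fix such an $\hat T = (T, r, h, (R, B, \phi), f)$. Property~(5) of $\Fcal$-elimination trees guarantees each $G_{v'}[\chi(v)]$ is connected, so the components of $R$ are precisely the connected graphs $R_v = G_{v'}[\chi(v)]$ for $v \in \im(f)$, each in $\exc(\Fcal)$, and every component of $R$ already contains a boundary vertex. Hence the $K_{s_\Fcal}$-branch of $\rep$ is never triggered on either side, and for each $v \in \im(f|_{[|B'|]})$ both sides output the representative of a boundaried graph with boundary $B_v' := \phi|_{B'}^{-1}(f|_{[|B'|]}^{-1}(v))$. Specifically, $\rep \circ \crop \circ \forget^*$ produces the representative of $(R_v, B_v', \cdot)$, while $\rep \circ \crop \circ \forget^* \circ \rep$ produces the representative of $(C_v, B_v', \cdot)$, where $C_v$ is the union of components of $R^* := \bigcup_{w \in \im(f)} R_w^*$ that contain $B_v'$. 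Since the $B_w$'s are pairwise disjoint, only $R_v^*$ can contribute vertices of $B_v' \subseteq B_v$, so $C_v$ is the union of components of $R_v^*$ meeting $B_v'$.

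The main obstacle is to show $(R_v, B_v', \cdot) \equiv_{\ell_\Fcal} (C_v, B_v', \cdot)$, which I would establish in two steps. First, $\equiv_{\ell_\Fcal}$ is preserved under forgetting a boundary vertex: given $F'$ compatible with the reduced boundary, extend it by an isolated vertex carrying the label of the forgotten boundary vertex to obtain $F$ compatible with the original boundary, and observe that $F \oplus X = F' \oplus X$ as graphs for every underlying $X$; this transports $(R_v, B_v, \cdot) \equiv_{\ell_\Fcal} (R_v^*, B_v, \cdot)$ to $(R_v, B_v', \cdot) \equiv_{\ell_\Fcal} (R_v^*, B_v', \cdot)$. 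Second, since $R_v$ is connected and $R_v^*$ is a minimum-sized representative of its equivalence class, $R_v^*$ can be taken to be connected, as any unattached component would be redundant (its $\ell_\Fcal$-folio minors are already minors of the connected $R_v$), contradicting minimality; hence $C_v = R_v^*$. Chaining $(R_v, B_v', \cdot) \equiv_{\ell_\Fcal} (R_v^*, B_v', \cdot) = (C_v, B_v', \cdot)$ shows both sides select the same representative, completing the proof.
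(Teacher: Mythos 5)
Your high-level route coincides with the paper's: invoke \autoref{@nationalistic} to replace $\Acal$ by $\Mcal_k\circ\forget^*(\fullchar(G_{v'},\beta(v'),\rho'))$ and then commute $\forget^*$ with the three operations; your bookkeeping with ${\sf filter}_k$, $\crop$ and the rep--crop commutation is fine (indeed more explicit than the paper, which asserts $\rep\circ\forget^*=\rep\circ\forget^*\circ\rep$ with a one-line remark). The difficulty is concentrated exactly where you put it, namely in showing that re-representing after forgetting gives the representative of the forgotten original, but your two-step argument for this does not go through as written. A small, fixable issue first: extending $F'$ by an \emph{isolated} vertex carrying the forgotten label need not yield a graph compatible with $(R_v,B_v,\cdot)$ when the forgotten vertex has neighbours inside $B_v$, and then the hypothesis $(R_v,B_v,\cdot)\equiv_{\ell_\Fcal}(R_v^*,B_v,\cdot)$ cannot be applied to your $F$; you must add the new vertex together with the edges of $R_v[B_v]$ incident to it (the identity $F\oplus X=F'\oplus X$ still holds because those edges are duplicated in the gluing).

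The genuine gap is the second step. The representative $R_v^*$ is a \emph{fixed} minimum-sized element of $\Rcal_{\ell_\Fcal}^{|B_v|}$, so it cannot be ``taken'' to be connected; connectivity (or at least that every component of $R_v^*$ meets $B_v'$) has to be \emph{proved}. Your justification only addresses components disjoint from the whole boundary, and even for those it is not a proof: to contradict minimality you need $R_v^*\setminus D\equiv_{\ell_\Fcal}R_v^*$, whose nontrivial direction requires relocating the components of $H$ whose models lie inside the floating part $D$ into $F\oplus(R_v^*\setminus D)$; this does not follow from ``the folio minors of $D$ are minors of the connected $R_v$'' (a component of $H$ may need an extra disjoint host that $F\oplus(R_v^*\setminus D)$ cannot supply). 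Moreover, when $v$ is the group of the forgotten vertex, $C_v$ may also drop components of $R_v^*$ that meet the boundary \emph{only} in the forgotten vertex; these are not ``unattached'' and are not covered by your argument at all, yet they are exactly the delicate case created by $\forget^*$. What your proof actually needs is the equivalence $(C_v,B_v',\cdot)\equiv_{\ell_\Fcal}(R_v,B_v',\cdot)$, and that statement is left unestablished; it is precisely the content hidden in the paper's asserted operator identity (which the paper itself treats very tersely), so you have located the right crux but not closed it.
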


\begin{proof}
Let $\Dcal:=\fullchar(G_v,\beta(v),\rho)$ and $\Dcal':=\fullchar(G_{v'},\beta(v'),\rho')$.
According to \autoref{@nationalistic}, $\Dcal=\forget^*(\Dcal')$.
The representation and the crop operation do not change the labels of an annotated tree, while the complete forget procedure only changes the labels of the input annotated tree and it does not change its height.
Thus, $\rep\circ\forget^*=\rep\circ\forget^*\circ\rep$, $\crop\circ\forget^*=\crop\circ\forget^*\circ\crop$, and ${\sf filter}_k\circ\forget^*=\forget^*\circ {\sf filter}_k$.
Since the three operations are commutative, it follows that $\Mcal_k\circ\forget^*=\rep\circ\crop\circ\forget^*\circ \Mcal_k=\forget\circ \Mcal_k$.
Hence, $\Acal=\Mcal_k(\Dcal)=\Mcal_k\circ\forget^*(\Dcal')=\forget(\Mcal_k(\Dcal'))=\forget(\Acal')$.
\end{proof}

\subsubsection{Introduce procedure}

With the introduce procedure, given the characteristic of a boundaried graph and a set $I$ of labels from the boundary, we want to compute the characteristic of the boundaried graph obtained by adding a new vertex to the boundary, which is adjacent to the nodes with a label in $I$.

\subparagraph{Diamond-introduce operation.}
Let $(T,r)$ be a rooted tree, $w$ be an integer, $f:[w]\to V(T)$ be a function, and $I$ be a subset of $[w]$.
$(T,r,f)\diamondsuit_{\sf intr} I$ is defined as the set of all pairs $(T',r',f')$ such that:
\begin{enumerate}
	\item $(T',r')$ is a rooted tree,
	\item $V(T')=V(T)\cup\{u\}$ for some new node $u$,
	\item $f'=f\oplus[w+1\mapsto u]$,
	\item if $v_1\in V(T)$ and $v_2\in \anc_{T,r}(v_1)$, then $v_2\in \anc_{T',r'}(v_1)$,
	\item if $v\in f(I)$, then $v\in \anc_{T,r}(u)\cup\desc_{T,r}(u)$, and
	\item $T_u'\cap f(I)\neq\emptyset$, or $u\in\leaf(T',r')$ and $\Par_{T',r'}(u)\in f(I)$.
\end{enumerate}

This operation corresponds to introducing a new node $u$ in $T$ so that $u$ has ancestor-descendant relations with the nodes labeled by a label in $I$.
The last item, which states that $u$ either has a descendant in $f(I)$ or is a leaf and its parent belongs to $f(I)$, is a property needed to ensure connectivity and allows the application of the crop operation in \autoref{@neopositivist} and \autoref{@philosopher}.

Let $(T,r)$ be a rooted tree, $K$ be a subset of $V(T)$, and $h:K\to\bN$.
We define the function $\update_{T,r,K}(h):V(T)\to\bN$,
that maps every $v\in V(T)$ to the integer
\[\update_{T,r,K}(h)(v) = \max\{h(v),1+\max_{c\in \ch(v)}\{\update_{T,r,K}(h)(c)\}\},\]
where we suppose that $h(v)=0$ if $v\notin K$ and $\update_{T,r,K}(h)(c)=-1$ if $v$ is childless.
Let $(T',r')$ be a rooted tree with $V(T')=K$ and such that the ancestor-descendant relationship between the nodes of $K$ is the same in $(T,r)$ and $(T',r')$.
Then we can observe that $\update_{T,r,K}(\height_{T',r'})=\height_{T,r}$.

\subparagraph{Complete introduce procedure.}
Let $\hat{T}=(T,r,h,(R,X,\phi),f)$ be an annotated tree and let $I$ be a set of labels in $[|X|]$.
The complete introduce procedure corresponds to adding a new vertex $v$ to the boundary $X$ of the boundaried graph $(R,X,\phi)$ of an annotated tree $(T,r,h,(R,X,\phi),f)$, such that $v$ is adjacent to the nodes with a label in $I$ and it is either mapped, via $f\circ\phi$, to an already existing leaf of $T$ (item (a) below) or to a new node of $T$ (item (b) below).

More formally, given an annotated tree $\hat{T}=(T,r,h,(R,X,\phi),f)$, a set $I\subseteq [|X|]$ of labels, the \emph{complete introduce procedure}, denoted by $\intr^*(\hat{T},I)$, outputs a set $\Acal$ of annotated trees constructed as follows.
For each $(T',r',f')\in (T,r,f)\diamondsuit_{\sf intr} I$, let $w=|\im(f')|$, $u:=f'(w)$, and $u':=\Par_{T',r'}(u)$ (or $u':=u$ if $u$ is the root).
We add a new vertex $v$ to $R$ and we set $X':=X\cup\{v\}$ and $\phi':=\phi\oplus[v\mapsto w]$.
Let $(R_{u'},X_{u'},\phi_{u'})$ be the part of the representative $R$ corresponding to $u'$, as defined for the representation operation (i.e., $R_{u'}$ is the union of the connected components of $R$ containing $\phi^{-1}\circ f^{-1}(u')$, $X_{u'}=\phi^{-1}\circ f^{-1}(u')$, $\phi_{u'}=\sigma_{u'}\circ\phi|_{X_{u'}}$ and $\sigma_{u'}:f^{-1}(u')\to[|X_{u'}|]$ is a bijective function).

\begin{itemize}
\item[(a)] \emph{If $h(u')=0$:}
We set $f'':=f\oplus[w\mapsto u']$
and $R':=(V(R)\cup\{v\},E(R)\cup E(\{v\},\phi^{-1}(I)\cap X_{u'}))$
and we add $(T,r,h,(R',X',\phi'),f'')$ to $\Acal$
if the connected component of $R'$ containing $v$ belongs to $\exc(\Fcal)$.

\item[(b)] \emph{If $u'\notin \im(f)$, or if $u'\in \im(f)$ and $|V(R_{u'})|=1$:}
We set $R':=(V(R)\cup\{v\},E(R))$
and $h':=\update_{T',r',V(T)}(h)$ and
we add $(T',r',h',(R',X,'\phi'),f')$ to $\Acal$.
\end{itemize}
Note that this is not a dichotomy: if both criteria are fulfilled, then we apply both cases.
Given a set $\Acal$ of annotated trees, we define $\intr^*(\Acal,I):=\bigcup_{\hat{T}\in \Acal}\intr^*(\hat{T},I)$.

\begin{lemma}\label{@neopositivist}
Let $G$ be a graph, $({\sf T},\beta,{\sf r})$ be a nice tree decomposition of $G$, $v$ be an introduce node of ${\sf T}$ with child $v'$ and introduced vertex $x$, $\rho':\beta(v)\to[|\beta(v)|]$ be a bijection, $\rho:=\rho'\oplus[x\mapsto|\beta(v)|]$, and $I:=\rho(N_{G_v}(x))$.
Let $\Acal:=\fullchar(G_v,\beta(v),\rho)$ and $\Acal':=\fullchar(G_{v'},\beta(v'),\rho')$.
Then $\Acal=\intr^*(\Acal',I)$.
\end{lemma}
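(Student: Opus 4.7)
The plan is to prove both inclusions, in each case transporting witness pairs between $\Acal'$ and $\Acal$. Since $v$ is an introduce node of a nice tree decomposition, every neighbor of $x$ in $G_v$ lies in $\beta(v')$; thus, in any $\Fcal$-elimination tree of $G_v$, the vertex $x$ either shares a leaf bag with existing vertices (matching case~(a) of the construction) or occupies its own singleton bag at a new tree node (matching case~(b)), and the diamond operation enumerates all legal positions for $x$ consistent with the edge condition of elimination trees.

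For the forward inclusion $\intr^*(\Acal', I) \subseteq \Acal$, I would take $\hat{T} \in \intr^*(\hat{T}', I)$ with $\hat{T}' \in \Acal'$ having witness pair $(\chi', \sigma')$. In case~(a), I would define $\chi(u') := \chi'(u') \cup \{x\}$, keep $\chi(t) := \chi'(t)$ for $t \neq u'$, and extend $\sigma'$ by mapping the newly added vertex of $R$ to $x$. In case~(b), I would define $\chi(u) := \{x\}$ at the new node $u$ of $T'$ and set $\chi(t) := \chi'(t)$ for $t \in V(T)$. In both cases, I would verify that $(\chi, \sigma)$ witnesses $\hat{T} \in \fullchar(G_v, \beta(v), \rho)$: the partition and internal-singleton conditions are immediate from the construction; the edge condition follows from diamond conditions~4 and~5 (which give $f(I) \subseteq \anc_{T',r'}(u) \cup \desc_{T',r'}(u)$) together with $N_{G_v}(x) \subseteq \beta(v')$; the leaf-bag condition $G_v[\chi(u')] \in \exc(\Fcal)$ in case~(a) is exactly the check performed at the end of that case, while in case~(b) it is trivial since $\{x\} \in \exc(\Fcal)$ by non-triviality of $\Fcal$; finally, connectivity of $G_v[\chi(T_t)]$ reduces to that of $G_{v'}[\chi'(T_t)]$ using diamond condition~6, which ensures $x$ has a neighbor inside its intended subtree.

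For the reverse inclusion $\Acal \subseteq \intr^*(\Acal', I)$, I would take $\hat{T} \in \Acal$ with witness pair $(\chi, \sigma)$ and set $u := \chi^{-1}(x)$. If $|\chi(u)| \geq 2$, then $u$ is necessarily a leaf of $T$ by the internal-singleton property, so defining $\chi'(u) := \chi(u) \setminus \{x\}$ and $\chi'(t) := \chi(t)$ otherwise yields a valid $\Fcal$-elimination tree of $G_{v'}$; the corresponding $\hat{T}' \in \Acal'$ is obtained by deleting from $R$ the vertex associated with $x$, and this pair is matched by case~(a). If $\chi(u) = \{x\}$, I would remove $u$ from $T$ (reattaching any children of $u$ to its parent $u'$) to form $T'$, and define $\chi'$ as the restriction of $\chi$ to $V(T')$; the resulting $\hat{T}'$ lies in $\Acal'$, and the side condition of case~(b), namely $u' \notin \im(f)$ or $|V(R_{u'})| = 1$, corresponds precisely to the two sub-situations where $u'$'s bag is non-boundary (internal singleton outside $\im(f)$) or contains a single boundary vertex.

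The main obstacle is verifying, in case~(b) of the reverse inclusion when $u$ has several children $c_1, \ldots, c_p$, that the tree obtained by reattaching them to $u'$ is still a valid $\Fcal$-elimination tree of $G_{v'}$. Since in $G_v$ no edge runs between distinct $\chi(T_{c_i})$'s (the children of $u$ being pairwise incomparable in the ancestor-descendant order, so the edge condition through $u$ forbids such edges), the same holds in $G_{v'}$; and connectivity of $G_{v'}[\chi'(T'_{u'})]$ continues to hold via $\chi(u') = \chi'(u')$ together with the pre-existing subtrees of $u'$, which already witnessed the connectivity obligations in $G_v$. This, combined with careful bookkeeping of the labels and the isomorphism $\sigma$ throughout both modifications, completes the argument.
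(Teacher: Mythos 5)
Your overall strategy is the same as the paper's (transport witness pairs in both directions), and your forward inclusion matches the paper's argument, including the use of item 6 of $\diamondsuit_{\sf intr}$ for connectivity and of the $\exc(\Fcal)$-check in case (a). The gap is in the reverse inclusion. To conclude $\hat{T}\in\intr^*(\hat{T}',I)$ from the $\hat{T}'$ you construct, you must exhibit a placement of the node carrying $x$ that the operation $\diamondsuit_{\sf intr}$ actually enumerates, i.e., you must verify items 4--6 for that placement, and item 6 is the non-trivial one: either $T_u\cap f(I)\neq\emptyset$, or $u\in\leaf(T,r)$ and $\Par_{T,r}(u)\in f(I)$, where $u=\chi^{-1}(x)$. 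Your opening assertion that ``the diamond operation enumerates all legal positions for $x$ consistent with the edge condition of elimination trees'' is precisely the statement that needs proof, and it does not follow from the edge condition alone: item 6 is an additional restriction, and without an argument a valid elimination tree of $G_v$ could a priori place $x$ at a node violating it, in which case $\hat{T}$ would not appear among the outputs of $\intr^*(\hat{T}',I)$. The paper flags this as the only non-trivial point of the reverse direction and derives it from the connectivity requirement of elimination trees: all neighbors of $x$ in $G_v$ lie in $\beta(v')$, hence are boundary vertices with labels in $I$; so if $T_u\cap f(I)=\emptyset$, then $G_v[\chi(T_u)]$ is connected, contains $x$, and contains no neighbor of $x$, forcing $\chi(T_u)=\{x\}$ and $u\in\leaf(T,r)$, after which connectivity of $G_v[\chi(T_{\Par_{T,r}(u)})]$ yields a neighbor of $x$ in $\chi(\Par_{T,r}(u))$, i.e., $\Par_{T,r}(u)\in f(I)$. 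The same argument is also needed in your case $|\chi(u)|\geq 2$: there the realizing placement is a new leaf child of $u$, and item 6 then requires $u\in f(I)$, which again follows only because the leaf bag $G_v[\chi(u)]$ is connected and so $x$ has a neighbor (an $I$-labeled boundary vertex) inside $\chi(u)$. None of this appears in your write-up.

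By contrast, the step you single out as the main obstacle (that deleting the singleton node of $x$ and reattaching its children to the parent yields a valid $\Fcal$-elimination tree of $G_{v'}$, and that the case-(b) side condition on $u'$ is met) is the routine part, which the paper dispatches essentially without comment; your treatment of it is fine. So the proposal is structurally on the right track, but the decisive verification of item 6 --- the heart of the paper's proof of the reverse inclusion --- is missing.
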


\begin{proof}
Let $\hat{T}=(T,r,h,(R,X,\phi),f)\in \intr^*(\Acal',I)$.
There is $\hat{T}'=(T',r',h',{\bf R}',f')\in \Acal'$ with witness pair $(\chi',\sigma')$ such that $\hat{T}\in\intr^*(\hat{T'},I)$.
Let $\chi:=\chi'\oplus[u\mapsto x]$ and $\sigma=\sigma'\oplus[w\to x]$ where $u:=f(|X|)$ and $w:=\phi^{-1}(|X|)$.

Suppose that we are in case (a).
Thus, $h=h'=\height_{T',r'}=\height_{T,r}$.
By the construction of ${\bf R}$ from ${\bf R}'$, since $x$ is only adjacent in $G_v$ to the nodes with a label in $I$, $\sigma$ is an isomorphism between ${\bf R}$ and $(\bigcup_{w\in\im(f)} G_v[\chi(w)],\beta(v),\rho)$.
Moreover, $T=T'$ so $(T,\chi,r)$ is an $\Fcal$-elimination tree of $G_v$.
Hence, $(\chi,\sigma)$ witnesses that $\hat{T}\in \Acal$.

Suppose now that we are in case (b).
Notice that in this case $$h=\update_{T,r,V(T')}(h')=\update_{T,r,V(T')}(\height_{T',r'})=\height_{T,r}.$$
It is easy to see that $\sigma$ is an isomorphism between ${\bf R}$ and $(\bigcup_{w\in\im(f)} G_v[\chi(w)],\beta(v),\rho)$.
Moreover, as  $\hat{T}\in\intr^*(\hat{T'},I)$, it holds that $(T,r,f)\in (T',r',f')\diamondsuit_{\sf intr} I$ and therefore $(T,r)$ keeps the ancestor-descendant relations from $(T',r')$ (item 4 of the $\diamondsuit_{\sf intr}$ operation), while adding ancestor-descendant relations between the new node of $T$ and the nodes labeled by $I$ (item 5),
and guaranteeing the connectivity of $G_v[\chi(T_w)]$ for each node $w\in V(T)$ (item 6).
Hence, $(\chi,\sigma)$ witnesses that $\hat{T}\in \Acal$.

Conversely, let $\hat{T}=(T,r,h,{\bf R},f)\in \Acal$ with witness pair $(\chi,\sigma)$.
If $\chi(\chi^{-1}(x))=\{x\}$, then let $T'$ be the tree obtained from $T$ by removing $u:=\chi^{-1}(x)$ and adding edges between the parent of $u$ and the children of $u$, if any. Otherwise, set $T':=T$.
Let $h'$ be the height function of $T'$ and $r'$ be the root of $T'$.
Let $(R',X',\phi')$ be the boundaried graph obtained from ${\bf R}$ by removing $\sigma^{-1}(x)$.
Let $f':=f|_{X'}$.
Then the functions obtained from $\chi$ and $\sigma$ after restricting their image to $V(G)\setminus x$ witness that $\hat{T}'=(T',r',h',(R',X',\phi'),f')\in \Acal'$.
To show that $\hat{T}\in\intr^*(\hat{T'},I)$, the only non-trivial part is to prove that either $T_u\cap f(I)\neq\emptyset$, or $u\in\leaf(T,r)$ and $\Par_{T,r}(u)\in f(I)$ (item 6 of the $\diamondsuit_{\sf intr}$ operation).
Suppose that $T_u\cap f(I)=\emptyset$.
We know that $x$ is exactly adjacent to $\rho^{-1}(I)$ in $G_v$ since it is an introduce vertex.
Moreover, $(T,\chi,r)$ is an $\Fcal$-elimination tree of $G_v$, so $G_v[\chi(T_w)]$ is connected for all $w\in V(T)$.
In particular, $G_v[\chi(T_u)]$ is connected, but $x\in G_v[\chi(T_u)]$ and $\rho^{-1}(I)\notin G_v[\chi(T_u)]$ because $T_u\cap f(I)=\emptyset$. Thus, we must have $G_v[\chi(T_u)]=\{x\}$, and so $u\in\leaf(T,r)$.
Since $G_v[\chi(T_{\Par_{T,r}(u)})]$ is also connected, it implies that $\chi(\Par_{T,r}(u))$ and $x$ are connected, so $\Par_{T,r}(u)\in f(I)$.
Hence, $\hat{T}\in \intr^*(\Acal',I)$.
\end{proof}

\subparagraph{Introduce procedure.}
Given an annotated tree $\hat{T}$, a set $I$ of labels, and a positive integer $k$, the \emph{introduce procedure}, denoted by $\intr_k(\hat{T},I)$, outputs ${\sf filter}_k\circ\rep\circ\intr^*(\hat{T},I)$.
Examples of the introduce procedure can be found in \autoref{@paramilitary}.
Note that we do not apply the crop operation, since the complete introduction procedure applied to a cropped annotated tree outputs a cropped annotated tree.
Given a set $\Acal$ of annotated trees, $\intr_k(\Acal,I)$ outputs $\bigcup_{\hat{T}\in \Acal}\intr_k(\hat{T},I)$.

\begin{figure}[ht]
	\centering
\includegraphics[width=0.9\textwidth]{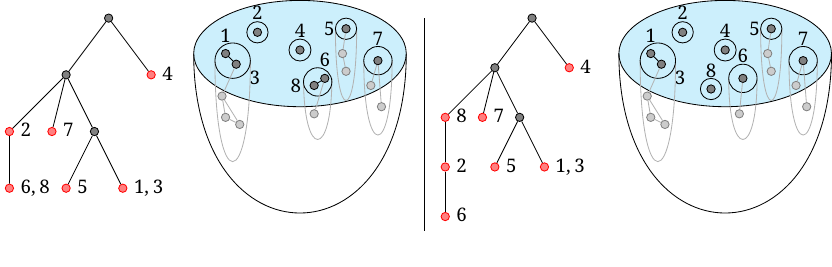}

	\caption{Two annotated trees obtained from the introduce procedure applied to the annotated tree of \autoref{@comprehends} with $I=\{2,6\}$.}
	\label{@paramilitary}
\end{figure}

\begin{lemma}\label{@philosopher}
Let $G$ be a graph, $({\sf T},\beta,{\sf r})$ be a nice tree decomposition of $G$, $v$ be an introduce node of ${\sf T}$ with child $v'$ and introduced vertex $x$, $\rho':\beta(v)\to[|\beta(v)|]$ be a bijection, $\rho:=\rho'\oplus[x\mapsto|\beta(v)|]$, and $I:=\rho(N_{G_v}(x))$.
Let $k$ be an integer, $\Acal:=\Char_k(G_v,\beta(v),\rho)$, and $\Acal':=\Char_k(G_{v'},\beta(v'),\rho')$.
Then $\Acal=\intr_k(\Acal',I)$.
\end{lemma}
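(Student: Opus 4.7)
The plan is to mirror the proof of \autoref{@socialization} (the forget-procedure analogue). We start from the complete-characteristic identity provided by \autoref{@neopositivist},
\[\fullchar(G_v,\beta(v),\rho)=\intr^*(\fullchar(G_{v'},\beta(v'),\rho'),I),\]
and apply the operator $\Mcal_k=\filter_k\circ\rep\circ\crop$ to both sides. It then suffices to establish the operator identity $\Mcal_k\circ\intr^*(\cdot,I)=\intr_k\circ\Mcal_k(\cdot,I)$; unfolding $\intr_k=\filter_k\circ\rep\circ\intr^*$, this reduces to three commutation facts between $\intr^*$ and the building blocks of $\Mcal_k$.

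The first fact is that $\crop$ commutes with $\intr^*$. The reason is condition~6 of the $\diamondsuit_{\sf intr}$ operation: the freshly added node $u$ must either have a descendant in $f(I)\subseteq\im(f)$ or be a leaf whose parent lies in $f(I)$. This forbids placing $u$ inside a subtree that would be cropped away, so placements of $u$ on $T$ correspond bijectively to placements on $\crop(T)$, and the image of $\intr^*$ on a cropped tree is itself cropped (which is precisely why $\intr_k$ itself does not need an explicit $\crop$). The second fact is that $\rep$ commutes with $\intr^*$ in the sense that $\rep\circ\intr^*=\rep\circ\intr^*\circ\rep$. In case~(b) of $\intr^*$ the new vertex $v$ is added as an isolated boundary vertex of $R$, hence the representatives of the already-present components are untouched. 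In case~(a), $v$ is attached to the prescribed subset $\phi^{-1}(I)\cap X_{u'}$ of the boundary of the component $R_{u'}$, and we appeal to the fact that $\equiv_{\ell_\Fcal}$ is a congruence with respect to $\bigoplus$: replacing $R_{u'}$ by its representative before or after attaching $v$ yields the same representative, since attaching $v$ can be realized as $\bigoplus$-gluing a fixed small boundaried gadget (the single pendant vertex with its prescribed adjacencies) along the relevant boundary labels. The third fact is that $\filter_k\circ\intr^*=\filter_k\circ\intr^*\circ\filter_k$; this holds because the height function is either unchanged (case~(a)) or updated via $\update$ (case~(b)) in a way that never decreases any height, so an input of height greater than $k$ can only produce outputs of height greater than $k$, and pre-filtering discards nothing that would have survived the post-filter.

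Chaining these three commutations in order yields
\[\Mcal_k\circ\intr^*=\filter_k\circ\rep\circ\crop\circ\intr^*=\filter_k\circ\rep\circ\intr^*\circ\crop=\filter_k\circ\rep\circ\intr^*\circ\rep\circ\crop=\filter_k\circ\rep\circ\intr^*\circ\filter_k\circ\rep\circ\crop=\intr_k\circ\Mcal_k,\]
which combined with \autoref{@neopositivist} and the definition $\Char_k=\Mcal_k\circ\fullchar$ gives the claimed equality $\Acal=\intr_k(\Acal',I)$.

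The hard part will be the second commutation, namely the rep-level argument for case~(a). One must verify carefully that adding $v$ with its specific adjacencies to $\phi^{-1}(I)\cap X_{u'}$ is a boundary-respecting operation preserving the equivalence class of the component $(R_{u'},X_{u'},\phi_{u'})$; this amounts to exhibiting the pair consisting of $v$ and its new edges as a fixed small boundaried graph glued to the component via $\bigoplus$ along $\phi^{-1}(I)\cap X_{u'}$, and then invoking the congruence property of $\equiv_{\ell_\Fcal}$. Once this is in place, the rest is the same bookkeeping exercise as in \autoref{@socialization}, made slightly more intricate only by the fact that $\intr^*$ (unlike $\forget^*$) can both alter the tree structure and modify the height function.
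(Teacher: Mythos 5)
Your proposal is correct and follows essentially the same route as the paper: invoke \autoref{@neopositivist} for the complete characteristic, then establish the same three commutation facts ($\crop$ commutes with $\intr^*$ by item~6 of $\diamondsuit_{\sf intr}$; $\rep\circ\intr^* = \rep\circ\intr^*\circ\rep$ because the new vertex touches only boundary vertices; ${\sf filter}_k$ commutes one-sidedly because $\intr^*$ never decreases height), and chain them using commutativity of $\crop$, $\rep$, ${\sf filter}_k$. The paper dispatches the $\rep$ step more tersely than your congruence-via-$\bigoplus$ argument, but the underlying justification is the same.
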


\begin{proof}
Let $\Dcal:=\fullchar(G_v,\beta(v),\rho)$ and $\Dcal':=\fullchar(G_{v'},\beta(v'),\rho')$.
According to \autoref{@neopositivist}, $\Dcal=\intr^*(\Dcal',I)$.

The crop operation acts on the tree $T$ of an annotated tree $(T,r,h,{\bf R},f)$ to remove the leaves that are not in $\im(f)$.
By item 6 of the $\diamondsuit_{\sf intr}$ operation, the complete introduce procedure may only add a new node in $T$ above a node in $\im(f)$, or add a leaf to $T$ with a parent node in $\im(f)$.
Hence, the new node may only be added to the cropped tree.
Since the new vertex is also in the boundary, it implies that, given a set $\Ccal$ of annotated trees, $\crop\circ\intr^*(\Ccal,I)=\intr^*(\crop(\Ccal),I)$.

The representation operation acts on ${\bf R}$ to replace it by a boundaried graph whose connected components are representatives of the connected components of ${\bf R}$.
The complete introduce procedure adds a new vertex in the boundary of ${\bf R}$ that is adjacent to boundary vertices only.
Hence, $\rep\circ\intr^*(\Ccal,I)=\rep\circ\intr^*(\rep(\Ccal),I)$.

Moreover, the complete introduce procedure can only increase the height of an annotated tree.
Therefore, ${\sf filter}_k\circ\intr^*(\Ccal,I)={\sf filter}_k\circ\intr^*({\sf filter}_k(\Ccal),I)$.

Since the three operations are commutative, we have that $\Mcal_k\circ\intr^*(\Ccal,I)={\sf filter}_k\circ\rep\circ\intr^*(\Mcal_k(\Ccal),I)=\intr_k(\Mcal_k(\Ccal),I)$.
Therefore, $\Acal=\Mcal_k(\Dcal)=\Mcal_k\circ\intr^*(\Dcal',I)=\intr_k(\Mcal_k(\Dcal'),I)=\intr_k(\Acal',I)$.
\end{proof}

For the introduce procedure, we finally prove that it can generate a bounded number of annotated trees.

\begin{lemma}\label{@equivalent}
Let $w$ and $k$ be two positive integers, let ${\bf G}$ be a $w$-boundaried graph, let $I\subseteq[w]$, and let $\hat{T}\in\Char_k({\bf G})$.
Then $|\intr_k(\hat{T},I)|=\Ocal(w\cdot k)$.
\end{lemma}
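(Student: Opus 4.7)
My plan is to analyze $\intr_k(\hat T,I)={\sf filter}_k\circ\rep\circ\intr^*(\hat T,I)$ case by case. Write $\hat T=(T,r,h,(R,X,\phi),f)$; since $\hat T\in\Char_k({\bf G})$, the tree $T$ is cropped, so every leaf of $T$ lies in $\im(f)$, and hence $T$ has at most $w$ leaves. Combined with the fact that $T$ has height at most $k$, this yields $|V(T)|\le w(k+1)=\Ocal(w\cdot k)$. For the output to be nonempty, conditions 5 and 6 of $\diamondsuit_{\sf intr} I$ force $f(I)$ to form a chain in $T$, whose nodes I denote $v_1,\ldots,v_s$ from top to bottom.

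Case (a) of $\intr^*$ contributes at most one annotated tree: the parent $u'$ of the newly introduced node must satisfy $h(u')=0$, and conditions 5--6 pin $u'$ down to the bottommost chain node $v_s$, yielding a unique configuration (when one exists). For case (b), each output is parametrised by a pair $(u',C)$, where $u'\in V(T)\cup\{\bot\}$ is the parent of the new node $u$ in $T'$ and $C\subseteq \ch_{T,r}(u')$ is its set of children. The $\diamondsuit_{\sf intr}$ conditions restrict $u'$ to lie on the root-to-$v_s$ path of $T$ (or equal $\bot$), and force $C$ to contain the ``path child'' $c^\star$ of $u'$ on the way to $v_s$ (unless $u'=v_s$, where $C=\emptyset$ is the only choice). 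Thus there are $\Ocal(k)$ candidates for $u'$ along the chain.

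The filter ${\sf filter}_k$ significantly restricts the surviving outputs: inserting $u$ between $u'$ and each $c\in C$ shifts the depth of every leaf of $T_c$ down by one, forcing the constraint $h(c)\le k-d-2$ for all $c\in C$, where $d$ is the depth of $u'$ in $T$. The plan is then to charge each surviving case-(b) output to a labeled leaf of $T$: at each candidate $u'$ along the chain, the mandatory output with $C=\{c^\star\}$ is charged to a labeled leaf inside $T_{c^\star}$, and each additional light child $c$ of $u'$ included in $C$ is charged to a distinct labeled leaf of $T_c$, which exists by the cropping of $T$. Since each labeled leaf is charged only $\Ocal(1)$ times across all chain positions, the total number of case-(b) outputs is $\Ocal(w\cdot k)$, and adding the case-(a) contribution of at most one completes the bound.

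The main obstacle is to make this charging argument rigorous: a priori, exponentially many subsets $C\supseteq\{c^\star\}$ can satisfy the filter at a single candidate $u'$, so the crux is to exploit the $\rep$ operation together with the structural constraints on $\hat T$ imposed by $\Char_k({\bf G})$ in order to identify case-(b) outputs whose underlying trees differ only in redundant light-child placements, thereby collapsing the count down to the claimed $\Ocal(w\cdot k)$.
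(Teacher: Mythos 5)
Your proposal is not a complete proof: its final paragraph concedes that the central counting step---ruling out the possibly exponentially many choices of the set $C$ of ``light'' children at a fixed insertion position---is left open, and the route you suggest for closing it cannot work as stated. The $\rep$ operation only replaces the boundaried-graph part of an annotated tree by representatives of its components; it never identifies two annotated trees whose underlying rooted trees differ, so it cannot collapse distinct case-(b) outputs coming from different choices of $C$, and ${\sf filter}_k$ only discards trees of height larger than $k$. Moreover, the structural claim on which your case analysis rests is false: conditions 5 and 6 of $\diamondsuit_{\sf intr}$ do not force $f(I)$ to be a chain. If $a,b\in f(I)$ are incomparable, the new node $u$ can be inserted as the new parent of their least common ancestor, so that every element of $f(I)$ is a descendant of $u$; both conditions then hold and the output is nonempty although $f(I)$ is an antichain. (What is true, and what your $\Ocal(k)$ count of candidate parents really uses, is that the admissible parents $u'$ lie on a single root-to-node path.)

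The paper's argument is much shorter and sidesteps your obstacle entirely. It shares your first step ($\leaf(T,r)\subseteq\im(f)$ gives at most $w$ leaves, hence $|V(T)|\leq w(k+1)$), but then it bounds $|(T,r,f)\diamondsuit_{\sf intr} I|$ directly by $2|V(T)|$, reading the operation as adding the new node either as the new parent of an existing node or as a new leaf below an existing node, so that each element of $(T,r,f)\diamondsuit_{\sf intr} I$ is determined by a single node of $T$; since each such element yields at most two annotated trees (cases (a) and (b) of $\intr^*$), one gets $|\intr_k(\hat T,I)|\leq 4w(k+1)$ with no charging argument and no appeal to $\rep$ or ${\sf filter}_k$. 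If you insist on the reading in which $u$ may be inserted above an arbitrary subset of children (the reading that created your difficulty), then closing the gap requires pinning the children set down combinatorially---for instance, condition 5 already forces $C$ to contain every child of $u'$ whose subtree meets $f(I)$, and the connectivity role of condition 6 is what should exclude the remaining children---not an appeal to $\rep$. As written, your proof has a genuine gap at exactly this point.
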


\begin{proof}
Let $\hat{T}=(T,r,h,{\bf R},f)$.
Let us show that $|(T,r,f)\diamondsuit_{\sf intr} I|=\Ocal(w\cdot k)$.
Since $\leaf(T,r)\subseteq\im(f)$, $T$ has at most $w$ leaves.
Thus, $T$ has at most $w\cdot(k+1)$ nodes.
The $\diamondsuit_{\sf intr}$ operation consists in adding a new node to $T$, either as the new parent of a node, or as a new leaf of $T$.
Therefore, $|(T,r,f)\diamondsuit_{\sf intr} I|\leq 2w\cdot(k+1)$.

Let $(T',r',f')\in(T,r,f)\diamondsuit_{\sf intr} I$.
In the introduction procedure, we obtain at most two annotated trees from $(T',r',f')$.
Therefore, $|\intr_k(\hat{T},I)|\leq 4w\cdot(k+1)$.
\end{proof}

\subsubsection{Join procedure}

With the join procedure, given the characteristic of two boundaried graphs ${\bf G}_1$ and ${\bf G}_2$ that are compatible, we want to compute the characteristic of ${\bf G}_1\bigoplus{\bf G}_2$.

\subparagraph{Diamond-join operation.}
Let $(T_1,r_1)$ and $(T_2,r_2)$ be two rooted trees, $w$ be an integer, and $f_1:[w]\to V(T_1)$ and $f_2:[w]\to V(T_2)$ be two functions such that $\{f_1^{-1}(u)\mid u\in \im(f_1)\}=\{f_2^{-1}(u)\mid u\in \im(f_2)\}$.
Thanks to this equality, we can identify $f_1$ with $f_2$ and say that $V(T_1)\cap V(T_2)=\im(f_1)$.
We define $(T_1,r_1,f_1)\diamondsuit_{\sf join} (T_2,r_2,f_2)$ as the set of all pairs $(T,r,f)$ such that:
\begin{enumerate}
	\item $f=f_1=f_2$,
	\item $(T,r)$ is a rooted tree,
	\item $V(T)=V(T_1)\cup V(T_2)$,
	\item for $i\in\{1,2\}$, if $u\in V(T_i)$ and $v\in \anc_{T_i,r_i}(u)$, then $v\in \anc_{T,r}(u)$, and
	\item for every $v\in \leaf(T,r)$ and every $w\in\anc_{T,r}(v)$, if $V(vTw)\cap \im(f)=\emptyset$, then there is an $i\in\{1,2\}$ such that $vTw=vT_iw$.
\end{enumerate}

The last item, which states that a branch of $T$ that does not intersect $\im(f)$ either belongs to $T_1$ or $T_2$, is a property needed to ensure connectivity and allows the application of the crop operation in \autoref{@virtuously} and \autoref{@authenticity}.

Let $(T,r)$ be a rooted tree, and for $i\in\{1,2\}$, let $K_i\subseteq V(T)$ and $h_i:K_i\to\bN$.\\
$\update_{T,r,K_1,K_2}(h_1,h_2)$ is the function that maps $v\in V(T)$ to the maximum of 
$$\mbox{$h_1(v)$, $h_2(v)$, and $\max_{c\in \ch(v)}\{1+\update_{T,r,K_1,K_2}(h_1,h_2)(c)\}$,}$$
when they are defined.
For $i\in\{1,2\}$, let $(T_i,r_i)$ be a rooted tree with $V(T_i)=K_i$ and such that the ancestor-descendant relationship between the nodes of $K_i$ is the same in $T$ and $T_i$.
Then we can observe that $\update_{T,r,K_1,K_2}(\height_{T_1,r_1},\height_{T_2,r_2})=\height_{T,r}$.

\subparagraph{Complete join procedure.}
The join procedure corresponds to ``merging'' two annotated trees whose intersection is exactly their boundary.
More formally, given two annotated trees $\hat{T}_1=(T_1,r_1,h_1,{\bf R}_1,f_1)$ and $\hat{T}_2=(T_2,r_2,h_2,{\bf R}_2,f_2)$, the \emph{complete join procedure}, denoted by $\join(\hat{T}_1,\hat{T}_2)$, outputs a set $\Acal$ of annotated trees constructed as follows.
Initially, $\Acal$ is empty.
If ${\bf R}_1$ and ${\bf R}_2$ are compatible (i.e., such that $\{f_1^{-1}(u)\mid u\in \im(f_1)\}=\{f_2^{-1}(u)\mid u\in \im(f_2)\}$) and such that $\Ker(h_1\circ f_1)=\Ker(h_2\circ f_2)$, then for $(T,r,f)\in(T_1,r_1,f_1)\diamondsuit_{\sf join}(T_2,r_2,f_2)$, let $h:=\update_{T,r,V(T_1),V(T_2)}(h_1,h_2)$.
Let $(R,X,\rho):={\bf R}_1\bigoplus{\bf R}_2$. If each connected component of $R$ belongs to $\exc(\Fcal)$, then we add $(T,r,h,{\bf R}_1\bigoplus{\bf R}_2,f)$ to $\Acal$.
Given two sets $\Acal_1$ and $\Acal_2$ of annotated trees, we set $\join^*(\Acal_1,\Acal_2):=\bigcup_{\hat{T}_1\in \Acal_1,\hat{T}_2\in \Acal_2}\join^*(\hat{T}_1,\hat{T}_2)$.

\begin{lemma}\label{@virtuously}
Let $G$ be a graph, $({\sf T},\beta,{\sf r})$ be a nice tree decomposition of $G$, $v$ be a join node of ${\sf T}$ with children $v_1$ and $v_2$, and $\rho:\beta(v)\to[|\beta(v)|]$ be a bijection.
Let $\Acal:=\fullchar(G_v,\beta(v),\rho)$, $\Acal_1:=\fullchar(G_{v_1},\beta(v_1),\rho)$, and $\Acal_2:=\fullchar(G_{v_2},\beta(v_2),\rho)$.
Then $\Acal=\join^*(\Acal_1,\Acal_2)$.
\end{lemma}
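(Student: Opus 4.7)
The plan is to establish the equality $\Acal=\join^*(\Acal_1,\Acal_2)$ by proving two inclusions, in the style of \autoref{@nationalistic} and \autoref{@neopositivist}.

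For the inclusion $\join^*(\Acal_1,\Acal_2)\subseteq\Acal$, I will start from witness pairs $(\chi_i,\sigma_i)$ for $\hat{T}_i\in\Acal_i$ and an arbitrary $\hat{T}=(T,r,h,{\bf R},f)\in\join^*(\hat{T}_1,\hat{T}_2)$, and will construct a witness pair $(\chi,\sigma)$ for $\hat{T}\in\Acal$. The natural choice is to set $\chi(t):=\chi_i(t)$ when $t\in V(T_i)\setminus V(T_{3-i})$ and $\chi(t):=\chi_1(t)\cup\chi_2(t)$ when $t\in V(T_1)\cap V(T_2)=\im(f)$. The compatibility of ${\bf R}_1$ and ${\bf R}_2$ together with the kernel condition $\Ker(h_1\circ f_1)=\Ker(h_2\circ f_2)$ required by the complete join procedure guarantees that, on each shared node $t$, the bags $\chi_1(t),\chi_2(t)$ agree on $\beta(v)$ and share the same leaf/internal status across $T_1$ and $T_2$. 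I will then verify the five $\Fcal$-elimination-tree axioms for $(T,\chi,r)$: the partition and $|\chi(t)|=1$ axioms follow from the summands together with the above agreement; edge-consistency uses that each edge of $G_v$ lies in some $G_{v_i}$ combined with item~4 of $\diamondsuit_{\sf join}$; the $\exc(\Fcal)$-leaf axiom is precisely the requirement imposed by the complete join procedure on each connected component of $R_1\oplus R_2$; and subtree-connectivity follows from item~5 of $\diamondsuit_{\sf join}$, which confines each maximal branch of $T$ between consecutive nodes of $\im(f)$ to a single $T_i$ and allows me to glue connected pieces of $G_{v_1}$ and $G_{v_2}$ along $\chi(t)\cap\beta(v)$. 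The isomorphism $\sigma$ is obtained by gluing $\sigma_1$ and $\sigma_2$ via the $\bigoplus$ construction, and the identity $h=\height_{T,r}$ follows from the $\update$ observation already recorded for $\diamondsuit_{\sf join}$.

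For the converse inclusion $\Acal\subseteq\join^*(\Acal_1,\Acal_2)$, given $\hat{T}\in\Acal$ with witness $(\chi,\sigma)$, I will set $\chi_i(t):=\chi(t)\cap V(G_{v_i})$ for every $t\in V(T)$ and build $T_i$ by restricting $T$ to the nodes with $\chi_i(t)\neq\emptyset$ and dissolving the others (every $t\in\im(f)$ is automatically retained since it contains a boundary vertex that lies on both sides). I will take ${\bf R}_i$ to be the corresponding restriction of ${\bf R}$ so that ${\bf R}_1\bigoplus{\bf R}_2={\bf R}$, $f_i:=f$, $h_i:=\height_{T_i,r_i}$, and $\sigma_i$ the restriction of $\sigma$. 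The main verification is that $(T_i,\chi_i,r_i)$ is an $\Fcal$-elimination tree of $G_{v_i}$: the partition, $|\chi_i|=1$, and edge-consistency axioms are direct, the leaf axiom uses that $\exc(\Fcal)$ is minor-closed, and the subtree-connectivity axiom follows from the identity $\chi_i(T_{i,t})=\chi(T_t)\cap V(G_{v_i})$ combined with a walk-routing argument: any walk in $G_v[\chi(T_t)]$ that leaves $V(G_{v_i})$ can only exit and re-enter through $\beta(v)\cap\chi(T_t)$, since $\beta(v)$ separates $V(G_{v_1})\setminus\beta(v)$ from $V(G_{v_2})\setminus\beta(v)$, so its $V(G_{v_i})$-trace yields a walk entirely inside $G_{v_i}[\chi_i(T_{i,t})]$. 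I will then check the five items of $\diamondsuit_{\sf join}$ for $(T,r,f)$ with respect to $(T_1,r_1,f_1)$ and $(T_2,r_2,f_2)$: items~1--4 come from the construction, and item~5 follows from the observation that a leaf-to-nearest-$\im(f)$-ancestor branch of $T$ induces a connected subgraph of $G_v$ disjoint from $\beta(v)$ strictly between its endpoints, hence lying on a single side and in a single $T_i$; the compatibility and kernel conditions of $\join^*$ follow from the shared boundary structure.

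The hard part will be the subtree-connectivity verification in the converse direction: showing that restricting the bag $\chi(T_t)$ to one side of the separator $\beta(v)$ still induces a connected subgraph of $G_{v_i}$. This crucially relies on the absence of direct edges across the separator, which lets one reroute any walk in $G_v[\chi(T_t)]$ that wandered into the opposite side back through the boundary vertices present in $\chi(T_t)$. A related technical point will be ensuring that the dissolution of empty-bag nodes in the construction of $T_i$ simultaneously preserves all elimination-tree axioms and the items of $\diamondsuit_{\sf join}$, most notably item~5, whose verification again reduces to the same connectivity-on-one-side argument.
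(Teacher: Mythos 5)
Your proposal follows the same route as the paper's own proof: the forward inclusion by gluing the two witness pairs and using item~4 of $\diamondsuit_{\sf join}$ for the edge axiom, item~5 for subtree-connectivity, and the $\update$ observation for the heights; the backward inclusion by restricting $(\chi,\sigma)$ to each side, dissolving the nodes whose restricted bag is empty, and then checking the items of $\diamondsuit_{\sf join}$. However, there is a genuine gap at exactly the step you single out as the hard part, and your walk-routing argument does not close it: the $V(G_{v_i})$-trace of a walk in $G_v[\chi(T_t)]$ is \emph{not} a walk, because when the walk leaves $V(G_{v_i})$ through a boundary vertex and re-enters through a \emph{different} boundary vertex, nothing makes those two boundary vertices adjacent, or even connected, inside $G_{v_i}[\chi_i(T_{i,t})]$. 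In fact the claim you are trying to prove is false: the dissolution-restriction of an $\Fcal$-elimination tree of $G_v$ need not be an $\Fcal$-elimination tree of $G_{v_i}$. Take $\beta(v)=\{p,q\}$, $G_{v_1}$ the induced path $p\,a\,b\,q$, $G_{v_2}$ the induced path $p\,z\,q$ (so $G_v$ is a $5$-cycle), and the elimination tree of $G_v$ consisting of the chain of internal nodes $r,t,t'$ with $\chi(r)=\{b\}$, $\chi(t)=\{a\}$, $\chi(t')=\{z\}$ and two leaves below $t'$ with bags $\{p\}$ and $\{q\}$; this gives an annotated tree in $\Acal$. Restricting to $G_{v_1}$ dissolves only $t'$, so both leaves become children of $t$, and the subtree at $t$ then has bag-union $\{a,p,q\}$, which induces in $G_{v_1}$ only the edge $pa$ with $q$ isolated: connectivity fails. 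Worse, no re-assignment of bags on this dissolved tree shape can succeed (the two leaf bags are forced to be exactly $\{p\}$ and $\{q\}$ by ${\bf R}$, and whichever of $a,b$ sits at $t$ leaves one of $p,q$ isolated), so the restricted annotated tree is simply not in $\Acal_1$.

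The backward inclusion itself survives in this example, but only via a \emph{different} choice of $T_1$, namely one in which the leaf $\{q\}$ is re-hung as a child of the node carrying $b$; this is permitted by $\diamondsuit_{\sf join}$ because its item~4 only requires that ancestor relations of $T_i$ persist in $T$, not conversely. So a correct proof of $\Acal\subseteq\join^*(\Acal_1,\Acal_2)$ must construct $T_1,T_2$ by re-attaching each surviving subtree to a suitable higher ancestor (one whose bag contains a $G_{v_i}$-neighbour of the subtree's side-$i$ part, so that connectivity on that side is restored), and must then re-verify items~4 and~5 of $\diamondsuit_{\sf join}$, the kernel condition, and the identity $h=\update(h_1,h_2)$ for the re-hung trees; plain dissolution plus the separator argument does not deliver any of this. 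To be fair, the paper's own write-up dismisses precisely this verification with ``it is easy to see'', so you are not missing an argument that the paper supplies; but as written your justification is incorrect, and the intermediate statement it targets is false, so this step needs a genuinely different construction rather than a patched walk-tracing argument.
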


\begin{proof}
Let $\hat{T}=(T,r,h,(R,X,\phi),f)\in \join^*(\Acal_1,\Acal_2)$.
There is $\hat{T}_1=(T_1,r_1,h_1,{\bf R}_1,f_1)\in \Acal_1$ and $\hat{T}_2=(T_2,r_2,h_2,{\bf R}_2,f_2)\in \Acal_2$ such that $\hat{T}\in\join^*(\hat{T}_1,\hat{T}_2)$ with witness pair $(\chi_1,\sigma_1)$ and $(\chi_2,\sigma_2)$, respectively, such that $\chi_{1|\im(f)}=\chi_{2|\im(f)}$ and $\sigma_{1|X}=\sigma_{2|X}$.
Note that $$h:=\update_{T,r,V(T_1),V(T_2)}(h_1,h_2)=\update_{T,r,V(T_1),V(T_2)}(\height_{T_1,r_1},\height_{T_2,r_2})=\height_{T,r}.$$
Let $(\chi,\sigma)=(\chi_1\cup\chi_2,\sigma_1\cup\sigma_2)$.

It is easy to see that $\sigma$ is an isomorphism between ${\bf R}$ and $(\bigcup_{w\in\im(f)}G[\chi(w)],\beta(v),\rho)$.
If $uv\in E(G_v)$, since $G_{v_1}\setminus\beta(v)$ and $G_{v_2}\setminus\beta(v)$ are not connected, then there is $i\in\{1,2\}$ such that $\chi_i(u)\in \anc_{T,r}(\chi_i(v))\cup\desc_{T,r}(\chi_i(v))$ holds, so $\chi(u)\in \anc_{T,r}(\chi(v))\cup\desc_{T,r}(\chi(v))$ due to item 4 of the $\diamondsuit_{\sf join}$ operation.

Moreover, item 5 of the $\diamondsuit_{\sf join}$ operation ensures the connectivity of $G_v[\chi(T_w)]$ for each $w\in V(T)$.
Indeed, let $i\in\{1,2\}$ be such that $w\in V(T_i)$. Suppose towards a contradiction that $G_v[\chi(T_w)]$ is not connected.
Note that it implies that $w\notin\im(f)$, because $\im(f)\subseteq V(T_1)\cup V(T_2)$ and $(T_1,\chi_1,r_1)$ and $(T_2,\chi_2,r_2)$ are $\Fcal$-elimination trees of $G_{v_1}$ and $G_{v_2}$, respectively, so $G_{v_1}[\chi_1((T_1)_w)]$ and $G_{v_2}[\chi_2((T_2)_w)]$ are connected and therefore $G_v[\chi(T_w)]$ would be connected.
We assume that $w$ is a minimal node such that $G_v[\chi(T_w)]$ is not connected, i.e., for every $u\in V(T_w)\setminus \{w\}$, $G_v[\chi(T_u)]$ is connected.
Hence, there is $u\in\ch_{T,r}(w)$ such that $\chi(w)$ is not connected to $G_v[\chi(T_u)]$.
So $V(T_u)\cup V(T_i)=\emptyset$, since otherwise, the connectivity of $G_{v_i}[\chi_i((T_i)_w)]$ would imply the connectivity of $\chi(w)$ with $G_v[\chi(T_u)]$.
Thus, there is an $x\in\leaf(T,r)$ such that $u\in\anc_{T,r}(x)$, and therefore $w\in\anc_{T,r}(x)$, and $V(xTw)\cap\im(f)=\emptyset$.
So, according to item 5 of the $\diamondsuit_{\sf join}$ operation, there is $i\in\{1,2\}$ such that $xTw=xT_iw$.
This contradicts the fact that $w\in V(T_i)\setminus\im(f)$ and $V(T_u)\subseteq V(T_j)\setminus\im(f)$ where $\{i,j\}=\{1,2\}$.
Thus, $(T,\chi,r)$ is an $\Fcal$-elimination tree of $G_v$.
Therefore, $(\chi,\sigma)$ witnesses that $\hat{T}\in \Acal$.

Conversely, let $\hat{T}\in \Acal$ with witness pair $(\chi,\sigma)$.
Let $(\chi_1,\sigma_1)$ and $(\chi_2,\sigma_2)$ be the co-restrictions of $(\chi,\sigma)$ to $G_{v_1}$ and $G_{v_2}$, respectively.

Let $T_1$ be the tree obtained from $T$ by removing the nodes not in $\im(\chi_1)$ and adding edges between the parent and children of each removed node.
Let $r_1$ be the root of $T_1$ and $h_1:=\height_{T_1,r_1}$.
Let ${\bf R}_1$ be obtained from ${\bf R}$ by removing the vertices not in $\im(\sigma_1)$.
Then it is easy to see that $\hat{T}_1=(T_1,r_1,h_1,{\bf R}_1,f)$ belongs to $B_1$ with witness pair $(\chi_1,\sigma_1)$.
We construct similarly $\hat{T}_2=(T_2,r_2,h_2,{\bf R}_2,f)\in B_2$ with witness pair $(\chi_2,\sigma_2)$.

Moreover, we claim that $\hat{T}\in \join^*(\hat{T}_1,\hat{T}_2)$.
To prove this claim, the less trivial part
is to show that $T$ respects item 5 of the $\diamondsuit_{\sf join}$ operation.
Let $x\in\leaf(T,r)$ and $w\in\anc_{T,r}(x)$ such that $V(xTw)\cap\im(f)=\emptyset$.
Suppose towards a contradiction that $V(xTw)\cap V(T_1)\neq\emptyset$ and $V(xTw)\cap V(T_2)\neq\emptyset$.
Thus, there exist $u_1\in V(xTw)\cap V(T_1)\setminus\im(f)$ and $u_2\in V(xTw)\cap V(T_2)\setminus\im(f)$.
Without loss of generality, suppose that $x\in V(T_1)$.
We take such a node $u_2$ that is  closest to $x$ and $u_1$ to be the node just before $u_2$ in the path from $x$ to $u_2$.
Since $V(xTu_1)\subseteq V(T_1)\setminus\im(f)$ and $u_2\in V(T_2)\setminus\im(f)$, $\chi(u_2)$ and $\chi(xTu_1)$ are not connected.
This contradicts the fact that $G_v[\chi(T_{u_2})]$ is connected.
Therefore, $\hat{T}\in \join^*(\hat{T}_1,\hat{T}_2)$, which implies that $\hat{T}\in \join^*(\Acal_1,\Acal_2)$.
\end{proof}

\subparagraph{Join procedure.}
Given two annotated trees $\hat{T}_1$ and $\hat{T}_2$, the \emph{join procedure}, denoted by $\join(\hat{T}_1,\hat{T}_2)$, outputs ${\sf filter}_k\circ\rep\circ\join^*(\hat{T}_1,\hat{T}_2)$.
See \autoref{@characterized} for an example.
Note that we do not apply the crop operation since joining two cropped annotated trees gives a cropped annotated tree.
Given two sets $\Acal_1$ and $\Acal_2$ of annotated trees, $\join(\Acal_1,\Acal_2)$ outputs $\bigcup_{\hat{T}_1\in \Acal_1,\hat{T}_2\in \Acal_2}\join(\hat{T}_1,\hat{T}_2)$.

\begin{figure}[ht]
	\centering
\includegraphics[width=0.8\textwidth]{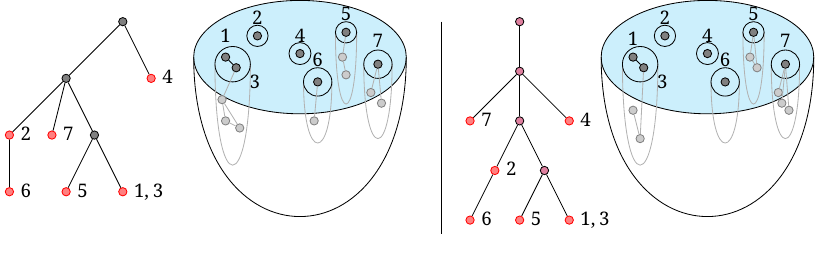}
\includegraphics[width=0.4\textwidth]{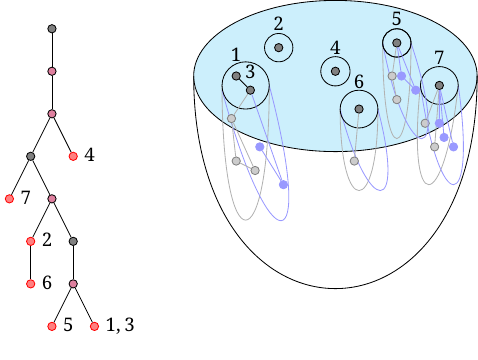}
	\caption{An annotated tree (below) obtained from the join procedure applied to the two annotated trees above.}
	\label{@characterized}
\end{figure}

\begin{lemma}\label{@authenticity}
Let $G$ be a graph, $({\sf T},\beta,{\sf r})$ be a nice tree decomposition of $G$, $v$ be a join node of ${\sf T}$ with children $v_1$ and $v_2$, and $\rho:\beta(v)\to[|\beta(v)|]$ be a bijection.
Let $k$ be an integer, $\Acal:=\Char_k(G_v,\beta(v),\rho)$, $\Acal_1:=\Char_k(G_{v_1},\beta(v_1),\rho)$, and $\Acal_2:=\Char_k(G_{v_2},\beta(v_2),\rho)$.
Then $\Acal=\join_k(\Acal_1,\Acal_2)$.
\end{lemma}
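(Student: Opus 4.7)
The plan is to mirror the proofs of \autoref{@socialization} (forget) and \autoref{@philosopher} (introduce): reduce to the corresponding identity on full characteristics via \autoref{@virtuously}, and then argue that the three ingredients of $\Mcal_k={\sf filter}_k\circ\rep\circ\crop$ commute appropriately with $\join^*$. Concretely, I would set $\Dcal:=\fullchar(G_v,\beta(v),\rho)$ and $\Dcal_i:=\fullchar(G_{v_i},\beta(v_i),\rho)$ for $i\in\{1,2\}$, use \autoref{@virtuously} to obtain $\Dcal=\join^*(\Dcal_1,\Dcal_2)$, and then establish the operator identity
\[\Mcal_k\circ\join^*(\Ccal_1,\Ccal_2)=\join_k(\Mcal_k(\Ccal_1),\Mcal_k(\Ccal_2))\]
for every pair of sets of annotated trees $\Ccal_1,\Ccal_2$. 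Applying this identity to $(\Dcal_1,\Dcal_2)$ yields
\[\Acal=\Mcal_k(\Dcal)=\Mcal_k\circ\join^*(\Dcal_1,\Dcal_2)=\join_k(\Mcal_k(\Dcal_1),\Mcal_k(\Dcal_2))=\join_k(\Acal_1,\Acal_2).\]

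To verify the operator identity I would check the three commutation properties separately. For the crop operation, the key observation is item~5 of the $\diamondsuit_{\sf join}$ operation: any leaf of the joined tree $T$ that is connected to an ancestor via a path avoiding $\im(f)$ lies entirely inside either $T_1$ or $T_2$. In particular, the leaves of $T$ not in $\im(f)$ are exactly the union of the leaves of $T_1$ and $T_2$ not in $\im(f)$, so iteratively pruning such leaves commutes with the join and one obtains $\crop\circ\join^*=\join^*\circ(\crop,\crop)$ (using that joining two already cropped trees does not introduce new prunable leaves). For the filter operation, since $h=\update_{T,r,V(T_1),V(T_2)}(h_1,h_2)$ dominates both $h_1$ and $h_2$, if either of the input trees has height exceeding $k$ then so does every tree in $\join^*$; therefore filtering before the join does not lose any tree that would survive filtering after the join, giving ${\sf filter}_k\circ\join^*={\sf filter}_k\circ\join^*\circ({\sf filter}_k,{\sf filter}_k)$.

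The subtle property is the compatibility of $\rep$ with $\join^*$. For each node $u\in\im(f)$ that is in $\im(f_1)\cap\im(f_2)$, the boundaried component of ${\bf R}_1\bigoplus{\bf R}_2$ attached to~$u$ is the gluing via $\bigoplus$ of the boundaried components attached to $u$ in ${\bf R}_1$ and ${\bf R}_2$ (each regarded as a boundaried graph whose boundary is $\phi_i^{-1}\circ f_i^{-1}(u)$). By the very definition of the equivalence relation $\equiv_{\ell_\Fcal}$ (applied with the ``partner'' boundaried graph corresponding to the other side and with $H$ ranging over graphs in $\Fcal$, whose details are all bounded by $\ell_\Fcal$), replacing each such component by its $\ell_\Fcal$-representative before performing the gluing preserves membership in $\exc(\Fcal)$ of the resulting component, and preserves the $\ell_\Fcal$-folio, hence the $\equiv_{\ell_\Fcal}$-class of the glued component. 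It follows that $\rep\circ\join^*=\rep\circ\join^*\circ(\rep,\rep)$. The main obstacle I anticipate is handling the fact that, when building the representative of the glued component, the boundary vertices of $u$ might have been renamed via the bijections $\sigma_u$ used inside $\rep$; this must be tracked carefully so that the two sides' boundary labelings at $u$ remain consistent with the gluing prescribed by $\bigoplus$, but this bookkeeping is identical in spirit to the one already handled in the definitions of $\rep$ and of $\join^*$. Combining the three commutation properties (using that the crop, rep, and filter operations pairwise commute) gives the desired operator identity, completing the proof.
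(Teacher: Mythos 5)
Your proposal is correct and follows essentially the same route as the paper's proof: both reduce to the identity on complete characteristics via \autoref{@virtuously} and then verify that $\crop$, $\rep$, and ${\sf filter}_k$ interact with $\join^*$ exactly as you state (crop via item 5 of the $\diamondsuit_{\sf join}$ operation, filter via height monotonicity, rep because the join glues ${\bf R}_1$ and ${\bf R}_2$ along their boundaries without deleting vertices or edges), before combining the three commutations into $\Mcal_k\circ\join^*=\join_k\circ(\Mcal_k,\Mcal_k)$. Your more explicit justification of the $\rep$ step through the congruence property of $\equiv_{\ell_\Fcal}$ is simply a fleshed-out version of the paper's terser remark and introduces no new ideas or gaps.
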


\begin{proof}
Let $\Bcal:=\fullchar(G_v,\beta(v),\rho)$, $\Bcal_1:=\fullchar(G_{v_1},\beta(v_1),\rho)$, and $\Bcal_2:=\fullchar(G_{v_2},\beta(v_2),\rho)$.
According to \autoref{@virtuously}, $\Bcal=\join^*(\Bcal_1,\Bcal_2)$.

Let $\hat{T}=(T,r,h,{\bf R},f)\in \Bcal$, $\hat{T}_1=(T_1,r_1,h_1,{\bf R}_1,f_1)\in \Bcal_1$, and $\hat{T}_2=(T_2,r_2,h_2,{\bf R}_2,f_2)\in \Bcal_2$, such that $\hat{T}\in\join^*(\hat{T}_1,\hat{T}_2)$.
The complete join procedure joins ${\bf R}_1$ and ${\bf R}_2$ to obtain~${\bf R}$, without modifying their boundary nor deleting vertices or edges, so given two sets of annotated trees $\Ccal_1$ and $\Ccal_2$, $\rep\circ\join^*(\Ccal_1,\Ccal_2)=\rep\circ\join^*(\rep(\Ccal_1),\rep(\Ccal_2))$.

The procedure can only increase the height of annotated trees, so ${\sf filter}_k\circ\join^*(\Ccal_1,\Ccal_2)={\sf filter}_k\circ\join^*({\sf filter}_k(\Ccal_1),{\sf filter}_k(\Ccal_2))$.

Moreover, item 6 of the $\diamondsuit_{\sf join}$ operation implies that, if for $w\in V(T)$, $V(T_w)\cap\im(f)=\emptyset$, then there is $i\in\{1,2\}$ such that $T_w=(T_i)_w$.
Therefore, each cropped subtree of $T$ is exactly a cropped subtree of $T_1$ or a cropped subtree of $T_2$.
Thus, $\crop\circ\join^*(\Ccal_1,\Ccal_2)=\join^*(\crop(\Ccal_1),\crop(\Ccal_2))$.

Since these three operations are commutative, we have $\Acal=\Mcal_k(\Bcal)={\sf filter}_k\circ\rep\circ\join^*(\Mcal_k(\Bcal_1),\Mcal_k(\Bcal_2))=\join_k(\Acal_1,\Acal_2)$.
\end{proof}

For the join procedure, we finally prove that it can generate a bounded number of annotated trees.

\begin{lemma}\label{@leistungsfiihigkeit}
Let $w$ and $k$ be two positive integers, let ${\bf G}_1$ and ${\bf G}_2$ be two compatible $w$-boundaried graphs, let $\hat{T}_1\in\Char_k({\bf G}_1)$, and let $\hat{T}_2\in\Char_k({\bf G}_2)$.
Then $|\join_k(\hat{T}_1,\hat{T}_2)|=2^{\Ocal(w\cdot k)}$.
\end{lemma}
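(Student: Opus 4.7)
My plan is to first reduce the problem to bounding the size of the diamond-join set. Since $\join_k = {\sf filter}_k \circ \rep \circ \join^*$ and neither filtering nor representation can increase the cardinality of its input (the filter discards trees, while $\rep$ maps each annotated tree to a single representative), it suffices to bound $|\join^*(\hat{T}_1, \hat{T}_2)|$. As the complete join procedure outputs at most one annotated tree per triple $(T, r, f) \in (T_1, r_1, f_1) \diamondsuit_{\sf join} (T_2, r_2, f_2)$, the task reduces to bounding the cardinality of this diamond-join set.

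Next, I would observe that since $\hat{T}_i \in \Char_k({\bf G}_i)$, the tree $T_i$ has been cropped, so all its leaves lie in $\im(f_i)$, giving $|\leaf(T_i, r_i)| \leq w$. Combined with the height bound $h_i(r_i) \leq k$ coming from the filter, this yields $|V(T_i)| \leq w(k+1)$, so letting $V := V(T_1) \cup V(T_2)$ we obtain $|V| = \Ocal(wk)$. A naive Cayley-type bound of $|V|^{|V|-1}$ only gives $2^{\Ocal(wk \log (wk))}$, so the structural constraints of $\diamondsuit_{\sf join}$ must be exploited to remove the logarithmic factor.

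The heart of the plan is to count rooted trees $(T, r)$ on $V$ satisfying items 1--5 of $\diamondsuit_{\sf join}$, using a decomposition along $\im(f) = \im(f_1) = \im(f_2)$. By items 1 and 4 together with the compatibility assumption enforced in the complete join procedure, the ancestor--descendant structure of $T$ restricted to $\im(f)$ is rigidly determined (it must simultaneously extend both $T_i$'s $\im(f)$-structure). The at most $w$ vertices of $\im(f)$ thereby split $T$ into at most $\Ocal(w)$ ``segments'', defined as the maximal connected pieces of $T \setminus \im(f)$ together with their $\im(f)$-attachment points. Within each such segment, item 5 forces every maximal leaf-to-interior subpath disjoint from $\im(f)$ to lie entirely in one of $T_1$ or $T_2$, and combined with the ancestry preservation of item 4, this reduces the local counting to enumerating admissible interleavings of two rooted forests whose total size within the segment is at most $2k$. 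A standard shuffle-counting bound then delivers at most $\binom{2k}{k} = 2^{\Ocal(k)}$ configurations per segment, and multiplying over the $\Ocal(w)$ segments yields the desired $2^{\Ocal(wk)}$.

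The main technical obstacle will be the per-segment count, since a non-$\im(f)$ node of $T_i$ may have multiple $T_i$-children, so a segment is not in general a path but possibly a small subtree. The plan is to resolve this by using item 5 once more to attribute each branch of a segment to exactly one of $T_1$ or $T_2$, after which counting the joint arrangements inside a segment reduces to a bounded interleaving of two rooted forests each of size at most $k$, still admitting $2^{\Ocal(k)}$ options. Combining this with the segment decomposition gives $|(T_1,r_1,f_1) \diamondsuit_{\sf join} (T_2,r_2,f_2)| = 2^{\Ocal(wk)}$, and hence $|\join_k(\hat{T}_1, \hat{T}_2)| = 2^{\Ocal(wk)}$, as claimed.
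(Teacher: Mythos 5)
Your reduction to bounding $|(T_1,r_1,f_1)\diamondsuit_{\sf join}(T_2,r_2,f_2)|$ and your bounds on the number of leaves, the height, and hence on $|V(T_1)\cup V(T_2)|$ are fine and coincide with the paper's first steps. The gap is in the counting itself. Your argument rests on the claim that the ancestor--descendant structure of $T$ restricted to $\im(f)$ is rigidly determined by $T_1$ and $T_2$. Item 4 of the $\diamondsuit_{\sf join}$ operation only forces the ancestor relation of $T$ to \emph{contain} those of $T_1$ and $T_2$; it does not forbid new comparabilities among vertices of $\im(f)$. For instance, if $x,y\in\im(f)$ are leaves of both $T_1$ and $T_2$ whose parents $r_1,r_2$ are not in $\im(f)$, nothing in items 1--5 excludes a member $T$ of the diamond-join in which $x\in\anc_{T,r}(y)$: item 5 constrains only paths that start at a leaf of $T$, and since every leaf of a cropped tree lies in $\im(f)$, it imposes nothing here. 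Consequently the induced structure on $\im(f)$, and with it your ``segment'' decomposition, varies with $T$; you never count how many such structures can occur, and a priori this contributes a factor as large as $w^{\Theta(w)}=2^{\Theta(w\log w)}$, which is not absorbed by $2^{\Ocal(w\cdot k)}$ when $k$ is small. A second, related problem is the per-segment step: you invoke item 5 to ``attribute each branch of a segment to exactly one of $T_1$ or $T_2$'', but the segments are components of $T\setminus\im(f)$ and contain no leaves of $T$, so item 5 does not deliver that attribution.

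For comparison, the paper avoids any decomposition along $\im(f)$: it notes that $T$ has at most $w$ leaves, each a common leaf of $T_1$ and $T_2$, and height at most $k$, and then encodes $T$ by choosing, for each of the at most $w$ leaf-to-root paths of $T$ (each of length at most $k+1$), which of its vertices are taken from the corresponding path in $T_1$, the rest coming from the path in $T_2$; this yields at most $2^{w(k+1)}$ trees, after which $h$ and ${\bf R}$ are determined by $T$, $h_1$, $h_2$, ${\bf R}_1$, ${\bf R}_2$, exactly as in your final remark. So the interleaving idea you apply per segment is the right one, but it must be applied to the leaf-to-root paths directly; if you insist on the segment route you would additionally have to bound the number of admissible $\im(f)$-configurations and justify the segment attribution without item 5.
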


\begin{proof}
Let $\hat{T}_1=(T_1,r_1,h_1,{\bf R}_1,f_1)$ and $\hat{T}_2=(T_2,r_2,h_2,{\bf R}_2,f_2)$.
We  will first show that $$|(T_1,r_1,f_1)\diamondsuit_{\sf join}(T_2,r_2,f_2)|=2^{\Ocal(w\cdot k)}.$$
Let $(T,r,f)\in (T_1,r_1,f_1)\diamondsuit_{\sf join}(T_2,r_2,f_2)$.
Notice that $(T,r)$ has at most $w$ leaves and height at most $k$.
Each leaf of $(T,r)$ is in $\im(f)=\im(f_1)=\im(f_2)$, so it corresponds to both a leaf of $(T_1,r_1)$ and a leaf of $(T_2,r_2)$.
Also note that $T$ is obtained by choosing, for each path from a leaf $v$ to $r$, a subset that corresponds to the path $v T_1 r$ (the rest is the path $v T_2 r$).
There are at most $w$ such paths from a leaf to $r$, and each of them has length at most $k+1$.
So $|(T_1,r_1,f_1)\diamondsuit_{\sf join}(T_2,r_2,f_2)|\leq 2^{w(k+1)}$.
Then, to construct an annotated tree $(T,r,h,{\bf R},f)$, the function $h$ and the boundaried graph ${\bf R}$ are totally
determined by $T$, $h_1$, $h_2$, ${\bf R}_1$, and ${\bf R}_2$.
So we obtain the desired result.
\end{proof}

\subsection{The algorithm}\label{@philanthropy}

We finally present a recursive algorithm (\autoref{@millennium}) that computes the elimination distance to $\exc(\Fcal)$ of a graph of bounded treewidth and proves \autoref{@unquestioned}. More precisely, given a boundary graph ${\bf G}$, a nice tree decomposition, and an integer $k$, the algorithm outputs $\Char_k({\bf G})$.

\begin{algorithm}[!ht]
\caption{\edrec$({\bf G},\Tcal,k,v)$}\label{@millennium}
\DontPrintSemicolon

  \KwInput{A connected boundaried graph ${\bf G}=(G,X,\rho)$, a nice tree decomposition $\Tcal=({\sf T},\beta,{\sf r})$ of $G$, an integer $k$, and a node $v\in V({\sf T})$ such that $X=\beta(v)$.}
  \KwOutput{The characteristic $\Char_k({\bf G})$ of $\bf G$.}
  $\Acal \gets \emptyset$\;
	$w\gets|\beta(v)|$\;
  \If{$v$ is a leaf}{$\Acal\gets\{\bf\un\}$}
  \ElseIf{$v$ is a forget node with child $v'$ and forgotten vertex $x$}
    {
    	$\rho'\gets\rho\oplus[x\mapsto w+1]$\;
			$\Acal'\gets\edrec((G,\beta(v'),\rho'),\Tcal,k,v')$\;
			$\Acal\gets \forget(\Acal')$
    }
  \ElseIf{$v$ is an introduce node with child $v'$ and introduced vertex $x$}
    {
    	$\tau\gets (x\leftrightarrow\rho^{-1}(w))$\;
			$\rho'\gets\rho\circ\tau$\;
			$\Acal'\gets\edrec((G,\beta(v'),\rho'|_{\beta(v')}),\Tcal,k,v')$\;
			$N\gets N_{G[\beta(v)]}(x)$\;
			$\Acal\gets\tau(\intr_k(\Acal',\rho'(N)))$
    }
  \ElseIf{$v$ is a join node with children $v_1$ and $v_2$}
   {
   		$\Acal_1\gets\edrec((G,\beta(v),\rho),\Tcal,k,v_1)$\;
			$\Acal_2\gets\edrec((G,\beta(v),\rho),\Tcal,k,v_2)$\;
			$\Acal\gets \join_k(\Acal_1,\Acal_2)$
   }
	\Return{$\Acal$}
\end{algorithm}

Note that using backtracking in \autoref{@millennium}, we can easily construct an annotated tree of minimum height in $\fullchar({\bf G})$ as well as its witness pair. In other words, given a connected graph $G$ such that $\ed_{\exc(\Fcal)}(G)\leq k$, we can construct an $\Fcal$-elimination tree of $G$ of height $\ed_{\exc(\Fcal)}(G)$ using \autoref{@millennium}.

\begin{lemma}\label{@accomplice}
Given a connected graph $G$, an integer $k$, a nice tree decomposition $\Tcal=({\sf T},\beta,{\sf r})$ of $G$ of width $w$, a bijection $\rho:\beta({\sf r})\to[|\beta({\sf r})|]$, and $v\in V({\sf T})$, \edrec$((G,\beta(v),\rho),\Tcal,k,v)$ outputs $\Char_k(G_v,\beta(v),\rho)$.
Moreover, \edrec$((G,\beta({\sf r}),\rho),\Tcal,k,{\sf r})$ outputs $\Char_k(G,\beta({\sf r}),\rho)$ in time $2^{\Ocal_{\ell_\Fcal}(w\cdot k+w\log w)}\cdot n$.
\end{lemma}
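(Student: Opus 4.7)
My plan is to prove both parts of the lemma by bottom-up induction on the nice tree decomposition $\Tcal$. The base case is when $v$ is a leaf node: then $G_v$ is a single-vertex graph with that vertex in the boundary, so its complete characteristic contains only the trivial annotated tree $\hat{\un}$, whose cropped representative is itself. Thus returning $\{\hat{\un}\}$ is correct, and this matches the algorithm.

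For the inductive step, I would handle the three internal node types using the procedure-correctness lemmas already established. If $v$ is a forget node with child $v'$ and forgotten vertex $x$, the algorithm sets $\rho'=\rho\oplus[x\mapsto w+1]$ and, by induction, $\Acal'=\Char_k(G_{v'},\beta(v'),\rho')$; then \autoref{@socialization} gives $\forget(\Acal')=\Char_k(G_v,\beta(v),\rho)$. If $v$ is an introduce node, the algorithm first swaps the roles of the introduced vertex $x$ and the boundary vertex with label $w$ via the transposition $\tau=(x\leftrightarrow\rho^{-1}(w))$; this is necessary because the introduce procedure (and the characterizing \autoref{@philosopher}) requires the new vertex to carry the largest label. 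By induction $\Acal'=\Char_k(G_{v'},\beta(v'),\rho'|_{\beta(v')})$ with $\rho'=\rho\circ\tau$, and \autoref{@philosopher} applied to $\rho'$ yields $\Char_k(G_v,\beta(v),\rho')=\intr_k(\Acal',\rho'(N))$; applying $\tau$ to all annotated trees in this set (which only permutes the labeling of the boundary) then yields $\Char_k(G_v,\beta(v),\rho)$, since $\rho=\rho'\circ\tau$. Finally, if $v$ is a join node, the induction hypothesis provides the characteristics for both children (both taken with the same bijection $\rho$), and \autoref{@authenticity} gives the equality $\Acal=\join_k(\Acal_1,\Acal_2)$.

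For the running time, I would combine the size bound on characteristics with the cost of each procedure. By \autoref{@belligerent}, every characteristic encountered has size at most $\funref{@caricatures}(w+1,k)=2^{\Ocal_{\ell_\Fcal}(w\cdot k+w\log w)}$. \autoref{@estclusire} can be used beforehand to ensure that $\Tcal$ has $\Ocal(w\cdot n)$ nodes, so \edrec\ is called $\Ocal(w\cdot n)$ times. For each call: the forget step costs time linear in the characteristic size, since it just removes a boundary label and reapplies the crop and representation operations (invoking \autoref{@objectives} on each component, each of bounded size, takes $2^{\Ocal_{\ell_\Fcal}(w\log w)}$ time per tree); the introduce step produces $\Ocal(w\cdot k)$ trees per input by \autoref{@equivalent}, each processed in time $2^{\Ocal_{\ell_\Fcal}(w\log w)}$ via \autoref{@objectives}; the join step processes $|\Acal_1|\cdot|\Acal_2|$ pairs, each producing $2^{\Ocal(w\cdot k)}$ outputs by \autoref{@leistungsfiihigkeit}, but after applying $\Mcal_k$ the total resulting set still has size bounded by $\funref{@caricatures}(w+1,k)$. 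Multiplying out gives $2^{\Ocal_{\ell_\Fcal}(w\cdot k+w\log w)}$ per node, and $2^{\Ocal_{\ell_\Fcal}(w\cdot k+w\log w)}\cdot n$ overall.

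The main technical obstacle, in my view, is the bookkeeping in the introduce case: one must justify carefully that the conjugation by $\tau$ commutes with the $\intr_k$ operation in the right way, i.e. that permuting boundary labels in each annotated tree of a characteristic yields exactly the characteristic of the same graph with permuted $\rho$. A secondary subtle point is ensuring that throughout the recursion we never need to store intermediate sets larger than the bound in \autoref{@belligerent}; this is where the ${\sf filter}_k\circ\rep\circ\crop$ operation is essential, and one must verify that it is applied at every step so that the input characteristics for the next procedure remain within the bounded size, ensuring the claimed running time.
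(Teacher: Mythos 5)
Your proposal is correct and follows essentially the same route as the paper: induction on the tree decomposition with the leaf base case and \autoref{@socialization}, \autoref{@philosopher}, and \autoref{@authenticity} for the forget, introduce, and join nodes, together with a running-time analysis based on \autoref{@belligerent}, \autoref{@equivalent}, \autoref{@leistungsfiihigkeit}, \autoref{@objectives}, and the $\Ocal(w\cdot n)$ bound on the number of bags from \autoref{@estclusire}. The only difference is that you explicitly flag the label-transposition bookkeeping $\tau$ in the introduce case (that permuting boundary labels of every annotated tree in a characteristic yields the characteristic for the permuted $\rho$), a point the paper's proof passes over silently; spelling it out is harmless and, if anything, slightly more careful.
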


\begin{proof}
We prove that for every $v\in V({\sf T})$, by induction on the height of ${\sf T}_v$,
the algorithm \edrec\ of \autoref{@millennium} with input $((G,\beta(v),\rho),\Tcal,k,v)$ returns $\Char_k(G_v,\beta(v),\rho)$.
Indeed, if $v$ is a leaf, then $\Char_k(G_v,\beta(v),\rho)=\{\hat{\un}\}=\edrec((G,\beta(v),\rho),\Tcal,k,v)$.
Otherwise, $v$ is either a forget node, or an introduce node, or a join node, and the correctness of the algorithm is implied from the induction hypothesis and \autoref{@socialization}, \autoref{@philosopher}, and \autoref{@authenticity}, respectively.
Finally, since $G=G_{\sf r}$, we have \edrec$((G,\beta({\sf r}),\rho),\Tcal,k,{\sf r})=\Char_k(G,\beta({\sf r}),\rho)$.

We now analyze the running time.
A nice tree decomposition of width $w$ constructed by \autoref{@estclusire} has $\Ocal(w\cdot n)$ bags, hence the linear dependence follows.

Let us first analyze the join procedure.
During this procedure, we recursively obtain two characteristics of size $2^{\Ocal_{\ell_\Fcal}(w\cdot k+w\log w)}$ according to \autoref{@belligerent}.
Each pair of annotated trees can be joined in $2^{\Ocal(w\cdot k)}$ ways, according to \autoref{@leistungsfiihigkeit}.
Let ${\bf R}_1$ and ${\bf R}_2$ be the boundaried graphs of such a pair of annotated trees. There is an integer $z\leq w$ such that they belong to $\Rcal_{\ell_\Fcal}^{z}$.
By \autoref{@iinelstaai}, ${\bf R}_1$ and ${\bf R}_2$ have size $\Ocal_{\ell_\Fcal}(z)$.
So ${\bf R}:={\bf R}_1\bigoplus{\bf R}_2$ has size $\Ocal_{\ell_\Fcal}(z)$ as well.
The representation operation applied to ${\bf R}$ during the join operation for those two
annotated trees finds the representative of $l\leq z$ boundaried graphs of respective boundaries of sizes $z_1,...,z_l$ with $\sum_{i=1}^l z_i\leq z$ and with $\Ocal_{\ell_\Fcal}(z_i)$ vertices in the underlying graph due to \autoref{@iinelstaai}.
So by \autoref{@objectives}, the representation operation in the join procedure takes time $\sum_{i=1}^l 2^{\Ocal_{\ell_\Fcal}(w_i\log w_i)}=2^{\Ocal_{\ell_\Fcal}(w\log w)}$.
Checking that this is an annotated tree, that its height is at most $d$, and that we did not already create it also takes time $2^{\Ocal_{\ell_\Fcal}(w\cdot k+w\log w)}$.
Hence, the total running time of the join procedure is $2^{\Ocal_{\ell_\Fcal}(w\cdot k+w\log w)}$.

It is easy to see that the forget procedure applied to an annotated tree creates at most one annotated tree.
Moreover, the introduce procedure applied to an annotated tree creates $\Ocal(w\cdot k)$ annotated trees according to \autoref{@equivalent}.
Similarly, the representation, crop, and filter operations in these procedures take time $2^{\Ocal_{\ell_\Fcal}(w\cdot k + w \log w)}$.
So the lemma follows.
\end{proof}

We can finally prove \autoref{@unquestioned}.

\begin{proof}[Proof of Theorem \ref{@unquestioned}]
If $\Fcal=\{K_1\}$, the algorithm of \autoref{@polytheism} outputs the desired result in time $2^{\Ocal(\tw\cdot k)}\cdot n$.
So let us assume that $\Fcal$ is non-trivial.

Suppose first that $G$ is connected.
By \autoref{@naturalism} and \autoref{@estclusire}, we can obtain a nice tree decomposition $\Tcal=({\sf T},\beta,{\sf r})$ of width $2\tw+1$ in time $2^{\Ocal(\tw)}\cdot n$.
Let $\rho:\beta({\sf r})\to[|\beta({\sf r})|]$ be an arbitrary ordering on the vertices of $\beta({\sf r})$.
We apply {\edrec} with input $((G,\beta({\sf r}),\rho),\Tcal,k,{\sf r})$.
By \autoref{@accomplice}, this gives the desired result in time $2^{\Ocal_{\ell_\Fcal}(\tw(k+\log \tw))}\cdot n$.

If $G$ is not connected, we apply the same procedure on each connected component. The running time is the same as in the above case.
\end{proof}

\subsection{Exchangeability of boundaried graphs with the same characteristic}\label{@metropoltheater}

We give here a simple technical lemma on characteristics that will be used in \autoref{@participant}.
We show that boundaried graphs with the same characteristic can be exchanged, i.e., give graphs of the same elimination distance to $\Fcal$ when ``glued'' to the same boundaried graph.

Given a positive integer $k$ and a (possibly disconnected) boundaried graph ${\bf G}$, we define $\Char_k({\bf G})$ as $(\Char_k({\bf C}))_{{\bf C}\in\cc({\bf G})}$. Note that we still have $|\Char_k({\bf G})|=2^{\Ocal_{\ell_\Fcal}(w\cdot k+w\log w)}$. Therefore, we can extend $\funref{@caricatures}$ so that $|\Char_k({\bf G})|\leq\funref{@caricatures}(w,k)$ with $\funref{@caricatures}(w,k)=2^{\Ocal_{\ell_\Fcal}(w\cdot k+w\log w)}$ for any boundaried graph ${\bf G}$.

\begin{lemma}\label{@manuscripts}
Let $\bf G$, $\bf G'$, and $\bf G''$ be three compatible boundaried graphs and let $k$ be an integer such that $\ed_{\exc(\Fcal)}({\bf G}\oplus{\bf G''})\leq k$ and $\Char_k({\bf G})=\Char_k(\bf G')$.
Then $\ed_{\exc(\Fcal)}({\bf G}\oplus{\bf G''})=\ed_{\exc(\Fcal)}({\bf G'}\oplus{\bf G''})$.
\end{lemma}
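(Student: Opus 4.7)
The idea is that the characteristic $\Char_k({\bf G})$ captures exactly the information about ${\bf G}$ that is needed for elimination distance computations after gluing. To exploit this, the plan is to express both $\ed_{\exc(\Fcal)}({\bf G}\oplus{\bf G''})$ and $\ed_{\exc(\Fcal)}({\bf G'}\oplus{\bf G''})$ via the dynamic programming procedures \join\ and \forget, and show that they produce the same set of annotated trees.

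Let $B$ denote the common boundary. Assume first that $B\neq\emptyset$, and fix a vertex $u\in B$ with associated bijection $\rho_u\colon \{u\}\to\{1\}$. Construct a nice tree decomposition $\Tcal$ of $G\oplus G''$ with the following structure: the root $v^{*}$ satisfies $\beta(v^{*})=\{u\}$; a chain of $|B|-1$ forget nodes descending from $v^{*}$ successively forgets the vertices of $B\setminus\{u\}$ and ends at a join node $v$ with $\beta(v)=B$; the two children $v_1,v_2$ of $v$ are roots of nice tree decompositions of $G$ and of $G''$ respectively, each with $B$ as its top bag. Such a decomposition exists since $B$ can always be placed in a common bag. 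Note that $G_{v_1}=G$, $G_{v_2}=G''$, and $(G\oplus G'')_{v^{*}}=G\oplus G''$.

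By \autoref{@authenticity} applied at the join node $v$,
\begin{equation*}
\Char_k(G\oplus G'',B,\rho)=\join_k\bigl(\Char_k({\bf G}),\,\Char_k({\bf G''})\bigr),
\end{equation*}
and iterated application of \autoref{@socialization} along the chain of forget nodes above $v$ yields
\begin{equation*}
\Char_k(G\oplus G'',\{u\},\rho_u)=\Phi\bigl(\join_k(\Char_k({\bf G}),\Char_k({\bf G''}))\bigr),
\end{equation*}
where $\Phi$ denotes the composition of the $|B|-1$ forget operations corresponding to the vertices of $B\setminus\{u\}$. Building an analogously-shaped nice tree decomposition $\Tcal'$ of $G'\oplus G''$ and running the same computation gives $\Char_k(G'\oplus G'',\{u\},\rho_u)=\Phi\bigl(\join_k(\Char_k({\bf G'}),\Char_k({\bf G''}))\bigr)$. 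Since $\Char_k({\bf G})=\Char_k({\bf G'})$ by hypothesis, the two resulting characteristics coincide.

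To conclude, apply \autoref{@ineluctable} with the non-empty boundary $\{u\}$: because $\ed_{\exc(\Fcal)}(G\oplus G'')\leq k$, the set $\Char_k(G\oplus G'',\{u\},\rho_u)$ is non-empty and its minimum-height element has height exactly $\ed_{\exc(\Fcal)}(G\oplus G'')$; the equality of characteristics then forces $\Char_k(G'\oplus G'',\{u\},\rho_u)$ to be non-empty with the same minimum, so $\ed_{\exc(\Fcal)}(G'\oplus G'')\leq k$ and equals $\ed_{\exc(\Fcal)}(G\oplus G'')$. The degenerate case $B=\emptyset$ is handled separately, since then $G\oplus G''=G\sqcup G''$ and the elimination distance of a disjoint union is the maximum over its connected components, which is determined by the multiset of characteristics of these components, i.e.\ exactly by $\Char_k({\bf G})$ and $\Char_k({\bf G''})$. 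The main conceptual point, namely that the DP procedures faithfully reconstruct the characteristic of the glued graph from those of its pieces, is already provided by \autoref{@authenticity} and \autoref{@socialization}, so no further combinatorial work is required.
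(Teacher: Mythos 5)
Your proof takes the same high-level route as the paper---appeal to the correctness lemmas for the DP (\autoref{@socialization}, \autoref{@authenticity}), show that the glued graphs have identical characteristics at the root of suitable tree decompositions, and conclude via \autoref{@ineluctable}. However, there is a genuine gap: you do not handle the case where $\bf G$ (and $\bf G'$) is \emph{disconnected}, which is precisely the situation the paper's construction is designed to accommodate. Immediately before the lemma, the paper (re)defines $\Char_k({\bf G})$ for a possibly disconnected boundaried graph ${\bf G}$ as the \emph{tuple} $(\Char_k({\bf C}))_{{\bf C}\in\cc({\bf G})}$, so the hypothesis $\Char_k({\bf G})=\Char_k({\bf G'})$ is a componentwise equality, not an equality of single sets of annotated trees. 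Your construction places all of $G$ below a single child $v_1$ of a join node with $\beta(v_1)=B$, and then invokes $\join_k(\Char_k({\bf G}),\Char_k({\bf G''}))$. But the object that \autoref{@authenticity} attaches to $v_1$ is the DP's set-valued output $\Char_k(G_{v_1},B,\rho)$, which, when $G$ is disconnected, is \emph{not} the tuple $\Char_k({\bf G})$ appearing in the hypothesis. To make your argument go through you would still have to show that this set is determined by the tuple of component characteristics; that is exactly the additional combinatorial step the paper supplies by hanging a separate nice tree decomposition $\Tcal_i$ of each component ${\bf C}_i$ from a dedicated leaf $v_i$ (with $\beta^*(v_i)=B_i$) of a tree decomposition of $G''$, so that each component's characteristic flows up independently before being merged inside the $G''$-part where both decompositions agree. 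Without this, the reduction from the tuple hypothesis to the set-valued DP table is unjustified, so the core identity of your proof is not established in the disconnected case (which does occur in the intended application in \autoref{@entrepreneurial}, where $G_a$ need not be connected). The connected case of your argument is essentially fine, and your treatment of $B=\emptyset$ is also reasonable, but you should restructure the tree decomposition as the paper does to handle multiple components of ${\bf G}$ and ${\bf G'}$.
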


\begin{proof}
We suppose without loss of generality that ${\bf G}\oplus{\bf G''}$, and therefore ${\bf G'}\oplus{\bf G''}$ as well, is connected. Indeed, if this is not the case, we may apply the following proof to each one of the connected components separately.

Let $\cc({\bf G})=\{{\bf C}_1,...,{\bf C}_l\}$ and $\cc({\bf G'})=\{{\bf C}_1',...,{\bf C}_l'\}$, such that $\Char_k({\bf C}_i)=\Char_k({\bf C}_i')$ for $i\in[l]$. Let $i\in[l]$. We write ${\bf C}_i=(C_i,B_i,\rho_i)$.
Let $\Tcal_i=({\sf T}_i,\beta_i,{\sf r}_i)$ be a nice tree decomposition of ${\bf C}_i$, i.e., such that $\beta_i({\sf r}_i)=B_i$.
Since the $B_i$'s are pairwise disjoint, there is a rooted tree decomposition $\Tcal^*=({\sf T}^*,\beta^*,{\sf r})$ of $G''$, where $G''$ is the underlying graph of ${\bf G''}$, such that, for $i\in[l]$, there is $v_i\in\leaf({\sf T}^*,{\sf r})$ with $\beta^*(v_i)=B_i$.
Let $\Tcal=({\sf T},\beta,{\sf r})$ be the tree decomposition obtained from $\Tcal^*$ and the $\Tcal_i$'s by identifying $v_i$ with ${\sf r}_i$ for $i\in[l]$ and adding nodes in $\Tcal^*$ using \autoref{@estclusire}, so that $\Tcal$ is a nice tree decomposition of ${\bf G}\oplus{\bf G''}$.

Let $\Tcal'=({\sf T}',\beta',{\sf r})$ be a nice tree decomposition of the graph ${\bf G'}\oplus{\bf G''}$ obtained from $\Tcal$ by replacing $\Tcal_i$ by a nice tree decomposition $\Tcal_i'=({\sf T}_i',\beta_i',{\sf r}_i')$ of ${\bf C}_i'$ for $i\in[l]$.
Observe that, for every $i\in[l]$, according to \autoref{@accomplice}, \edrec$(({\bf G}\oplus{\bf G''},\beta(v_i),\rho_{v_i}),\Tcal,k,v_i)=\Char_k({\bf G}\oplus{\bf G''},\beta(v_i),\rho)=\Char_k({\bf C}_i)$.
Similarly, \edrec$(({\bf G'}\oplus{\bf G''},\beta'(v_i),\rho_{v_i}),\Tcal',k,v_i)=\Char_k({\bf C}_i')$.
Thus, \autoref{@millennium} applied with input $(({\bf G}\oplus{\bf G''},\beta(v_i),\rho_{v_i}),\Tcal,k,v_i)$ and $(({\bf G'}\oplus{\bf G''},\beta'(v_i),\rho_{v_i}),\Tcal',k,v_i)$ outputs the same result.

We next set $U:=V({\sf T}\setminus\bigcup_{i\in[l]}V({\sf T}_i))$.
Note that  $U=V({\sf T}'\setminus\bigcup_{i\in[l]}V({\sf T}_i'))$.
Therefore, in each node $u$ of $U$,
\autoref{@millennium} applied with input $(({\bf G}\oplus{\bf G''},\beta(u),\rho_u),\Tcal,k,u)$ and $(({\bf G'}\oplus{\bf G''},\beta'(u),\rho_u),\Tcal',k,u)$ outputs the same result.
Thus, \autoref{@millennium} applied with input $(({\bf G}\oplus{\bf G''},\beta({\sf r}),\rho_{\sf r}),\Tcal,k,{\sf r})$ and $(({\bf G'}\oplus{\bf G''},\beta'({\sf r}),\rho_{\sf r}),\Tcal',k,{\sf r})$ outputs the same result.
So $\Char_k({\bf G}\oplus{\bf G''},\beta({\sf r}),\rho_{\sf r})=\Char_k({\bf G'}\oplus{\bf G''},\beta'({\sf r}),\rho_{\sf r})$.
Therefore, according to \autoref{@ineluctable}, $\ed_{\exc(\Fcal)}({\bf G}\oplus{\bf G''})=\ed_{\exc(\Fcal)}({\bf G'}\oplus{\bf G''})$.
\end{proof}

\section{Elimination distance to a minor-closed graph class}\label{@oeconomicus}

We finally present our main result for {\sc $\Fcal$-M-Elimination Distance}. The following theorem is a restatement of \autoref{@communicated}.

\begin{theorem}\label{@successors}
For every finite collection of graphs $\Fcal$, there exists an algorithm that, given a graph $G$ and an integer $k$,
decides whether $\ed_{\exc(\Fcal)}(G)\leq k$ in time $2^{2^{2^{k^{\Ocal_{\ell_\Fcal}(1)}}}}\cdot n^2$.
In the particular case when $\Fcal$ contains an apex-graph, this algorithm runs in time $2^{2^{\Ocal_{\ell_\Fcal}(k^2\log k)}}\cdot n^2$.
\end{theorem}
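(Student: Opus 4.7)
The plan is to adapt the algorithm of \autoref{@uncritical} ({\sc $\Fcal$-M-Deletion}) by applying the ``Option 1'' strategy outlined in \autoref{@recipients}: since $k$-elimination sets are not necessarily bounded in size, the branching on small apex candidate sets used in \autoref{@conditioned} is infeasible here, so I would instead configure the clique-based subroutines to always produce apex sets and flat walls. Concretely, I would invoke {\tt Grasped-or-Flat} (\autoref{@possession}) and {\tt Clique-Or-twFlat} (\autoref{@unimportant}) with clique-size parameter $s_\Fcal+k$; by \autoref{@inoperative}, any grasped $K_{s_\Fcal+k}$-minor certifies a \no-instance, so these calls always produce an apex set $A$ of size $a=\funref{@collaboration}(s_\Fcal+k)=\Ocal(k^{24})$ together with a flatness pair whose compass has treewidth $\Ocal_{\ell_\Fcal}(k^{\Ocal(1)})$. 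Note that $a$ now depends polynomially on $k$, rather than being a constant as in \autoref{@conditioned}.

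Following the outline of \autoref{@conditioned}, I would first set up constants analogously but with the new $a$, and then run {\tt Find-Wall-Ed} (\autoref{@transforma}) to either report a \no-instance, conclude bounded treewidth (in which case we apply \autoref{@unquestioned}), or produce a large wall $W_1$. In the last case, iterating over subwalls and calling {\tt Grasped-or-Flat} followed by {\tt Clique-Or-twFlat} yields a regular flatness pair $(W,\mathfrak{R})$ of $G\setminus A$ of bounded-treewidth compass. Then {\tt Homogeneous} (\autoref{@disreputable}) produces an $\ell_\Fcal$-homogeneous pair, to which is applied {\tt Find-Irrelevant-Vertex} (\autoref{@civilizing}), yielding a vertex $v$; the algorithm recurses on $(G\setminus v,k)$. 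Here the parameter $\tilde{a}$ fed into {\tt Homogeneous} is crucial: in the general case $\tilde{a}=a$, whereas in the apex-graph case $a_\Fcal=1$ allows us to set $\tilde{a}=1$, since every $k$-elimination set must hit each apex-vertex of any $\Fcal$-minor, and only singleton sub-apex sets are relevant.

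The main obstacle is the correctness of the irrelevant vertex step: \autoref{@civilizing} asserts irrelevance only for the vertex-deletion analogue, while we need $\ed_{\exc(\Fcal)}(G)\leq k \Leftrightarrow \ed_{\exc(\Fcal)}(G\setminus v)\leq k$. The direction $(\Rightarrow)$ is immediate. For $(\Leftarrow)$, given a $k$-elimination set $S$ of $G\setminus v$, \autoref{@lucinatory} applied inside the residual flatness pair of $G\setminus v$ provides a superset $X\supseteq S$ with $\bid_{G\setminus A,W}(X)\leq k(k+1)/2$ and $(G\setminus v)\setminus X \in \exc(\Fcal)$, and \autoref{@civilizing} then exchanges $X$ and $X\setminus\{v\}$ to yield $G\setminus X\in\exc(\Fcal)$. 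To lift this into an $\Fcal$-elimination forest of $G$ of height at most $k$, I would invoke the exchangeability framework of \autoref{@metropoltheater}: viewing the $\mathfrak{R}$-compass as a boundaried graph ${\bf C}$ on the perimeter of $W$ and the compass minus $v$ as ${\bf C}_v$, the homogeneity of the wall and the central position of $v$ inside the inner layers imply $\Char_k({\bf C})=\Char_k({\bf C}_v)$; \autoref{@manuscripts} applied with ${\bf G}''$ being the remainder of $G$ outside the compass then gives $\ed_{\exc(\Fcal)}(G)=\ed_{\exc(\Fcal)}(G\setminus v)$, as desired. Establishing the characteristic equality rigorously is the most delicate part of the proof, and it requires verifying that the homogeneity guarantee of \autoref{@disreputable} on folios of augmented flaps indeed suffices to match all annotated trees of height at most $k$.

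For the running time, the dominant term comes from \autoref{@disreputable}: with $\tilde{a}=a=\Ocal(k^{24})$, the function $\funref{@withdrawing}(a,a,\ell_\Fcal)=2^{a^a\cdot 2^{\Ocal((a+\ell_\Fcal)\log(a+\ell_\Fcal))}}$ is doubly-exponential in $k$, which forces the wall height needed in \autoref{@philistines} to also be doubly-exponential in $k$; propagating this through {\tt Find-Wall-Ed} yields the claimed $2^{2^{2^{k^{\Ocal_{\ell_\Fcal}(1)}}}}\cdot n^2$ bound. In the apex case, $\tilde{a}=1$ collapses $a^{\tilde{a}}$ to $a$, so $\funref{@withdrawing}(1,a,\ell_\Fcal)=2^{\Ocal_{\ell_\Fcal}(a)}$; together with a tightening of the exponents along the Flat-Wall chain (analogous to the analysis of \autoref{@uncritical}), this gives the claimed $2^{2^{\Ocal_{\ell_\Fcal}(k^2\log k)}}\cdot n^2$ bound. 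Since we remove a single vertex per recursive call and the flow-free wall-finding phase dominates at $\Ocal(n)$ per call, the overall quadratic dependence in $n$ follows as in \autoref{@uncritical}.
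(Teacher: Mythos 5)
Your high-level plan (Option~1: force apex sets via clique-size $s_\Fcal+k$, then homogenize and find an irrelevant vertex, recursing $n$ times) is indeed the paper's strategy, but the way you instantiate the homogeneity parameter both misses the paper's key trick and breaks the claimed bounds. The paper does \emph{not} call {\tt Homogeneous} with $\tilde a=a$ (general case) or $\tilde a=1$ (apex case) for the full apex set $A$ of size $a=\funref{@collaboration}(s_\Fcal+k)=\Ocal_{\ell_\Fcal}(k^{24})$. It first computes the subset $B\subseteq A$ of apices adjacent to at least $q$ $p$-internal bags of a $W_2$-canonical partition, uses \autoref{@prohibitions} to select a subwall whose influence avoids $N(A\setminus B)$, tilts to obtain a flatness pair of $G\setminus B$, and only then runs {\tt Homogeneous} and {\tt Find-Irrelevant-Vertex} with the \emph{constant} parameter $a_\Fcal-1$: by \autoref{@proclamation}, every $a_\Fcal$-subset of $B$ meets the low-bidimensionality set $X_{S'}$ provided by \autoref{@lucinatory}, so $|B\setminus X_{S'}|\le a_\Fcal-1$. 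With your choice $\tilde a=a$, the quantity $\funref{@withdrawing}(a,a,\ell_\Fcal)$ is doubly exponential in $k$, hence the required wall height $\funref{@philistines}(r_4,a,a,d)=\Ocal\big(r_4^{\funref{@withdrawing}(a,a,d)}\big)$ is \emph{triply} exponential (not doubly, as you assert), and {\tt Clique-Or-twFlat} and {\tt Find-Wall-Ed}, whose running times are of the form $2^{\Theta(r^2)}\cdot n$ in the wall height $r$, become quadruply exponential, overshooting the claimed $2^{2^{2^{k^{\Ocal_{\ell_\Fcal}(1)}}}}$. In the apex case, $\tilde a=1$ gives $\funref{@withdrawing}(1,a,\ell_\Fcal)=2^{\Ocal_{\ell_\Fcal}(a)}=2^{\Ocal_{\ell_\Fcal}(k^{24})}$, already forcing a doubly-exponential wall height and a triply-exponential total, not $2^{2^{\Ocal_{\ell_\Fcal}(k^2\log k)}}$; the paper obtains a constant there precisely because $\tilde a=a_\Fcal-1=0$. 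Worse, in the apex case your {\tt Find-Irrelevant-Vertex} call needs $|A\setminus X|\le\tilde a$ for the \emph{full} set $A$, which is simply not guaranteed: only apices adjacent to many $p$-internal bags are forced into $X$ by \autoref{@proclamation}, and your justification ("every $k$-elimination set must hit each apex-vertex of any $\Fcal$-minor") has no basis. So in the general case your algorithm is correct but misses the stated running time by one exponential level, and in the apex case it has a genuine correctness gap in addition to the complexity gap; both are repaired exactly by the $B$/$a_\Fcal-1$ step you omitted.

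The second gap is your treatment of the irrelevant-vertex equivalence. \autoref{@civilizing} is not "only for the vertex-deletion analogue": it is stated for \emph{every} set $X$ with $\bid_{G\setminus A,W}(X)\le k$ and $|A\setminus X|\le a$, and the paper derives the equivalence of $(G,k)$ and $(G\setminus v,k)$ directly from \autoref{@lucinatory} together with \autoref{@civilizing} — the set $X_{S'}$ of \autoref{@lucinatory} is the complement of the component containing the wall, so exchanging $X_{S'}$ and $X_{S'}\setminus\{v\}$ only modifies the leaf of the elimination forest that hosts the wall, and no exchangeability machinery is needed (\autoref{@manuscripts} is used only in \autoref{@participant} to bound obstructions). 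Your proposed detour rests on the claim that homogeneity of the flat wall yields $\Char_k({\bf C})=\Char_k({\bf C}_v)$ for the compass with and without $v$; this is an unproven and far-from-obvious assertion — homogeneity constrains $\ell_\Fcal$-folios of augmented flaps, not the elimination-distance characteristics of \autoref{@philosophers} — so as written this step is a hole rather than a "delicate part", and it is also unnecessary.
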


\sugar{This algorithm is very similar to the one in \autoref{@conditioned}.
We use the same propositions and lemmata,  {except from the ones that were} tailored for {\sc $\Fcal$-M-Deletion} and for which we need to find an analogous result for {\sc $\Fcal$-M-Elimination Distance}.
In \autoref{@fanaticism}, we described an algorithm to find the elimination distance when the treewidth is bounded, which is the analogue of \autoref{@calculated} for {\sc $\Fcal$-M-Elimination Distance}. In \autoref{@hierarchical}, we present an analogue of \autoref{@transforma}, which either reports an upper bound on the treewidth of the input graph, or finds a wall, or reports that we deal with a \no-instance.

In contrast to the algorithm for {\sc $\Fcal$-M-Deletion}, we cannot use branching the same way because we do not have a bound in $k$ on the number of vertices in a $k$-elimination set, like we had for a $k$-apex set. Thus, while Step~3 of \autoref{@conditioned} could be applied at most $a_\Fcal^k$ times, such a Step~3 would now be applied at most $a_\Fcal^n$ times, which is not efficient.
Since we do not use branching anymore to reduce the size of the apex set of a flatness pair, when we apply \autoref{@disreputable}, the size $a$ of the apex set in the input depends on $k$. Hence, we need to find in the beginning a wall whose size dependence in $k$ is triple-exponential.
This explains the triple-exponential in $k$ that appears in the running time of \autoref{@successors}.
The above algorithm is described in~\autoref{@memorindum} and in~\autoref{@preanimism} we present the proof of its correctness.}

\subsection{Quickly finding a wall}\label{@hierarchical}

We first prove \autoref{@transforma} in the case of {\sc $\Fcal$-M-Elimination Distance}.
The proof is very similar to the one given in \cite{SauST21kapiII} for \autoref{@transforma} in the case of {\sc $\Fcal$-M-Deletion} and is achieved using the following result from Perkovic and Reed\cite{PerkovicR00anim}. \sugar{The main difference with respect to the proof of \autoref{@transforma} given in \cite{SauST21kapiII} is that we need to use two new ingredients tailored for {\sc $\Fcal$-M-Elimination Distance}, namely \autoref{@unquestioned} and \autoref{@barbarians}.}

\begin{proposition}[\cite{PerkovicR00anim}]\label{@antisthenes}
There exists an algorithm with the following specifications:\medskip

	\noindent{\textbf{Input}:}	A graph $G$ and $t\in\bN$ such that $|V(G)|\geq 12t^{3}$.\\
	\noindent{\textbf{Output}:} A graph $G^*$ such that $|V(G^*)|\leq (1-\frac{1}{16t^{2}}) \cdot |V(G)|$ and,
	\begin{itemize}
		\item either $G^*$ is a subgraph of $G$ such that $\tw(G)=\tw({G^*})$, or
		\item $G^*$ is obtained from $G$ after contracting the edges of a matching in $G$.
	\end{itemize}
	Moreover, the algorithm runs in time $2^{\Ocal(t)} \cdot n$.
\end{proposition}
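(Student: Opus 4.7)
The plan is to obtain the size-reduction via a dichotomy on $G$ that distinguishes between a large matching (to contract) and a large family of false twins (to delete while preserving treewidth exactly). Both branches produce a vertex-count decrease of at least $|V(G)|/(16t^2)$, matching the claimed bound.

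First, I would greedily compute a maximal matching $M$ in $G$ in time $O(n+m)$. If $|M| \geq |V(G)|/(16t^2)$, I would output $G^*$ obtained by contracting the edges of $M$; then $|V(G^*)| = |V(G)| - |M| \leq (1 - 1/(16t^2))\,|V(G)|$, and $G^*$ is obtained from $G$ by contracting a matching, fulfilling the second alternative. Otherwise $|V(M)| < |V(G)|/(8t^2)$, and the complement set $I := V(G) \setminus V(M)$ is an independent set of size strictly greater than $(1 - 1/(8t^2))\,|V(G)|$, each of whose vertices has its entire neighborhood contained in $V(M)$ (independence follows from the maximality of $M$).

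In this second regime, I would partition $I$ into classes by the rule $u \sim v \iff N_G(u) = N_G(v)$. Because $I$ is independent, any two vertices of the same class are false twins in $G$. From each class of size at least two, retain one representative and delete the remaining vertices; let $S \subseteq I$ denote the deleted set, and set $G^* := G \setminus S$. The standard twin-reinsertion argument shows $\tw(G^*) = \tw(G)$: given any optimal tree decomposition of $G^*$, every deleted vertex can be inserted back into a bag containing its (common) neighborhood without increasing the width, recovering a decomposition of $G$ of the same width; conversely $G^*$ is a subgraph of $G$, so $\tw(G^*) \leq \tw(G)$. This yields the first alternative, provided $|S| \geq |V(G)|/(16t^2)$.

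The main obstacle is the quantitative lower bound on $|S|$, which requires controlling the number of distinct $I$-neighborhoods. I would argue as follows: if $\tw(G) > t$, then a large matching of size at least $|V(G)|/(16t^2)$ can be exhibited directly inside $G$ (e.g., by extracting a subgraph of minimum degree $\Omega(t)$ inside a bramble of order $t+1$ and invoking Nishizeki-style matching bounds), so the matching branch would already have succeeded under a slightly refined thresholding and we may assume $\tw(G) \leq t$; this gives $|E(G)| \leq t\,|V(G)|$ and hence an average $I$-degree of at most $2t$, so at least $|I|/2$ vertices of $I$ have $G$-degree at most $4t$. Their neighborhoods lie in $\binom{V(M)}{\leq 4t}$, whose cardinality is bounded using $|V(M)| < |V(G)|/(8t^2)$ and the starting hypothesis $|V(G)| \geq 12t^3$; a direct estimate yields that the number of distinct such neighborhoods is at most $|I| - |V(G)|/(16t^2)$, forcing the required number of collisions and hence $|S| \geq |V(G)|/(16t^2)$.

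For the running time, the greedy matching computation costs $O(n+m)$, and the twin partition can be computed in $O((n+m)\log n)$ by lexicographically sorting the (small) neighborhood lists, or in linear time via radix sort after relabeling vertices by a canonical ordering of $V(M)$. The $2^{O(t)}$ factor in the overall $2^{O(t)} \cdot n$ runtime absorbs the branching overhead for the treewidth-bound verification step (checking $\tw(G) \leq t$ on a contracted auxiliary structure) and the case split between the two branches, which is standard for Perkovic--Reed-style reductions.
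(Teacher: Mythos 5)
You are not comparing against a proof in the paper here: \autoref{@antisthenes} is imported verbatim from Perkovic and Reed, so the benchmark is their Bodlaender-style reduction (contract a large maximal matching, or delete many vertices that are \emph{simplicial in the improved graph}, where the improvement adds edges between pairs of vertices with at least $t+1$ common low-degree neighbors). Your replacement of that machinery by bare false twins fails at exactly the point where it matters. Deleting all but one representative of a false-twin class does \emph{not} preserve treewidth, and the ``standard twin-reinsertion argument'' you invoke needs a bag containing the common neighborhood $N_G(v)$; such a bag is guaranteed only when $N_G(v)$ is a clique (i.e., for simplicial vertices), which is precisely why the genuine proof works with improved-graph-simplicial vertices. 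A counterexample inside your own regime: take $x,y,x',y'$ with edges $xx'$, $yy'$, and $m\geq 2$ vertices $v_1,\dots,v_m$ adjacent exactly to $\{x,y\}$. Then $M=\{xx',yy'\}$ is a maximal matching, $I=\{v_1,\dots,v_m\}$ is independent with all neighborhoods inside $V(M)$, and your rule keeps a single $v_i$, turning a graph of treewidth $2$ (it contains the $4$-cycle $x v_1 y v_2$) into a tree of treewidth $1$; taking $m$ large makes $|M|$ far below $n/(16t^2)$, so your algorithm does take this branch, and note that $\tw(G)=2\leq t$, so even your fallback assumption $\tw(G)\leq t$ does not save the claim.

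The quantitative step is broken as well. Even granting $|E(G)|\leq t\,|V(G)|$ and that half of $I$ has degree at most $4t$, these neighborhoods range over subsets of $V(M)$ of size at most $4t$, of which there are on the order of $\binom{|V(M)|}{4t}=n^{\Theta(t)}$, vastly more than $|I|$; pigeonhole therefore forces no collisions whatsoever. Concretely, a graph can have a tiny maximal matching while every vertex of $I$ has a \emph{distinct} two-element neighborhood in $V(M)$ (one degree-two vertex per pair of matched vertices), in which case your deletion branch removes nothing and no output meeting the size bound is produced; ruling out such configurations is exactly what the improved-graph/simplicial-vertex analysis under the hypothesis $\tw(G)\leq t$ is for, and it is the part you skipped. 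Finally, the claim used to justify assuming $\tw(G)\leq t$ in the deletion branch --- that $\tw(G)>t$ forces a maximal matching of size at least $n/(16t^2)$ --- is false (e.g., $K_{2t+2}$ plus $n-2t-2$ isolated vertices), and the concluding remarks about the $2^{\Ocal(t)}$ factor ``absorbing'' a treewidth-verification step do not correspond to any step actually carried out. To repair the argument you would essentially have to follow the Perkovic--Reed/Bodlaender route: build the improved graph, delete vertices of degree at most $t$ that are simplicial there (clique neighborhoods make reinsertion, and hence exact treewidth preservation, legitimate), and prove the dichotomy ``many such vertices or a large matching'' under $|V(G)|\geq 12t^3$.
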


\begin{proof}[Proof of Proposition \ref{@transforma} in the case of {\sc $\Fcal$-M-Elimination Distance}]\phantom{.ddd\\}

\noindent Let $c:=\funref{@relentlessly}(s_\Fcal)\cdot r+k$.\medskip

Suppose that $|V(G)| < 12c^{3}$.
Run the algorithm of \cite{ArnborgCP87comp} that, in time $\Ocal(|V(G)|^{c+2})=2^{\Ocal_{\ell_\Fcal} ((r+k)\cdot \log (r+k))}$, checks whether ${\tw}(G)\leq c$. If this is the case, report the same and stop.
If not, we aim to find an $r$-wall of $G$ or conclude that we are dealing with a {\sf no}-instance.
First consider an arbitrary ordering $(v_1,\dots, v_{|V(G)|})$ of the vertices of $G$.
For each $i\in[|V(G)|]$, set $G_i$ to be the graph induced by the vertices $v_1, \dots, v_i$.
Iteratively run the algorithm of \autoref{@naturalism} on $G_i$ and $c$ for increasing values of $i$.
This algorithm runs in time $2^{\Ocal(c)}\cdot |V(G)|=2^{\Ocal_{\ell_\Fcal} (r+k)}$.
Let $j\in[|V(G)|]$ be the smallest integer such that the above algorithm outputs a report that ${\tw}(G_j)>c$ (it exists since ${\tw}(G)>c$) and notice that there exists a tree decomposition $({\sf T}_j, \beta_j)$ of $G_j$ of width at most $2c+2$, obtained by the one of $G_{j-1}$ by adding the vertex $v_j$ to all the bags.
Thus, we can call the algorithm of \autoref{@unquestioned} with input $(G_j, 2c+2,k)$, which runs in time $2^{\Ocal_{\ell_\Fcal}(c \cdot (k+\log c))}\cdot |V(G_j)|=2^{\Ocal_{\ell_\Fcal}((r+k)\cdot (k+\log (r+k)))}$, in order to find, if it exists, a $k$-elimination set $S_j$ of $G_j$ for $\exc(\Fcal)$.

	\begin{itemize}		\item[$\bullet$] If such a set $S_j$ does not exist, then safely report that $(G,k)$ is a \no-instance.
		\item[$\bullet$] If such a set $S_j$ exists, then call the algorithm of \autoref{@camouflage} for $G_j\setminus S_j$, (and the decomposition of $G_j\setminus S_j$ obtained from $({\sf T}_j, \beta_j)$ by removing the vertices of $S_j$ from all the bags) in order to check whether it contains an elementary $r$-wall $W$ as a minor.
		   This algorithm runs in time $2^{\Ocal(c \cdot \log c)}\cdot r^{\Ocal(c)}\cdot 2^{\Ocal(r^{2})}\cdot |V(G_j\setminus S_j)|=2^{\Ocal_{\ell_\Fcal}((r+k) \cdot \log (r+k))}\cdot r^{\Ocal_{\ell_\Fcal}(r+k)}\cdot 2^{\Ocal(r^2)}=2^{\Ocal_{\ell_\Fcal}(r^2+(r+k) \cdot \log (r+k))}$, since $|E(W)|=\Ocal(r^2)$.
		   Since all connected components of $G_j\setminus S_j$ are in ${\exc(\Fcal)}$, $G_j\setminus S_j$ does not contain $K_{s_\Fcal}$ as a minor. By \autoref{@barbarians}, $\tw(G_j\setminus S_j)\geq c-k=\funref{@relentlessly}(s_\Fcal)\cdot r$. So because of \autoref{@overwhelmingly}, the algorithm of \autoref{@camouflage} will output an elementary $r$-wall $W$ of $G_j \setminus S_j$.
		   We also return $W$ as a wall of $G$.
	\end{itemize}
	Therefore, in the case where $|V(G)|< 12c^3$, we obtain one of the three possible outputs in time $2^{\Ocal_{\ell_\Fcal}(r^2+k^2)}$.
	\medskip

	If $|V(G)|\geq 12c^{3}$, then call the algorithm of \autoref{@antisthenes} with input $(G,c)$, which outputs a graph $G^*$ such that $|V(G^*)|\leq (1-\frac{1}{16c^{2}}) \cdot |V(G)|$ and
	\begin{itemize}
		\item either $G^*$ is a subgraph of $G$ such that $\tw(G)=\tw(G^*)$, or
		\item $G^*$ is obtained from $G$ after contracting the edges of a matching in $G$.
	\end{itemize}

	In both cases, recursively call the algorithm on $G^*$ and distinguish the following two cases.\medskip

	\noindent{\em Case 1}: $G^*$ is a subgraph of $G$ such that $\tw(G)=\tw(G^*)$.

	\begin{itemize}
		\item[(a)] 	If the recursive call on $G^*$ reports that $\tw(G^*)\leq c$, then return that $\tw(G)\leq c$.

		\item[(b)] 	If the recursive call on $G^*$ outputs an $r$-wall $W$ of $G^*$, then return $W$ as a wall of $G$.

		\item[(c)] 	If $(G^*,k)$ is a \no-instance, then report that $(G,k)$ is also a {\sf no}-instance.

	\end{itemize}
	\medskip

	\noindent{\em Case 2}: $G^*$ is obtained from $G$ after contacting the edges of a matching in $G$.

	\begin{itemize}
		\item[(a)] If the recursive call on $G^*$ reports that $\tw(G^*)\leq c$, then do the following.
		      First notice that the fact that $\tw(G^*)\leq c$ implies that $\tw(G)\leq 2c$,
		      since we can obtain a tree decomposition $({\sf T},\beta)$ of $G$ from a tree decomposition $({\sf T}^*,\beta^*)$ of $G^*$,
		      by replacing, in every $t\in V({\sf T}^*)$, every occurrence of a vertex of $G^*$ that is a result of an edge contraction by its endpoints in $G$.
		      Thus, we can call the algorithm of \autoref{@unquestioned} with input $(G, 2c,k)$, which runs in time $2^{\Ocal_{\ell_\Fcal}(c(k+\log c))}\cdot n$, in order to find, if it exists, a $k$-elimination set $S$ of $G$ for $\exc(\Fcal)$. We distinguish again two cases.

		      \begin{itemize}
			      \item[$\bullet$] If such a set $S$ does not exist, then the algorithm reports that $(G,k)$ is a \no-instance.
			      \item[$\bullet$] If such a set $S$ exists, then apply the algorithm of \autoref{@naturalism} with input $(G\setminus S,2c)$, which runs in time $2^{\Ocal(c)}\cdot n$, and obtain a tree decomposition of $G\setminus S$ of width at most $4c+1$.
			            Using this decomposition, call the algorithm of \autoref{@camouflage} for $G\setminus S$ in order to check whether it contains an elementary $r$-wall $W$ as a minor.
			            This algorithm runs in time $2^{\Ocal(c \cdot \log c)}\cdot r^{\Ocal(c)}\cdot 2^{\Ocal(r^{2})}\cdot n=2^{\Ocal_{\ell_\Fcal}((r+k) \cdot \log (r+k))}\cdot r^{\Ocal_{\ell_\Fcal}(r+k)}\cdot 2^{\Ocal(r^2)}\cdot n=2^{\Ocal_{\ell_\Fcal}(r^2+(r+k) \cdot \log (r+k))}\cdot n$, since $|E(G\setminus S)|=\Ocal(n)$ and $|E(W)|=\Ocal(r^2)$.
			            If this algorithm outputs an elementary $r$-wall $W$ of $G\setminus S$, then output $W$. Otherwise, safely report, because of \autoref{@overwhelmingly} and \autoref{@barbarians}, that $\tw(G)\leq \funref{@relentlessly}(s_\Fcal)\cdot r+k=c$.
		      \end{itemize}

		\item[(b)] 	If the recursive call on $G^*$ outputs an $r$-wall $W^*$ of $G^*$, then by uncontracting the edges of $M$ in $W^*$, we can return an $r$-wall of $G$.

		\item[(c)] 	If $(G^*,k)$ is a \no-instance, then report that $(G,k)$ is also a \no-instance.
	\end{itemize}
	It is easy to see that the running time of the above algorithm is given by the function $$T(n,k,r)\ \leq\  T\left((1-\frac{1}{16c^{2}})\cdot n,k,r\right)+ 2^{\Ocal_{\ell_\Fcal}(r^2+k^2)}\cdot n,$$
	which implies that $T(n,k,r)=2^{\Ocal_{\ell_\Fcal}(r^2+k^2)}\cdot n$, as claimed.
\end{proof}

\subsection{Description of the algorithm for {\sc \texorpdfstring{$\Fcal$}{F}-M-Elimination Distance}}
\label{@memorindum}

We define the following constants.
\begin{align*}
	a =				&	\ \funref{@collaboration}(s_\Fcal+k), &
	q =				& \ \funref{@categories}(a_\Fcal,s_\Fcal,k(k+1)/2), \\
	p = 			& \ \funref{@provincial}(a_\Fcal,s_\Fcal,k(k+1)/2), &
	l = 			& \ (q-1)\cdot a, \\
	d = 			& \ \funref{@deliberation}(a_\Fcal-1,\ell_\Fcal), &
	r_4 =			& \ \funref{@differences}(a_\Fcal-1,\ell_\Fcal,3,k(k+1)/2), \\
	r_3 =			& \ \funref{@philistines}(r_4,a_\Fcal-1,a,d), &
	r_2 =     & \ \odd(\max\{\funref{@idealistic}(l+1,r_3,p),\funref{@unaffected}(a_\Fcal,s_\Fcal,k(k+1)/2)\}), \\
	r_1 =     & \ \odd(\funref{@corollaries}(s_\Fcal+k)\cdot r_2), &
\end{align*}

Note that $r_4=\Ocal_{\ell_\Fcal}(k^2)$, $r_3=\Ocal_{\ell_\Fcal}(k^{2\cdot c})$, $r_2=\Ocal_{\ell_\Fcal}(k^{2\cdot c+15})$, and $r_1=2^{\Ocal_{\ell_\Fcal}(k^2\log k+c\log k)}$, where $c=\funref{@withdrawing}(a_\Fcal-1,a,d)=2^{\Ocal_{\ell_\Fcal}(k^{24\cdot(a_\Fcal-1)})}$.
Recall from \autoref{@graphically} that we assume that $G$ has $\Ocal_{s_\Fcal}(k\sqrt{\log k}\cdot n)$ edges.\bigskip

Run the algorithm {\tt Find-Wall-Ed} from \autoref{@transforma} with input $(G,r_1,k)$
and, in time $2^{\Ocal_{\ell_\Fcal}(r_1^2+k^2)}\cdot n=2^{2^{\Ocal_{\ell_\Fcal}(k^2\log k+c\log k)}}\cdot n$,
\begin{itemize}
	\item either report a \no-instance, or
	\item conclude that $\tw(G)\leq \funref{@veneration}(s_\Fcal)\cdot r_1+k$ and solve {\sc $\Fcal$-M-Elimination Distance} in time $2^{\Ocal_{\ell_\Fcal}((r_1+k)k + (r_1+k)\log(r_1+k))}\cdot n=2^{2^{\Ocal_{\ell_\Fcal}(k^2\log k+c\log k)}}\cdot n$ using the algorithm of \autoref{@unquestioned}, or
	\item obtain an $r_1$-wall $W_1$ of $G$.
\end{itemize}

If the output of \autoref{@transforma} is a wall $W_1$,
then run the algorithm {\tt Clique-or-twFlat} of \autoref{@unimportant} with input $(G,r_2,s_\Fcal+k)$.
This takes time $2^{2^{\Ocal_{\ell_\Fcal}(k^2\log k)}\cdot r_2^3\log r_2}\cdot n = 2^{2^{\Ocal_{\ell_\Fcal}(k^2\log k+c\log k)}}\cdot n$.
If the result is a set $A$ of size at most $a$ and a regular flatness pair $(W_2,\mathfrak{R}_2)$ of $G\setminus A$ of height $r_2$ whose $\mathfrak{R}_2$-compass has treewidth at most $r_1$, then proceed, otherwise output a \no-answer.\smallskip

Compute a $W_2$-canonical partition $\tilde{\Qcal}$ of $G\setminus A$.
Compute the set $B$ of vertices of $A$ that are adjacent to at least $q$ $p$-internal bags of $\tilde{\Qcal}$.
As in \autoref{@prohibitions}, compute a collection $\Wcal=\{W^1,...,W^{l+1}\}$ of $r_3$-subwalls of $W_2$ such that for every $i \in [l+1]$, $\bigcup \influence_{\mathfrak{R}_2}(W^i)$ is a subgraph of $\bigcup \{Q\mid Q \text{ is a $p$-internal bag of }\tilde{\Qcal}\}$ and for every $i,j\in[l+1]$, with $i\neq j$, there is no internal bag of $\tilde\Qcal$ that contains vertices of both $V(\bigcup \influence_{\mathfrak{R}_2}(W^i))$ and $V(\bigcup \influence_{\mathfrak{R}_2}(W^j))$.
By the choice of $l$, there is an $i\in [l+1]$ such that no vertex of $\bigcup \influence_{\mathfrak{R}_2}(W^i)$ is adjacent to a vertex of $A\setminus B$.\smallskip

Run the algorithm from \autoref{@expurgated} with input $(G\setminus B,W_2,\mathfrak{R}_2,W^i)$ to obtain a $W^i$-tilt $(W_3,\mathfrak{R}_3)$ of $(W_2,\mathfrak{R}_2)$ in time $\Ocal_{s_\Fcal}(k\sqrt{\log k}\cdot n)$.
\smallskip

As a next step, we apply the algorithm {\tt Homogeneous} of \autoref{@disreputable} with input $(r_4, a_\Fcal-1, a,d,r_1,G,B,W_3,\mathfrak{R}_3)$, which, in time $2^{\Ocal(c\cdot r_4\log r_4+r_1\log r_1)}\cdot (n+m)=2^{2^{\Ocal_{\ell_\Fcal}(k^2\log k+c\log k)}}\cdot n$, outputs a flatness pair $(W_4,\mathfrak{R}_4)$ of $G\setminus B$ of height $r_4$ that is $d$-homogeneous with respect to $\binom{B}{< a_\Fcal}$ and is a $W^*$-tilt of $(W_3,\mathfrak{R}_3)$ for some subwall $W^*$ of $W_3$.
\smallskip

Finally, apply the algorithm {\tt Find-Irrelevant-Vertex} of \autoref{@civilizing} with input $(k(k+1)/2,a_\Fcal-1,G, B, W_4,\mathfrak{R}_4)$, which outputs, in time $\Ocal_{s_\Fcal}(k\sqrt{\log k}\cdot n)$, an irrelevant vertex $v$ such that $(G,k)$ and $(G\setminus v,k)$
are equivalent instances of {\sc $\Fcal$-M-Elimination Distance}.
Then the algorithm runs recursively on the equivalent instance $(G\setminus v,k)$.\bigskip

Since each run takes time $2^{2^{\Ocal_{\ell_\Fcal}(k^2\log k+c\log k)}}\cdot n$ and there are at most $n$ runs,  the algorithm indeed runs in time $2^{2^{\Ocal_{\ell_\Fcal}(k^2\log k+c\log k)}}\cdot n^2$.

Note that $c=2^{\Ocal_{\ell_\Fcal}(k^{24\cdot(a_\Fcal-1)})}$, so if $\Fcal$ contains an apex-graph, i.e., if $a_\Fcal=1$, then $c=\Ocal_{\ell_\Fcal}(1)$.
Thus, the running time is $2^{2^{2^{\Ocal_{\ell_\Fcal}(k^{24\cdot(a_\Fcal-1)})}}}\cdot n^2$ in the general case and $2^{2^{\Ocal_{\ell_\Fcal}(k^2\log k)}}\cdot n^2$ in the case where $\Fcal$ contains an apex-graph.

\subsection{Correctness of the algorithm}\label{@preanimism}

Let $(G,k)$ be a \yes-instance and let $S$ be a $k$-elimination set of $G$ for $\exc(\Fcal)$.
By running \autoref{@transforma} with input $(G,r_1,k)$, the algorithm should either get a report that $\tw(G)\leq \funref{@veneration}(s_\Fcal)\cdot r_1+k$ or find an $r_1$-wall. The correctness of the former is obvious, so we will focus on the latter. \smallskip

Let $W_1$ be an $r_1$-wall of $G$.
According to \autoref{@inoperative}, $K_{s_\Fcal+k}$ is not a minor of $G$.
Moreover, since $W_1$ is a wall of $G$ of height $r_1$, $\tw(G)\geq\tw(W_1)\geq r_1\geq\funref{@corollaries}(s_\Fcal+k)\cdot r_2$.
Hence, if the algorithm runs {\tt Clique-or-twFlat} of \autoref{@unimportant} with input $(G,r_2,s_\Fcal+k)$, it should obtain a set $A$ of size at most $a$ and a regular flatness pair $(W_2,\mathfrak{R}_2)$ of $G\setminus A$ of height $r_2$ whose $\mathfrak{R}_2$-compass has treewidth at most $r_1$.\smallskip

As described in the algorithm, due to \autoref{@prohibitions} and the fact that $r_2\geq\funref{@idealistic}(l+1,r_3,p)$, there is an $r_3$-wall $W^i$ that is a subwall of $W_2$ such that no vertex of $\bigcup \influence_{\mathfrak{R}_2}(W^i)$ is adjacent to a vertex of $A\setminus B$, where $B$ is the set of vertices of $A$ adjacent to at least $q$ $p$-internal bags of a $W_2$-canonical partition $\tilde{\Qcal}$ of $G\setminus A$.\smallskip

When the algorithm applies \autoref{@expurgated} with input $(G\setminus B,W_2,\mathfrak{R}_2,W^i)$, it obtains a $W^i$-tilt $(W_3,\mathfrak{R}_3)$ of $(W_2,\mathfrak{R}_2)$. Due to \autoref{@unimagined} and \autoref{@successively}, $(W_3,\mathfrak{R}_3)$ is a regular flatness pair of $G\setminus B$ whose $\mathfrak{R}_5$-compass has treewidth at most $r_1$.
Thus, since $r_3=\funref{@philistines}(r_4,a_\Fcal-1,a,d)$, the algorithm can apply {\tt Homogeneous} of \autoref{@disreputable} with input $(r_4,a_\Fcal-1,a,d,r_1,G,B,W_3,\mathfrak{R}_3)$ to obtain a flatness pair $(W_4,\mathfrak{R}_4)$ of $G\setminus B$ of height $r_4$ that is $d$-homogeneous with respect to $\binom{B}{< a_\Fcal}$ and is a $W^*$-tilt of $(W_3,\mathfrak{R}_3)$ for some subwall $W^*$ of $W_3$. According to \autoref{@successively}, $(W_4,\mathfrak{R}_4)$ is regular.\smallskip

Let $S'$ be a $k$-elimination set of $G$ for $\exc(\Fcal)$.
\autoref{@lucinatory} implies that there is a set $X_{S'}\supseteq S'$ such that $G\setminus X_{S'}\in{\exc(\Fcal)}$ and $\bid_{G\setminus A,W_2}(X_{S'})\leq k(k+1)/2$.\smallskip

Since $r_2\geq \funref{@unaffected}(a_\Fcal,s_\Fcal,k(k+1)/2)$, every subset of $B$ of size $a_\Fcal$ intersects $X_{S'}$ according to \autoref{@proclamation}.
Hence, $|B\setminus X_{S'}|\leq a_\Fcal-1$.\smallskip

Moreover, note that $(W_3,\mathfrak{R}_3)$ is a $W^i$-tilt of $(W_2,\mathfrak{R}_2)$, $(W_4,\mathfrak{R}_4)$ is a $W^*$-tilt of $(W_3,\mathfrak{R}_3)$, and $(W_4,\mathfrak{R}_4)$ is a flatness pair of $G\setminus B$ with $B\subseteq A$.
Thus, given a $W_4$-canonical partition $Q_1$ of $G\setminus B$, there is a $W_2$-canonical partition $Q_2$ of $G\setminus A$ such that each internal bag of $Q_1$ is an internal bag of $Q_2$.
Thus, $\bid_{G\setminus B,W_4}(X_{S'})\leq\bid_{G\setminus A,W_2}(X_{S'})$.\smallskip

Hence, the algorithm can apply {\tt Find-Irrelevant-Vertex} of \autoref{@civilizing} with input $(k(k+1)/2,a_\Fcal-1,G, B, W_4,\mathfrak{R}_4)$ and obtain a vertex $v$ such that, for any $k$-elimination set $S'$ of $G$ for $\exc(\Fcal)$, $G\setminus X_{S'}\in{\exc(\Fcal)}$ if and only if $G\setminus (X_{S'}\setminus v)\in{\exc(\Fcal)}$.
Thus, there is a $k$-elimination set of $G$ for $\exc(\Fcal)$ if and only if there is a $k$-elimination set of $G\setminus v$ for $\exc(\Fcal)$.
It follows that $(G,k)$ and $(G\setminus v,k)$ are equivalent instances of {\sc $\Fcal$-M-Elimination Distance}.\smallskip

Suppose now that $(G,k)$ is a \no-instance.
Note that as long as \autoref{@unimportant} outputs a flatness pair $(W_2,\mathfrak{R}_2)$, what follows in the proof of correctness works even if $(G,k)$ is a \no-instance. Therefore, we will find an irrelevant vertex.
Otherwise, we would have declared a \no-instance beforehand.
Thus, \autoref{@successors} follows.

\section{Elimination distance when excluding an apex-graph}\label{@stubbornly}

In the case where $\Fcal$ contains an apex-graph, we obtain an alternative algorithm whose complexity is single-exponential in $k$ and cubic in $n$. The following theorem is a restatement of \autoref{@swineherds}.

\begin{theorem}\label{@overflowing}
For every finite collection of graphs $\Fcal$ that contains an apex-graph, there exists an algorithm that, given a graph $G$ and an integer $k$,
decides whether $\ed_{\exc(\Fcal)}(G)\leq k$ in time $2^{k^{\Ocal_{\ell_\Fcal}(1)}}\cdot n^3$.
\end{theorem}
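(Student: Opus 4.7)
The plan is to mimic the three-step framework of \autoref{@uncritical} from \autoref{@conditioned}, replacing the ingredients tailored for {\sc $\Fcal$-M-Deletion} by their elimination-distance analogues, and exploiting $a_\Fcal = 1$ to reduce the branching step to a single recursive call. Concretely, I would first run the elimination-distance version of {\tt Find-Wall} (\autoref{@transforma}, as proved in \autoref{@hierarchical}) on $(G, r_1, k)$ for a suitably chosen $r_1 = \poly(k)$, and either report a \no-instance, conclude that $\tw(G) = \Ocal_{\ell_\Fcal}(k)$ and invoke the dynamic-programming algorithm of \autoref{@unquestioned}, or obtain a large wall $W_1$ of $G$.

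If $W_1$ is produced, I would follow Step~1 of \autoref{@conditioned} essentially unchanged, invoking {\tt Grasped-or-Flat} (\autoref{@possession}) and {\tt Clique-or-twFlat} (\autoref{@unimportant}) on the subwalls of $W_1$, leading either to the irrelevant-vertex case or to the branching case. In the irrelevant-vertex case I would apply {\tt Homogeneous} (\autoref{@disreputable}) and then {\tt Find-Irrelevant-Vertex} (\autoref{@civilizing}), with the essential difference that the bidimensionality parameter passed to \autoref{@civilizing} is $k(k+1)/2$ rather than $k$. This change is justified by \autoref{@lucinatory}: any $k$-elimination set $S$ admits a superset $X$ with $G \setminus X \in \exc(\Fcal)$ and $\bid_{G \setminus A, W}(X) \leq k(k+1)/2$, so the equivalence statement provided by \autoref{@civilizing} applies to every such $X$ and hence to $S$. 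The output vertex $v$ is irrelevant in the sense that $(G,k)$ and $(G \setminus v, k)$ are equivalent instances of {\sc $\Fcal$-M-Elimination Distance}, and I would then recurse on $(G \setminus v, k)$.

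In the branching case, I would use the flow procedure of \autoref{@conditioned} verbatim: for each suitable $r_2'$-subwall $W_2'$ of $W_1$, compute its canonical partition, contract each bag, add an apex $v_{\sf all}$ adjacent to every contracted vertex, and identify by flow augmentation the set $\tilde A$ of vertices $y \notin V(W_2')$ admitting $q$ internally vertex-disjoint paths from $v_{\sf all}$ to $y$. The decisive effect of $a_\Fcal = 1$ is that a single vertex $v \in \tilde A$ whose adjacency meets the hypothesis of \autoref{@proclamation} (with $A' := \{v\}$ and $q$, $p$ adapted to bidimensionality $k(k+1)/2$) already suffices to conclude, combining \autoref{@proclamation} with \autoref{@lucinatory}, that $v$ must belong to every $k$-elimination set of $G$. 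Once such a vertex is identified I would recurse on $(G \setminus v, k-1)$; should no such $v$ exist for any choice of $W_2'$, I would report a \no-instance.

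For the time complexity, Step~1 and the irrelevant-vertex step each run in $2^{\poly(k)} \cdot n$ and are called at most $n$ times in total. The flow-based branching step runs in $2^{\poly(k)} \cdot n^2$ per call, since it performs flow computations for every vertex of $G$ across $\poly(k)$ many choices of subwall, and is also called at most $n$ times, yielding the claimed $2^{k^{\Ocal_{\ell_\Fcal}(1)}} \cdot n^3$ bound. The main technical obstacle I anticipate lies in the correctness of the branching step: \autoref{@proclamation} only guarantees that the superset $X \supseteq S$ produced by \autoref{@lucinatory} hits $\{v\}$, so concluding that $v \in S$ (and not merely $v \in X$) requires exploiting the explicit construction of $X$ in \autoref{@lucinatory} together with the high connectivity of $v$ to the wall, to rule out the possibility that $v$ lies in a side component of $G \setminus S$. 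Once this is settled, the equivalence $(G,k) \Leftrightarrow (G \setminus v, k-1)$ follows by placing $v$ at the root of some optimal $\Fcal$-elimination forest of $G$, yielding the advertised recursion.
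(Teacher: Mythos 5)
Most of your outline tracks the paper's algorithm: the elimination-distance version of {\tt Find-Wall}, the dynamic programming of \autoref{@unquestioned} in the bounded-treewidth case, the irrelevant-vertex step run with bidimensionality budget $k(k+1)/2$ via \autoref{@lucinatory}, the flow-based detection of high-connectivity vertices, and even the observation that one must upgrade ``$v\in X_{S'}$'' to ``$v\in S'$'' using the adjacency of $v$ to more than $k(k+1)/2$ internal bags (this is exactly how the paper argues it, via \autoref{@categories} and the bound $\bid_{\tilde{\Qcal}_y}(X_{S'})\leq k(k+1)/2$).

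The genuine gap is in what you do once a mandatory vertex $v$ is found: recursing on $(G\setminus v,k-1)$ is not a valid reduction. A vertex belonging to every $k$-elimination set need not be placeable at the root of an optimal $\Fcal$-elimination forest; elimination sets consist of \emph{all} internal nodes of the forest, can have size $\Omega(n)$, and may contain many mandatory vertices sitting in pairwise incomparable subtrees. Concretely, for $\Gcal$ the class of planar graphs and $k=1$, let $G$ be the disjoint union of $k+5$ huge walls, each augmented with a private apex vertex $v_i$ adjacent to its entire wall. Then $\ed_\Gcal(G)=1$, and each $v_i$ lies in every $1$-elimination set (if $v_i$ survives to a leaf bag, that leaf component is nonplanar), yet $(G\setminus v_1,0)$ is a \no-instance because the other components remain nonplanar. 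So the equivalence ``$(G,k)$ iff $(G\setminus v,k-1)$'' that your branching step relies on is false, and an algorithm using it would report \no\ on \yes-instances; more generally, the total number of mandatory vertices discovered over the run can far exceed $k$, so the budget cannot be decremented per discovery. The paper circumvents exactly this by solving the more general {\sc Annotated $\Fcal$-M-Elimination Distance}: discovered mandatory vertices are added to an annotation set $S_0$ and the algorithm recurses on $(G,S_0\cup\tilde{A},k)$ with the \emph{same} $k$, running the wall-finding and flat-wall machinery on $G\setminus S_0$, and handling the bounded-treewidth base case of the annotated problem by gluing a fixed obstruction gadget onto each annotated vertex (\autoref{@tualization}). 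Progress is then measured by $|S_0|$ growing, which is what makes the branching step repeat at most $n$ times and yields the $n^3$ factor; without the annotation mechanism (or some substitute argument showing how to make progress while keeping $k$ fixed), your recursion does not go through.
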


Contrary to the previous section, since $a_\Fcal=1$, any vertex fulfilling the criteria of \autoref{@proclamation} belongs to every $k$-elimination set $S$ of the input graph for $\exc(\Fcal)$.
In the case where $\Fcal$ contains an apex-graph in
\cite{SauST21kapiII}, a Step~3 similar to the one of \autoref{@conditioned} can be applied, where a vertex belonging to every $k$-apex set can be found. Hence, after running Step~3 $k$ times, a $k$-apex set is found and the algorithm stops.
However, a $k$-elimination set may have size $\Omega(n)$, so we may run Step~3 $\Omega(n)$ times.
Since our Step~3 below runs in quadratic time, this gives the cubic dependence of this algorithm.
Fortunately, since we apply Step~3, just like in \autoref{@conditioned} and contrary to \autoref{@oeconomicus}, we manage to find a flatness pair along with an apex set whose size does not depend on $k$. Hence, when applying \autoref{@disreputable}, we do not get a triple-exponential dependence on $k$ anymore for the size of the wall we need to find originally.\smallskip

In order to remember the vertices that are found to belong to every $k$-elimination set, since they do not decrease $k$ like in \autoref{@conditioned}, we need to distinguish them in the input. Hence, we actually give here an algorithm to solve a more general problem with annotations described in \autoref{@abolishing}.

\subsection{Generalization to annotated elimination distance}\label{@abolishing}

Contrary to the previous section, since $a_\Fcal=1$, when applying \autoref{@proclamation}, we find a vertex that belongs to every $k$-elimination set $S$. Such vertices are taken into account by considering the following generalization of {\sc $\Fcal$-M-Elimination Distance}.

\begin{center}
	\fbox{
		\begin{minipage}{13.2cm}
			\noindent{\sc Annotated $\Fcal$-M-Elimination Distance}\\
			\noindent\textbf{Input:}~~A graph $G$, a set $S_0\subseteq V(G)$, and a non-negative integer $k$.\\
			\textbf{Objective:}~~Find, if it exists, a $k$-elimination set $S$ of $G$ for the class $\exc(\Fcal)$ such that $S_0\subseteq S$.
		\end{minipage}
	}
\end{center}

$S_0$ is a set of annotated vertices that corresponds to the vertices identified as vertices of every $k$-elimination set $S$ while running the algorithm.
Clearly, {\sc $\Fcal$-M-Elimination Distance} is the particular case of {\sc Annotated $\Fcal$-M-Elimination Distance} when $S_0$ is empty.\smallskip

In the following lemma, we generalize \autoref{@unquestioned} to its ``annotated'' version.
More precisely, we present a simple trick to reduce the above problem to its ``unannotated'' version while not changing the treewidth of the input graph so much.

\begin{lemma}\label{@tualization}
Let $\Fcal$ be a finite collection of graphs.
There is an algorithm that, given a graph $G$, a set $S_0\subseteq V(G)$, and two integers $k$ and $\tw$ such that the treewidth of $G$ is bounded by $\tw$, decides whether $(G,S_0,k)$ is a \yes-instance of {\sc Annotated $\Fcal$-M-Elimination Distance} in time $2^{\Ocal_{\ell_\Fcal}(\tw\cdot k+\tw\log \tw))}\cdot n$.
\end{lemma}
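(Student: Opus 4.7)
The plan is to reduce Annotated $\Fcal$-M-Elimination Distance to its unannotated version via a small gadget attached to each vertex of $S_0$, and then invoke \autoref{@unquestioned}. Fix any $H \in \Fcal$ and any vertex $u \in V(H)$. The graph $G'$ is constructed from $G$ by, for each $v \in S_0$, adding a fresh copy $H^v$ of the $|V(H)|-1$ vertices of $H \setminus u$ and all incident edges of $H$, with $v$ playing the role of $u$; thus $G'[V(H^v) \cup \{v\}]$ is isomorphic to $H$. Since $H$ is a minor-minimal obstruction in $\obs(\exc(\Fcal))$, its proper minor $H \setminus u$ belongs to $\exc(\Fcal)$. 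Moreover, $\tw(G') \leq \tw + s_\Fcal$: a tree decomposition of $G'$ is obtained from one of $G$ by hanging, next to any existing bag containing $v$, a new bag equal to $V(H^v) \cup \{v\}$, for each $v \in S_0$.

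The key claim is the equivalence $(G, S_0, k) \in \text{yes} \Longleftrightarrow (G', k) \in \text{yes of } \Fcal\text{-M-Elim.\ Dist.}$ The forward direction is direct: given a $k$-elimination forest of $G$ whose internal labels contain $S_0$, for each $v \in S_0$ we attach a new leaf child at $v$'s internal node with label $V(H^v)$; since $G'[V(H^v)] \cong H \setminus u \in \exc(\Fcal)$ and $H^v$ is a connected component of $G' \setminus (\{v\} \cup \anc(v))$, this produces a valid $\Fcal$-elimination forest of $G'$ of the same height. The backward direction requires proving that for every $\Fcal$-elimination forest $(F', \chi', R')$ of $G'$ of height $\leq k$, every $v \in S_0$ is an internal vertex. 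I plan to argue this as follows: if $v \in S_0$ were placed at a leaf $t_v$, then since $G'[V(H^v) \cup \{v\}]$ is connected (assuming WLOG that we pick a connected $H \in \Fcal$, otherwise we adapt by working with a connected component of $H \setminus u$ together with $v$), \autoref{@unreflectingly} gives that the least common ancestor $x_v$ of $\chi'^{-1}(V(H^v) \cup \{v\})$ lies in $\chi'^{-1}(V(H^v) \cup \{v\})$; if $x_v = v$, then $v$ is internal (contradiction), and otherwise $x_v \in V(H^v)$, and I rearrange the subtree rooted at the node of $x_v$ by swapping the roles of $v$ and $x_v$ (relying on the structural fact that the subgraph $G'[\chi'(T_{x_v})]$ contains a copy of $H$ via $v$, and that $H^v \in \exc(\Fcal)$ admits a trivial one-level elimination under $v$) to place $v$ at the root of this subtree without increasing its height.

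The main obstacle will be making this local rearrangement precise: one has to verify that, after the swap, both the edge-condition and the connectivity-of-subtrees condition of the elimination forest are preserved. I expect the argument to pivot on the fact that the LCA $x_v$ sits at the top of the subtree containing the entire gadget plus $v$, so replacing $x_v$ by $v$ at that node and rebuilding the elimination tree of $G'[\chi'(T_{x_v})]$ (starting from $v$, whose removal disconnects $H^v$ as an $\exc(\Fcal)$-leaf from the remaining graph whose elimination distance is at most that of the original subtree minus~$1$) gives an equally shallow elimination forest. Once the equivalence is established, applying \autoref{@unquestioned} to $(G', k)$ with the decomposition of width $\tw + s_\Fcal$ yields running time $2^{\Ocal_{\ell_\Fcal}((\tw + s_\Fcal)\cdot k + (\tw + s_\Fcal)\log(\tw + s_\Fcal))}\cdot n = 2^{\Ocal_{\ell_\Fcal}(\tw \cdot k + \tw \log \tw)} \cdot n$, as claimed.
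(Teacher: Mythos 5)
Your overall plan — attach a gadget at each vertex of $S_0$ to force it into every shallow elimination set, then invoke \autoref{@unquestioned} on the modified graph — is the same as the paper's, but there is a genuine gap in your choice of gadget. You attach $H\setminus u$ for some $H\in\Fcal$ and argue that $G'[V(H^v)\cup\{v\}]\cong H$ forces a deletion inside the gadget; this only works when $H$ is connected, and your ``WLOG pick a connected $H\in\Fcal$'' is not a WLOG, since $\Fcal$ may contain no connected graph at all. In that case your claimed equivalence is false. Take $\Fcal=\{2K_3\}$ (two disjoint triangles), $G$ a single vertex $v$, $S_0=\{v\}$, $k=0$. Then $G'$ is a triangle through $v$ plus a disjoint floating triangle, i.e.\ $G'\cong 2K_3$; but $\ed_{\exc(\Fcal)}$ of a disconnected graph is the maximum over its components, each component here is a single triangle lying in $\exc(\Fcal)$, so $\ed_{\exc(\Fcal)}(G')=0$ and $(G',0)$ is a \yes-instance of {\sc $\Fcal$-M-Elimination Distance}, while $(G,S_0,0)$ is a \no-instance of the annotated problem. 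The point is that elimination distance only sees the component-wise closure $\Ccal(\exc(\Fcal))=\{G\mid \mbox{every connected component of $G$ is in } \exc(\Fcal)\}$, so a disconnected gadget forces nothing: its pieces may land in different leaf components of the residual graph. Your fallback (``work with a connected component of $H\setminus u$ together with $v$'') makes things worse, since that gadget is a proper minor of $H$ and hence lies in $\exc(\Fcal)$. (Two smaller points: for arbitrary finite $\Fcal$ as in the statement you should take $H\in\obs(\exc(\Fcal))$ rather than an arbitrary $H\in\Fcal$ to guarantee $H\setminus u\in\exc(\Fcal)$; and if $H\setminus u$ is disconnected its components must be hung as separate leaves, because leaf bags must induce connected graphs.)

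The paper repairs exactly this via a result of Bulian and Dawar: a disconnected $H\in\obs(\Gcal)$ with $l$ components becomes, after adding $l-1$ edges to connect it, a member of $\obs(\Ccal(\Gcal))$. The paper glues a connected $H_\Fcal\in\obs(\Ccal(\exc(\Fcal)))$ at each $v\in S_0$ (identifying $v$ with one of its vertices) and reduces to {\sc $\obs(\Ccal(\exc(\Fcal)))$-M-Elimination Distance}; connectivity of $H_\Fcal$ guarantees that a surviving gadget would lie inside a single leaf component, which then fails to be in $\Ccal(\exc(\Fcal))$, while minor-minimality guarantees that every component of $H_v\setminus v$ is in $\exc(\Fcal)$ and can be hung as a leaf in the forward direction. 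Your second concern — making $v$ itself internal when the forced deletion hits a gadget vertex $x_v\neq v$ — is indeed the delicate step (the paper's own write-up is terse there), and your sketch is essentially workable for a connected gadget: re-root the subtree of the LCA node at $v$, hang the components of the gadget minus $v$ as leaves, and reuse height-$(h-1)$ elimination trees for the remaining components, which exist by minor-monotonicity of $\ed$ since they are subgraphs of the graphs below the old root; you would, however, still need to argue that this rebuilding can be done for all vertices of $S_0$ simultaneously when their gadgets' LCA subtrees are nested. But without a connected obstruction of $\Ccal(\exc(\Fcal))$ as gadget, the reduction itself is incorrect, so that is the gap to fix first.
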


\begin{proof}
Given a minor-closed graph class $\Gcal$, let ${\Ccal(\Gcal)}:=\{G \mid \forall C\in\cc(G), C\in\Gcal\}$.
Bulian and Dawar~\cite{BulianD17fixe} showed that if $H\in\obs(\Gcal)$ has $l$ connected components, then each graph $\bar{H}$ obtained from $H$ by adding $l-1$ edges to obtain a connected graph belongs to $\obs(\Ccal(\Gcal))$.
Thus, let $H_\Fcal\in\obs(\Ccal{\exc(\Fcal)}))$ obtained in such a way. As said above, $H_\Fcal$ is connected.

Let $G$ be a graph of treewidth at most $\tw$ and $S_0$ be a subset of $V(G)$.
Let $G'$ be a graph obtained from $G$ by gluing a graph $H_v$ isomorphic to $H_\Fcal$ to each vertex $v$ of $S_0$, where $v$ is identified with an arbitrarily chosen vertex of $H_v$.

Let us show that $(G,S_0,k)$ is a \yes-instance of {\sc Annotated $\Fcal$-M-Elimination Distance} if and only if $(G',k)$ is a \yes-instance of {\sc $\obs(\Ccal\exc(\Fcal)))$-M-Elimination distance}.
If $S_0=\emptyset$, the proof is trivial, so we suppose $S_0\neq\emptyset$.

Let $(F',\chi',R')$ be a $\Ccal{\exc(\Fcal)})$-elimination forest of $G'$ of height at most $k$ with associated $k$-elimination set $S'$.
For each $v\in S_0$, the fact that $H_v\notin \Ccal{\exc(\Fcal)})$ implies that $V(H_v)\cap S'\neq\emptyset$.
Let $y\in V(F')$ be the least common ancestor of $\chi'^{-1}(H_v)$ in $F'$.
Since $H_v$ is connected, according to \autoref{@unreflectingly}, $y$ exists and belongs to $\chi'^{-1}(H_v)$.
Moreover, since $V(H_v)\cap S'\neq\emptyset$, $y\in\Int(F',R')$.

Let $(F'',\chi'',R'')$ be the $\Fcal$-elimination forest of $G$ obtained from $(F',\chi',R')$ as follows.
For every $v\in S_0$ and every $t\in\chi'^{-1}(H_v\setminus v)$
if $t\in\Int(F',R')$, remove $t$ from $F''$ and add edges between the parent and the children of $t$
and if $t\in\leaf(F',R')$,
remove $H_v\setminus v$ from $\chi''(t)$.
If $G[\chi''(t)]$ is not connected, then we update $F''$ by replacing $t$ by $|{\sf cc}(G[\chi''(t)])|$ nodes, each one associated with a connected component of $G[\chi''(t)]$.

Thus, we have an $\Fcal$-elimination forest of $G$ associated with a $k$-elimination set $S$ with $S_0\subseteq S$ and with height at most $k$, implying that $(G,S_0,k)$ is a \yes-instance of {\sc Annotated $\Fcal$-M-Elimination Distance}.

Conversely, given an $\Fcal$-elimination forest $(F,\chi,R)$ of $G$ of height at most $k$, and $S_0\subseteq S:=\chi(\Int(F,R))$, we can obtain a $\obs(\Ccal\exc(\Fcal)))$-elimination forest $(F',\chi',R')$ of $G'$ of height at most $k$ by adding to each node $\chi^{-1}(v)$ for $v\in S_0\subseteq S$ a leaf associated with $H_v\setminus v\in \Ccal{\exc(\Fcal)})$. The height of $(F',\chi',R')$ is indeed still at most $k$.
Therefore, $(G,S_0,k)$ is a \yes-instance for {\sc Annotated $\Fcal$-M-Elimination Distance} if and only if $(G',k)$ is a \yes-instance for {\sc $\obs(\Ccal\exc(\Fcal)))$-M-Elimination distance}.

Thus, if we apply the algorithm of \autoref{@unquestioned} to $(G',k)$ to solve {\sc $\obs(\Ccal\exc(\Fcal)))$-M-Elimination distance}, we solve {\sc Annotated $\Fcal$-M-Elimination Distance} on instance $(G,S_0,k)$ in time $2^{\Ocal_{\ell_\Fcal}(\tw(G')\cdot k+\tw(G')\log \tw(G')))}\cdot |V(G')|$.

Given a tree decomposition $\Tcal=({\sf T},\beta)$ of $G$ of width $t$, we can obtain a tree decomposition $\Tcal'$ of $G'$ of width at most $t+|V(H_\Fcal)|$ by adding a node $x_v$ for each $v\in S_0$ such that $\beta(v)=V(H_v)$, adjacent to a node $y$ of $T$ such that $v\in\beta(y)$. Thus, $\tw(G')\leq\tw(G)+|V(H_\Fcal)|=\tw(G)+\Ocal_{\ell_\Fcal}(1)$.
Moreover, $|V(G')|=|V(G)|+(|V(H_\Fcal)|-1)\cdot|S_0|=\Ocal_{\ell_\Fcal}(|V(G)|)$.
Therefore, we can solve {\sc Annotated $\Fcal$-M-Elimination Distance} on instance $(G,S_0,k)$ in time $2^{\Ocal_{\ell_\Fcal}(\tw\cdot k+\tw\log \tw))}\cdot n$, and the lemma follows.
\end{proof}

\subsection{Description of the algorithm for {\sc \texorpdfstring{$\Fcal$}{F}-M-Elimination Distance} when \texorpdfstring{$a_\Fcal=1$}{aF=1}}

We now describe the algorithm to solve {\sc Annotated $\Fcal$-M-Elimination Distance}, and hence {\sc $\Fcal$-M-Elimination Distance}, when $a_\Fcal=1$, i.e., when $\Fcal$ contains an apex-graph. Note that, similarly to this algorithm, the one from \autoref{@oeconomicus} in the general case can also very easily be generalized to its ``annotated'' version.
\sugar{We stress that the reason for the better parametric dependence of this algorithm compared to the algorithm of
\autoref{@communicated}
is that we pursue homogeneous flat walls where homogeneity is asked for subsets of size {\sl not} depending on $k$.}
\smallskip

We define the following constants. 
\begin{align*}
	a =				& \ \funref{@collaboration}(s_\Fcal), &
	q = 			& \ \funref{@categories}(1,s_\Fcal,k(k+1)/2), \\
	p =       & \ \funref{@provincial}(1,s_\Fcal,k(k+1)/2), &
	l =       & \ (q-1)\cdot (k+a), \\
	d =				& \ \funref{@deliberation}(a,\ell_\Fcal) &
	r_4 =     & \ \funref{@differences}(a,\ell_\Fcal,3,k(k+1)/2), \\
	r_3 =     & \ \funref{@philistines}(r_4,a,a,d), &
	t =				& \ \funref{@corollaries}(s_\Fcal)\cdot r_3, \\
	r_2 = 		& \ \odd(t+3), &
	r_2' =    & \ \odd(\max\{ \funref{@unaffected}(1,s_\Fcal,k(k+1)/2), \funref{@idealistic}(l+1, r_2,p)\}), \\
	r_1' =		& \ \odd(\funref{@classifications}(s_\Fcal)\cdot r_2'), &
	r_1 =     & \ \odd(r_1'+k).
\end{align*}

Note that $r_4=\Ocal_{\ell_\Fcal}(k^2)$, $r_3,r_2=\Ocal_{\ell_\Fcal}(k^{2c})$, and $r_2',r_1',r_1=\Ocal_{\ell_\Fcal}(k^{2c+7/2})$, where $c=\funref{@withdrawing}(a,a,d)=\Ocal_{\ell_\Fcal}(1)$.
Recall from \autoref{@graphically} that we assume that $G$ has $\Ocal_{s_\Fcal}(k\sqrt{\log k}\cdot n)$ edges.

The input of this algorithm is a graph $G$, a set $S_0\subseteq V(G)$, and an integer $k$.

\subparagraph{Step 1.} Run the algorithm {\tt Find-Wall-Ed} from \autoref{@transforma} with input $(G\setminus S_0,r_1,k)$
and, in time $2^{\Ocal_{\ell_\Fcal}(r_1^2+k^2)}\cdot n=2^{\Ocal_{\ell_\Fcal}(k^{4c+7})}\cdot n$,
\begin{itemize}
	\item either report a \no-instance, or
	\item conclude that $\tw(G\setminus S_0)\leq \funref{@veneration}(s_\Fcal)\cdot r_1+k$ and solve {\sc Annotated $\Fcal$-M-Elimination Distance} with input $(G,S_0,\funref{@veneration}(s_\Fcal)\cdot r_1+2k,k)$ in time $2^{\Ocal_{\ell_\Fcal}((r_1+k)\cdot k+(r_1+k)\log (r_1+k)))}\cdot n=2^{\Ocal_{\ell_\Fcal}(k^{2c+9/2})}\cdot n$ using the algorithm of \autoref{@tualization}, or
	\item obtain an $r_1$-wall $W_1$ of $G$.
\end{itemize}

If the output of \autoref{@transforma} is a wall $W_1$, consider all the $\binom{r_1}{r_2}^2=2^{\Ocal_{\ell_\Fcal}(k^{2c}\log k)}$ $r_2$-subwalls of $W_1$
and for each one of them, say $W_2$, let $W_2^*$ be the central $(r_2-2)$-subwall of $W_2$ and let $D_{W_2}$ be the graph obtained from $G\setminus S_0$ after removing the perimeter of $W_2$ and taking the connected component containing $W_2^*$.
Run the algorithm {\tt Clique-or-twFlat} of \autoref{@unimportant} with input $(D_{W_2},r_3,s_\Fcal)$.
This takes time $2^{\Ocal_{\ell_\Fcal}(r_3^2)}\cdot n=2^{\Ocal_{\ell_\Fcal}(k^{4c})}\cdot n$.
If for one of these subwalls the result is a set $A$ of size at most $a$ and a regular flatness pair $(W_3,\mathfrak{R}_3)$ of $D_{W_2}\setminus A$ of height $r_3$ whose $\mathfrak{R}_3$-compass has treewidth at most $t$, then we proceed to Step~2, otherwise proceed to Step~3.

\subparagraph{Step 2.}
We obtain a 7-tuple $\mathfrak{R}_3'$ by adding all vertices of $G\setminus (S_0\cup V({\sf Compass}_{\mathfrak{R}_3}(W_3)))$ to the set in the first coordinate of $\mathfrak{R}_3$, such that $(W_3,\mathfrak{R}_3')$ is a regular flatness pair of $G\setminus (S_0\cup A)$.\smallskip

We first apply the algorithm {\tt Homogeneous} of \autoref{@disreputable}
with input $(r_4,a,a,d,t,G\setminus S_0,A,W_3,\mathfrak{R}_3')$, which outputs, in time $2^{\Ocal_{\ell_\Fcal}(r_4\log r_4 + t\log t)}\cdot(n+m)=2^{\Ocal_{\ell_\Fcal}(k^{2c}\log k)}\cdot n$ a flatness pair $(W_4,\mathfrak{R}_4)$ of $G\setminus (S_0\cup A)$ of height $r_4$ that is $d$-homogeneous with respect to $2^A$ and is a $W^*$-tilt of $(W_3,\mathfrak{R}_3')$ for some subwall $W'$ of $W$.
{At this point, we stress that the reason for the better parametric dependence of this algorithm compared to the previous one comes from the fact that the third input parameter $a$ in {\tt Homogeneous} does {\sl not} depend of $k$.}
We apply the algorithm {\tt Find-Irrelevant-Vertex} of \autoref{@civilizing}
with input $(k(k+1)/2,a,G\setminus S_0,A,W_4,\mathfrak{R}_4)$, which outputs, in time $\Ocal_{s_\Fcal}(k\sqrt{\log k}\cdot n)$, a vertex $v$ such that $(G,S_0,k)$ and $(G\setminus v,S_0,k)$
are equivalent instances of {\sc Annotated $\Fcal$-M-Elimination Distance}.
Then the algorithm runs recursively on the equivalent instance $(G\setminus v,S_0,k)$.

\subparagraph{Step 3.}
Consider all the $r_2'$-subwalls of $W_1$, which are $\binom{r_1}{r_2'}^2=2^{\Ocal_{\ell_\Fcal}(k^{2c+7/2}\log k)}$ many,
and for each of them, say $W_2'$,
compute its canonical partition $\Qcal$.
Then, contract each bag $Q$ of $\Qcal$ to a single vertex $v_Q$, remove the vertices $v_Q$ where $Q$ is not a $p$-internal bag of $\Qcal$, and add a new vertex $v_{\rm all}$ and make it adjacent to all remaining $v_{Q}$'s. In the resulting graph $G'$, for every vertex $y$ of $(G\setminus S_0)\setminus V(W_2')$, check, in time $\Ocal(q\cdot m)=\Ocal_{\ell_\Fcal}(k^7\sqrt{\log k}\cdot n)$, using a flow augmentation algorithm~\cite{Diestel10grap}, whether there are $q$ internally vertex-disjoint paths from $v_{\rm all}$ to $y$.
Let $\tilde{A}$ be the set of such $y$'s.\smallskip

If $\tilde{A}=\emptyset$, then report a \no-instance.\smallskip

If $1\leq|\tilde{A}|\leq k+a$, then each vertex of $\tilde{A}$ should intersect every $k$-elimination set $S$ of $G$ for $\exc(\Fcal)$.
The algorithm runs recursively on $(G,S_0\cup\tilde{A},k)$.\smallskip

If, for every wall, $|\tilde{A}|>k+a$, then report that $(G,S_0,k)$ is a \no-instance of {\sc Annotated $\Fcal$-M-Elimination Distance}.\bigskip

After Step~2, the size of $G$ decreases by one, so Step~2 can be applied at most $n$ times.
After Step~3, the size of $S_0$ increases by at least one, so Step~3 can also be applied at most $n$ times.
Note that, if $S_0=V(G)$, then $\tw(G\setminus S_0)=0$, so the algorithm stops.
Thus, the algorithm finishes.
Notice also that Step~3, when applied, takes time $2^{\Ocal_{\ell_\Fcal}(k^{2c+7/2}\log k)}\cdot n^2$, because we apply a flow algorithm for each of the $2^{\Ocal_{\ell_\Fcal}(k^{2c+7/2}\log k)}$ $r_2'$-subwalls and for each vertex of $G$.
Since Step~1 and Step~2 run in time $2^{\Ocal_{\ell_\Fcal}(k^{4c+7})}\cdot n$ and $2^{\Ocal_{\ell_\Fcal}(k^{2c}\log k)}\cdot n$, respectively, and both may be applied at most $n$ times, the claimed time complexity follows: the algorithm runs in time $2^{\Ocal_{\ell_\Fcal}(k^{4c+7})}\cdot n^3$.

\subsection{Correctness of the algorithm}\label{@characteristic}

Let $(G,S_0,k)$ be a \yes-instance and let $S$ be a $k$-elimination set of $G$ for $\exc(\Fcal)$ with $S_0\subseteq S$.
By running \autoref{@transforma} with input $(G\setminus S_0,r_1,k)$, the algorithm should either get a report that $\tw(G\setminus S_0)\leq \funref{@veneration}(s_\Fcal)\cdot r_1+k$ or find an $r_1$-wall.\smallskip

If $\tw(G\setminus S_0)\leq \funref{@veneration}(s_\Fcal)\cdot r_1+k$, then since $S_0\subseteq S$, $\tw(G\setminus S)\leq \funref{@veneration}(s_\Fcal)\cdot r_1+k$.
Hence, according to \autoref{@barbarians}, $\tw(G)\leq \funref{@veneration}(s_\Fcal)\cdot r_1+2k$.
\medskip

Otherwise, let $W_1$ be an $r_1$-wall of $G\setminus S_0$.
According to \autoref{@unchangeable}, since $r_1\geq r_1'+k$, there is an $r_1'$-subwall $W_1'$ of $W_1$ that is a subwall of $G\setminus S$.
Let $H$ be the connected component of $G\setminus S$ containing $W_2$.
The fact that $H$ belongs to ${\exc(\Fcal)}$ implies that it has no $K_{s_\Fcal}$-minor.
Therefore, by \autoref{@possession}, since $r_1'\geq \funref{@classifications}(s_\Fcal)\cdot r_2'$, there is a set
$B\subseteq V(H)$, with $|B|\leq a$,
and a flatness pair $(W_2',\mathfrak{R}_2')$ of $H\setminus B$ of height $r_2'$.\medskip

Let $\Qcal$ be the canonical partition of $W_2'$.
Let $G'$ be the graph obtained after contracting every bag $Q$ of $\Qcal$ to a single vertex $v_Q$, removing the vertices $v_Q$ where $Q$ is not a $p$-internal bag of $\Qcal$, and adding a new vertex $v_{\rm all}$ and making it adjacent to all remaining $v_{Q}$'s.
Let $\tilde{A}$ be the set of vertices $y$ of $G\setminus V(W_2')$ such that there are $q$ internally vertex-disjoint paths from $v_{\rm all}$ to $y$ in $G'$.
Since $S$ is a $k$-elimination set of $G$ for $\exc(\Fcal)$,
there is a set $P\subseteq S$ of size at most $k$ so that $(L,R):=(V(G)\setminus V(H),V(H)\cup P)$ is a separation of $G$ with $P=L\cap R$.\smallskip

Note that $\tilde{A}\subseteq P\cup B$.
To show this, we first prove that, for every $y\notin P\cup B$, the maximum number of internally vertex-disjoint paths from $v_{\sf all}$ to $y$ in $G'$ is
less than $q$.
Indeed,
if $y$ is a vertex in $(V(G)\setminus V(H))\setminus P$, then every path from $y$ to a vertex of $W_2'$ intersects $P$.
Therefore, there are at most $k<q$ internally vertex-disjoint paths from $v_{\sf all}$ to such a $y\in (V(G)\setminus V(H))\setminus P$ in $G'$.
If $y\in V(H)\setminus B$, then we distinguish two cases.
First, if $y$ is a vertex in the $\mathfrak{R}_2'$-compass of $W_2'$,
there are at most $k+a$ such paths that intersect the set $P\cup B$ and
at most four paths that do not intersect $P\cup B$ (in the graph $G'\setminus (P\cup B)$)
due to the flatness of $W_2'$.
If $y$ is in $V(H)$ but not a vertex in the $\mathfrak{R}_2'$-compass of $W_2'$, then, since by the definition of flatness pairs the perimeter of $W_2'$ together with the set $P\cup B$ separate $y$ from the $\mathfrak{R}_2'$-compass of $W_2'$,
every collection of internally vertex-disjoint paths from $v_{\rm all}$ to $y$ in $G'$ should intersect the set $\{v_{Q_{\rm ext}}\}\cup P\cup B$, where $Q_{\rm ext}$ is the external bag of $\Qcal$. Therefore, in all cases, if $y\notin P\cup B$, the maximum number of internally vertex-disjoint paths from $v_{\sf all}$ to $y$ in $G'$ is at most $k+a+4<q$.
Therefore, $y\notin\tilde{A}$.
Hence, $|\tilde{A}|\leq k+a$.\smallskip

Let $\mathfrak{R}_2''$ be the 7-tuple obtained by adding all vertices of $((G\setminus S_0)\setminus P)\setminus H$ to the set in the first coordinate of $\mathfrak{R}_2'$. Notice that since every path between $G\setminus H$ and $H$ intersects $P$, $(W_2',\mathfrak{R}_2'')$ is a flatness pair of $G\setminus (P\cup B)$.\medskip

If $\tilde{A}=\emptyset$, then let $\tilde{\Qcal}$ be an enhancement of $\Qcal$ on $G\setminus (P\cup B)$.
No vertex of $(P\cup B)\setminus S_0$ is adjacent to vertices of at least $q$ $p$-internal bags of $\tilde\Qcal$.
This means that the $p$-internal bags
of $\tilde\Qcal$ that contain vertices adjacent to some vertex of $P\cup B$ are at most $(q-1)\cdot(k+a)=l$.\smallskip

Consider a family $\Wcal=\{W^1, \ldots, W^{l+1}\}$ of $l+1$ $r_2$-subwalls of $W_2'$ such that for every $i \in [l+1]$, $\bigcup \influence_{\mathfrak{R}_2''}(W^i)$ is a subgraph of $\bigcup \{Q\mid Q \text{ is a $p$-internal bag of }\tilde\Qcal\}$ and for every $i,j\in[l+1]$, with $i\neq j$, there is no internal bag of $\tilde\Qcal$ that contains vertices of both $V(\bigcup \influence_{\mathfrak{R}_2''} (W^i))$ and $V(\bigcup \influence_{\mathfrak{R}_2''} (W^j))$. The existence of $\Wcal$ follows from \autoref{@prohibitions} and the fact that $r_2'\geq \funref{@idealistic}(l+1,r_2,p)$.\smallskip

The fact that the $p$-internal bags
of $\tilde\Qcal$ that contain vertices adjacent to some vertex of $(P\cup B)\setminus S_0$ are at most $l$ implies that
there exists an $i\in[l+1]$ such that
no vertex of $V(\bigcup \influence_{\mathfrak{R}_2''}({W^i}))$ is adjacent, in $G$, to a vertex in $(P\cup B)\setminus S_0$.\smallskip

Let $W_2:=W^i$, let $W_2^*$ be the central $(r_2-2)$-subwall of $W_2$, and let $D_{W_2}$ be the graph obtained from $G\setminus S_0$ after removing the perimeter of $W_2$ and taking the connected component containing $W_2^*$.
Any path going from a vertex in $H$ to a vertex in $G\setminus H$ intersects $P$.
Thus, $D_{W_2}\subseteq H$ and therefore, $K_{s_\Fcal}$ is not a minor of $D_{W_2}$.
Moreover, $W_2^*$ is a wall of $D_{W_2}$ of height $r_2-2\geq t+1$, so $\tw(D_{W_2})>t=\funref{@corollaries}(s_\Fcal)\cdot r_3$.
Therefore, since , if the algorithm runs {\tt Clique-or-twFlat} of \autoref{@unimportant} with input $(D_{W_2},r_3,s_\Fcal)$, it should obtain a set $A$ of size at most $a$ and a regular flatness pair $(W_3,\mathfrak{R}_3)$ of $D_{W_2}\setminus A$ of height $r_3$ whose $\mathfrak{R}_3$-compass has treewidth at most $t$.
Hence, the algorithm then runs Step~2.\medskip

If $\tilde{A}\neq\emptyset$, then recall that for every $y\in\tilde{A}$, $y$ has $q$ internally vertex-disjoint paths $P_1,...,P_q$ to different $p$-internal bags $Q_1,...,Q_q$ of $\Qcal$ in $G$.
Hence, there is an enhancement $\tilde{\Qcal}_y$ of $\Qcal$ on $G\setminus (P\cup B)$ such that $P_i$ belongs to the bag $\tilde{Q}_i$ that extends $Q_i$ for $i\in[q]$.
Therefore, $y$ is adjacent to vertices of at least $q$ $p$-internal bags of $\tilde{\Qcal}_y$.
Let $S'$ be a $k$-elimination set of $G$ for $\exc(\Fcal)$.
According to \autoref{@lucinatory}, there is a set $X_{S'}\subseteq V(G)$ such that $G\setminus X_{S'}\in\exc(\Fcal)$ and $\bid_{G\setminus (P\cup B),W_2'}(X_{S'})\leq k(k+1)/2$. Therefore, $y\in X_{S'}$ due to \autoref{@proclamation} and the fact that $r_2'\geq \funref{@unaffected}(1,s_\Fcal,k(k+1)/2)$.
Let $C_{S'}:=G\setminus X_{S'}$.
Recall that $y$ is adjacent to $q>k(k+1)/2$ $p$-internal bags of $\tilde{\Qcal}_y$.
However, $\bid_{\tilde{\Qcal}_y}(X_{S'})\leq\bid_{G\setminus (P\cup B),W_2'}(X_{S'})\leq k(k+1)/2$.
Therefore, $y$ is adjacent to $C_{S'}$, so $y\in S'$.
Since for every $y\in\tilde{A}$, for every $k$-elimination set $S'$, we have $y\in S'$, it implies that $\tilde{A}$ is included in every $k$-elimination set of $G$ for $\exc(\Fcal)$.
Hence, if the algorithm runs Step~3, it then recursively runs on the equivalent instance $(G,S_0\cup\tilde{A},k)$.\medskip

We do not suppose that $(G,k)$ is a \yes-instance anymore.
Let us show the correctness of Step~2.
Suppose that we obtained the wanted flatness pair $(W_3,\mathfrak{R}_3)$ in Step~1.
We obtain a 7-tuple $\mathfrak{R}_3'$ by adding all vertices of $G\setminus (S_0\cup V({\sf Compass}_{\mathfrak{R}_3}(W_3)))$ to the set in the first coordinate of $\mathfrak{R}_3$. Since $(W_3,\mathfrak{R}_3)$ is a regular flatness pair of $D_{W^i}\setminus A$ whose $\mathfrak{R}_3$-compass has treewidth at most $t$ and since the vertices added in $\mathfrak{R}_3'$ are only adjacent to the perimeter of $W^i$, it follows that $(W_3,\mathfrak{R}_3')$ is a regular flatness pair of $G\setminus (S_0\cup A)$ whose $\mathfrak{R}_3'$-compass has treewidth at most $t$.\smallskip

If the algorithm applies the algorithm {\tt Homogeneous} of \autoref{@disreputable} with  $(r_4,a,a,d,t,G\setminus S_0,A,W_3,\mathfrak{R}_3')$ as input, it obtains a flatness pair $(W_4,\mathfrak{R}_4)$ of $G\setminus (S_0\cup A)$ of height $r_4$ that is $d$-homogeneous with respect to $2^A$ and is a $W^*$-tilt of $(W_3,\mathfrak{R}_3')$ for some subwall $W'$ of $W$.
According to \autoref{@successively}, $(W_4,\mathfrak{R}_4)$ is regular.\smallskip

\autoref{@lucinatory} implies that for every $k$-elimination set $S'\supseteq S_0$, there is a set $X_{S'}\supseteq S'$ with $\bid_{G\setminus (S_0\cup A),W_4}(X_{S'})\leq k(k+1)/2$ and $G\setminus X_{S'}\in\exc(\Fcal)$.
We have that $|A\setminus X|\leq|A|\leq a$, so the algorithm can apply {\tt Find-Irrelevant-Vertex} of \autoref{@civilizing} with input $(k(k+1)/2,a,G\setminus S_0,A,W_4,\mathfrak{R}_4)$ to obtain a vertex $v$ such that for every $k$-elimination set $S'\supseteq S_0$, $G\setminus X_{S'}\in{\exc(\Fcal)}$ if and only if $G\setminus (X_{S'}\setminus v)\in{\exc(\Fcal)}$.
It follows that $(G,S_0,k)$ and $(G\setminus v,S_0,k)$ are equivalent instances of {\sc Annotated $\Fcal$-M-Elimination Distance}.\medskip

Suppose now that $(G,S_0,k)$ is a \no-instance.
In Step~1, the algorithm either reports a \no-instance or finds a wall.
In the latter case, the algorithm either goes to Step~2 or to Step~3.
If it runs Step~2, the previous paragraph justifies that the algorithm finds a vertex $v$ such that $(G\setminus v,S_0,k)$ is a \no-instance.
If the algorithm runs Step~3, then it either reports a \no-instance or recursively runs on the instance $(G\setminus y,S_0\cup\tilde{A},k)$.
If $(G\setminus y,S_0\cup\tilde{A},k)$ is \yes-instance, then so is $(G,k)$. Thus, $(G\setminus y,S_0\cup\tilde{A},k)$. is a \no-instance.
Hence, the algorithm always report a \no-instance.
Therefore, \autoref{@overflowing} follows.

\paragraph{Constructing the elimination ordering.}
Notice that the results of \autoref{@successors} and  \autoref{@overflowing} solve the decision version of
 {\sc Elimination Distance to $\Gcal$}. Using the dynamic programming algorithm of \autoref{@fanaticism},
 we may find a $k$-elimination set  $X$ certifying that
 $\ed_{\Gcal}(G)\leq k$. One may further determine, from
 $X$, the way the elimination ordering is applied on the vertices of $X$ as follows. Let ${\sf torso}(G,X)$ be the
 graph obtained from   $G[X]$  if,  for every connected component $C$ of $G-X$,  we make adjacent all pairs of vertices
 in $N_{G}(V(C))$ in $G[X]$.
  Then we know that $\td({\sf torso}(G,X))\leq k$ and the required elimination ordering
 is the same as the one for ${\sf torso}(G,X)$, which can be computed by the algorithm of~\cite{ReidlRSS14afas} in time $2^{\Ocal(k^2)}\cdot n$.

\section{Bounding the obstructions of \texorpdfstring{$\Ecal_k(\exc(\Fcal))$}{Ek(exc(F))}}\label{@participant}

In this section, we prove the following result that provides an upper bound on the size of the graphs in $\obs(\Ecal_k (\exc(\Fcal)))$. The following theorem is a reformulation of \autoref{@reconstructions}.

\begin{theorem}\label{@delectable}
Let $\Fcal$ be a non-empty finite collection of non-empty graphs and $k$ be a positive integer.
Every graph in $\obs(\Ecal_k (\exc(\Fcal)))$ has $2^{2^{2^{2^{k^{\Ocal_{\ell_\Fcal}(1)}}}}}$ vertices.
In the particular case when $\Fcal$ contains an apex-graph, every graph in $\obs(\Ecal_k (\exc(\Fcal)))$ has $2^{2^{k^{\Ocal_{\ell_\Fcal}(1)}}}$ vertices.
\end{theorem}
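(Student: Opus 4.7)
The proof will proceed in two steps, following the roadmap laid out in the introduction. First, I will bound $\tw(G)$ by a function of $k$ using the irrelevant vertex technique. Then, given this treewidth bound, I will bound $|V(G)|$ via a Lagergren-style argument~\cite{Lagergren98uppe} driven by the dynamic programming tables of \autoref{@fanaticism}.

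\textbf{Step 1 (treewidth bound).} The plan is to argue contrapositively: if $\tw(G)$ is sufficiently large as a function of $k$, then $G$ admits an irrelevant vertex $v$ in the sense that $(G,k)$ and $(G \setminus v, k)$ are equivalent instances of {\sc $\Fcal$-M-Elimination Distance}, which would contradict the minor-minimality of $G$ (since $G \setminus v \in \Ecal_k(\exc(\Fcal))$ while $G \notin \Ecal_k(\exc(\Fcal))$). Concretely, by \autoref{@overwhelmingly}, large treewidth yields a large wall; I then feed this wall through {\tt Clique-or-twFlat} (\autoref{@unimportant}), {\tt Homogeneous} (\autoref{@disreputable}), and {\tt Find-Irrelevant-Vertex} (\autoref{@civilizing}), mirroring the algorithms of \autoref{@oeconomicus} and \autoref{@stubbornly}. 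By \autoref{@inoperative}, $G$ is $K_{s_\Fcal+k}$-minor-free, so {\tt Clique-or-twFlat} invoked with parameter $s_\Fcal+k$ will indeed return a flat wall with a bounded apex set. In the apex-graph case ($a_\Fcal = 1$), this apex set can be reduced to size $\Ocal_{\ell_\Fcal}(1)$ (using the branching idea of \autoref{@stubbornly}), all downstream constants become polynomial in $k$, and we obtain $\tw(G) = k^{\Ocal_{\ell_\Fcal}(1)}$. In the general case, the apex set has polynomial size in $k$, and the double-exponential blow-up of $\funref{@withdrawing}$ inside {\tt Homogeneous} pushes the bound to $\tw(G) = 2^{2^{k^{\Ocal_{\ell_\Fcal}(1)}}}$.

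\textbf{Step 2 (size bound via Lagergren).} Set $w := \tw(G)$ and take a nice tree decomposition $\Tcal = ({\sf T}, \beta, {\sf r})$ of $G$ of width $\Ocal(w)$ (via \autoref{@naturalism} and \autoref{@estclusire}). For each $u \in V({\sf T})$, fix a bijection $\rho_u : \beta(u) \to [|\beta(u)|]$ and consider the boundaried graph ${\bf G}_u := (G_u, \beta(u), \rho_u)$ along with its characteristic $\Char_k({\bf G}_u)$. Suppose two ancestor-descendant nodes $u, u'$ on a root-to-leaf path of $\Tcal$ (with $u$ the ancestor) satisfy $\Char_k({\bf G}_u) = \Char_k({\bf G}_{u'})$ under a choice of $\rho_u, \rho_{u'}$ that identifies the bags via the induced isomorphism. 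Then the graph $G'$ obtained from $G$ by deleting $V(G_u) \setminus V(G_{u'})$ is a strict minor of $G$, and by \autoref{@manuscripts} (exchangeability of boundaried graphs with the same characteristic), $\ed_{\exc(\Fcal)}(G') = \ed_{\exc(\Fcal)}(G) > k$, contradicting the minor-minimality of $G$. Consequently, along every root-to-leaf path of $\Tcal$ the characteristics are pairwise distinct.

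By \autoref{@belligerent}, the total number of characteristics is bounded by $N := \funref{@caricatures}(w+1, k) = 2^{\Ocal_{\ell_\Fcal}(w k + w \log w)}$, so $\Tcal$ has depth at most $N$. Since a nice tree decomposition has branching at most $2$ (at join nodes), $|V({\sf T})| \leq 2^{N+1}$, and since every vertex of $G$ is introduced at a unique introduce node, $|V(G)| \leq |V({\sf T})| \leq 2^{N+1}$. Substituting $w = k^{\Ocal_{\ell_\Fcal}(1)}$ in the apex case yields $|V(G)| \leq 2^{2^{k^{\Ocal_{\ell_\Fcal}(1)}}}$; substituting $w = 2^{2^{k^{\Ocal_{\ell_\Fcal}(1)}}}$ in the general case yields $|V(G)| \leq 2^{2^{2^{2^{k^{\Ocal_{\ell_\Fcal}(1)}}}}}$, as claimed. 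The main technical subtlety will be the bag-identification step of the exchange argument: one must verify that, after choosing matching bijections $\rho_u, \rho_{u'}$, the graph $G'$ obtained by grafting ${\bf G}_{u'}$ into the slot of ${\bf G}_u$ is indeed a minor of $G$ and has the same characteristic at the root of~$\Tcal$ as $G$ does, which requires a clean invocation of \autoref{@manuscripts} for the boundaried graph that is the complement of $G_u$ in $G$.
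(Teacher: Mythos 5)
Your overall two-step plan (treewidth bound via an irrelevant vertex, then a Lagergren-style size bound driven by the characteristics of \autoref{@fanaticism} and the exchangeability lemma \autoref{@manuscripts}) is exactly the paper's strategy, and your Step~1, although routed through the machinery of \autoref{@oeconomicus} rather than the paper's route (which first fixes a $(k{+}1)$-elimination set $S$ --- it exists because every $G\setminus v$ lies in $\Ecal_k(\exc(\Fcal))$ --- and applies {\tt Grasped-or-Flat} inside a $K_{s_\Fcal}$-minor-free component of $G\setminus S$), is essentially sound and yields the right treewidth bounds. Two small repairs there: \autoref{@inoperative} applies to $G\setminus v$ (or to $G$ with parameter $k+1$), not to $G$ with parameter $k$; and the ``branching idea of \autoref{@stubbornly}'' is algorithmic and does not transfer to this purely existential setting --- what actually makes the apex case polynomial is that homogeneity is only required with respect to subsets of the apex set of size $<a_\Fcal$, so for $a_\Fcal=1$ the function $\funref{@withdrawing}$ is constant regardless of the apex set's size.

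The genuine gap is in Step~2. You work with an arbitrary nice tree decomposition and claim that if two ancestor-descendant nodes $u,u'$ satisfy $\Char_k({\bf G}_u)=\Char_k({\bf G}_{u'})$, then deleting $V(G_u)\setminus V(G_{u'})$ yields a proper minor $G'$ of $G$ to which \autoref{@manuscripts} applies. Neither half of this holds as stated. First, mere deletion does not produce $\bar{\bf G}_u\oplus{\bf G}_{u'}$: the gluing identifies $\beta(u)$ with $\beta(u')$ via the chosen bijections, and after deletion these two bags are simply disconnected pieces; the deleted graph is just a proper minor of $G$, hence has $\ed_{\exc(\Fcal)}\leq k$, and no contradiction can be extracted from it. Second, to realize $\bar{\bf G}_u\oplus{\bf G}_{u'}$ as a minor of $G$ one needs $|\beta(u)|$ vertex-disjoint paths in $G_u$ linking $\beta(u)$ to $\beta(u')$, whose contraction maps $\beta(u')$ onto $\beta(u)$ with labelings consistent with the ones under which the characteristics were compared (and one also needs $G[\beta(u)]\cong G[\beta(u')]$ for compatibility, and an argument that the replacement is strictly smaller). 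An arbitrary nice tree decomposition gives none of this, which is precisely why the paper uses linked tree decompositions (\autoref{@irrationality}: linkedness, properness, binary shape) together with the word lemma \autoref{@impervious} to extract a long sequence of nodes with equal bag size $t'$ and all intermediate bags of size $\geq t'$, so that the linkage paths exist and induce the boundary labelings; the pigeonhole is then over boundary-graph isomorphism types times characteristics, not over characteristics alone. Consequently your claim that characteristics are pairwise distinct along every root-to-leaf path is not justified, and the depth bound built on it collapses. (If you patch Step~2 with the linked-decomposition argument, your final arithmetic does give the claimed towers, since the counting changes only by factors absorbed in the $\Ocal$-notation.)
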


Recall that when $\Fcal=\{K_{1}\}=\obs(\Gcal_\emptyset)$, it is known \cite{DvorakGT12forb} that every graph in $\obs(\Ecal_k(\Gcal_\emptyset))$ has at most $2^{2^{k-1}}$ vertices.\smallskip

\autoref{@delectable} implies that one can construct an algorithm that receives as input $\obs(\exc(\Fcal))$ and $k$, and outputs $\obs(\Ecal_k (\exc(\Fcal)))$. This is done by enumerating all graphs on at most $f(k)$ vertices, where $f(k)$ is the bound on the number of vertices given by \autoref{@delectable}, and filtering out those that are members of $\Ecal_k (\exc(\Fcal))$ and taking those that are minor-minimal in what is left.
The running time of the algorithm can be bounded by $\Ocal(2^{2^{2^{2^{2^{k^{\Ocal_{\ell_\Fcal}(1)}}}}}}\cdot n^2)$ in the general case and by $\Ocal(2^{2^{2^{k^{\Ocal_{\ell_\Fcal}(1)}}}}\cdot n^2)$ if $\Fcal$ contains an apex-graph.

Note that this brute-force algorithm can be used to solve {\sc $\Fcal$-M-Elimination Distance}. Indeed, to solve {\sc $\Fcal$-M-Elimination Distance}, we can compute $\obs(\Ecal_k (\exc(\Fcal)))$ and then check whether there is a graph in $\obs(\Ecal_k (\exc(\Fcal)))$ that is a minor of the input graph. {Of course, this algorithm is much less efficient than the ones presented in the previous sections.}

The rest of the section is structured as follows: in \autoref{@quantitatively} we bound the treewidth of a minor-minimal obstruction of $\Ecal_k (\exc(\Fcal))$, while in \autoref{@typescript} we bound the size of a minor-minimal obstruction of $\Ecal_k (\exc(\Fcal))$ of small treewidth.
This immediately implies \autoref{@delectable}.

\subsection{Bounding the treewidth of an obstruction}\label{@quantitatively}

In this subsection we aim to prove an upper bound on the treewidth of a minor-minimal obstruction of $\Ecal_k (\exc(\Fcal))$.

\begin{lemma}\label{@libertines}
Let $\Fcal$ be a finite collection of graphs.
There exists a function $\newfun{@deceptively}:\bN^3\to\bN$ such that if $G\in\obs(\Ecal_k ({\exc(\Fcal)}))$,
then $\tw(G)\leq \funref{@deceptively}(k,a_\Fcal,s_\Fcal)$.
Moreover, $\funref{@deceptively}(k,a,s)=2^{\log(k\cdot c)\cdot 2^{k^{a-1}\cdot 2^{\Ocal(s^2\log s)}}}$, where $a=a_\Fcal$, $s=s_\Fcal$, and $c$ is a constant depending on $\ell_\Fcal$.
\end{lemma}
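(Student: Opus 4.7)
The plan is to argue by contradiction: assume $G\in\obs(\Ecal_k(\exc(\Fcal)))$ but $\tw(G)>\funref{@deceptively}(k,a_\Fcal,s_\Fcal)$, and derive that $G$ actually lies in $\Ecal_k(\exc(\Fcal))$ by finding a vertex $v\in V(G)$ whose removal preserves membership in $\Ecal_k(\exc(\Fcal))$. Since $G$ is a minor-minimal obstruction, $G\setminus v\in\Ecal_k(\exc(\Fcal))$ for every $v\in V(G)$; hence, if $v$ is \emph{irrelevant} in the sense that $G\in\Ecal_k(\exc(\Fcal))\Leftrightarrow G\setminus v\in \Ecal_k(\exc(\Fcal))$, we obtain the desired contradiction.

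First, by \autoref{@inoperative}, $G$ is $K_{s_\Fcal+k}$-minor-free, so \autoref{@overwhelmingly} guarantees the existence of a large wall $W$ in $G$ when $\tw(G)$ is sufficiently large as a function of $s_\Fcal$ and the target wall height. I will then follow the same sequence of tools as in the correctness analysis of \autoref{@successors} (\autoref{@preanimism}): apply \autoref{@unimportant} ({\tt Clique-Or-twFlat}) with parameter $s_\Fcal+k$ to extract, from the wall, an apex set $A$ with $|A|\leq\funref{@collaboration}(s_\Fcal+k)$ and a regular flatness pair $(W_2,\mathfrak{R}_2)$ of $G\setminus A$ of sufficiently large height whose compass has bounded treewidth. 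Using \autoref{@prohibitions} combined with \autoref{@categories} as in Section~\ref{@oeconomicus}, I select a subwall $W_3$ of $W_2$ whose influence avoids the ``bad'' apices, keeping only a bounded-size subset $B\subseteq A$ of size at most $a_\Fcal-1$; this is the step that yields a flatness pair whose apex set does not depend on $k$ (well, depends only through $B$). Then \autoref{@disreputable} ({\tt Homogeneous}) produces, from a flatness pair of height exponential in the target parameters, a flatness pair $(W_4,\mathfrak{R}_4)$ of height $\funref{@differences}(a_\Fcal-1,\ell_\Fcal,3,k(k+1)/2)$ that is $\funref{@deliberation}(a_\Fcal-1,\ell_\Fcal)$-homogeneous with respect to $\binom{B}{<a_\Fcal}$.

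The crux is then to invoke \autoref{@civilizing} ({\tt Find-Irrelevant-Vertex}) with parameters $k(k+1)/2$ and $|B|\leq a_\Fcal-1$ on $(G,B,W_4,\mathfrak{R}_4)$ to obtain a vertex $v$ of the compass such that, for every set $X\subseteq V(G)$ with $\bid_{G\setminus B,W_4}(X)\leq k(k+1)/2$ and $|B\setminus X|\leq a_\Fcal-1$, it holds that $G\setminus X\in\exc(\Fcal)\Leftrightarrow G\setminus(X\setminus v)\in\exc(\Fcal)$. The hard part is showing that this local ``irrelevance'' translates into irrelevance for the \emph{elimination distance} problem. For this I will use \autoref{@lucinatory}: any hypothetical $k$-elimination set $S$ of $G\setminus v$ extends to a set $X_S\supseteq S$ with $G\setminus X_S\in\exc(\Fcal)$ and $\bid_{G\setminus B,W_4}(X_S)\leq k(k+1)/2$, and by \autoref{@categories} every size-$a_\Fcal$ subset of $B$ meets $X_S$, so $|B\setminus X_S|\leq a_\Fcal-1$. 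The irrelevance property then allows one to exchange $X_S$ with $X_S\setminus v$ inside an $\Fcal$-elimination forest of $G\setminus v$ to build an $\Fcal$-elimination forest of $G$ of height at most $k$, yielding $G\in\Ecal_k(\exc(\Fcal))$ and the desired contradiction.

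Finally, tracking the parameter propagation backwards through \autoref{@differences}, \autoref{@philistines}, \autoref{@corollaries}, and \autoref{@relentlessly} gives the claimed triple-exponential bound $\funref{@deceptively}(k,a,s)=2^{\log(kc)\cdot 2^{k^{a-1}\cdot 2^{\Ocal(s^2\log s)}}}$: the innermost $2^{\Ocal(s^2\log s)}$ factor comes from \autoref{@relentlessly} (relating treewidth to wall height), the middle $k^{a-1}$ factor comes from \autoref{@withdrawing} governing the height blow-up in {\tt Homogeneous} (which depends on $|B|\leq a_\Fcal-1$), and the outer logarithmic-in-$k$ factor comes from absorbing the polynomial-in-$k$ height required by {\tt Find-Irrelevant-Vertex}. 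In the special case where $\Fcal$ contains an apex-graph (i.e.,~$a_\Fcal=1$), the exponent $k^{a-1}$ collapses to $1$, so the bound drops by one level of exponentiation, matching the improved bound in the apex case stated in \autoref{@reconstructions}.
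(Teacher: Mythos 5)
Your overall strategy---contradiction via an irrelevant vertex, assembled from \autoref{@disreputable}, \autoref{@civilizing}, \autoref{@lucinatory} and \autoref{@proclamation}---is the same as the paper's, but two concrete steps are wrong, and one of them breaks the claimed bound. The main problem is the apex-set bookkeeping: you assert that after the \autoref{@prohibitions}/\autoref{@proclamation} step you keep a set $B\subseteq A$ with $|B|\leq a_\Fcal-1$, ``whose apex set does not depend on $k$'', and you later attribute the $k^{a_\Fcal-1}$ in $\funref{@deceptively}$ to {\tt Homogeneous} being run with this constant-size $B$. That is not what these tools give: \autoref{@proclamation} only guarantees that every relevant set $X$ (with $G\setminus X\in\exc(\Fcal)$ and small bidimensionality) satisfies $|B\setminus X|\leq a_\Fcal-1$; the set $B$ itself (the apices adjacent to many internal bags) can be as large as the whole apex set, which on your route has size $\funref{@collaboration}(s_\Fcal+k)=\Ocal((s_\Fcal+k)^{24})$. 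Homogeneity must then be required with respect to $\binom{B}{<a_\Fcal}$, so the height blow-up in \autoref{@disreputable} is governed by $\funref{@withdrawing}(a_\Fcal-1,|B|,\cdot)=2^{|B|^{a_\Fcal-1}\cdot 2^{\Ocal_{\ell_\Fcal}(1)}}$, i.e.\ your route yields a middle exponent of order $(s_\Fcal+k)^{24(a_\Fcal-1)}$ rather than $k^{a_\Fcal-1}\cdot 2^{\Ocal(s_\Fcal^2\log s_\Fcal)}$. This still gives a bound of the shape needed for \autoref{@delectable}, but not the function $\funref{@deceptively}$ stated in the lemma. The paper obtains the sharper exponent through a different entry point that you skip: since every $G\setminus v$ lies in $\Ecal_k(\exc(\Fcal))$, the graph $G$ admits a $(k+1)$-elimination set $S$; by \autoref{@barbarians} some component $C$ of $G\setminus S$ still has huge treewidth and is $K_{s_\Fcal}$-minor-free, so \autoref{@possession} (regularized via \autoref{@philosophic}) yields an apex set of size $\funref{@collaboration}(s_\Fcal)$, independent of $k$, to which only the separator $P\subseteq S$ with $|P|\leq k+1$ is added; the resulting apex set has size $\Theta_{s_\Fcal}(k)$, which is exactly what produces $k^{a_\Fcal-1}$.

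The second issue is your use of \autoref{@inoperative}: that lemma applies to graphs with elimination distance at most $k$, whereas $G\in\obs(\Ecal_k(\exc(\Fcal)))$ has $\ed_{\exc(\Fcal)}(G)>k$. What you can say is that $G$ is connected (otherwise some connected component would be a proper minor of $G$ outside $\Ecal_k(\exc(\Fcal))$, contradicting minor-minimality), hence $\ed_{\exc(\Fcal)}(G)\leq 1+\ed_{\exc(\Fcal)}(G\setminus v)\leq k+1$, so $G$ excludes $K_{s_\Fcal+k+1}$ but not necessarily $K_{s_\Fcal+k}$. This matters on your route: {\tt Clique-Or-twFlat} run with parameter $s_\Fcal+k$ may legitimately report a $K_{s_\Fcal+k}$-minor instead of a flat wall, and then your argument stalls; you would need to run it with parameter $s_\Fcal+k+1$ (the paper sidesteps this entirely, since its proof never invokes \autoref{@unimportant}). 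With these two repairs, the remaining steps---homogenization, \autoref{@civilizing} with parameters $k(k+1)/2$ and $a_\Fcal-1$, the transfer via \autoref{@lucinatory} and \autoref{@proclamation}, and the contradiction with minimality---do line up with the paper's argument, though only with the weaker, still $k^{\Ocal(1)}$-type, bound.
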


Note that when $a_\Fcal=1$, $\funref{@deceptively}(k,1,s)=\Ocal_{\ell_\Fcal}(k^{2^{2^{\Ocal(s^2\log s)}}})$.

\begin{proof}
For simplicity, we use $s,a$, and $\ell$ instead of $s_{\cal F}, a_{\cal F}$, and $\ell_{\cal F}$, respectively.
We set
\begin{align*}
b= &\ \funref{@collaboration}(s) + k + 1, &
d= &\ \funref{@deliberation}(a-1,\ell), \\
r_4= &\ \funref{@differences}(a-1,\ell,3,k(k+1)/2), &
r_3= &\ \funref{@philistines}(r_4,a-1,b,d), \\
x= &\ \funref{@categories}(a,s,k(k+1)/2), &
p= &\ \funref{@provincial}(a,s,k(k+1)/2), \\
l= &\ (x-1)\cdot b, &
r_2= &\ \odd(\max\{\funref{@idealistic}(l+1,r_3,p),\funref{@unaffected}(a,s,k(k+1)/2)\}), \\
r_1= &\ \funref{@classifications}(s)\cdot r_2, \mbox{ and } &
w= & \ \funref{@relentlessly}(s)\cdot r_1 + k + 1.
\end{align*}
It is easy to verify that $w=2^{\log(k\cdot c)\cdot 2^{k^{a-1}\cdot 2^{\Ocal(s^2\log s)}}}$.\medskip

Suppose towards a contradiction that $\tw(G)>w$.
Since $G\in\obs(\Ecal_k ({\exc(\Fcal)}))$, for each $v\in V(G)$, $G\setminus v\in\Ecal_k ({\exc(\Fcal)})$.
Therefore, there exists a $(k+1)$-elimination set $S$ of $G$ for $\exc(\Fcal)$.
Thus, for each $C\in\cc(G\setminus S)$, $C\in{\exc(\Fcal)}$.
According to \autoref{@barbarians}, $\tw(G\setminus S)>w-k-1=\funref{@relentlessly}(s)\cdot r_1$, so there is $C\in\cc(G\setminus S)$, such that $\tw(C)>\funref{@relentlessly}(s)\cdot r_1$.
Moreover, $K_s$ is not a minor of $C$.
Therefore, according to \autoref{@overwhelmingly}, $C$ contains an $r_1$-wall $W_1$.

Since $r_1=\funref{@classifications}(s)\cdot r_2$, by \autoref{@possession}, there is a set $A\subseteq V(C)$ of size at most $\funref{@collaboration}(s)$ and a flatness pair $(W_2,\mathfrak{R}_2)$ of $C\setminus A$ of height $r_2$ such that $W_2$ is a tilt of a subwall of $W_1$.
Due to \autoref{@philosophic}, there is a regular flatness pair $(W_2',\mathfrak{R}_2')$ of $C\setminus A$ of height $r_2$.

Since $S$ is a $(k+1)$-elimination set of $G$ for $\exc(\Fcal)$ and $C\in\cc(G\setminus S)$, there exists a set $P\subseteq S$ of size at most $k+1$ such that $(L,R):=(V(G)\setminus V(C),V(C)\cup P)$
is a separation of $G$ with $L\cap R=P$.
Thus, if $\mathfrak{R}_2''$ is the 7-tuple obtained by adding the vertices of $G\setminus (C\cup P)$ to the set
in the first coordinate of $\mathfrak{R}_2'$, $(W_2',\mathfrak{R}_2'')$ is a regular flatness pair of $G\setminus (A\cup P)$ of height $r_2$.

Let $\tilde{\Qcal}$ be a $W_2'$-canonical partition of $G\setminus (A\cup P)$.
Let $B$ be the set of vertices of $A\cup P$ adjacent to vertices of at least $x$ $p$-internal bags of $\tilde{\Qcal}$.
Let $\Wcal=\{W^1,...,W^{l+1}\}$ be a family of $l+1$ $r_3$-subwalls of $W_2'$ such that for every $i\in [l+1]$, $\cupall\influence_{\mathfrak{R}_2''}(W^i)$ is a subgraph of $\cupall \{Q\mid Q \text{ is a $p$-internal bag of }\tilde{\Qcal}\}$ and for every $i,j\in[l+1]$ with $i\neq j$, there is no internal bag $Q\in\tilde{\Qcal}$ that contains vertices of both $V(\cupall\influence_{\mathfrak{R}_2''}(W^i))$ and $V(\cupall\influence_{\mathfrak{R}_2''}(W^j))$.
The existence of $\Wcal$ follows from the fact that $r_2\geq\funref{@idealistic}(l+1,r_3,p)$ and \autoref{@prohibitions}.
Notice that the set $N_G((A\cup P)\setminus B)$ intersects the vertex set of at most $(x-1)\cdot |(A\cup P)\setminus B|\leq l$ $p$-internal bags of $\tilde{\Qcal}$.
Thus, there is an $i\in[l+1]$ such that no vertex in $(A\cup P)\setminus B$ is adjacent to vertices of $\cupall\influence_{\mathfrak{R}_2''}(W^i)$.

Let $(W_3,\mathfrak{R}_3)$ be a $W^i$-tilt of $(W_2',\mathfrak{R}_2'')$.
Since $|B|\leq|A\cup P|\leq \funref{@collaboration}(s)+k+1=b$ and $r_3=\funref{@philistines}(r_4,b,a-1,d)$, by \autoref{@disreputable}, there is a flatness pair $(W_4,\mathfrak{R}_4)$ of $G\setminus B$ of height $r_4$ that is $d$-homogeneous with respect to $\binom{B}{<a}$ and is a tilt of a subwall of $(W_3,\mathfrak{R}_3)$.
By \autoref{@unimagined} and \autoref{@successively}, $(W_4,\mathfrak{R}_4)$ is regular.

Recall that $(W_3,\mathfrak{R}_3)$ is a $W^i$-tilt of $(W_2',\mathfrak{R}_2'')$, $(W_4,\mathfrak{R}_4)$ is a tilt of a subwall of $(W_3,\mathfrak{R}_3)$, and $(W_4,\mathfrak{R}_4)$ is a flatness pair of $G\setminus B$ with $B\subseteq A\cup P$. Thus, given a $W_4$-canonical partition $\tilde{\Qcal}_1$ of $G\setminus B$, there is a $W_2'$-canonical partition $\tilde{\Qcal}_2$ of $G\setminus (A\cup P)$ such that each internal bag of $\tilde{\Qcal}_1$ is contained in an internal bag of $\tilde{\Qcal}_2$.
Therefore, for every set $X\subseteq V(G)$, $\bid_{G\setminus B,W_4}(X)\leq\bid_{G\setminus (A\cup P),W_2'}(X)$.

Moreover, since $r_2\geq \funref{@unaffected}(a,s,k(k+1)/2)$, according to \autoref{@proclamation},
 every subset of $B$ of size $a$ intersects every set $X\subseteq V(G)$ such that $G\setminus X\in{\exc(\Fcal)}$ and $\bid_{G\setminus (A\cup P),W_2'}(X)\leq k(k+1)/2$.
Hence, for any such $X$, $|B\setminus X|<a$.

Thus, according to \autoref{@civilizing},
since $r_4=\funref{@differences}(a-1,\ell,3,k(k+1)/2)$, it holds that there is a vertex $v$ such that, for every set $X\subseteq V(G)$ with $G\setminus X\in{\exc(\Fcal)}$ and $\bid_{G\setminus B,W_4}(X)\leq k(k+1)/2$, $G\setminus X\in{\exc(\Fcal)}$ if and only if $G\setminus (X\setminus v)\in{\exc(\Fcal)}$.

\autoref{@lucinatory} implies that for any $k$-elimination set $S'$ of $G\setminus v$ for $\exc(\Fcal)$, there is a set $X\supseteq S'$ such that $G\setminus X\in\exc(\Fcal)$ and $\bid_{G\setminus (A\cup P),W_2'}(X)\leq k(k+1)/2$.
Since $\bid_{G\setminus B,W_4}(X)\leq\bid_{G\setminus (A\cup P),W_2'}(X)$, we also have that
$\bid_{G\setminus B,W_4}(X)\leq k(k+1)/2$.
Thus, $G\setminus X\in{\exc(\Fcal)}$ if and only if $G\setminus (X\setminus v)\in{\exc(\Fcal)}$ and $G\in\Ecal_k({\exc(\Fcal)})$ if and only if $G\setminus v\in\Ecal_k({\exc(\Fcal)})$.
However, since $G\in\obs(\Ecal_k({\exc(\Fcal)}))$, it holds that $G\notin\Ecal_k({\exc(\Fcal)})$ and $G\setminus v\in\Ecal_k({\exc(\Fcal)})$, a contradiction.
\end{proof}

\subsection{Bounding the size of an obstruction of small treewidth}\label{@typescript}

In order to bound the size of an obstruction of small treewidth, we first present some additional notions on tree decompositions on boundaried graphs.

\subparagraph{Treewidth of boundaried graphs.}
Let ${\bf G}=(G,B,\rho)$ be a boundaried graph.
A {\em tree decomposition} of ${\bf G}$ is a rooted tree decomposition $({\sf T},\beta, {\sf r})$ of $G$
such that $\beta({\sf r})=B$.
The {\em width} of $({\sf T},\beta, {\sf r})$ is the width of $({\sf T},\beta)$.
The treewidth of a boundaried graph ${\bf G}$ is the minimum width over all its tree decompositions and is denoted by $\tw({\bf G})$.
A \emph{nice tree decomposition} of ${\bf G}$ is a tree decomposition $({\sf T},\beta, {\sf r})$ of ${\bf G}$ that is also a nice tree decomposition of $G$ rooted at~${\sf r}$.

Let ${\bf G}=(G,B,\rho)$ be a boundaried graph and $\Tcal=({\sf T},\beta, {\sf r})$ be a tree decomposition of ${\bf G}$.
Notice that if $a,b\in V({\sf T})$ and $a\in\anc_{{\sf T},{\sf r}}(b)$, then $G_b$ is a subgraph of $G_a$.
We define the $t_q$-boundaried graph $\bar{\bf G}_q = (\bar{G}_q, \beta(q), \rho_q)$, where $\bar{G}_q = G\setminus (V(G_q)\setminus \beta(q))$.
Notice that ${\bf G}_q$ and $\bar{\bf G}_q$ are compatible and ${\bf G}_q \oplus \bar{\bf G}_q = G$.

\subparagraph{Linked tree decompositions.}

Our next step is to use a special type of tree decompositions, namely {\sl linked tree decompositions}, defined by Robertson and Seymour in~\cite{RobertsonS86V}.
Thomas in~\cite{Thomas90amen} proved that every graph $G$ admits a linked tree decomposition of width $\tw(G)$ (see also \cite{BellenbaumD02twos,Erde18auni}).
By combining the result of \cite{Thomas90amen} and \cite[Lemma 4]{ChatzidimitriouTZ20spar},
we can consider tree decompositions as asserted in the following result.

\begin{proposition}[\cite{ChatzidimitriouTZ20spar}]\label{@irrationality}
	Let $t\in\bN_{\geq 1}$.
	For every boundaried graph ${\bf G}=(G,B,\rho)$ of treewidth $t-1$,
	there exists a tree decomposition $({\sf T},\beta,{\sf r})$ of ${\bf G}$ of width $t-1$ such that
	\begin{enumerate}

		\item $({\sf T},{\sf r})$ is a binary tree,

		\item for every $a,b\in V({\sf T})$
		      where $a$ is a child of $b$ in $({\sf T},{\sf r})$, if $|\beta(a)|=|\beta(b)|$
		      then $G_a$ is a {\sl proper} subgraph of $G_b$,
		      i.e., $|V(G_a)|<|V(G_b)|$,

		\item for every $s\in\bN$ and every pair $u_1, u_2\in V({\sf T})$, where $u_1\in\anc_{{\sf T},{\sf r}}(u_2)$ and $|\beta(u_1)|=|\beta(u_2)|$,
		      either there is an internal vertex $w$ of $u_1 {\sf T} u_2$ such that $|\beta(w)|< s$,
		      or there exists a collection of $s$ vertex-disjoint paths in $G$
		      between $\beta(u_1)$ and $\beta(u_2)$, and

		\item $|V(G)|\leq t\cdot |V({\sf T})|$.
	\end{enumerate}
\end{proposition}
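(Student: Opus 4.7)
The plan is to combine Thomas' linked tree decomposition theorem with standard structural refinements so as to enforce all four asserted properties simultaneously without ever increasing the width. First, I would invoke the main result of~\cite{Thomas90amen} on the underlying graph $G$ to obtain a tree decomposition of width $t-1$ satisfying the linkedness condition, which is precisely the alternative stated in item~3. To respect the boundaried structure, I would then reroot this tree at any node whose bag contains $B$ and, if needed, insert above it a new root with bag exactly equal to $B$; this is an easy local transformation that neither increases the width nor creates problematic new equal-size ancestor-descendant pairs, since the child of the new root strictly contains $B$ as a sub-bag.

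Next, I would transform the rooted tree into a binary one (item~1) by the standard splitting operation: each node $u$ with $d\geq 3$ children is replaced by a path of $d-1$ copies of $u$, each carrying the same bag $\beta(u)$, distributing the original children among these copies so that every new node has at most two children. This preserves the width and linkedness, since any ancestor-descendant pair of equal-size bags in the refined tree lifts to such a pair in the original tree (equal-sized chains of copies simply collapse back to the single original node), and the witness of item~3 on the original pair still witnesses the refined pair.

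The main step is the enforcement of item~2. I would iteratively apply the following reduction: whenever $a$ is a child of $b$ with $|\beta(a)|=|\beta(b)|$ and $V(G_a)=V(G_b)$, contract the edge $ab$. Observe that the hypothesis $V(G_a)=V(G_b)$ forces $\beta(a)=\beta(b)$: any vertex of $\beta(b)$ that appears in some bag of $T_a$ must, by the connectivity axiom of tree decompositions, appear in $\beta(a)$ as well, and both bags have the same size. Hence the contraction yields a valid tree decomposition of the same width with one fewer node. Linkedness is preserved because the path between any surviving pair of ancestor-descendant nodes of equal bag size in the contracted tree is a sub-path of the original, and the node $b$ (with the same bag as $a$) continues to play any internal witnessing role previously played by $a$ in item~3. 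Iterating to a fixed point yields item~2; property~4 is then immediate, since each bag has size at most $t$ and therefore $|V(G)|=|\bigcup_{v\in V(T)}\beta(v)|\leq t\cdot|V(T)|$.

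The main obstacle is verifying that none of these transformations silently destroys Thomas' linkedness, as this is the only deep ingredient of the statement. The rerooting and binary-splitting steps are straightforwardly safe because they introduce bags that are exact copies or sub-bags of existing ones, so equal-size pairs in the refined tree always reduce to equal-size pairs in the original tree with the same Menger-style disjoint-path witnesses or the same narrow intermediate bag. The contraction step is more delicate and hinges on the observation above that $V(G_a)=V(G_b)$ entails $\beta(a)=\beta(b)$, so that removing $a$ does not eliminate a bottleneck needed to witness item~3: either the original narrow-bag alternative was witnessed by an unaffected internal node or by $a$ itself, in which case $b$ inherits the role; or the $s$ vertex-disjoint paths still exist in $G$, which is unchanged throughout the procedure.
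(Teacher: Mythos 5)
You are attempting to reconstruct a result that the paper itself does not prove but imports by combining Thomas' theorem \cite{Thomas90amen} with Lemma~4 of \cite{ChatzidimitriouTZ20spar}; measured against what such a proof must deliver, your proposal has a genuine gap at its very first step, which is exactly where the real content lies. You apply Thomas' theorem to the underlying graph $G$ and then ``reroot at any node whose bag contains $B$'', inserting a new root with bag exactly $B$. But a linked decomposition of $G$ need not contain any bag containing $B$ (take $G$ a path and $B$ its two endpoints: every width-one decomposition has only bags of the form $\{v_i,v_{i+1}\}$), and grafting a root with bag $B$ onto a node whose bag does not contain all of $B$ violates the connectivity axiom for the vertices of $B$ outside that bag, so the result is not a tree decomposition of ${\bf G}$ at all. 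Repairing this is not a ``local transformation'': either one redoes the Thomas/Bellenbaum--Diestel argument for rooted decompositions with $\beta({\sf r})=B$ fixed (note that $\tw({\bf G})=t-1$ may strictly exceed $\tw(G)$, so the width budget concerns the boundaried graph, not $G$), or one works with $G$ augmented by a clique on $B$, in which case the disjoint paths provided by linkedness live in the augmented graph rather than in $G$, while item~3 demands paths in $G$. Moreover, even when a bag containing $B$ exists, inserting the root creates new equal-size ancestor--descendant pairs $({\sf r},u)$ with $|\beta(u)|=|B|$ for which item~3 is not inherited from the old decomposition, since $B$ was not a bag there; your remark that the root's child strictly contains $B$ does not address these pairs.

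A second, smaller but real gap is the interaction between items~1 and~2. Your plan ``first binarize, then contract violating parent--child pairs'' does not obviously terminate with both properties: contracting an edge $ab$ where $b$ has another child and $a$ has two children produces a node with three children, and re-splitting such a node can recreate a parent--child pair of copies with equal bags and $V(G_a)=V(G_b)$ (this happens whenever one of the redistributed child subtrees contributes no vertex outside the others), so the two operations can cycle unless redundant subtrees are also pruned or termination is argued via a more careful measure, none of which appears in the proposal. The local arguments you do give are essentially sound: the deduction $\beta(a)=\beta(b)$ from $V(G_a)=V(G_b)$ is correct, item~4 is indeed trivial, and preservation of item~3 under contraction can be salvaged --- although your claim that ``$b$ inherits the witnessing role'' of $a$ fails when $b$ is an endpoint of the path $u_1{\sf T}u_2$ (an endpoint cannot serve as an internal witness); one should instead invoke the old property for the pair $(a,u_2)$ to produce an internal witness that survives the contraction.
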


In fact, linked tree decompositions are defined as the tree decompositions satisfying only property $(3)$ \cite{RobertsonS86V,Thomas90amen}.
In our proofs, we will need the extra properties $(1)$, $(2)$, and $(4)$ that are provided by \cite[Lemma 4]{ChatzidimitriouTZ20spar}.
\medskip

We  bound the size of a minor-minimal obstruction of small treewidth in \autoref{@entrepreneurial}.
To do so, we need the following result ({for a proof see e.g. \cite[Lemma 14]{GiannopoulouPRT19cutw}}).

\begin{proposition}[\cite{GiannopoulouPRT19cutw}]\label{@impervious}
	Let $r,m\in\bN_{\geq 1}$ and $w$ be a word of length $m^r$ over the alphabet $[r]$.
	Then there is a number $k\in[r]$ and a subword $u$ of $w$ such that $u$ contains only numbers not smaller than $k$ and $u$ contains the number $k$ at least $m$ times.
\end{proposition}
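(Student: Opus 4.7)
The plan is to prove the proposition by induction on $r$, viewing it as a pigeonhole-style Ramsey argument. The base case $r=1$ is immediate: the word $w$ has length $m$ over the alphabet $\{1\}$, so we take $k=1$ and $u=w$, which contains $1$ exactly $m$ times.

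For the inductive step, assume the statement holds for every $r' < r$ and let $w$ be a word of length $m^r$ over $[r]$. Let $k_0$ be the smallest letter occurring in $w$ (this is well-defined since $w$ is non-empty). I would split into two cases according to how often $k_0$ appears. If $k_0$ appears at least $m$ times in $w$, the proof is finished by choosing $u=w$ and $k=k_0$: all letters of $u$ are $\geq k_0$ by the choice of $k_0$, and $k_0$ occurs at least $m$ times by assumption. Otherwise $k_0$ appears at most $m-1$ times, and removing those occurrences decomposes $w$ into at most $m$ contiguous subwords whose total length is at least $m^{r}-(m-1)$.

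The pigeonhole step is the key computation: at least one of these $\leq m$ subwords, call it $w'$, has length at least
\[\left\lceil \frac{m^r-(m-1)}{m}\right\rceil \;=\;\left\lceil m^{r-1}-1+\tfrac{1}{m}\right\rceil\;\geq\; m^{r-1}.\]
By construction, every letter of $w'$ belongs to $\{k_0+1,\dots,r\}$, an alphabet of size $r-k_0\leq r-1$. Re-labeling this alphabet as $[r-k_0]$ (subtracting $k_0$ from every letter), we obtain a word $w''$ of length $\geq m^{r-1}\geq m^{r-k_0}$ over $[r-k_0]$. Applying the inductive hypothesis to $w''$ with parameter $r-k_0$ yields some $k'\in[r-k_0]$ and a subword $u''$ of $w''$ all of whose letters are $\geq k'$ and which contains $k'$ at least $m$ times.

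Finally, translating back to the original alphabet, the corresponding subword $u$ of $w$ uses only letters $\geq k_0+k'$ and contains the letter $k_0+k'$ at least $m$ times, so we set $k:=k_0+k'\in[r]$. The main subtlety to verify is the inequality $\lceil(m^r-m+1)/m\rceil\geq m^{r-1}$, which boils down to $1/m > 0$ and hence causes no obstruction; no further technical difficulties arise, since the argument is a clean pigeonhole followed by a straightforward alphabet-reduction to feed the induction.
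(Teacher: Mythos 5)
Your proof is correct: the induction on $r$, splitting on whether the minimum occurring letter $k_0$ appears at least $m$ times and otherwise applying pigeonhole to the at most $m$ contiguous blocks obtained by deleting its occurrences, is essentially the standard argument for this lemma (the paper gives no proof of its own and simply cites \cite{GiannopoulouPRT19cutw}). The only formal nicety is that the inductive hypothesis is stated for words of length exactly $m^{r-k_0}$, so one should apply it to a prefix of $w''$ of that length, which changes nothing.
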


We are now ready to prove \autoref{@entrepreneurial}.
The idea is to apply the technique of Lagergren \cite{Lagergren98uppe} combined with the bound
on the number of characteristics provided in \autoref{@metropoltheater}. The proof of \autoref{@entrepreneurial} is very similar to the corresponding proof in~\cite{SauST21kapiI} for $\obs(\Acal_k({\exc(\Fcal)}))$.

\begin{lemma}\label{@entrepreneurial}
Let $\Fcal$ be a finite non-trivial collection of graphs.
There exists a function $\newfun{@uncertainty}:\bN^2\to\bN$ such that if $k$ is an integer and $G$ is a graph in $\obs(\Ecal_k({\exc(\Fcal)}))$ of treewidth $\tw$, then $|V(G)|\leq \funref{@uncertainty}(\tw,k)$.
Moreover, $\funref{@uncertainty}(t,k)=2^{2^{\Ocal_{\ell_\Fcal}(t^3+k\cdot t^2)}}$.
\end{lemma}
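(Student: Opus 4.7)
The plan is to follow the classic technique of Lagergren~\cite{Lagergren98uppe}, adapted via \autoref{@manuscripts}. Concretely, I would first invoke \autoref{@irrationality} on $G$ (regarded as the $0$-boundaried graph $(G,\emptyset,\cdot)$) to obtain a binary, linked tree decomposition $({\sf T},\beta,{\sf r})$ of $G$ of width $t-1=\tw$ satisfying properties (1)--(4). Property (4) reduces the task to bounding $|V({\sf T})|$ by a function of $\tw$ and $k$. The idea is that if ${\sf T}$ is ``too tall'', then along some root-to-leaf path we will find two nodes $u_1\in \anc_{{\sf T},{\sf r}}(u_2)$ for which (i) the corresponding boundaried graphs $\bar{\bf G}_{u_1}$ and $\bar{\bf G}_{u_2}$ have the same size-and-characteristic label, and (ii) they are linked by enough disjoint paths to allow a characteristic-preserving swap; such a swap produces a \emph{proper minor} $G'$ of $G$ with $\ed_{\exc(\Fcal)}(G')=\ed_{\exc(\Fcal)}(G)>k$, contradicting $G\in\obs(\Ecal_k(\exc(\Fcal)))$.

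For the counting, I would proceed as follows. By \autoref{@belligerent} the number of characteristics of $t$-boundaried graphs, $t\in[\tw+1]$, is at most $N:=\funref{@caricatures}(\tw+1,k)=2^{\Ocal_{\ell_\Fcal}(\tw\cdot k+\tw\log\tw)}$. Consider any root-to-leaf path $P$ in $({\sf T},{\sf r})$ and associate to it the word $\omega_P\in[\tw+1]^{|P|}$ whose letters are the bag sizes of the nodes of $P$ (read from root to leaf). If $|\omega_P|\ge (N+1)^{\tw+1}$, then \autoref{@impervious} (applied with $r=\tw+1$ and $m=N+1$) yields a contiguous subword $u$ of $\omega_P$ and a level $k^{\ast}\in[\tw+1]$ such that every letter of $u$ is at least $k^{\ast}$ and $u$ contains the letter $k^{\ast}$ at least $N+1$ times. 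Among the $N+1$ nodes on this subpath of bag size exactly $k^{\ast}$, two of them, call them $u_1$ and $u_2$ with $u_1\in \anc_{{\sf T},{\sf r}}(u_2)$, satisfy $\Char_k(\bar{\bf G}_{u_1})=\Char_k(\bar{\bf G}_{u_2})$ by pigeonhole. Since all bags along $u_1{\sf T}u_2$ have size at least $k^\ast$, property (3) of the linked decomposition (used with $s=k^\ast$) yields $k^{\ast}$ vertex-disjoint paths in $G$ between $\beta(u_1)$ and $\beta(u_2)$; these paths provide the bijective identification that makes the natural ``replacement'' well-defined.

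Using this identification, consider the graph $G'$ obtained by replacing ${\bf G}_{u_1}$ (seen inside $G=\bar{\bf G}_{u_1}\oplus {\bf G}_{u_1}$) with ${\bf G}_{u_2}$, i.e., $G':=\bar{\bf G}_{u_1}\oplus {\bf G}_{u_2}$. By \autoref{@manuscripts}, applied to ${\bf G}:={\bf G}_{u_1}$, ${\bf G}':={\bf G}_{u_2}$ and ${\bf G}'':=\bar{\bf G}_{u_1}$, we obtain $\ed_{\exc(\Fcal)}(G')=\ed_{\exc(\Fcal)}(G)$, so in particular $G'\notin\Ecal_k(\exc(\Fcal))$. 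On the other hand, $G'$ is a minor of $G$: iterating property (2) along $u_2{\sf T}u_1$ and using the $k^{\ast}$ disjoint paths to contract the discarded portion of $\bar{\bf G}_{u_1}$-to-${\bf G}_{u_2}$ interface correctly, one checks that $|V(G')|<|V(G)|$, so $G'$ is a \emph{proper} minor of $G$---this is exactly where property (2) and the linked property enter, and it is the main technical step to verify in detail (following the standard Lagergren argument, as in \cite{Lagergren98uppe,SauST21kapiI}). This contradicts $G\in\obs(\Ecal_k(\exc(\Fcal)))$ and forces every root-to-leaf path of ${\sf T}$ to have length at most $(N+1)^{\tw+1}$. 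Since ${\sf T}$ is binary, $|V({\sf T})|\le 2^{(N+1)^{\tw+1}}=2^{2^{\Ocal_{\ell_\Fcal}(\tw^2\cdot k+\tw^2\log\tw)}}$, and property (4) then yields
\[
|V(G)|\le \tw\cdot|V({\sf T})|\;=\;2^{2^{\Ocal_{\ell_\Fcal}(\tw^3+k\cdot\tw^2)}},
\]
as claimed, so we may set $\funref{@uncertainty}(\tw,k)$ to be this bound.
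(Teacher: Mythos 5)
Your overall strategy is the paper's: a linked decomposition via \autoref{@irrationality}, the word argument via \autoref{@impervious} along a long root--leaf path, a pigeonhole on characteristics, and a characteristic-preserving swap via \autoref{@manuscripts} contradicting minor-minimality. However, there is a genuine gap in the pigeonhole/swap step. \autoref{@manuscripts} requires the three boundaried graphs to be \emph{compatible}, and the replacement $\bar{\bf G}_{u_1}\oplus{\bf G}_{u_2}$ is only defined (and only realizable as a minor of $G$) if the boundary labelings of $\beta(u_1)$ and $\beta(u_2)$ are chosen consistently with the linkage and induce isomorphic boundary graphs. In your argument the characteristics are computed with respect to unspecified labelings \emph{before} the disjoint paths are extracted, so the equality $\Char_k(\cdot)=\Char_k(\cdot)$ obtained by pigeonhole need not be aligned with the identification you later use for the swap; moreover, equal characteristics do not imply compatibility (the characteristic does not record all edges of $G[\beta(u_i)]$), and your count of $N+1$ nodes leaves no room to enforce it. The paper fixes both points in a specific order: it first extracts the $t'$ disjoint paths across the \emph{whole} family of selected separators (each path meets each intermediate boundary in exactly one vertex, since these boundaries are minimum-size separators), labels every boundary by the index of the path through it, and then pigeonholes over pairs (boundary-graph isomorphism type, characteristic) — this is why its constant is $m=2^{\binom{t}{2}}(d-1)+1$ rather than $d$. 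Your final bound survives this correction (the extra factor $2^{\binom{t}{2}}$ is absorbed into $2^{2^{\Ocal_{\ell_\Fcal}(t^3+kt^2)}}$), but as written the step "two of them satisfy $\Char_k(\bar{\bf G}_{u_1})=\Char_k(\bar{\bf G}_{u_2})$, hence swap" does not go through.

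Two smaller slips in the same step: you pigeonhole on the characteristics of the \emph{upper} parts $\bar{\bf G}_{u_i}$ but then invoke \autoref{@manuscripts} with ${\bf G}:={\bf G}_{u_1}$, ${\bf G}':={\bf G}_{u_2}$, ${\bf G}'':=\bar{\bf G}_{u_1}$, which needs equality of the characteristics of the \emph{lower} parts ${\bf G}_{u_i}$ (the paper compares $\Char_k({\bf G}_{z_i})$); be consistent about which side you compare. Also, \autoref{@manuscripts} has the hypothesis $\ed_{\exc(\Fcal)}({\bf G}\oplus{\bf G}'')\leq k$, so it must be applied with the \emph{smaller} graph in that role, e.g.\ ${\bf G}:={\bf G}_{u_2}$, ${\bf G}':={\bf G}_{u_1}$, ${\bf G}'':=\bar{\bf G}_{u_1}$, using that $G'=\bar{\bf G}_{u_1}\oplus{\bf G}_{u_2}$ is a proper minor of the obstruction $G$ and hence lies in $\Ecal_k(\exc(\Fcal))$; applying it with ${\bf G}\oplus{\bf G}''=G$ violates that hypothesis since $\ed_{\exc(\Fcal)}(G)>k$. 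Your deferral of the verification that $G'$ is a proper minor (via Property (2) and the linkage) is acceptable, as it follows the standard argument that the paper spells out.
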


\begin{proof}
Let $G\in \obs(\Ecal_k({\exc(\Fcal)}))$.
We set $t:=\tw(G)+1$.
For simplicity, we use $\ell$ instead of $\ell_{\cal F}$.
We set
\begin{align*}
d= & \ \funref{@caricatures}(t-1,k)+1, &
m= & \ 2^{\binom{t}{2}}\cdot(d-1)+1, \\
x= & \ m^t, \mbox{ and} &
b= & \ t\cdot 2^x.
\end{align*}
It is easy to verify that $b=2^{2^{\Ocal_{\ell_\Fcal}(t^3+k\cdot t^2)}}$.\medskip

Suppose towards a contradiction that $|V(G)|>b$.
Let $({\sf T},\beta)$ be a tree decomposition of $G$ of width $\tw(G)$ and let ${\sf r}\in V({\sf T})$.
We consider the rooted tree $({\sf T},{\sf r})$ and we set $B:= \beta({\sf r})$ and a bijection $\rho:B\to [|B|]$.
We set ${\bf G}=(G,B,\rho)$ and observe that $({\sf T},\beta,{\sf r})$ is a tree decomposition of ${\bf G}$ of width $\tw(G)$.
Since $\tw({\bf G})= \tw(G) = t-1$, by \autoref{@irrationality},
we can assume that for the tree decomposition $({\sf T},\beta,{\sf r})$ of ${\bf G}$ of width $t-1$,
Properties (1) to (4) are satisfied.

Since $|V(G)|>b= t\cdot 2^x$, Property (4) implies that $|V({\sf T})|> 2^x$.
Also, by Property (1), $({\sf T},{\sf r})$ is a binary tree and therefore there exists a leaf $u$ of ${\sf T}$ such that
$|V(r{\sf T}u)|\geq x$.
We set $l:=|V(r{\sf T}u)|$.

We set $v_1= r$ and for every $i\in [l-1]$, we set $v_{i+1}$ to be the child of $v_i$ in $(T,r)$ that belongs to $V(rTu)$.
Keep in mind that $v_l = u$.
For every $i\in [l]$, we set $c_i:=|\beta(v_i)|$ and observe that, since $({\sf T},\beta,{\sf r})$ has width $t-1$, $c_i\in[t]$.

Let $C$ be the word $c_1\cdots c_x$.
Since $x=m^{t}$ and every $c_i \in [t]$, then, due to \autoref{@impervious},
there is a $t'\in[t]$ and a subword $C'$ of $C$ such that,
for every $c$ in $C'$, $c\geq t'$ and there are at least $m$ numbers in $C'$ that are equal to $t'$.
Therefore, there exists a set $\{z_1,\ldots, z_m\}\subseteq V({\sf T})$ such that for every
$i\in[2,m]$, $z_i$ is a descendant of $z_{i-1}$ in $({\sf T},{\sf r})$, for every $z'\in V(z_1 {\sf T} z_m)$
it holds that $|\beta (z')|\geq t'$, and, for every $i\in[m]$, $|\beta( z_i )| = t'$.
Hence, Property (3) of the tree decomposition $({\sf T},\beta,{\sf r})$ of ${\bf G}$ implies that
there exists a collection ${\cal P}=\{{P}_1,\ldots, {P}_{t'}\}$
of $t'$ vertex-disjoint paths in $G$ between $\beta (z_1)$ and $\beta (z_m)$.

For every $i\in [m]$, let $\rho_i$ be the function mapping a vertex $v$ in $\beta(z_i)$ to the index of the path of ${\cal P}$
it intersects, i.e., for every $j\in [t']$, if $v$ is a vertex in
$V(P_j)\cap \beta(z_i)$, where $P_{j}\in {\cal P}$, then $\rho_i (v)= j$.
Also, for every $i\in[m]$, let ${\bf G}_{z_i}$ be the $t'$-boundaried graph $(G_{z_i}, \beta(z_i), \rho_i)$.
Since, $m=2^{\binom{t}{2}}\cdot(d-1)+1$, there is a set $J\subseteq [m]$ of size $d$ such that for every $i,j\in J$, the graph $G_{z_i}[\beta(z_i)]$ is isomorphic to the graph $G_{z_j}[\beta(z_j)]$.
Therefore, for every $i,j\in J$, ${\bf G}_{z_i}$ and ${\bf G}_{z_j}$ are compatible.
Furthermore, observe that for every $i,j\in J$ with $i\leq j$, ${\bf G}_{z_j} \preceq {\bf G}_{z_i}$.
To see why this holds, for every $i,j\in J$ with $i< j$, let ${\cal P}_{i,j}$ be the collection of subpaths of ${\cal P}$ between the vertices of $\beta(z_i)$ and $\beta(z_j)$ and consider the graph $G_{z_i}[\beta(z_i)] \cup \cupall{\cal P}_{i,j} \cup G_{z_j}$, which is a subgraph of $G_{z_i}$.
By contracting the edges in ${\cal P}_{i,j}$, we obtain a boundaried graph isomorphic to ${\bf G}_{z_j}$.

Recall that $|J| = d= \funref{@caricatures}(t-1,k)+1$.
Thus, by \autoref{@belligerent}, there exist $i,j\in J$ such that $j$ is the smallest element in $J$ that is greater than $i$ and
$\Char_k({\bf G}_{z_i}) = \Char_k({\bf G}_{z_j})$.
For simplicity, we set $a:=z_{i}$ and $b:=z_{j}$.
Notice that, in $G$, by contracting the edges of the paths in $\Pcal_{i,j}$ and removing the vertices of $G_a$ that are not vertices of $G_b$, we obtain a graph isomorphic to $\bar{\bf G}_{a} \oplus {\bf G}_{b}$.
Therefore, $\bar{\bf G}_{a} \oplus {\bf G}_{b}$ is a minor of $G$.
Furthermore, $|V(\bar{\bf G}_a\oplus {\bf G}_b)|<|V(G)|$.
To prove this, we argue that $G_b$ is a proper subgraph of $G_a$.
First recall that for every $y\in V(a{\sf T}b)$, $|\beta(y)|\geq t'$.
If there is a $y\in V(a{\sf T}b)$ such that $|\beta(y)|>t'$, then there is a vertex $v\in \beta(y)$ that is a vertex of $V(G_a)\setminus V(G_b)$ and thus $G_b$ is a proper subgraph of $G_a$, while in the case where for every $y\in V(a{\sf T}b)$, $|\beta(y)| = t'$, Property (2) of \autoref{@irrationality} implies that $G_b$ is a proper subgraph of $G_a$.

Let $G' := \bar{\bf G}_a \oplus {\bf G}_b$. Since $|V(G')|<|V(G)|$, $G'$ is a minor of $G$, and $G\in \obs({\cal E}_k (\exc({\cal F})))$, it holds that $G'\in{\cal E}_k (\exc({\cal F}))$.
By \autoref{@manuscripts}, since $\Char_k({\bf G}_a)=\Char_k({\bf G}_b)$, we have that $\ed_{\exc(\Fcal)}(\bar{\bf G}_a \oplus {\bf G}_a)=\ed_{\exc(\Fcal)}(\bar{\bf G}_a \oplus {\bf G}_b)$ and therefore $\ed_{\exc(\Fcal)}(G)=\ed_{\exc(\Fcal)}(G')$.
This contradicts the fact that $\ed_{\exc(\Fcal)}(G)>k$ and $\ed_{\exc(\Fcal)}(G')\leq k$.
\end{proof}

\section{Concluding remarks}
\label{@entwickltmg}

For a minor-closed graph class $\Gcal$, we proved that {\sc Vertex Deletion to $\Gcal$} can be solved in time $2^{\poly(k)}\cdot n^2$ and that {\sc Elimination Distance to $\Gcal$} can be solved in time $2^{2^{2^{\poly(k)}}}\cdot n^2$, and in time
$2^{2^{c \cdot k^2\log k}}\cdot n^2$ and $2^{\poly(k)}\cdot n^3$ in the case where the obstruction set of $\Gcal$ contains an apex-graph.
Here the degree of {\poly} and $c$ heavily depend on the size of the obstructions of $\Gcal$.
An open question is whether $\poly(k)$ could be replaced by $c \cdot k^d$ for some constant  $c$ depending on $\Gcal$ and
 some {\sl universal} constant $d$ (independent of  $\Gcal$).
We tend to believe that this dependence on $\Gcal$ in the exponent of the polynomial is unavoidable, at least if we want to use the irrelevant vertex technique, and specially our definition of homogeneity.

On the other hand, we are not aware, for any of the two considered problems, of any  lower bound, assuming the Exponential Time Hypothesis~\cite{ImpagliazzoP01whic}, stronger than $2^{o(k)}\cdot n^{\Ocal(1)}$, which follows quite easily from known results for {\sc Vertex Cover}. Proving stronger lower bounds seems to be quite challenging.

Another open problem is whether it is possible to drop the time complexity of {\sc Elimination Distance to $\Gcal$} to $2^{\poly(k)}\cdot n^2$ for {\sl every} minor-closed graph class $\Gcal$.
We tend to believe that this should be possible. However, it seems to require to use branching ingeniously and, in particular,  to find equivalent instances of {\sc Elimination Distance to $\Gcal$} with a decreasing value of $k$.

As for the polynomial running time of our \FPT-algorithms, a priori, nothing prevents the existence of algorithms running in {\sl linear time}, although we are quite far from achieving this. Kawarabayashi~\cite{Kawarabayashi09plan} presented such a linear \FPT-algorithm for the {\sc Planarization} problem, heavily relying on the embedding on the resulting planar graph. Extending this technique to general minor-closed classes would require a very compact encoding of the (entangled) structure of minor-free graphs~\cite{RobertsonS03a} {that would be possible to handle in linear time}.

We also proposed an \XP-algorithm for {\sc Elimination Distance to $\Gcal$} parameterized by the treewidth of the input graph (with running time $n^{\Ocal(\tw^2)}$).
As mentioned in the introduction (see also~\cite{AgrawalKLPRSZ22}), the existence of an \FPT-algorithm for {\sc Elimination Distance to $\Gcal$}, parameterized by treewidth, {remains wide open and this is the case  even in the very special case where ${\cal G}$ contains only the empty graph, where  {\sc Elimination Distance to $\Gcal$} is equivalent to the problem of  computing treedepth.}

{A last direction is to improve the bounds on the size of the obstructions given in \autoref{@reconstructions}. We believe that any substantial improvement should demand novel methodologies that go beyond the irrelevant vertex technique.}

\newpage
\printbibliography

@ARTICLE{GolovachST23comb,
  author       = {Petr A. Golovach and
                  Giannos Stamoulis and
                  Dimitrios M. Thilikos},
  title        = {Combing a Linkage in an Annulus},
  journal      = {{SIAM} J. Discret. Math.},
  volume       = {37},
  number       = {4},
  pages        = {2332--2364},
  year         = {2023},
  url          = {https://doi.org/10.1137/22m150914x},
  doi          = {10.1137/22M150914X},
  timestamp    = {Fri, 03 Nov 2023 22:12:29 +0100},
  biburl       = {https://dblp.org/rec/journals/siamdm/GolovachST23.bib},
  bibsource    = {dblp computer science bibliography, https://dblp.org}
}

@ARTICLE{BasteST23hitti,
  author       = {Julien Baste and
                  Ignasi Sau and
                  Dimitrios M. Thilikos},
  title        = {Hitting Minors on Bounded Treewidth Graphs. {IV.} An Optimal Algorithm},
  journal      = {{SIAM} J. Comput.},
  volume       = {52},
  number       = {4},
  pages        = {865--912},
  year         = {2023},
  url          = {https://doi.org/10.1137/21m140482x},
  doi          = {10.1137/21M140482X},
  timestamp    = {Sat, 05 Aug 2023 00:02:39 +0200},
  biburl       = {https://dblp.org/rec/journals/siamcomp/BasteST23.bib},
  bibsource    = {dblp computer science bibliography, https://dblp.org}
}

@ARTICLE{RobertsonS09XXI,
  author    = {Neil Robertson and
               Paul D. Seymour},
  title     = {{Graph Minors. {XXI.} {Graphs with unique linkages}}},
  journal   = {Journal of Combinatorial Theory, Series B},
  volume    = {99},
  number    = {3},
  pages     = {583--616},
  year      = {2009},
  doi       = {10.1016/j.jctb.2008.08.003}
}

@InProceedings{MorelleSST23fast,
  author =  {Morelle, Laure and Sau, Ignasi and Stamoulis, Giannos and Thilikos, Dimitrios M.},
  title = {{Faster Parameterized Algorithms for Modification Problems to Minor-Closed Classes}},
  booktitle = {Proc. of the 50th International Colloquium on Automata, Languages, and Programming (ICALP)},
  pages = {93:1--93:19},
  series =  {Leibniz International Proceedings in Informatics (LIPIcs)},
  ISBN =  {978-3-95977-278-5},
  ISSN =  {1868-8969},
  year =  {2023},
  volume =  {261},
  publisher = {Schloss Dagstuhl -- Leibniz-Zentrum f{\"u}r Informatik},
  address = {Dagstuhl, Germany},
  URL =   {https://drops.dagstuhl.de/opus/volltexte/2023/18145},
  URN =   {urn:nbn:de:0030-drops-181458},
  doi =   {10.4230/LIPIcs.ICALP.2023.93}
  }

@ARTICLE{RobertsonSGM22,
	Author = {Neil Robertson and Paul D. Seymour},
	Doi = {10.1016/j.jctb.2007.12.007},
	Journal = {Journal of Combinatorial Theory, Series B},
	Number = {2},
	Pages = {530--563},
	Title = {{Graph Minors. {XXII.} {I}rrelevant vertices in linkage problems}},
	Volume = {102},
	Year = {2012},
	Bdsk-Url-1 = {http://dx.doi.org/10.1016/j.jctb.2007.12.007}}

@inproceedings{GolovachST23model,
  author       = {Petr A. Golovach and
                  Giannos Stamoulis and
                  Dimitrios M. Thilikos},
  editor       = {Nikhil Bansal and
                  Viswanath Nagarajan},
  title        = {Model-Checking for First-Order Logic with Disjoint Paths Predicates
                  in Proper Minor-Closed Graph Classes},
  booktitle    = {Proceedings of the 2023 {ACM-SIAM} Symposium on Discrete Algorithms,
                  {SODA} 2023, Florence, Italy, January 22-25, 2023},
  pages        = {3684--3699},
  publisher    = {{SIAM}},
  year         = {2023},
  url          = {https://doi.org/10.1137/1.9781611977554.ch141},
  doi          = {10.1137/1.9781611977554.CH141},
  timestamp    = {Fri, 17 Feb 2023 09:28:57 +0100},
  biburl       = {https://dblp.org/rec/conf/soda/GolovachST23.bib},
  bibsource    = {dblp computer science bibliography, https://dblp.org}
}

@inproceedings{FominGSST23comp,
  author       = {Fedor V. Fomin and
                  Petr A. Golovach and
                  Ignasi Sau and
                  Giannos Stamoulis and
                  Dimitrios M. Thilikos},
  editor       = {Kousha Etessami and
                  Uriel Feige and
                  Gabriele Puppis},
  title        = {Compound Logics for Modification Problems},
  booktitle    = {50th International Colloquium on Automata, Languages, and Programming,
                  {ICALP} 2023, July 10-14, 2023, Paderborn, Germany},
  series       = {LIPIcs},
  volume       = {261},
  pages        = {61:1--61:21},
  publisher    = {Schloss Dagstuhl - Leibniz-Zentrum f{\"{u}}r Informatik},
  year         = {2023},
  url          = {https://doi.org/10.4230/LIPIcs.ICALP.2023.61},
  doi          = {10.4230/LIPICS.ICALP.2023.61},
  timestamp    = {Sun, 12 Nov 2023 02:13:24 +0100},
  biburl       = {https://dblp.org/rec/conf/icalp/FominGSST23.bib},
  bibsource    = {dblp computer science bibliography, https://dblp.org}
}

@ARTICLE{AdlerKKLST17irre,
  author    = {Isolde Adler and
               Stavros G. Kolliopoulos and
               Philipp Klaus Krause and
               Daniel Lokshtanov and
               Saket Saurabh and
               Dimitrios M. Thilikos},
  title     = {Irrelevant vertices for the planar Disjoint Paths Problem},
  journal   = {Journal of Combinatorial Theory, Series B},
  volume    = {122},
  pages     = {815--843},
  year      = {2017},
  doi       = {10.1016/j.jctb.2016.10.001}
}

@ARTICLE{AdlerDFST11fast,
  author    = {Isolde Adler and
               Frederic Dorn and
               Fedor V. Fomin and
               Ignasi Sau and
               Dimitrios M. Thilikos},
  title     = {Faster parameterized algorithms for minor containment},
  journal   = {Theoretical Computer Science},
  volume    = {412},
  number    = {50},
  pages     = {7018--7028},
  year      = {2011},
  doi       = {10.1016/j.tcs.2011.09.015}
}

@inproceedings{AdlerGK08comp,
  author    = {Isolde Adler and
               Martin Grohe and
               Stephan Kreutzer},
  title     = {Computing excluded minors},
  booktitle = {Proc. of the 19th Annual {ACM-SIAM} Symposium on Discrete
               Algorithms (SODA)},
  pages     = {641--650},
  year      = {2008},
  url       = {http://dl.acm.org/citation.cfm?id=1347082.1347153}
  }

@incollection{AlthausZ21opti,
  title={Optimal tree decompositions revisited: A simpler linear-time FPT algorithm},
  author={Althaus, Ernst and Ziegler, Sarah},
  booktitle={Graphs and Combinatorial Optimization: from Theory to Applications},
  pages={67--78},
  year={2021},
  publisher={Springer},
  series={AIRO Springer Series},
  volume={5},
  doi={10.1007/978-3-030-63072-0_6}
}

@ARTICLE{ArnborgCP87comp,
author = {Stefan Arnborg and Derek G. Corneil and Andrzej Proskurowski},
title = {Complexity of Finding Embeddings in a $k$-Tree},
year = {1987},
volume = {8},
number = {2},
issn = {0196-5212},
doi = {10.1137/0608024},
journal = {SIAM Journal on Algebraic Discrete Methods},
pages = {277–284},
numpages = {8}
}

@inproceedings{BasteST20acom,
  author    = {Julien Baste and
               Ignasi Sau and
               Dimitrios M. Thilikos},
  title     = {A complexity dichotomy for hitting connected minors on bounded treewidth
               graphs: the chair and the banner draw the boundary},
  booktitle = {Proc. of the 31st  Annual {ACM-SIAM} Symposium on Discrete Algorithms (SODA)},
  pages     = {951--970},
  year      = {2020},
  doi       = {10.1137/1.9781611975994.57}
}

@ARTICLE{BasteST20hittI,
  author    = {Julien Baste and
               Ignasi Sau and
               Dimitrios M. Thilikos},
  title     = {{Hitting minors on bounded treewidth graphs. I. General upper bounds}},
  journal   = {SIAM Journal on Discrete Mathematics},
  volume    = {34},
  number    = {3},
  pages     = {1623--1648},
  year      = {2020},
  doi       = {10.1137/19M1287146}
}

@ARTICLE{BasteST20hittII,
  author    = {Julien Baste and
               Ignasi Sau and
               Dimitrios M. Thilikos},
  title     = {{Hitting minors on bounded treewidth graphs. II. Single-exponential
               algorithms}},
  journal   = {Theoretical Computer Science},
  volume    = {814},
  pages     = {135--152},
  year      = {2020},
  doi       = {10.1016/j.tcs.2020.01.026},
}

@ARTICLE{BasteST20hittIII,
  author    = {Julien Baste and
               Ignasi Sau and
               Dimitrios M. Thilikos},
  title     = {{Hitting minors on bounded treewidth graphs. III. Lower bounds}},
  journal   = {Journal of Computer and System Sciences},
  volume    = {109},
  pages     = {56--77},
  year      = {2020},
  doi       = {10.1016/j.jcss.2019.11.002}
}

@ARTICLE{BellenbaumD02twos,
	Author = {Bellenbaum, Patrick and Diestel, Reinhard},
	Journal = {Combinatorics, Probability and Computing},
	Number = {6},
	Pages = {541--547},
	Title = {Two Short Proofs Concerning Tree-Decompositions},
	Volume = {11},
	Year = {2002},
	doi = {10.1017/S0963548302005369}
}

@inproceedings{BodlaenderGKH91appr,
  author    = {Hans L. Bodlaender and
               John R. Gilbert and
               Ton Kloks and
               Hj{\'{a}}lmtyr Hafsteinsson},
  title     = {Approximating Treewidth, Pathwidth, and Minimum Elimination Tree Height},
  booktitle = {Proc. of the 17th International Workshop on Graph-Theoretic Concepts in Computer Science (WG)},
  series    = {LNCS},
  volume    = {570},
  pages     = {1--12},
  year      = {1991},
  doi       = {10.1007/3-540-55121-2\_1}
}

@ARTICLE{BulianD16grap,
  author    = {Jannis Bulian and
               Anuj Dawar},
  title     = {Graph Isomorphism Parameterized by Elimination Distance to Bounded
               Degree},
  journal   = {Algorithmica},
  volume    = {75},
  number    = {2},
  pages     = {363--382},
  year      = {2016},
  doi       = {10.1007/s00453-015-0045-3}
}

@ARTICLE{BulianD17fixe,
  author    = {Jannis Bulian and
               Anuj Dawar},
  title     = {Fixed-Parameter Tractable Distances to Sparse Graph Classes},
  journal   = {Algorithmica},
  volume    = {79},
  number    = {1},
  pages     = {139--158},
  year      = {2017},
  doi       = {10.1007/s00453-016-0235-7}
}

@ARTICLE{ChatzidimitriouTZ20spar,
  author    = {Dimitris Chatzidimitriou and
               Dimitrios M. Thilikos and
               Dimitris Zoros},
  title     = {Sparse obstructions for minor-covering parameters},
  journal   = {Discrete Applied Mathematics},
  volume    = {278},
  pages     = {28--50},
  year      = {2020},
  doi       = {10.1016/j.dam.2019.10.021}
}

@book{CyganFKLMPPS15para,
  author    = {Marek Cygan and
               Fedor V. Fomin and
               Lukasz Kowalik and
               Daniel Lokshtanov and
               D{\'{a}}niel Marx and
               Marcin Pilipczuk and
               Michal Pilipczuk and
               Saket Saurabh},
  title     = {Parameterized Algorithms},
  publisher = {Springer},
    doi       = {10.1007/978-3-319-21275-3},
  year      = {2015},
}

@book{Diestel10grap,
	Author = {Reinhard Diestel},
	Publisher = {Springer-Verlag, 5th edition},
	Title = {{Graph Theory}},
	Volume = {173},
	Year = {2017},
	doi = {10.1007/978-3-662-53622-3}
}

@book{DowneyF13fund,
  author    = {Rodney G. Downey and
               Michael R. Fellows},
  title     = {Fundamentals of Parameterized Complexity},
  series    = {Texts in Computer Science},
  publisher = {Springer},
  year      = {2013},
  doi       = {10.1007/978-1-4471-5559-1},
  isbn      = {978-1-4471-5558-4}
}

@ARTICLE{DvorakGT12forb,
  author    = {Zdenek Dvor{\'{a}}k and
               Archontia C. Giannopoulou and
               Dimitrios M. Thilikos},
  title     = {Forbidden graphs for tree-depth},
  journal   = {European Journal of Combinatorics},
  volume    = {33},
  number    = {5},
  pages     = {969--979},
  year      = {2012},
  doi       = {10.1016/j.ejc.2011.09.014}
}

@ARTICLE{Erde18auni,
  author    = {Joshua Erde},
  title     = {A unified treatment of linked and lean tree-decompositions},
  journal   = {Journal of Combinatorial Theory, Series {B}},
  volume    = {130},
  pages     = {114--143},
  year      = {2018},
  doi       = {10.1016/j.jctb.2017.12.001},
}

@book{FlumG06para,
  author    = {J{\"{o}}rg Flum and
               Martin Grohe},
  title     = {Parameterized Complexity Theory},
  series    = {Texts in Theoretical Computer Science. An {EATCS} Series},
  publisher = {Springer},
  year      = {2006},
  doi       = {10.1007/3-540-29953-X},
  isbn      = {978-3-540-29952-3}
  }

@inproceedings{FominLMS12plan,
	Author = {Fedor V. Fomin and Daniel Lokshtanov and Neeldhara Misra and Saket Saurabh},
	Booktitle = {Proc. of the 53rd Annual {IEEE} Symposium on Foundations of Computer Science (FOCS)},
	Pages = {470--479},
  doi       = {10.1109/FOCS.2012.62},
  	Title = {{Planar {${\cal F}$}-Deletion: Approximation, Kernelization and Optimal {FPT} Algorithms}},
	Year = {2012}
	}

@inproceedings{FominLPSZ20hitt,
  author    = {Fedor V. Fomin and
               Daniel Lokshtanov and
               Fahad Panolan and
               Saket Saurabh and
               Meirav Zehavi},
  title     = {Hitting topological minors is {FPT}},
  booktitle = {Proc. of the 52nd Annual {ACM} Symposium on Theory
               of Computing (STOC)},
  pages     = {1317--1326},
  year      = {2020},
  doi       = {10.1145/3357713.3384318},
}

@ARTICLE{GiannopoulouKRT19lean,
 author       = {Archontia C. Giannopoulou and
                  O{-}joung Kwon and
                  Jean{-}Florent Raymond and
                  Dimitrios M. Thilikos},
  title        = {A Menger-like property of tree-cut width},
  journal      = {J. Comb. Theory {B}},
  volume       = {148},
  pages        = {1--22},
  year         = {2021},
  url          = {https://doi.org/10.1016/j.jctb.2020.12.005},
  doi          = {10.1016/J.JCTB.2020.12.005},
  timestamp    = {Fri, 07 Jun 2024 15:05:52 +0200},
  biburl       = {https://dblp.org/rec/journals/jctb/GiannopoulouKRT21.bib},
  bibsource    = {dblp computer science bibliography, https://dblp.org}
}

@inproceedings{GiannopoulouPRT17line,
  author    = {Archontia C. Giannopoulou and
               Michal Pilipczuk and
               Jean{-}Florent Raymond and
               Dimitrios M. Thilikos and
               Marcin Wrochna},
  title     = {Linear Kernels for Edge Deletion Problems to Immersion-Closed Graph
               Classes},
  booktitle = {Proc. of the 44th International Colloquium on Automata, Languages, and Programming (ICALP)},
  series    = {LIPIcs},
  volume    = {80},
  pages     = {57:1--57:15},
  year      = {2017},
  doi       = {10.4230/LIPIcs.ICALP.2017.57}
  }

@ARTICLE{GiannopoulouPRT19cutw,
  author    = {Archontia C. Giannopoulou and
               Michal Pilipczuk and
               Jean{-}Florent Raymond and
               Dimitrios M. Thilikos and
               Marcin Wrochna},
  title     = {Cutwidth: Obstructions and Algorithmic Aspects},
  journal   = {Algorithmica},
  volume    = {81},
  number    = {2},
  pages     = {557--588},
  year      = {2019},
  doi       = {10.1007/s00453-018-0424-7}
  }

@inproceedings{GuoHN04astr,
  author    = {Jiong Guo and
               Falk H{\"{u}}ffner and
               Rolf Niedermeier},
  title     = {A Structural View on Parameterizing Problems: Distance from Triviality},
  booktitle = {Proc. of the 1st International Workshop on Parameterized and Exact Computation (IWPEC)},
  series    = {LNCS},
  volume    = {3162},
  pages     = {162--173},
  year      = {2004},
  doi       = {10.1007/978-3-540-28639-4\_15}
}

@ARTICLE{ImpagliazzoP01whic,
  author    = {Russell Impagliazzo and
               Ramamohan Paturi and
               Francis Zane},
  title     = {Which Problems Have Strongly Exponential Complexity?},
  journal   = {Journal of Computer and System Sciences},
  doi       = {10.1006/jcss.2001.1774},
  volume    = {63},
  number    = {4},
  pages     = {512--530},
  year      = {2001},
}

@inproceedings{JansenKW21vert,
  author    = {Bart M. P. Jansen and
               Jari J. H. de Kroon and
               Michał Włodarczyk},
  title     = {Vertex deletion parameterized by elimination distance and even less},
  booktitle = {Proc. of the 53rd Annual ACM-SIGACT Symposium on Theory of Computing (STOC)},
  pages     = {1757--1769},
  year      = {2021},
  doi       = {10.1145/3406325.3451068},
}

@inproceedings{JansenLS14anea,
	Author = {Jansen, Bart M. P. and Lokshtanov, Daniel and Saurabh, Saket},
	Booktitle = {Proc. of the 25th Annual ACM-SIAM Symposium on Discrete Algorithms (SODA)},
	Pages = {1802--1811},
	Title = {A Near-optimal Planarization Algorithm},
  doi       = {10.1137/1.9781611973402.130},
	Year = {2014},
}

@misc{KanteK14anup,
  author    = {Mamadou Moustapha Kant{\'{e}} and
               O{-}joung Kwon},
  title     = {An Upper Bound on the Size of Obstructions for Bounded Linear Rank-Width},
  year      = {2014},
  url          = {http://arxiv.org/abs/1412.6201},
  eprint    = {1412.6201}
  }

@ARTICLE{KanteK18line,
  author    = {Mamadou Moustapha Kant{\'{e}} and
               O{-}joung Kwon},
  title     = {Linear rank-width of distance-hereditary graphs {II.} Vertex-minor
               obstructions},
  journal   = {European Journal of Combinatorics},
  volume    = {74},
  pages     = {110--139},
  year      = {2018},
  doi       = {10.1016/j.ejc.2018.07.009}
  }

@inproceedings{Kawarabayashi09plan,
	author    = {Ken{-}ichi Kawarabayashi},
	title     = {Planarity Allowing Few Error Vertices in Linear Time},
	booktitle = {Proc. of the 50th Annual {IEEE} Symposium on Foundations of Computer Science (FOCS)},
	pages     = {639--648},
	year      = {2009},
	doi       = {10.1109/FOCS.2009.45}
}

@ARTICLE{KawarabayashiK20line,
  author    = {Ken{-}ichi Kawarabayashi and
               Yusuke Kobayashi},
  title     = {Linear min-max relation between the treewidth of an \emph{H}-minor-free
               graph and its largest grid minor},
  journal   = {Journal of Combinatorial Theory, Series {B}},
  volume    = {141},
  pages     = {165--180},
  year      = {2020},
  doi       = {10.1016/j.jctb.2019.07.007}
  }

@ARTICLE{KawarabayashiKR12thed,
  author    = {Ken{-}ichi Kawarabayashi and
               Yusuke Kobayashi and
               Bruce A. Reed},
  title     = {The disjoint paths problem in quadratic time},
  journal   = {Journal of Combinatorial Theory, Series B},
  volume    = {102},
  number    = {2},
  pages     = {424--435},
  year      = {2012},
  doi       = {10.1016/j.jctb.2011.07.004}
}

@ARTICLE{KawarabayashiTW18anew,
  author    = {Ken{-}ichi Kawarabayashi and
               Robin Thomas and
               Paul Wollan},
  title     = {A new proof of the flat wall theorem},
  journal   = {Journal of Combinatorial Theory, Series {B}},
  volume    = {129},
  pages     = {204--238},
  year      = {2018},
  doi       = {10.1016/j.jctb.2017.09.006}
  }

@inproceedings{KawarabayashiW10asho,
	Author = {Kawarabayashi, Ken{-}ichi and Wollan, Paul},
	Booktitle = {Proc. of the 42nd ACM Symposium on Theory of Computing (STOC)},
	Doi = {10.1145/1806689.1806784},
	Isbn = {978-1-4503-0050-6},
	Numpages = {8},
	Pages = {687--694},
	Title = {{A Shorter Proof of the Graph Minor Algorithm: The Unique Linkage Theorem}},
	Year = {2010}
}

@ARTICLE{KimLPRRSS16line,
  author    = {Eun Jung Kim and
               Alexander Langer and
               Christophe Paul and
               Felix Reidl and
               Peter Rossmanith and
               Ignasi Sau and
               Somnath Sikdar},
  title     = {Linear Kernels and Single-Exponential Algorithms Via Protrusion Decompositions},
  journal   = {{ACM} Transactions on Algorithms},
  volume    = {12},
  number    = {2},
  pages     = {21:1--21:41},
  year      = {2016},
  doi       = {10.1145/2797140}
  }

@inproceedings{KimST18data,
  author    = {Eun Jung Kim and
               Maria J. Serna and
               Dimitrios M. Thilikos},
  title     = {Data-Compression for Parametrized Counting Problems on Sparse Graphs},
  booktitle = {Proc. of the 29th International Symposium on Algorithms and Computation (ISAAC)},
  series    = {LIPIcs},
  volume    = {123},
  pages     = {20:1--20:13},
  year      = {2018},
  doi       = {10.4230/LIPIcs.ISAAC.2018.20}
  }

@ARTICLE{KociumakaP19dele,
  author    = {Tomasz Kociumaka and
               Marcin Pilipczuk},
  title     = {{Deleting Vertices to Graphs of Bounded Genus}},
  journal   = {Algorithmica},
  volume    = {81},
  number    = {9},
  pages     = {3655--3691},
  year      = {2019},
  doi       = {10.1007/s00453-019-00592-7}
}

@inproceedings{Korhonen21asin,
  author    = {Tuukka Korhonen},
  title     = {A Single-Exponential Time 2-Approximation Algorithm for Treewidth},
  booktitle = {Proc. of the 62nd Annual Symposium on Foundations of Computer Science (FOCS)},
  pages     = {184--192},
  year      = {2021},
  doi       = {10.1109/FOCS52979.2021.00026},
}

@incollection{Lagergren91anup,
  author    = {Jens Lagergren},
  title     = {An upper bound on the size of an obstruction},
  booktitle = {Graph Structure Theory},
  series    = {Contemporary Mathematics},
  volume    = {147},
  doi = {10.1090/conm/147/01202},
  pages     = {601--621},
  publisher = {American Mathematical Society},
  year      = {1991}
}

@ARTICLE{Lagergren98uppe,
	Author = {Jens Lagergren},
	Journal = {Journal of Combinatorial Theory, Series B},
	Pages = {7--40},
	Title = {Upper bounds on the size of obstructions and intertwines},
	Volume = {73},
	doi = {10.1006/jctb.1997.1788},
	Year = {1998}
}

@inproceedings{LagergrenA91mini,
  author    = {Jens Lagergren and
               Stefan Arnborg},
  title     = {Finding Minimal Forbidden Minors Using a Finite Congruence},
  booktitle = {Proc. of the 18th International Colloquium on Automata, Languages and Programming (ICALP)},
  series    = {LNCS},
  volume    = {510},
  pages     = {532--543},
  year      = {1991},
  doi       = {10.1007/3-540-54233-7\_161}
}

@ARTICLE{LewisY80then,
  author =       {John M. Lewis and
               Mihalis Yannakakis},
  title =        {{The node-deletion problem for hereditary properties is NP-complete}},
  journal =      {Journal of Computer and System Sciences},
  doi       = {10.1016/0022-0000(80)90060-4},
  year =         {1980},
  volume =       {20},
  number =       {2},
  pages =        {219--230},
}

@ARTICLE{MarxS07obta,
	Author = {D{\'a}niel Marx and Ildik{\'o} Schlotter},
	Journal = {Algorithmica},
	Number = {3-4},
	Pages = {807-822},
	Title = {Obtaining a Planar Graph by Vertex Deletion},
  doi       = {10.1007/s00453-010-9484-z},
	Volume = {62},
	Year = {2012}
}

@book{Niedermeier06invi,
    Author = {Rolf Niedermeier},
    Publisher = {Oxford University Press},
    Title = {Invitation to fixed parameter algorithms},
    doi       = {10.1093/ACPROF:OSO/9780198566076.001.0001},
    Volume = {31},
    Year = {2006}
    }

@ARTICLE{PerkovicR00anim,
  author    = {Ljubomir Perkovic and
               Bruce A. Reed},
  title     = {An Improved Algorithm for Finding Tree Decompositions of Small Width},
  journal   = {International Journal of Foundations of Computer Science},
  volume    = {11},
  number    = {3},
  pages     = {365--371},
  year      = {2000},
  doi       = {10.1142/S0129054100000247}
}

@ARTICLE{Pothen88comp,
  title={The complexity of optimal elimination trees},
  author={Alex Pothen},
  journal={Technical Report. Pennsylvania State University. Dept. of Computer Science},
  year={1988},
  url = {https://www.cs.purdue.edu/homes/apothen/Papers/shortest-etree1988.pdf},
}

@inproceedings{ReidlRSS14afas,
  author    = {Felix Reidl and
               Peter Rossmanith and
               Fernando S{\'{a}}nchez Villaamil and
               Somnath Sikdar},
  title     = {A Faster Parameterized Algorithm for Treedepth},
  booktitle = {Proc. of the 41st International Colloquium on Automata, Languages, and Programming (ICALP)},
  series    = {LNCS},
  volume    = {8572},
  pages     = {931--942},
  year      = {2014},
  doi       = {10.1007/978-3-662-43948-7\_77}
}

@ARTICLE{RobertsonS04XX,
  author    = {Neil Robertson and
               Paul D. Seymour},
  title     = {{Graph Minors. {XX.} Wagner's conjecture}},
  journal   = {Journal of Combinatorial Theory, Series {B}},
  volume    = {92},
  number    = {2},
  pages     = {325--357},
  year      = {2004},
  doi       = {10.1016/j.jctb.2004.08.001}
}

@ARTICLE{RobertsonS86V,
	Author = {Neil Robertson and Paul D. Seymour},
	Journal = {Journal of Combinatorial Theory, Series B},
	Number = {2},
	doi = {10.1016/0095-8956(86)90030-4},
	Pages = {92-114},
	Title = {{Graph Minors. {V.} Excluding a planar graph}},
	Volume = {41},
	Year = {1986}}

@ARTICLE{RobertsonS95XIII,
  author    = {Neil Robertson and
               Paul D. Seymour},
  title     = {{Graph Minors. XIII. The Disjoint Paths Problem}},
  journal   = {Journal of Combinatorial Theory, Series {B}},
  volume    = {63},
  number    = {1},
  doi       = {10.1006/jctb.1995.1006},
  pages     = {65--110},
  year      = {1995}
}

@misc{SauST21amor,
 author    = {Ignasi Sau and
               Giannos Stamoulis and Dimitrios M. Thilikos},
  title     = {{A more accurate view of the Flat Wall Theorem}},
  url          = {https://arxiv.org/abs/2102.06463},
  year      = {2021},
  eprint = {2102.06463}
}

@ARTICLE{SauST21kapiI,
  author       = {Ignasi Sau and
                  Giannos Stamoulis and
                  Dimitrios M. Thilikos},
  title        = {\emph{k}-apices of minor-closed graph classes. I. Bounding the obstructions},
  journal      = {J. Comb. Theory {B}},
  volume       = {161},
  pages        = {180--227},
  year         = {2023},
  url          = {https://doi.org/10.1016/j.jctb.2023.02.012},
  doi          = {10.1016/J.JCTB.2023.02.012},
  timestamp    = {Fri, 07 Jun 2024 15:05:50 +0200},
  biburl       = {https://dblp.org/rec/journals/jctb/SauST23.bib},
  bibsource    = {dblp computer science bibliography, https://dblp.org}
}

@ARTICLE{SauST21kapiII,
  author       = {Ignasi Sau and
                  Giannos Stamoulis and
                  Dimitrios M. Thilikos},
  title        = {\emph{k}-apices of Minor-closed Graph Classes. {II.} Parameterized
                  Algorithms},
  journal      = {{ACM} Trans. Algorithms},
  volume       = {18},
  number       = {3},
  pages        = {21:1--21:30},
  year         = {2022},
  url          = {https://doi.org/10.1145/3519028},
  doi          = {10.1145/3519028},
  timestamp    = {Mon, 05 Dec 2022 13:35:20 +0100},
  biburl       = {https://dblp.org/rec/journals/talg/SauST22.bib},
  bibsource    = {dblp computer science bibliography, https://dblp.org}
}

@ARTICLE{Thomas90amen,
  author    = {Robin Thomas},
  title     = {{A Menger-like property of tree-width: The finite case}},
  journal   = {Journal of Combinatorial Theory, Series B},
  volume    = {48},
  number    = {1},
  pages     = {67--76},
  year      = {1990},
  doi       = {10.1016/0095-8956(90)90130-R}
}

@ARTICLE{Thomason01thee,
  author    = {Andrew Thomason},
  title     = {The Extremal Function for Complete Minors},
  journal   = {Journal of Combinatorial Theory, Series {B}},
  volume    = {81},
  number    = {2},
  pages     = {318--338},
  year      = {2001},
  doi       = {10.1006/jctb.2000.2013}
}

@book{Kloks94,
  author    = {Ton Kloks},
  title     = {Treewidth, Computations and Approximations},
  series    = {Lecture Notes in Computer Science},
  volume    = {842},
  publisher = {Springer},
  year      = {1994},
  doi       = {10.1007/BFb0045375},
}

@book{Sparsity12,
  author    = {Jaroslav Nesetril and
               Patrice Ossona de Mendez},
  title     = {Sparsity - Graphs, Structures, and Algorithms},
  series    = {Algorithms and combinatorics},
  volume    = {28},
  publisher = {Springer},
  year      = {2012},
  doi       = {10.1007/978-3-642-27875-4},
 }

@inproceedings{AgrawalKLPRSZ22,
  author    = {Akanksha Agrawal and
               Lawqueen Kanesh and
               Daniel Lokshtanov and
               Fahad Panolan and
               M. S. Ramanujan and
               Saket Saurabh and
               Meirav Zehavi},
  title     = {{Deleting, Eliminating and Decomposing to Hereditary Classes Are All
               FPT-Equivalent}},
  booktitle = {Proc. of the 2022 {ACM-SIAM} Symposium on Discrete Algorithms
               (SODA)},
  pages     = {1976--2004},
  year      = {2022},
  doi       = {10.1137/1.9781611977073.79},
}

@article{AgrawalKP0021,
  author       = {Akanksha Agrawal and
                  Lawqueen Kanesh and
                  Fahad Panolan and
                  M. S. Ramanujan and
                  Saket Saurabh},
  title        = {A Fixed-Parameter Tractable Algorithm for Elimination Distance to
                  Bounded Degree Graphs},
  journal      = {{SIAM} J. Discret. Math.},
  volume       = {36},
  number       = {2},
  pages        = {911--921},
  year         = {2022},
  url          = {https://doi.org/10.1137/21m1396824},
  doi          = {10.1137/21M1396824},
  timestamp    = {Tue, 02 Jan 2024 13:40:38 +0100},
  biburl       = {https://dblp.org/rec/journals/siamdm/AgrawalKPRS22.bib},
  bibsource    = {dblp computer science bibliography, https://dblp.org}
}

@inproceedings{Agrawal020,
  author    = {Akanksha Agrawal and
               M. S. Ramanujan},
  title     = {On the Parameterized Complexity of Clique Elimination Distance},
  booktitle = {Proc. of the 15th International Symposium on Parameterized and Exact Computation (IPEC)},
  series    = {LIPIcs},
  volume    = {180},
  pages     = {1:1--1:13},
  year      = {2020},
  doi       = {10.4230/LIPIcs.IPEC.2020.1},
}

@ARTICLE{FominGT22,
  author    = {Fedor V. Fomin and
               Petr A. Golovach and
               Dimitrios M. Thilikos},
  title     = {Parameterized Complexity of Elimination Distance to First-Order Logic
               Properties},
  journal   = {ACM Transactions on Computational Logic},
  volume    = {23},
  number    = {3},
  pages     = {17:1--17:35},
  year      = {2022},
  doi       = {10.1145/3517129},
}

@ARTICLE{EibenGHK21,
  author    = {Eduard Eiben and
               Robert Ganian and
               Thekla Hamm and
               O{-}joung Kwon},
  title     = {Measuring what matters: {A} hybrid approach to dynamic programming
               with treewidth},
  journal   = {Journal of Computer and System Sciences},
  volume    = {121},
  pages     = {57--75},
  year      = {2021},
  doi       = {10.1016/j.jcss.2021.04.005},
}

@ARTICLE{RobertsonS03a,
  author    = {Neil Robertson and
               Paul D. Seymour},
  title     = {{Graph Minors. {XVI.} Excluding a non-planar graph}},
  journal   = {Journal of Combinatorial Theory, Series {B}},
  volume    = {89},
  number    = {1},
  pages     = {43--76},
  year      = {2003},
  doi       = {10.1016/S0095-8956(03)00042-X},
}

\end{document}